\documentclass[12pt,onecolumn]{article}
\pdfoutput=1 
\usepackage{hyperref}
\hypersetup{colorlinks=true,linkcolor=blue,citecolor=blue}
\usepackage{amsmath,amssymb,amsfonts,amsthm}
\usepackage{caption,subcaption}
\usepackage[pdftex]{graphicx}
\usepackage{mathtools}
\usepackage{young}
\usepackage{comment}
\usepackage{mathrsfs}
\usepackage{titlesec}
\usepackage[normalem]{ulem}
\usepackage[affil-it]{authblk}
\usepackage{soul}

\paperheight=11in
\paperwidth=8.5in
\topmargin=0in
\oddsidemargin=0in
\evensidemargin=0in
\headheight=0in
\headsep=0in
\textwidth=6.5in 
\textheight=9in
\columnsep=0.25in

\def\be{\begin{equation}}
\def\ee{\end{equation}}

\def\arg{\operatorname{arg}}

\def\CF{{\mathcal F}}
\def\CT{{\mathcal T}}

\def\CP{{\mathcal P}}

\def\IS{{\mathbb{S}}}
\def\CW{{\mathcal W}}
\def\CB{{\mathcal B}}
\def\CH{{\mathcal H}}

\def\CT{{\mathcal T}}

\def\fb{{\mathfrak{b}}}
\def\det{\mathrm{det}}
\def\rank{{\mathrm{rank}}}

\def\CK{{\mathcal K}}
\def\CN{{\mathcal N}}

\def\CL{{\mathcal L}}
\def\CM{{\mathcal M}}
\def\CS{{\mathcal S}}

\def\IR{{\mathbb{R}}}
\def\IZ{{\mathbb{Z}}}
\def\IN{{\mathbb{N}}}

\def\IC{{\mathbb{C}}}

\def\x{{\times}}

\def\CF{{\cal F}}
\def\CC{{\cal C}}
\def\CT{{\cal T}}

\def\cl{{\rm cl}}

\def\diag{{\rm diag}}

\newtheorem{proposition}{Prop}


\titleclass{\subsubsubsection}{straight}[\subsection]

\newcounter{subsubsubsection}[subsubsection]
\renewcommand\thesubsubsubsection{\thesubsubsection.\arabic{subsubsubsection}}

\titleformat{\subsubsubsection}
  {\normalfont\normalsize\bfseries}{\thesubsubsubsection}{1em}{}
\titlespacing*{\subsubsubsection}
{0pt}{3.25ex plus 1ex minus .2ex}{1.5ex plus .2ex}

\makeatletter
\renewcommand\paragraph{\@startsection{paragraph}{5}{\z@}%
  {3.25ex \@plus1ex \@minus.2ex}%
  {-1em}%
  {\normalfont\normalsize\bfseries}}
\renewcommand\subparagraph{\@startsection{subparagraph}{6}{\parindent}%
  {3.25ex \@plus1ex \@minus .2ex}%
  {-1em}%
  {\normalfont\normalsize\bfseries}}
\def\toclevel@subsubsubsection{4}
\def\toclevel@paragraph{5}
\def\toclevel@paragraph{6}
\def\l@subsubsubsection{\@dottedtocline{4}{7em}{4em}}
\def\l@paragraph{\@dottedtocline{5}{10em}{5em}}
\def\l@subparagraph{\@dottedtocline{6}{14em}{6em}}
\makeatother

\setcounter{secnumdepth}{4}
\setcounter{tocdepth}{4}



\newcommand{\weight}{\nu}
\newcommand{\weightB}{\tau}
\newcommand{\gA}{\mathrm{A}}
\newcommand{\gD}{\mathrm{D}}
\newcommand{\gE}{\mathrm{E}}
\newcommand{\gADE}{\mathfrak{g}_\mathrm{ADE}}

\newcommand{\cN}{\mathcal{N}}

\newcommand{\bC}{\mathbb{C}}

\newcommand{\fg}{\mathfrak{g}}

\newcommand{\ft}{\mathfrak{t}}

\newcommand{\SU}{\mathrm{SU}}
\newcommand{\SO}{\mathrm{SO}}

\newcommand{\dd}{\mathrm{d}}

\usepackage[usenames,dvipsnames]{xcolor}

\definecolor{darkergreen}{rgb}{0,0.667,0}
\definecolor{gold}{rgb}{1.,0.843137,0.}
\definecolor{deepskyblue}{rgb}{0.,0.74902,1.}
\definecolor{mediumorchid}{rgb}{0.729412,0.333333,0.827451}

\usepackage{tikz}
\usetikzlibrary{positioning}
\usetikzlibrary{decorations.pathmorphing}
\usetikzlibrary{decorations.markings}
\usetikzlibrary{shapes.misc}
\usetikzlibrary{intersections}

\usetikzlibrary{arrows}
\tikzset{>={angle 60}}

\tikzstyle{every picture}=[scale=1,baseline=(current bounding box.south)]

\tikzstyle{X}=[cross out, draw, scale = 0.75, thick]
\tikzstyle{Z}=[draw, circle, minimum size=1em, scale=1, inner sep=2pt]
\tikzstyle{W}= [circle, draw, minimum size=1em]
\tikzstyle{D}= [circle, minimum size=1em]

\usepackage{listings}
\newcommand{\foothref}[2]{\href{#1}{#2}\footnote{\href{#1}{\lstinline[basicstyle=\ttfamily]{#1}}}}

\newcommand{\ehz}[1]{\overline{{#1}}}

\newcommand{\ephz}[1]{{#1}}

\newcommand{\eh}[1]{\check{#1}}


\allowdisplaybreaks
\numberwithin{equation}{section}

\begin{document}

\title{ADE Spectral Networks}

\author[1]{{Pietro Longhi}\thanks{pietro.longhi@physics.uu.se}}
\author[2]{{Chan Y. Park}\thanks{chan@physics.rutgers.edu}}

\affil[1]{
{\normalsize Department of Physics and Astronomy, Uppsala University}
\protect\\
{\normalsize Box 516, SE-75120 Uppsala, Sweden} 
}

\affil[2]{
\normalsize{NHETC and Department of Physics and Astronomy,} 
\protect\\
{Rutgers University, NJ, USA}
}

\date{}
\maketitle

\begin{abstract}
\noindent We introduce a new perspective and a generalization of spectral networks for 4d $\mathcal{N}=2$ theories of class $\mathcal{S}$ associated to Lie algebras $\mathfrak{g} = \textrm{A}_n$, $\textrm{D}_n$, $\textrm{E}_{6}$, and $\textrm{E}_{7}$.
Spectral networks directly compute the BPS spectra of 2d theories on surface defects coupled to the 4d theories. A Lie algebraic interpretation of these spectra emerges naturally from our construction, leading to a new description of 2d-4d wall-crossing phenomena.
Our construction also provides an efficient framework for the study of BPS spectra of the 4d theories.
In addition, we consider novel types of surface defects associated with minuscule representations of $\mathfrak{g}$.

\end{abstract}

\clearpage

\tableofcontents

\section{Introduction and summary}

A distinguishing feature of 4d $\CN=2$ theories of class $\CS$ is their intimate relationship with Hitchin systems \cite{Gaiotto:2009hg}.
This connection establishes a unified picture capturing many interesting aspects of the 4d $\CN=2$ dynamics, such as the UV duality web \cite{Gaiotto:2009we, Tachikawa:2009rb}, the geometry of the Coulomb branch, its Seiberg-Witten description \cite{Gaiotto:2009hg, Seiberg:1994rs, Seiberg:1994aj}, the compactification to three dimensions \cite{Seiberg:1996nz}, the wall-crossing of the BPS spectrum \cite{Seiberg:1994rs, Seiberg:1994aj, Ferrari:1996sv, Gaiotto:2008cd, Gaiotto:2012rg}, and the insertion of line and surface defects \cite{Gukov:2006jk, Gaiotto:2010be, Gaiotto:2011tf, Gaiotto:2013sma}.

In this paper we focus on the BPS spectrum of 2d-4d coupled systems, which is encoded in the geometry of the Hitchin system, and can be studied using the spectral networks of \cite{Gaiotto:2012rg}. 
Previous studies of BPS spectra based on spectral networks \cite{Galakhov:2013oja, Maruyoshi:2013fwa} and a refinement of the original construction of \cite{Gaiotto:2012rg} to compute the spin of BPS states \cite{Galakhov:2014xba} have been restricted to class $\CS$ theories of $\fg=\gA_n$ type, following the original construction of spectral networks.
Here we introduce an extension and a new perspective on spectral networks for all class $\mathcal{S}$ theories of $\mathfrak{g} = \gA_n$, $\gD_n$, $\gE_6$, and $\gE_7$.  
The key observation on which the present work develops is that a spectral network carries an intrinsic Lie algebraic structure, which is inherited from the Hitchin system, as first suggested in \cite{Gaiotto:2012rg}. 
This leads to a new interpretation of several objects belonging to the class $\CS$ realm in terms of familiar Lie algebraic data.
From the viewpoint of BPS spectroscopy, the main result of this paper is an algorithmic approach to computing the BPS spectrum of class $\CS$ theories of the above $ \gA\gD\gE$ types, extending the framework proposed in \cite{Gaiotto:2012rg}. 
Readers mainly interested in this application can find a self-contained guide to this procedure in Section \ref{sec:examples}, where we also introduce \foothref{http://het-math2.physics.rutgers.edu/loom/}{\texttt{loom}}, a program to study spectral networks.

To put our results into some context, let us briefly recall the key aspects of the relation between a class $\mathcal{S}$ theory and a Hitchin system \cite{Gaiotto:2009hg}. 
Given a triplet of data, consisting of a simply-laced Lie algebra $\fg$, a punctured Riemann surface $C$, and certain data $D$ describing boundary conditions at punctures, a Hitchin system is defined by the following equations \cite{Hitchin},
\begin{align}
	F +  R^2 \,[\varphi,\overline\varphi]  = 0\,,\qquad {\partial_{\bar z}}\varphi + [A_{\bar z}, \varphi] = 0,
	\label{eq:hitchin-eqs}
\end{align}
where $R$ is a positive real number.\footnote{In the usual formulation of Hitchin's equations, $R$ does not appear. However, in the context of class $\CS$ theories, $R$ appears naturally as the compactification radius for the theory formulated on $\IR^3\times S^1$. More details can be found in \cite[Sec. 3.1.6]{Gaiotto:2009hg}}
The moduli space of solutions to this system of equations is a hyperk{\"a}hler manifold $\CM$, 
which carries the additional structure of an algebraic integrable system.
$\CM$ admits a fibration by Lagrangian tori $\CM\to \CB$, whose base is identified with the parameter space of $k$-differentials on $C$, the Casimirs of the Higgs field $\varphi$. 
From the class $\CS$ viewpoint, $\CB$ coincides with the Coulomb branch of the 4d theory.
Each point of $\CM$ defines a Riemann surface $\Sigma_\rho$, the spectral curve of the Hitchin system
\be
	\det_{\rho}  \left( \lambda\, \mathbb{I} - \varphi \right)  = 0 
	\label{eq:Sigma}
\ee	
for a choice of a representation $\rho$, together with a flat line bundle $\CL$ over $\Sigma_\rho$.
When $\rho$ is the vector representation\footnote{%
In this paper we shall call the \emph{vector representation} of an $\gA\gD\gE$ Lie algebra the first fundamental representation for $\gA_n$ and $\gD_{n\geq 3}$ as well as the  \textbf{27} of $\gE_6$ and the \textbf{56} of $\gE_7$.
} of $\fg$, the spectral curve is identified with the Seiberg-Witten curve of the class $\CS$ theory, while the tautological 1-form $\lambda$ is identified with the Seiberg-Witten differential.
Finally, the line bundle $\CL$ characterizes the vacuum expectation values of electric and magnetic Wilson lines of the IR theory compactified on a time-like circle \cite{Seiberg:1996nz, Gaiotto:2008cd}.
The Riemann surface $C$ also carries a physical interpretation from the viewpoint of the UV gauge theory, it is the parameter space of a certain class of \emph{canonical} surface defects $\IS_{z}$. 
The data $D$ encodes information about global symmetries of the 4d theory.
The Riemann surface $\Sigma_\rho$ is naturally presented as a ramified covering $\pi:\Sigma_\rho \to C$, and the discrete set of points in the fiber $\pi^{-1}(z)\in \Sigma_\rho$ is identified with the set of vacua of the low energy 2d $\CN=(2,2)$ theory carried by $\IS_{z}$.%
\footnote{While these interpretations of $C$ and $\Sigma$ can be stated in a 4d gauge theory language, their motivation and explanation are best understood in terms of M-theory, in the context of M2-M5 brane configurations \cite{Gaiotto:2009we, Gaiotto:2009hg, Gaiotto:2011tf}.}
The Hitchin geometry encodes more than the low-energy Seiberg-Witten description.
In particular, the spectrum of BPS states of the 4d gauge theory contributes
corrections to the hyperk{\"a}hler metric of $\CM$ \cite{Gaiotto:2008cd}, and can
thus be read off from the Hitchin geometry, at least in principle.
The task of extracting BPS spectra from Hitchin geometries is far from 
straightforward, nevertheless it is greatly simplified by the spectral networks framework of \cite{Gaiotto:2012rg}.
For a review of the relation between spectral networks and hyperk{\"a}hler 
geometry see \cite{FelixKlein, AndyPisa, AndyKITP}.

A new result in this paper is the reformulation of spectral networks data in a Lie-algebraic language. 
A spectral network $\CW$ consists of two pieces of data: geometric data encoded in a network of real 1-dimensional curves on $C$, each of which is called an $\mathcal{S}$-wall, and combinatorial topological data attached to an $\mathcal{S}$-wall, called soliton data.
The geometry of $\CW$ is fixed by a choice of $u\in\CB$ and a phase $\vartheta\in \IR/2\pi\IZ$, while soliton data is determined by the topology of $\CW$.\footnote{More precisely, the combinatorial data attached to $\CW$ is expressed in terms of topological data on  $\Sigma$.}
For generic $u$, there are branch points of the covering $\pi$ on $C$ where 
\be
	\langle\alpha,\varphi(z)\rangle = 0
\ee 
for one or more roots $\alpha$ of $\fg$. 
An $\CS$-wall that emanates from a branch point is labeled by the corresponding root $\alpha$, as shown in Figure \ref{fig:intro-network}.
The geometry of an $\CS$-wall $\CS_\alpha$ depends on $(u,\vartheta)$ through a differential equation
\be\label{eq:2d-BPS-condition}
	(\partial_t\,,\,\langle\alpha , \varphi\rangle) \in e^{i\vartheta}\IR^-\,,
\ee
where $t$ is a coordinate along the wall.
From a physical perspective, this equation may be viewed as the BPS condition for solitons of the 2d $\CN=(2,2)$ theory on $\IS_z$ \cite{Gaiotto:2011tf},
When $\CS$-walls intersect each other, a new $\CS$-wall may be produced, as shown in Figure \ref{fig:streets}.
When an $\mathcal{S}$-wall crosses a branch cut on $C$, its root-type may jump across the cut. 
In both cases the behavior of the network is determined entirely by Lie-algebraic data carried by $\CS$-walls and branch points, without any reference to the spectral cover $\Sigma_\rho$.
\begin{figure}[!ht]
\begin{center}
\includegraphics[width=.4\textwidth]{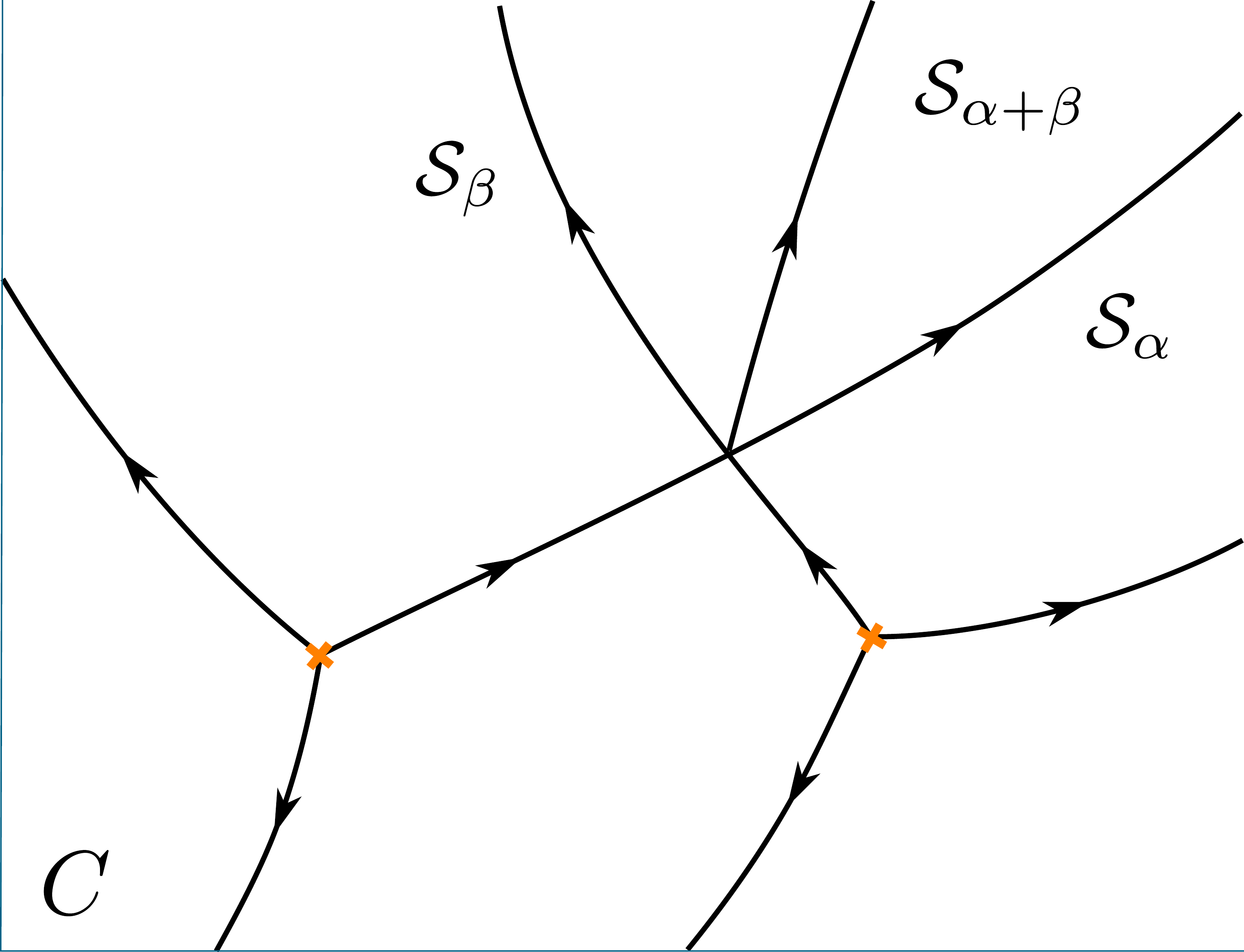}
\caption{Part of a spectral network. At the branch point on the left $\langle\alpha,\varphi\rangle = 0$, while that on the right is of type $\beta$.}
\label{fig:intro-network}
\end{center}
\end{figure}

The soliton data of $\CW$, on the other hand, depends on a choice of representation $\rho$, and is characterized by topological equivalence classes of open paths on $\Sigma_\rho$. 
The soliton data attached to each $\mathcal{S}$-wall is determined by the topology of $\CW$ according to two basic rules.\footnote{Both rules really descend from the single principle of twisted homotopy invariance for a certain \emph{formal parallel transport} on $C$, this viewpoint was advocated in \cite{Gaiotto:2012rg} and will play a central role in our construction too.}
The first rule fixes the soliton content of primary $\mathcal{S}$-walls, i.e. those which emanate directly from branch points. 
The second rule --- the 2d wall-crossing formula --- describes how the soliton data changes across intersections of $\CS$-walls. 
While the $\CS$-wall geometry is locally determined by the differential equation (\ref{eq:2d-BPS-condition}), 
the soliton data counts solutions which can be lifted \emph{globally} to $\Sigma_\rho$.
Physically, $\CW\subset C$ is a set of points in the parameter space of $\IS_z$, for which there are 2d BPS solitons with central charge of phase $\vartheta$.
Above $z\in C$, points of the fiber $\pi^{-1}(z)$ are identified with massive vacua of the 2d theory on $\IS_z$, and are labeled by weights $\weight$ of the representation $\rho$ since  $\pi^{-1}(z) = \{ \langle \weight, \varphi(z) \rangle, \weight \in \Lambda_\rho \}$.
Correspondingly, the soliton data of an $\CS$-wall $\CS_\alpha$ going through $z$ is classified by pairs of weights $\weight_i$, $\weight_j$ such that $\weight_j-\weight_i = \alpha$, as well as topological data of open paths on $\Sigma_\rho$.
The soliton data encodes the spectrum of 2d BPS solitons of $\IS_z$ \cite{Gaiotto:2011tf,Gaiotto:2012rg}, in fact the 2d wall-crossing formula of $\CS$-walls was found by \cite{Gaiotto:2012rg} to coincide with a twisted refinement of the Cecotti-Vafa wall-crossing formula \cite{Cecotti:1992rm}. 
Our framework  offers a natural interpretation of the 2d wall-crossing formula as a generalized ``Lie bracket" of certain generating functions $\Xi_\alpha$, $\Xi_\beta$ of 2d soliton spectra carried by intersecting $\CS$-walls $\CS_\alpha$, $\CS_\beta$.

Another new direction explored in this paper is the study of a spectral curve in a minuscule representation $\rho$. While there is a distinguished choice of $\rho$, the vector representation of $\mathfrak{g}$,  for which the spectral curve is identified with the Seiberg-Witten curve, from the viewpoint of the Hitchin system it makes perfect sense to explore other choices as well. From a physical viewpoint, we propose to identify a choice of $\rho$ with a choice of surface defect inserted in the 4d theory, which we denote as $\IS_{z,\rho}$.\footnote{
The M-theoretic description of such 2d defects and the 2d theories on the defects for $\mathfrak{g} = \gA_n$ is described in \cite{Hori:2013ewa}.
}
Using our definition of ADE spectral networks, we check this proposal through the physics of 2d-4d wall-crossing, which states that the 4d BPS spectrum is probed by the 2d BPS spectrum, in the sense that bound states of 2d BPS states can mix with 4d BPS states and vice versa. 
Doing so requires a careful identification of the physical lattice of 4d gauge and flavor charges as a sub-quotient of the homology lattice of $\Sigma_\rho$. 
We propose definitions for both 4d and 2d physical charges by making contact with work of Donagi on cameral covers \cite{Donagi:1993, Martinec:1995by}.
In this paper we focus on \emph{minuscule} representations of ADE-type Lie algebras.
Spectral networks then allow us to compute the 2d BPS spectrum carried by $\IS_{z,\rho}$ for a minuscule representation $\rho$, and to derive a generalization of the 2d-4d wall-crossing formula of \cite{Gaiotto:2011tf,Gaiotto:2012rg}, which relates the spectrum of 2d BPS solitons to the 4d BPS spectrum through 2d-4d wall-crossing.
We test our formulas against several nontrivial examples.

While the physics of the 4d gauge theories should be independent of the choice of $\rho$, this affects significantly the physics on the surface defects. 
It is natural to ask how this is compatible with the 2d-4d wall-crossing picture, which relates the 2d and 4d BPS spectra.
We find a solution to this puzzle by noting that, for $\rho$ other than the first fundamental representation of $\mathfrak{g} = \gA_n$, the 2d soliton spectra enjoy a high degree of symmetry. 
Although 2d spectra can grow very large with different choices of $\rho$, the actual amount of information they contain is always tamed by a large 2d soliton symmetry.
Using spectral networks, we derive the existence of this symmetry for all minuscule defects of ADE class $\CS$ theories. As a consistency check, we find that it plays a crucial role in the derivation of the 2d-4d wall-crossing formula.

\subsubsection*{Organization of the paper}

The paper is organized as follows.
Section \ref{sec:covers} is devoted to the study of spectral curves of Hitchin systems, in various representations. This lays out the foundations for the definition of ADE spectral networks, and contains our proposal for the definition of the physical lattice of gauge and flavor charges.
Section \ref{sec:spectral-networks} contains the definition of ADE spectral networks. Here we define the geometry of $\CW$ and derive the Lie-algebraic description of the Cecotti-Vafa wall-crossing formula, and argue that 2d BPS spectra enjoy of minuscule defects exhibit a certain discrete symmetry.
In Section \ref{subsec:k-wall-jumps} we study the 2d-4d wall-crossing phenomenon through spectral networks. By computing the jump of framed 2d-4d degeneracies at \emph{$\mathcal{K}$-walls}, we derive the generalization of the 2d-4d wall-crossing formula of \cite{Gaiotto:2011tf, Gaiotto:2012rg}.
Section \ref{sec:examples} contains several examples that illustrate our definitions and serve as nontrivial checks.

\section{Spectral covers in class \texorpdfstring{$\CS$}{S} theories of ADE types}
\label{sec:covers}

\subsection{Trivializing spectral covers}\label{subsec:trivialization}

The geometry of the Hitchin spectral curve encodes 
the BPS spectrum of the class $\CS$ theory, a useful tool for studying the geometry of these covers
is the use of a trivialization.
While generally there is no canonical choice of trivialization, and the physics is expected to be independent of 
such a choice,  it turns out that all trivializations exhibit certain universal features
for the class of systems we are going to study.
Here we present some of these common features,
which will play a key role in the definition of ADE spectral networks.

Let $\CB^\textrm{sing}$ be singular loci on the Coulomb branch of a class $\CS$ theory, and $\CB^*=\CB\setminus \CB^\textrm{sing}$ its complement. 
A point $u \in \CB^*$ determines a holomorphic section $\varphi$ of $(K\otimes \ft)/W\to C$, where $W$ is the Weyl 
group of $\fg$.
A choice of $d$-dimensional representation $\rho$, which we assume to be irreducible without loss of generality, determines a family of spectral curves fibered over $\CB^*$,
\begin{align}
	\Sigma_\rho \coloneqq \left\{  \lambda\, \big|\, \det \big( \lambda\, \mathbb{I}_d - \rho(\varphi) \big) = 0 \right\} \subset T^*C\,.
\end{align}
For each such curve, there is a natural projection map $\pi:\Sigma_{\rho}\rightarrow C$ that presents $\Sigma_\rho$ as a ramified $d$-sheeted covering of $C$.
Denoting the weights of $\rho$ by $\weight_j$ $(j = 1, \ldots, d)$, the sheets above a generic $z \in C$ are
\begin{align}
	\pi^{-1}(z) & = 
		\Big\{ \lambda_{z}\in T^{*}_{z}C \,\Big|\,
			\prod_{j = 1}^{d} \left(\lambda_z - x_j(z)\,\dd z \right) = 0 
	 	\Big\},\\
		& x_j(z)\,\mathrm{d} z= \langle \weight_j, \varphi(z) \rangle \in \bC,
\end{align}
where $\langle\cdot,\cdot\rangle$ denotes the natural pairing of $\ft^{*}$ and $\ft$. 
Sheets of the cover therefore correspond to weights of $\Lambda_{\rho}$, the weight system of the representation $\rho$. However, the identification of each sheet with some weight $\weight$ can be made only locally on $C$ until a choice of trivialization is made.
Specifying a trivialization of a spectral cover consists of two pieces of data: a choice of branch cuts, and the assignment of a weight of $\rho$ to each sheet.\footnote{
Such assignments are not arbitrary in general. For a detailed discussion of the compatibility conditions, see Appendix \ref{app:sheet-weight-identification}.}.
Here we show that, after a choice of a trivialization of a spectral cover which we call a standard trivialization, we can identify the branch points of the cover with Weyl reflections associated with simple roots.

\subsubsection*{Weyl branching structure}

Let us assume a choice of trivialization has been made. Then at a branch point $z_{*}\in C$ of the covering map $\pi$, we have two (or more) sheets colliding,
\be
x_{i}=x_{j}\qquad\Leftrightarrow \qquad \langle \weight_{i}-\weight_{j},\varphi(z_{*})\rangle = 0,
\ee
where $\weight_{i}-\weight_{j}$ is an element of the root lattice $\Lambda_{\textrm{root}}$, and not necessarily a root. 
On the other hand, the collision of two (or more) sheets is only part of the definition of a branch point,
as it does not imply the occurrence of actual sheet monodromy around $z_*$. 
Whenever there is a sheet monodromy, it always corresponds to a Weyl group action, which we call the \emph{Weyl branching property}.
A simple proof of this goes as follows. Consider a loop based at some $z_{0}\in C$, as in Figure \ref{fig:branch-monodromy}. 
\begin{figure}[!ht]
\begin{center}
\includegraphics[width=0.35\textwidth]{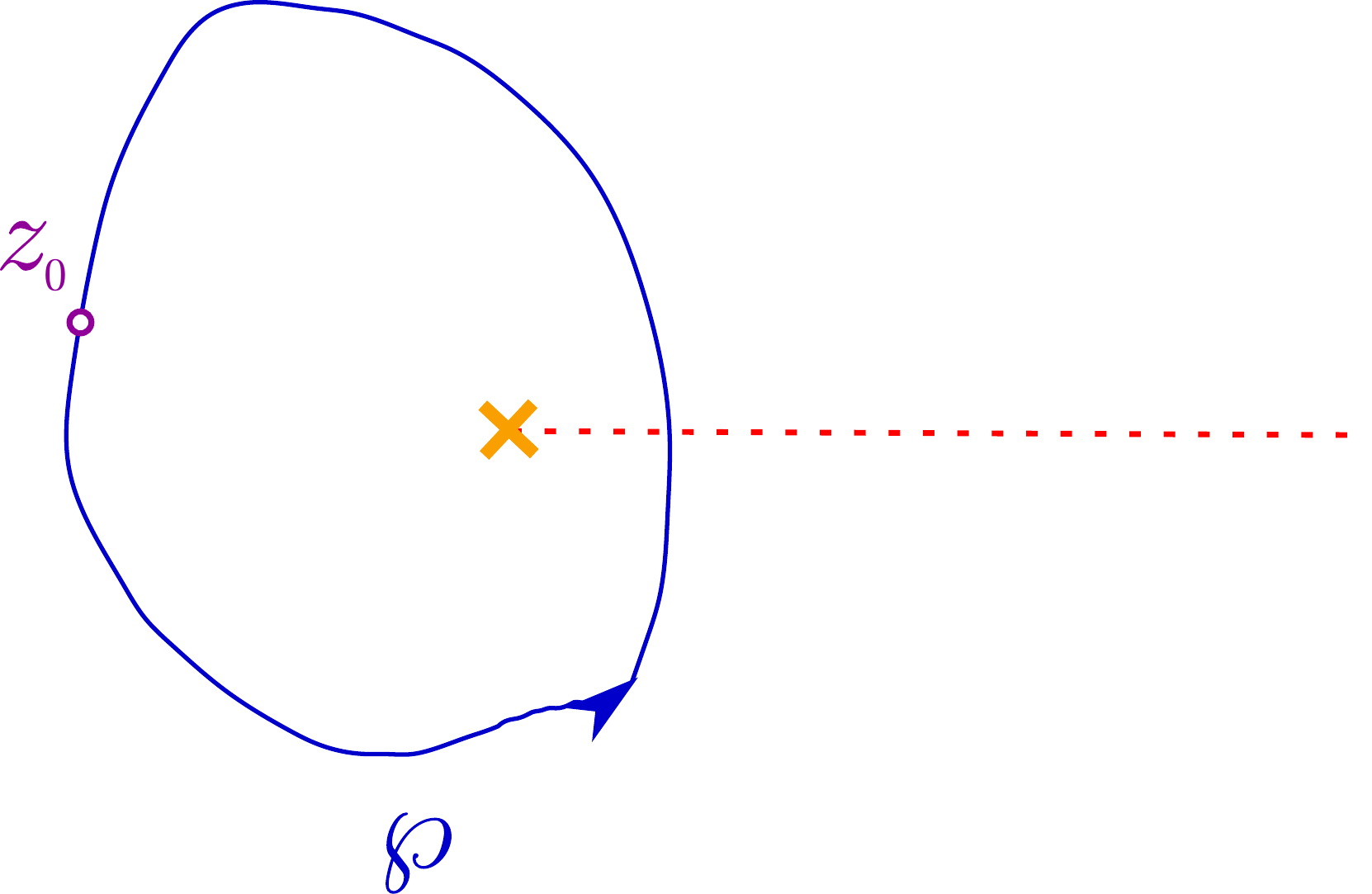}
\caption{A path around a branch point. Sheets undergo a Weyl-type monodromy along $\wp$. }
\label{fig:branch-monodromy}
\end{center}
\end{figure} 
Then consider the variation of $\varphi(z(t))$ as we vary $t$ from $0$ to $1$ ($z_{0}=z(t=0)=z(t=1)$), where $z_0$ is away from the branch cut, and we choose a representative for $\varphi(z_0)$ valued in $\mathfrak{t}$ instead of $\mathfrak{t}/W$. 
Due to the monodromy along $\wp$, we expect $\varphi(z(t=1))\neq \varphi(z(t=0))$. But their invariant polynomials must coincide, which means that they must be in the same conjugacy class, and conjugate elements of $\mathfrak{t}$ are related by a Weyl transformation, by definition. 

The Weyl branching property has a number of implications. First of all, any two choices of trivializations involving the same choice of cuts on $C$ must differ by a \emph{global} Weyl transformation. Let us consider two such trivializations that differ by the assignment of weights $\{\weight_i\}_{i=1}^{d}=\Lambda_\rho$ to the sheets of $\Sigma_\rho$. 
Concretely, let $z$ be any point on $C$ away from branch points, and let $\{x_i(z)\}_{i=1}^{d}$ be the fiber coordinates above $z$. Then in one trivialization we have a \emph{global} assignment 
\be
	\textit{Triv}\,:\ 	x_i \quad \rightarrow \weight_{f(i)}\,,
\ee
while in the other trivialization we have
\be
	\textit{Triv}'\,:\ 	x_i \quad \rightarrow \weight_{f'(i)}\,.
\ee
Then there must be a unique $w\in W$ such that
\be
	\weight_{f'(i)} = w \,\cdot\,\weight_{f(i)}\qquad \forall i=1,\dots, d\,.
\ee
This fact follows directly from the compatibility constraints on the assignments of weights to sheets.
A thorough discussion of such assignments, for all minuscule representations of ADE Lie algebras, can be found in Appendix \ref{app:sheet-weight-identification}.

Next let us consider changing the choice of \textit{Triv} by deforming branch cuts.
As long as a branch cut does not hit a branch point during the deformation, the global assignment of weights to the sheets is still determined by \textit{Triv}: above a patch of $C$ swept by a branch cut corresponding to some $w\in W$, the weight-sheet identification will simply jump from $x_i \to \weight_{f(i)}$ to $x_i \to w^{\pm_1}\cdot \weight_{f(i)}$.
On the other hand, if a branch cut of type $w$ sweeps across a branch point $w'$, then the ramification type of the latter will change by a conjugation $w'\to w'' = w^{-1} w' w$. See Figure \ref{fig:branch-cut-move} for an example of this deformation.
In either case, the weight-sheet assignment changes \emph{locally} by a Weyl transformation.

We therefore learn the following: fix any $z\in C$ away from branch points and punctures, then for any two choices of trivializations, the corresponding weight-sheet assignments above $z$ will differ by a Weyl transformation on the weights.
Theferore the notion of whether two sheets $x_i(z)$, $x_j(z)$ ``differ by a root" i.e.\ whether 
\be
	\weight_{f(i)} - \weight_{f(j)} \quad\text{is a root}
\ee
is actually independent of the choice of trivialization.
Given this fact, we can state the following claim, which will be proved below: a branch points with a sheet monodromy corresponds to a ramification points of the sheets $x_i(z)$, $x_j(z)$ such that $\weight_{i}-\weight_{j}=\alpha$ is a root, while there is no ramification at other $z_*$, although sheets may nevertheless collide.\footnote{We expect this claim to hold only for generic $u\in\CB$.}
Note that the  occurrence of ramification above a certain $z$ doesn't depend on the choice of trivialization, as neither does our characterization.

\subsubsection*{Minuscule representations}

The weight system of a representation $\rho$, $\Lambda_{\rho}$, is closed under the action of $W$ on $\ft^{*}$, and will in general comprise several Weyl orbits:
\be
	\Lambda_{\rho} \quad\rightarrow\quad \big\{{[\weight]}\big\}_{\weight\in\Lambda_{\rho}}
\ee
where $[\weight]$ denotes a Weyl orbit, understood as an equivalence class on $\Lambda_{\rho}$. 
Since a sheet monodromy corresponds to a Weyl transformation, $\Sigma_{\rho}$ factorizes into sub-covers
\be
	\Sigma_{\rho} \quad\rightarrow\quad \big\{\Sigma_{[\weight]}\big\}_{[\weight]\subset\Lambda_{\rho}}\,.
\ee
Motivated by this observation, from now on we will focus exclusively on minuscule representations of $\fg$, whose $\Lambda_{\rho}$ is a single Weyl orbit.
There is a finite number of such representations, which we list in Table \ref{tab:minuscule}.
\begin{table}[ht]
\caption{Minuscule representations of $\mathfrak{g}$.}
\begin{center}
\begin{tabular}{cl}
	\hline
	A$_{n}$ :& all fundamental representations \\
	D$_{n}$ :& the vector and the two spinors \\
	E$_{6}$ :& the ${\bf 27, \overline{27}}$ \\
	E$_{7}$ :& the ${\bf 56}$ \\
	\hline
\end{tabular}
\end{center}
\label{tab:minuscule}
\end{table}%

Notably, that E$_{8}$ does not have any minuscule representation, although the adjoint representation ${\bf 248}$  is quasi-minuscule, i.e.\ its non-zero weights are in a single Weyl orbit. Defining spectral networks for covers in quasi-minuscule representations is an interesting problem, because it would enable us to study class $\mathcal{S}$ theories of all simple Lie algebra. In this paper we will not attempt this generalization.

\subsubsection*{Square-root branch points are labeled by simple roots}

By a genericity assumption\footnote{I.e. by studying $\Sigma_{\rho}(u)$ for generic $u\in\CB$. More precisely, a generic choice of $u$ is assumed to imply that the dual of $\varphi(z)$ never crosses the \emph{intersection} of two or more Weyl-reflection hyperplanes in $\ft^{*}$.} 
we can focus on covers with branch points of square-root type only, for the following reason.
$\ft^{*}$ is divided into disjoint Weyl chambers, each  of these is  delimited by a number of faces, each corresponds to a hyperplane $\CH_{\alpha}$ orthogonal to some root $\alpha$. 
When the dual of $\varphi(z_*)$ lies on a {generic} point of $\CH_{\alpha}$, 
\be
  \begin{split}%
      \langle \beta,\varphi(z_*)\rangle =0 & \quad\text{ iff }\quad\beta=\alpha 
  \end{split}
\ee
this in turn implies that the branch point at $z_{*}$ is of square-root type, in the sense that the square of the sheet permutation monodromy is trivial.
To see this, consider the orthogonal decomposition of $\ft^{*}\simeq \alpha\IR\oplus \CH_{\alpha}$ induced by $\alpha$, and denote the corresponding components of a weight by $\weight_{i} = \weight_{i}^{\parallel}+\weight_{i}^{\perp}$.
Then the fiber coordinates of sheets are
\be
	x_{i}(z_{*}) \, \dd z\,=\, \langle \weight_{i},\varphi(z_{*})\rangle \,=\, \langle \weight_{i}^{\perp},\varphi(z_{*})\rangle\ \in T^{*}C\big|_{z_{*}}.
\ee
Therefore sheets corresponding to weights with the same orthogonal component $\weight^{\perp}$ come together above $z = z_*$. The sheets group either in singles or in pairs: $x_{i}=x_{j}$ entails that $\weight_{i}$, $\weight_{j}$ lie on the same affine line parallel to $\alpha$, passing through $\weight_{i}^{\perp}=\weight_{j}^{\perp}$, but $\Lambda_{\rho}$ must lie on a hypersphere in $\ft^{*}$ since it's a $W$-orbit and $W$ preserves norms, and the line can intersect a sphere in at most two points. 
Single sheets, which don't ramify, are those with $\weight_{} \perp \alpha$, while the others must arrange in pairs. An illustration of this statement is given in Figure \ref{fig:weyl-orbit split}. 
Therefore above a branch point at $z=z_*$ there will be a number $k_{\rho}$ of ramification points, which depends only on the representation $\rho$, with sheets colliding pairwise. This proves that the branch point is of square-root type.

The converse is also true: a square-root branch point is always labeled by a root, we provide a proof in Appendix \ref{app:root-bp}. More precisely, the sheet monodromy around a branch point of square-root type corresponds to a Weyl reflection $w_{\alpha}:\,\weight_{i}\mapsto \weight_{i} - (\weight_{i},\, \alpha^\vee)\, \alpha$ under the local sheet-weight identification.
Moreover, in the neighborhood of a square-root type branch-point one can always choose a local coordinate $z$ such that $z_{*}=0$ and for any pair of colliding sheets
\be
\begin{split}
	& \langle \weight_{i},\varphi(z)\rangle \sim (x_{0} +\sqrt{z})dz \\
	& \langle \weight_{j},\varphi(z)\rangle \sim (x_{0} -\sqrt{z})dz \\
\end{split}
\label{eq:branch-point-cover}
\ee
where $x_{0} = \langle \weight_{i},\varphi(0)\rangle=\langle \weight_{j},\varphi(0)\rangle$ and  $\weight^{\perp}_{i} =  \weight_{j}^{\perp}$.

If the dual of $\varphi(z_*)$ lies on an intersection of multiple hyperplanes $\{\CH_{\beta_{i}}\}_{i}$, we have
\be
  \langle \beta_i,\varphi(z_*)\rangle = 0 ,
\ee
and there will be a higher-index branch point at $z = z_*$. The sheet monodromy will then be a product of the $\{w_{\beta_{i}}\}_{i}$. Without loss of generality, we will often restrict for simplicity to covers whose branch-points are only of the square-root type. This only involves a mild genericity assumption, because the cases with higher-order branch points can be thought of as certain limits of the generic cases.

\subsubsection*{Standard trivializations and Weyl chambers}

Having discussed the branching structure of a minuscule cover, we now turn to its trivialization. 
A choice of trivialization involves first choosing cuts on $C$, then identifying each point $x_{i}$ in the fiber $\pi^{-1}(z)$ with a weight $\weight_{i}$, for every $z\in C$ away from the branch cuts. 
The choice of a trivialization is not unique and there is generally no canonical one, therefore the physics should not depend on it.
Here we argue that there is always a choice of trivialization, for a minuscule cover at generic $u \in \mathcal{B}$, such that every branch point is associated with a simple root, which we will call a standard trivialization.

For every $z\in C$ away from the branch points, $\varphi(z)$ can be conjugated \emph{globally} into a unique choice of Cartan $\ft\subset\fg$.
After choosing cuts, we can do better: at each point (the dual of) $\varphi(z)$ can be conjugated into  the \emph{fundamental Weyl chamber} $\CC_{0}\subset\ft^{*}$.\footnote{If $\varphi$ is conjugated into some other Weyl chamber, we can perform a global Weyl transformation (by conjugating $\varphi(z)\mapsto w^{-1}\varphi(z) w\,,\ \forall z$ that brings $\varphi*(z)$ into $\CC_{0}$.} 
To see this, suppose that $\varphi(z_{0})$ lies in the dual of $\CC_{0}$ and $\varphi(z_{1})$ does not, and consider any path $\wp:z_{0}\to z_{1}$ that does not cross any branch cut.
By continuity of $\varphi$, at some point along the path $\varphi(z_{*})$ must lie on a face of $\CC_{0}$. But if this happens, then $z_{*}$ is a branch point, contradicting the assumptions.

The fact that a choice of cuts restricts $\varphi$ to be valued in the dual of $\CC_{0}$ implies that we can associate square-root branch points on $C$ with simple roots, because the interior of the fundamental Weyl chamber is spanned by non-negative linear combinations of fundamental weights, and therefore the chamber is bounded by hyperplanes orthogonal to simple coroots $\alpha_i^\vee$, which are the same as simple roots for a simply-laced Lie algebra. 
{For higher-order branch cuts, which may emanate from irregular singularities, a similar argument implies that they should correspond to Weyl transformations associated with {edges} of $\CC_{0}$.}
So we learn that a standard trivialization always exists. Figure \ref{fig:canonical-trivialization} shows an example of a standard trivialization. We hasten to stress that such trivializations may not be unique, and there is no canonical one among them.

\begin{figure}[ht]
\begin{center}
\includegraphics[width=0.5\textwidth]{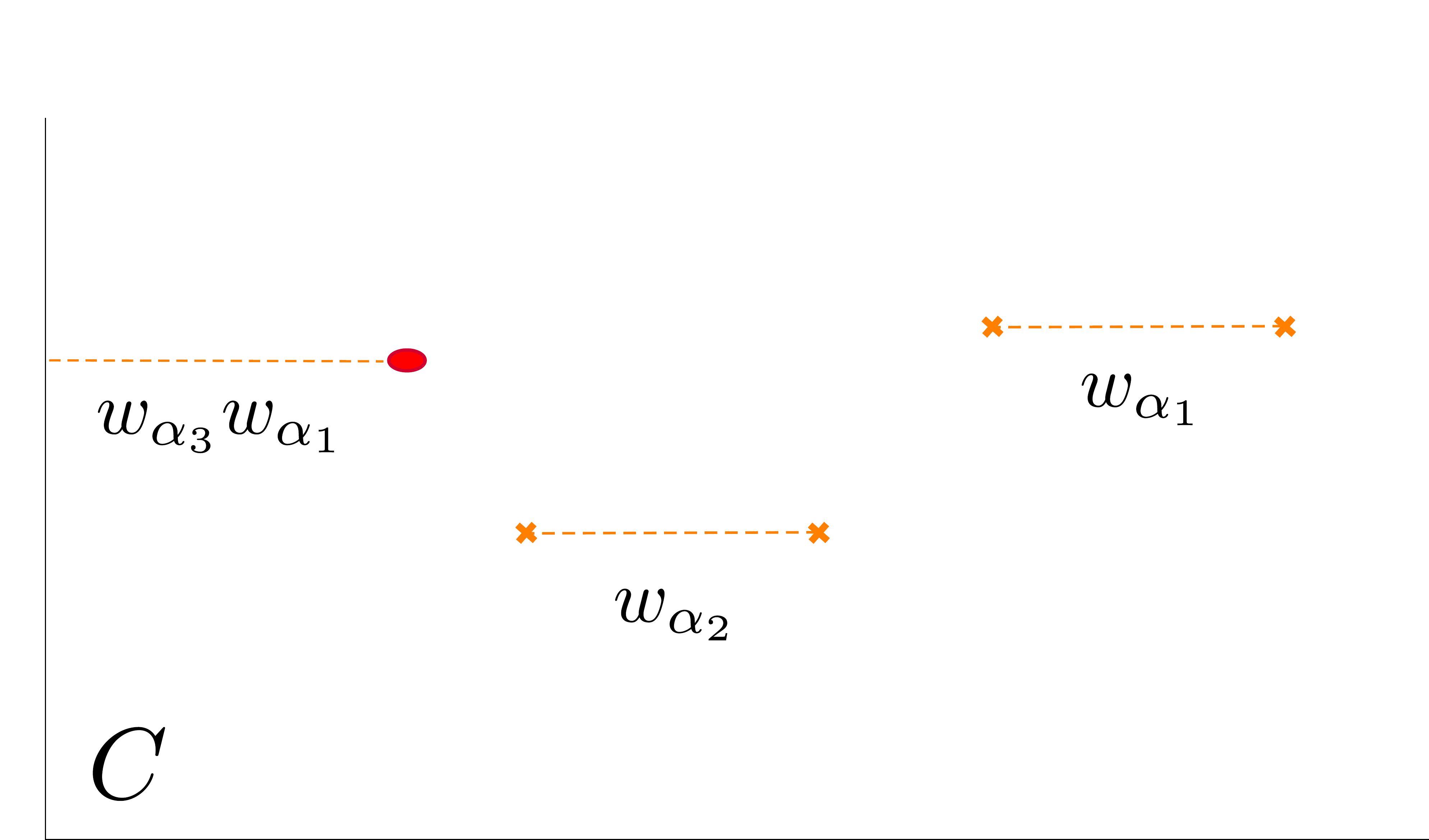}
\caption{With a choice of a standard trivialization, all square-root cuts are labeled by simple roots.
The irregular singularities shown here exhibits higher-order branching, e.g. where $\CH_{\alpha_{1}}\cap \CH_{\alpha_{3}}$ and $\langle\alpha_{i},\varphi\rangle=0$ for both $i=1,3$.}
\label{fig:canonical-trivialization}
\end{center}
\end{figure}

Finally, observe that connectedness of $\Sigma_{\rho}$, which is granted when $\rho$ is minuscule, puts further constraints on the types of cuts that must appear.
Pick any two points ${x_{\weight},\ x'_{\weight'} \in \Sigma_\rho}$, lying above $z$, $z'\in C$, such that $\pi(x_{\weight}) = z\,,\ \pi(x'_{\weight'})=z'$. 
In particular, $x_{\weight}$ is on the sheet labeled by the weight $\weight$, and similarly for $x'_{\weight'}$.
Since $\Sigma_\rho$ is connected there must be a path from $x_{v}$ to $x'_{\weight'}$: choose any such path $\gamma_{\weight\weight'}$ and project it down to ${C \setminus \{\text{branch points and punctures}\}}$. 
The path $\wp = \pi(\gamma_{\weight\weight'})$ may go through various cuts, starting from $z$ and ending at $z'$.
At each branch cut crossed by $\wp$, its preimage $\gamma_{\weight\weight'}$ crosses from one sheet $\weight_{i}$ to another sheet $w_{}(\weight_{i})$ for a certain $w\in W$. 
Taking into account all the branch cuts crosed by $\pi(\gamma_{\weight\weight'})$ we simply recover the relation $\weight' = w_{1}w_{2}\dots w_{k} (\weight)$, where $w_{i}$ are the Weyl transformations induced by crossing the cuts. 
Since $\rho$ is a Weyl orbit, the weights $\weight$, $\weight'$ will be related by a generic element of $W$ (for generic choices of the weights), and therefore
\begin{center}
	\emph{(Weyl elements associated with)  branch cuts must generate $W$}.
\end{center}
If all cuts are of the square-root-type, this means in particular that 
all simple roots must appear on the branch cuts. This latter requirement is lifted if there are higher-order branch cuts, such as those typically associated with irregular singularities. There is however a large class of interesting theories, namely when $C$ only involves regular punctures, which are subject to this property.

\subsection{Physical charge lattice and Cameral covers}\label{subsec:prym}

In this section we describe the construction of the physical charge lattice $\hat\Gamma$ of gauge and flavor charges in a 4d class $\mathcal{S}$ theory, through its relation to the homology lattice $H_{1}(\Sigma_{\rho},\IZ)$ of the spectral curve. \footnote{The notation of this paper differs slightly from \cite{Gaiotto:2012rg}. What we call $\hat\Gamma$ was denoted there as $\Gamma$. We will reserve the notation $\Gamma$ for another object that will be introduced later.}
By physical arguments, $\hat\Gamma$ is expected to be an extension of the lattice of gauge charges by flavor 
\be
	1\ \rightarrow \ \hat\Gamma_\textrm{f}\ \rightarrow \ \hat\Gamma\ \rightarrow \ \hat\Gamma_\textrm{g}\ \rightarrow \ 1
\ee
and a sub-quotient (the quotient of a sub-lattice) of $H_{1}(\Sigma_\rho,\IZ)$ \cite{Seiberg:1994aj, Gaiotto:2009hg}.
Much of this section is devoted to describing in some detail both the projection and the quotient, the content is somewhat technical but crucial for the definition of ADE spectral networks. 
Readers who are not interested in the details may safely skip this section on a first reading, 
as essential concepts will be captured by an example presented in Section \ref{subsec:explicit-trivialization}.

Recall that  $H_{1}(\Sigma_{\rho},\IZ)$ should really be thought of as a lattice fibration over $\CB$, with nontrivial monodromy around singular codimension-one loci. Physical charges are then sections of this fibration.
There is a distinguished sub-lattice which fibers trivially over $\CB$, which is the radical of the intersection pairing $\langle \cdot,\cdot \rangle$. It is generated by the punctures on $\Sigma_{\rho}$, and will be denoted henceforth $H_{1}^{\textrm{punc.}}(\Sigma_{\rho},\IZ)\simeq \IZ^{\oplus n_{p}}$.
The quotient of $H_{1}(\Sigma_{\rho},\IZ)/H^{\textrm{punc.}}_{1}(\Sigma_{\rho},\IZ)\simeq H_{1}(\overline\Sigma_{\rho},\IZ)$ can be thought as the homology lattice of $\Sigma_{\rho}$ after all punctures are filled in.
Here we describe a linear map $P$ from $H_{1}(\Sigma_{\rho},\IZ)$ to a sub-lattice, and define
\begin{align}
	\hat{\Gamma} \coloneqq P\big(H_{1}(\Sigma_{\rho},\IZ)\big)\,\big/\,\ker(Z),
\end{align}
where $Z$ is the central charge map 
\be
	Z:H_{1}(\Sigma_{\rho},\IZ)\to \IC\,,\qquad \eh{\gamma} \mapsto \oint_{\eh{\gamma}}\lambda\,. 
\ee
and the notation $\check{}$ is used to denote standard homology classes in $H_1(\Sigma,\IZ)$.
The kernel of $Z$ is a sublattice of $H_1(\Sigma_\rho,\IZ)$ (since $Z$ acts linearly) whose action naturally restricts to $ P(H_{1}(\Sigma_{\rho},\IZ))$.\footnote{The fact that $\hat\Gamma$ is a lattice follows from the fact that the quotient is by a normal subgroup (sublattice) of $ P(H_{1}(\Sigma_{\rho},\IZ))$, namely by $\ker(Z)\cap P(H_{1}(\Sigma_{\rho},\IZ))$.}
The lattice of flavor charges $\hat\Gamma_\textrm{f}$ will then descend from $H_{1}^{\textrm{punc.}}(\Sigma_{\rho},\IZ)$, while $\hat\Gamma_\textrm{g}$ will descend from $H_{1}^{}(\overline \Sigma_{\rho},\IZ)$. 
We will now give a description of how these two are constructed.

\subsubsection*{Flavor charge lattice $\hat{\Gamma}_\textrm{f}$}
Identifying the physical sub-lattice of flavor charges has already discussed in the literature, 
for instance in \cite{Gaiotto:2009hg}, where $\gA_{1}$ spectral networks were first introduced in terms of dual triangulations. In that setting we work with $\gA_{1}$ covers in the fundamental representation, therefore a regular puncture $p$ on $C$ has two lifts $p_{\pm}$ on the two sheets of $\Sigma$.
Taking $c_{\pm}$ to be counterclockwise circles around $p_\pm$, one choice of 
physical combination is the anti-invariant $c_{p}=c_{+} - c_{-}$,
while the orthogonal combination is $c_{n}=c_{+} + c_{-}$.
Denoting the residue of $\varphi(z)$ 
at the puncture  by $\diag(m/2,-m/2)$, 
we see that $Z_{c_{p}}=m$ while $Z_{c_{n}}=0$.

We propose the following generalization for spectral covers of minuscule representations of ADE Lie algebras.
For each regular puncture $p$ on $C$ we consider $\{c_{\weight}\}_{\weight\in\Lambda_{\rho}}$, counterclockwise cycles around the lifts of $p$ to sheets, and denote the lattice generated by these cycles as $\hat\Gamma^{(p)}$.
For each simple root $\alpha_{i}$ we define a linear combination
\be
	c_{\alpha_{i}} := \sum_{\weight\in\Lambda_{\rho}}(\weight\cdot\alpha_{i}) \, c_{\weight}\,,
\ee
the collection of these spans a sub-lattice of $\hat\Gamma^{(p)}$.
In this way we associate a sub-lattice to each puncture of $C$, the sum of which defines $P(H_1^\textrm{punc.}(\Sigma_\rho,\IZ))$, then taking a quotient by $\ker Z$ produces $\hat\Gamma_\textrm{f}\subset H_1^{\textrm{punc.}}$. 

Together with the definition of the sub-lattice, let us consider an explicit operator $P_\textrm{f}$ on $H_1^\textrm{punc.}(\Sigma_\rho,\IZ)$. To this end, let us once again focus on a single puncture, and define the action of $P_\textrm{f}$ in the following way 
\be
	P_\textrm{f}:\,c_{\weight} \,\mapsto \,\sum_{\alpha_{i},\alpha_j}C^{-1}_{\alpha_i, \alpha_j}\,(\weight\cdot\alpha_j)\,c_{\alpha_i}\,,
\ee
where the sum runs over the simple roots of $\fg$, and $C^{-1}_{\alpha_i,\alpha_j}$ are matrix elements of the inverse of the Cartan matrix.

The image of $P_\textrm{f}$ is the sub-lattice $P(H_1^\textrm{punc.}(\Sigma_\rho,\IZ))$, but $P_\textrm{f}$ is not quite a projector because it is not idempotent but satisfies 
\be\
	P_\textrm{f}^2 = k_\rho\, P_\textrm{f},
\ee
where $k_\rho$ is a certain integer which depends on the choice of $\rho$.
More precisely, when $\rho$ is a minuscule representation, $\tilde C_{\alpha_i,\alpha_j} =  \sum_\weight (\alpha_i\cdot\weight)(\alpha_j\cdot\weight)$ is a multiple of the Cartan matrix, and $k_\rho$ is defined as the multiplicity constant in 
\be\label{eq:k-rho}
	\tilde C_{\alpha_i,\alpha_j} = k_\rho C_{\alpha_i,\alpha_j}\,.
\ee
A simple proof of this fact is given in Appendix \ref{sec:k_rho_cartan}, together with an interpretation of $k_\rho$ that will be used below in the construction of spectral networks.
Note that in defining $P_\textrm{f}$ we have some freedom to rescale it by an overall number. In doing so, one must generally choose among idempotency, fixing a certain normalization for $k_\rho$, or having integer entries in $P_\textrm{f}$. 
For reasons that will become clear in the rest of this section, in our setting it is natural to leave the normalization of $P_\textrm{f}$ as currently defined. 

Although the orthogonal complement $\ker(P_\textrm{f})\cap \hat{\Gamma}^{(p)} \subseteq \hat{\Gamma}^{(p)}$ happens to be a sub-lattice of $\ker(Z)$, in general it does not span the whole kernel, which is the case when one has a nonabelian flavor symmetry at the puncture with two or more eigenvalues of the residue of {$\varphi$} becoming equal, for example. Therefore it is meaningful to take a quotient by $\ker(Z)$ after the projection.

One important caveat in this construction is that it only applies to regular punctures. A generalization to irregular ones is certainly desirable but beyond the scope of this work. Nevertheless, we will study cases involving irregular punctures below in Section \ref{sec:examples}, and deal with them case-by-case.

\subsubsection*{Gauge charge lattice $\hat{\Gamma}_\textrm{g}$ and the distinguished Prym of $\overline \Sigma_{\rho}$}

We now turn to the description of $P(H_{1}(\overline \Sigma_{\rho},\IZ))$, where $\overline \Sigma_{\rho}$ is the normalization of a spectral cover $\Sigma_{\rho}$ and therefore is a compact curve. On the one hand, the 4d physics is expected to be independent of the choice of $\rho$. In particular 
the rank of the lattice of gauge charges is fixed by the complex dimension of the Coulomb branch
\be
	\rank(\hat\Gamma_\textrm{g}) = 2\,\dim_{\IC}\CB = 2r\,.
\ee
On the other hand, the first homology lattice of $\overline\Sigma_{\rho}$ depends on the choice of $\rho$ through the ramification structure of the covering map $\pi:\overline\Sigma_{\rho}\to C$. Recall that a point in the base of the Hitchin system $u\in\CB$ fixes the complex geometry of $\overline\Sigma_{\rho}(u)$, 
while the fiber $\theta\in\CM_{u}$ parametrizes holomorphic line bundles on $\overline\Sigma_{\rho}(u)$. 
The space of all holomorphic line bundles
on $\overline\Sigma_\rho$(u) is $J\big(\overline\Sigma_{\rho}(u)\big)$, whose complex dimension $g$, the genus of $\overline\Sigma_\rho$, is in general greater than that of the Hitchin fiber,
\be
	\dim_{\IC}\CM_u = r \leq g = \dim_{\IC}J\big(\overline\Sigma_\rho(u)\big)\,,
\ee 
and the Hitchin fiber $\CM_{u}$ maps to a sub-variety of the Jacobian.\footnote{%
The space of degree-0 line bundles on $\overline\Sigma_{\rho}(u)$ can be identified with the space of degree-0 divisors on $\overline\Sigma_{\rho}(u)$ up to linear equivalence, $\textrm{Pic}^{0}(\overline\Sigma_{\rho}(u))$. This in turn is identified by Abel's theorem (combined with Jacobi inversion) with the Jacobian variety of $\overline\Sigma_{\rho}(u)$, $J(\overline\Sigma_{\rho}(u))\simeq\IC^g/\Lambda$, where $\Lambda$ is the period lattice generated by a basis of holomorphic one-forms, i.e.\ a basis for $H^{0}(\overline\Sigma_{\rho}(u),K_{\overline\Sigma})$.
Denoting the basis of differentials by $\omega_{i}$,  
for each cycle there is a vector in $\IC^g$ defined by $\zeta_{\eh{\gamma}} = \left(\int_{\eh{\gamma}}\omega_{1},\dots, \int_{\eh{\gamma}}\omega_{g}\right)$.
Given a Darboux basis $a_{i},b_{i},\,1\leq i\leq g$ of 1-cycles on $\Sigma$, the lattice $\Lambda$ is generated by $\zeta_{a_{i}}$ and $\zeta_{b_{i}}$, and can be shown to be non-degenerate. 
The quotient $\IC^{g}/\Lambda$ is a $2g$-torus, the Hitchin fiber on the other hand is a $2r$-torus and maps to a distinguished sub-variety in $\IC^g/\Lambda$.
} 

In particular, this means that for any choice of $\rho$ the Jacobian  $J(\overline\Sigma_{\rho})$ always contains a distinguished sub-variety, common to all representations, which is identified with the Hitchin fiber $\CM_u$ in the sense that they both parametrize holomorphic line bundles on $\overline\Sigma_\rho(u)$.
The problem of identifying the common sub-variety of $J(\overline\Sigma_{\rho})$ has been studied in the literature on integrable systems, starting with the seminal work of Adler \& van Moerbeke \cite{Adler1980267,ADLER1980318}. 
Different approaches were developed for Toda systems by several authors \cite{mcdaniel1988, Kanev:1989, Kanev2, mcdaniel1992, mcdaniel1997,mcdaniel1998}, and were further generalized by Donagi in \cite{Donagi:1993}.
The latter approach consists of identifying a distinguished sub-variety of $J(\overline\Sigma_{\rho})$ by realizing (a desingularization of) the curve as the quotient of a {universal} object known as the \emph{Cameral cover} $\tilde\Sigma$.
This is a $W$-Galois cover, whose sheets are identified with different Weyl chambers, and it carries a natural $W$-action.
Roughly speaking, the spectral cover in representation $\rho$ can be obtained from $\tilde\Sigma$ as a quotient $\overline\Sigma_{\rho}\sim \tilde\Sigma / W_{P}$ by the stabilizer $W_{P}\subset W$ of the highest weight of $\rho$. 

The $W$ action on $\tilde\Sigma$ induces a corresponding action on its Jacobian via the regular representation of $W$. This action is in fact reducible, and decomposes $J(\tilde\Sigma)$ into sub-varieties corresponding to irreducible representations of $W$. 
The quotient curve $\overline\Sigma_{\rho}$ does not carry a $W$ action, as neither does $H_{1}(\overline\Sigma_{\rho},\IZ)$ nor $J(\overline\Sigma_{\rho})$.
However, \cite{Donagi:1993} shows that there is always a sub-variety of $J(\overline\Sigma_{\rho})$, 
which descends from a subvariety of $J(\tilde\Sigma)$ associated with the reflection representation of the $W$-action on $\tilde\Sigma$.
This sub-variety goes under the name of \emph{distinguished Prym}, 
and is the one which is identified with the Hitchin fiber $\CM_{u}$.
Following \cite{Martinec:1995by}, we will take the lattice of physical gauge charges $\hat{\Gamma}_\textrm{g}$ to be \emph{defined} as the sub-lattice of $H_1(\overline\Sigma_\rho,\IZ)$ that generates the distinguished Prym.

\subsubsection*{Generators of $\hat{\Gamma}_\textrm{g}$}

From a practical viewpoint, we wish to identify a sub-lattice $\hat\Gamma_\textrm{g}\subseteq H_{1}(\overline\Sigma_{\rho},\IZ)$: 
it must be a symplectic, rank $2r$ lattice, whose periods characterize the distinguished Prym.\footnote{More precisely, to each generator $\gamma$ for $\hat{\Gamma}_\textrm{g}$, one can associate a vector of periods in $\IC^g$, computed in a basis of holomorphic differentials on $\overline\Sigma_{\rho}$. The collection of period vectors for all generators of $\hat{\Gamma}_\textrm{g}$ then characterizes the distinguished Prym.}
A few explicit examples of how this task is carried out are available in the literature \cite{Martinec:1995by, Hollowood:1997pp}. While all these examples focus on Toda systems, which may be viewed as Hitchin systems for $C={\IC^\times}$, their characterization of the distinguished sub-lattice is to a large extent local, in the sense that the global topology of $C$ plays a secondary role. 
Building on this observation, together with previously discussed facts about trivializations of spectral covers, we can extrapolate the construction of  \cite{Martinec:1995by} to other types of Riemann surfaces. 
 
As discussed in Section \ref{subsec:trivialization}, trivializations of a generic spectral cover can be brought into a standard form: square-root branch cuts of simple-root type, and higher-order cuts (from irregular singularities) corresponding to edges of the fundamental Weyl chamber $\CC_0$.
Consider a square-root cut with sheet monodromy $w_\alpha$, the Weyl reflection by the root $\alpha$, as depicted in Figure \ref{fig:single-cut}. 
Above the cut, several sheets are glued pairwise: for any pair of weights related by the Weyl reflection $\weight_{j} = w_{\alpha}\cdot\weight_{i}$, the corresponding sheets will be glued, while sheets corresponding to weights $\weight_{k}$ fixed by $w_{\alpha}$ do not ramify.
\begin{figure}[t]
\begin{center}
\includegraphics[width=0.3\textwidth]{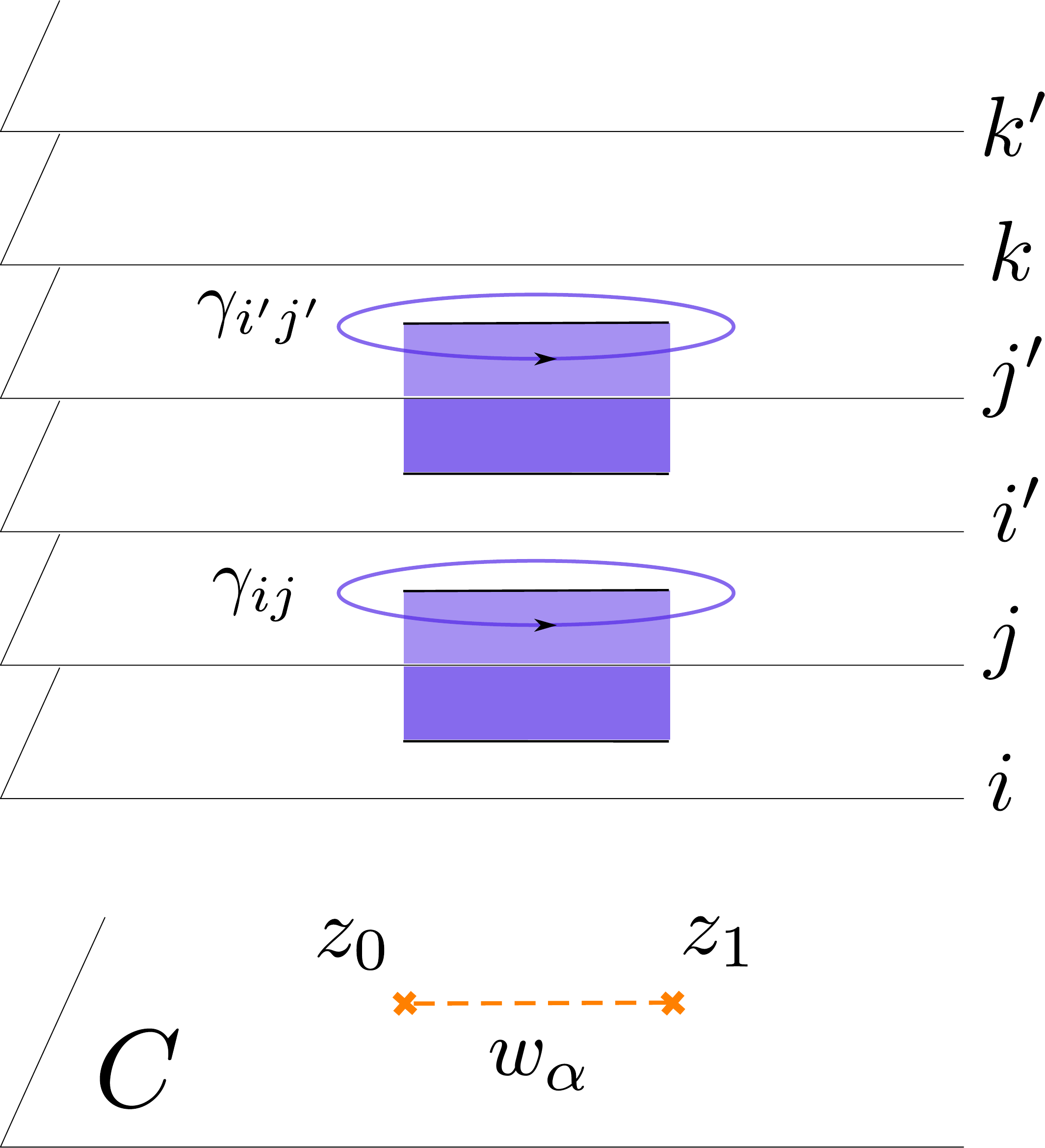}
\caption{Structure of the cover above a square root cut. Sheet pairs $(i,j)$, $(i',j')$ such that $\weight_j-\weight_i \sim \weight_{j'} - \weight_{i'} \sim \alpha$ are permuted by the sheet monodromy. Ramifying sheets are glued together above the branch cut, as indicated in blue. 
Other sheets $k,k'$ such that $\weight_k, \weight_{k'} \perp\alpha$ instead do not ramify. 
To each pair of ramifying sheets $(i,j)$ we assign a homology cycle $\eh{\gamma}_{ij}$. }
\label{fig:single-cut}
\end{center}
\end{figure}
There is a natural sub-lattice $\hat\Gamma^{(\alpha)}\subset H_1(\overline\Sigma_\rho,\IZ)$ associated to the cut that is generated by cycles $\gamma_{ij}$ wrapping around the gluing fixtures between sheets $x_i$ and $x_j$, which is illustrated in Figure \ref{fig:single-cut}.
The orientation of $\eh{\gamma}_{ij}$ is fixed to be counter-clockwise on the sheet $x_j$, which is the sheet whose corresponding weight  has positive Killing pairing with the root that is associated with the cut, i.e.\ $\weight_{j}\cdot\alpha = -\weight_{i}\cdot\alpha > 0$. 

Note that the central charges of the $\eh{\gamma}_{ij}$ for all pairs $(i, j)$ are equal,\footnote{Strictly speaking, the central charge is defined on $H_1(\Sigma,\IZ)$ but not on $H_1(\bar\Sigma,\IZ)$. Here it is understood that the statement holds after restoring the punctures on $\Sigma$, the contours are chosen ``close enough" to the plumbing fixtures that they don't include any puncture. Integration from $z_{0}$ to $z_{1}$ is understood to run below the cut.}
\be\label{eq:ij-period}
	Z_{\eh{\gamma}_{ij}} = \oint_{\eh{\gamma}_{ij}}\lambda_{} = \int_{z_{0}}^{z_{1}} (x_{{j}} - x_{{i}})\,dz =  \int_{z_{0}}^{z_{1}} \langle \weight_{j} - \weight_{i}, \varphi(z) \rangle = \int_{z_{0}}^{z_{1}} \langle \alpha, \varphi(z) \rangle \equiv Z{}^{(\alpha)}.
\ee
$Z_{\eh{\gamma}_{ij}}$ depends only on $\alpha$, and depends on it linearly. 
In $\hat\Gamma^{(\alpha)}$ there is one generator of the distinguished Prym \footnote{As noted by \cite{Martinec:1995by, Hollowood:1997pp}, the rationale behind this definition of $\eh{\gamma}_\alpha$ is that it manifestly descends from the part of $J(\Sigma)$ transforming in the reflection representation of $W$, given its linear dependence on $\alpha$.}
\be
	\eh{\gamma}_\alpha := \sum_{(i, j)}\eh{\gamma}_{ij} \,,
\ee
with central charge 
\be
	Z_{\eh{\gamma}_\alpha}  = \tilde k_{\rho}\,Z^{{(\alpha)}}\,
\ee
where $\tilde k_{\rho}$ is the number of pairs of weights such that $\nu_i-\nu_j = n \alpha$ for $n\in\IN$. $\tilde k_{\rho}$ does not depend on a particular root $\alpha$.
In fact, when $\rho$ is minuscule, we have 
\be\label{eq:k-rho-tilde}
	\tilde k_\rho = k_\rho,
\ee
with $k_\rho$ defined in (\ref{eq:k-rho}), a proof of this can be found in Appendix \ref{sec:k_rho_cartan}. Consider an operator $P_\textrm{g}\in \mathrm{End}(\hat\Gamma^{(\alpha)})$ acting as
\be\label{eq:projection-gauge}
	P_\textrm{g} : \eh{\gamma}_{ij}\,\mapsto \, \eh{\gamma}_\alpha\,
\ee
on all generators of $\hat\Gamma^{(\alpha)}$. 
The image of this operator is the rank-1 sub-lattice generated by $\eh{\gamma}_\alpha$, but this is not a projection because it is not idempotent, $\hat P^2 = \tilde k_\rho \hat P$.

There is a manifest property of $P_\textrm{g}$ that will be important in the rest of the paper: its kernel is a sub-lattice of $\ker(Z)$, in an appropriate sense.\footnote{The map $Z$ is defined on $H_1(\Sigma_\rho,\IZ)$, while $P_\textrm{g}$ is defined on $\hat\Gamma^{(\alpha)}\subset H_1(\overline\Sigma_\rho,\IZ)$. As we will explain shortly, we can choose an embedding of $\hat\Gamma^{(\alpha)}$ in $H_1(\Sigma_\rho,\IZ)$ by choosing a ``splitting'' of $H_1(\Sigma_\rho,\IZ)$. This choice however can  be made only locally on $\CB$.}
This can be easily seen by considering $P_\textrm{g}$ acting on $\hat\Gamma^{(\alpha)}$ in the basis $\{(1,0,\dots,0),\ \ldots,\ (0,\dots,0,1)\}$ of generators $\{\eh{\gamma}_{ij}\}$. Then $\ker(P_\textrm{g})$ is the sub-lattice of vectors $(a_1,\dots, a_{k_\rho}),\, a_i\in\IZ$ such that $\sum_i a_i = 0$, which clearly have vanishing central charge.

Applying this construction to all the square-root branch cuts of a given standard trivialization, we produce an isotropic sub-lattice of $\hat{\Gamma}_\textrm{g}$, and we make an assumption that the construction will give us a a Lagrangian sub-lattice of $\hat{\Gamma}_\textrm{g}$. 
As a matter of fact, this prescription produces a Lagrangian sub-lattice of $\hat{\Gamma}_\textrm{g}$ in all the examples we consider. 
More generally, however, it is not clear if this will be true for every case, as it is not obvious that our prescription would capture all of the Lagrangian sub-lattice of $\hat{\Gamma}_\textrm{g}$ characterized by Donagi's approach \cite{Donagi:1993}. While this question is not crucial for the construction of spectral networks, which is our main goal in this paper, it would nevertheless be important to clarify this point.

When maximality holds, $B$-cycles generating the complement of $A$-cycles in $\hat{\Gamma}_\textrm{g}$ are then obtained by choosing elements of $H_1(\overline\Sigma_\rho,\IZ)$ with suitable intersection pairings $\langle A_i, B_j\rangle =\tilde k_\rho\delta_{ij}$. 
We may choose such B-cycles, but they will generate not $\hat{\Gamma}_\mathrm{g}$ but a larger lattice that contains it. For example, even in the pure $\SU(2)$ case, it's $A$ and $-A + 2 B$ that generate $\hat{\Gamma}_\mathrm{g}$.
We will assume the existence of an operator $P_\textrm{g}$ defined on $H_1(\Sigma_\rho,\IZ)$ that satisfies $P_\textrm{g}^2 = k_\rho P_\textrm{g}$ and $\ker(P_\textrm{g})\subseteq \ker(Z)$ in an appropriate sense.
In support of this assumption, we note that while the above construction is far from being fully general, for all applicable cases we find such $P_\textrm{g}$ that maps $H_1(\Sigma_\rho,\IZ)$ to the distinguished Prym of \cite{Donagi:1993}, which is constructed on a fully general framework. The nontrivial examples worked out in Section \ref{sec:examples} offer further support to the validity of this assumption.

Finally, it should be noted that the intersection pairing on $H_1(\Sigma_\rho,\IZ)$ descends onto $\overline\Sigma_\rho$ and can be naturally restricted to $\hat{\Gamma}_\textrm{g}$.

\subsubsection*{The full lattice of physical charges $\hat{\Gamma}$}

So far we have treated pure flavor charges and pure gauge charges separately, defining sub-lattices $\hat{\Gamma}_\textrm{f}$ and $\hat{\Gamma}_\textrm{g}$.
To complete our description of the lattice of 4d charges $\hat{\Gamma}$ as a sub-quotient of $H_1(\Sigma_\rho, \IZ)$, we still have to explain how these are pieced together to form $\hat\Gamma$.
For simplicity, we choose to work \emph{locally} on $\CB^*=\CB\setminus \CB^\textrm{sing}$, i.e.\ we will not consider global issues due to monodromies of charges on the Coulomb branch.
Choosing to work on some contractible patch of $\CB^*$ allows us to treat $\hat\Gamma$ as a lattice, rather than a lattice fibration, and in particular it allows to choose a \emph{splitting} of the following short exact sequence
\be
	1\ \mathop{\rightarrow}^{} \ H_1^{\textrm{punc}}(\Sigma_\rho,\IZ) \ \mathop{\rightarrow}^{\iota} \ H_1(\Sigma_\rho,\IZ) \ \mathop{\rightarrow}^{\pi} \ H_1(\overline \Sigma_\rho,\IZ)\ \rightarrow \ 1
\ee
where $\iota$ is the natural inclusion map, and $\pi$ is the map induced by deleting punctures on $\Sigma_\rho$.
A splitting is a section $s: H_1(\overline \Sigma_\rho,\IZ)\to H_1(\Sigma_\rho,\IZ)$, which is also a homomorphism. Its practical purpose is that it allows\footnote{For general group extensions, existence of a splitting is not guaranteed. However, in the case at hand it is simple to show that there is always one.} us to write each $\eh{\gamma}\in H_1(\Sigma_\rho,\IZ)$ uniquely as
\be
	\eh{\gamma} = \iota(\eh{\gamma}_f) + s(\eh{\gamma}_g)\,,
\ee
for some $(\eh{\gamma}_f,\eh{\gamma}_g)\in H_1^{\textrm{punc}}(\Sigma_\rho,\IZ)\times H_1(\overline \Sigma_\rho,\IZ)$.
After choosing a splitting, we define $P\in \mathrm{End}(H_1(\Sigma_\rho,\IZ))$ as
\be
	P(\eh{\gamma}) := \iota(P_\textrm{f} (\eh{\gamma}_f)) + s(P_\textrm{g} (\eh{\gamma}_g) )\,.
\ee
It is easy to check that $P$ is a linear operator and that $P^2 = k_\rho \, P$. 
From the definitions of $P_\textrm{g}$ and $P_\textrm{f}$ given above, it follows that $\ker(P) \subseteq \ker(Z)$ on $H_1(\Sigma_\rho,\IZ)$.
Then we claim that there is an isomorphism
\be\label{eq:sub-quotient-iso}
	H_1(\Sigma_\rho,\IZ) \,/\, \ker(Z) \simeq \hat\Gamma,
\ee
that is, for every equivalence class $\ehz{\gamma}$ on the LHS there is a natural representative on the RHS.\footnote{More precisely, given a $\ehz{\gamma}$ on the LHS, its representative on the RHS (which also includes a quotient by $\ker(Z)$) will fall in the equivalence class of $k_\rho\,\ehz{\gamma}$. In other words, its ``central charge'', the period of $\lambda$ over the representative, is $k_\rho$ times that of $\ehz{\gamma}$. We will illustrate this point below with examples.}

Having defined physical charges, it remains to identify a suitable definition of the DSZ pairing. 
Let us denote the intersection pairing on $H_1(\Sigma_\rho,\IZ)$ by $\langle \cdot, \cdot \rangle$. 
Its entries must be homology cycles, in particular the intersection pairing is not well-defined on $\ker(Z)$-equivalence classes.
The physical DSZ pairing is denoted by $\langle \cdot , \cdot \rangle_\text{DSZ}$, and its entries are physical 4d charges: they could be either elements of $H_1(\Sigma_\rho,\IZ)/\ker(Z)$ or, by a mild abuse of notation, of $\hat\Gamma$, because the two are isomorphic as stated just above in (\ref{eq:sub-quotient-iso}).
The DSZ pairing can be defined in terms of the intersection pairing via the following relation. 
Given a physical charge ${\gamma}\in\hat\Gamma$, choose any representative in $\mathrm{Im}(P) \subseteq H_1(\Sigma,\IZ)$. Call this $\eh{\gamma}$, then 
\begin{align}
	\langle \eh{\gamma}_1, \eh{\gamma}_2 \rangle = k_\rho \langle \gamma_1, \gamma_2 \rangle_\text{DSZ}.
	\label{eq:intersection-DSZ}
\end{align}
The choice of representative may not be unique because $\ker(Z) \supseteq \ker(P)$, so there may be two representatives for $\gamma$, both in $\mathrm{Im}(P)$, which differ by $\eh{\gamma}_0\in \ker(Z) / \ker(P)$.
However, the DSZ pairing is well-defined provided that $\ker(Z)\cap \mathrm{Im}(s) = \ker(P)\cap \mathrm{Im}(s)$, since the intersection pairing is only affected by the gauge charge content, not by flavor charges. We don't have a rigorous proof that this condition is generally satisfied, but we will take it as a working assumption.

Finally, we hasten to stress that our characterization of $\hat\Gamma$ and the definition of $P$ are strictly local on $\CB^*$. The global extension is an interesting problem which we leave to a future work.

\subsection{Example: \texorpdfstring{$\SO(6)$}{SO(6)} SYM}\label{subsec:explicit-trivialization}
Let us illustrate the construction detailed above with an example. For simplicity we choose a theory with no flavor symmetry, the Hitchin system of 4d $\CN=2$ pure $\SO(6)$ gauge theory with a spectral cover in the vector representation.
Conventions for $\gD_{3}$ are collected in Appendix \ref{app:so6}.

The base curve is $C=\IC^{\times}$, and the equation of the spectral cover $\Sigma_\rho$ corresponding to the Weyl orbit $W\cdot\omega_{1}$ of the first fundamental weight is
\be
	x^{6} + u_{2} x^{4} + \left(z+u_{4}+\frac{1}{z}\right) x^{2} +u_{3}^{2} = 0.
\ee
Setting $y=x^{2}$ we find the discriminant of $y^{3}+ a y^{2}+ by + c$ to be
\be
	\delta = a^{2}b^{2} -4b^{3}+18 abc - 27 c^{2},
\ee
where $b$ is the only parameter carrying $z$-dependence,
\be
	b = z+u_{4}+\frac{1}{z}.
\ee
$\delta$ is cubic in $b$, hence there will be six branch points on $C$. Furthermore, for general values of $\{u_i\}$, they are first order zeros of $\delta$, therefore all six branch points are of square-root type, with two values of  $y$ colliding. 
Switching back to $x$-coordinates, this means there are four sheets colliding pairwise above each branch point.
The number of ramification points above a square root branch point is thus $k_\rho=2$.

At $z=0,\ \infty$ we have irregular singularities, around which the asymptotic forms of the spectral cover are 
\be
\begin{split}
	x^{2}(x^{4} +z) \sim 0 & \qquad \textrm{as}\ z\to\infty, \\
	x^{2}(x^{4} + z^{-1}) \sim  0 & \qquad \textrm{as}\ z \to 0 \,.
\end{split}
\ee
Around each of them there is a higher-order branch cut with a partition structure $(4)(2)$, meaning that four sheets will be permuted among themselves disjointly from the other two sheets.
From the ramification structure, the genus of the cover is therefore obtained to be  $g_{\Sigma} = 5$, and the corresponding homology lattice is rank $10$, which is larger than $2\cdot \rank (D_3) = 6$, the expected rank of $\hat{\Gamma}_\textrm{g}$.

The curve can be easily trivialized by means of simple numerics. In Appendix \ref{app:canonical-example} we give a detailed description of how such a trivialization is obtained. We present a schematic result in Figure \ref{fig:trivialization-nice}.
\begin{figure}[t]
\begin{center}
\includegraphics[width=0.75\textwidth]{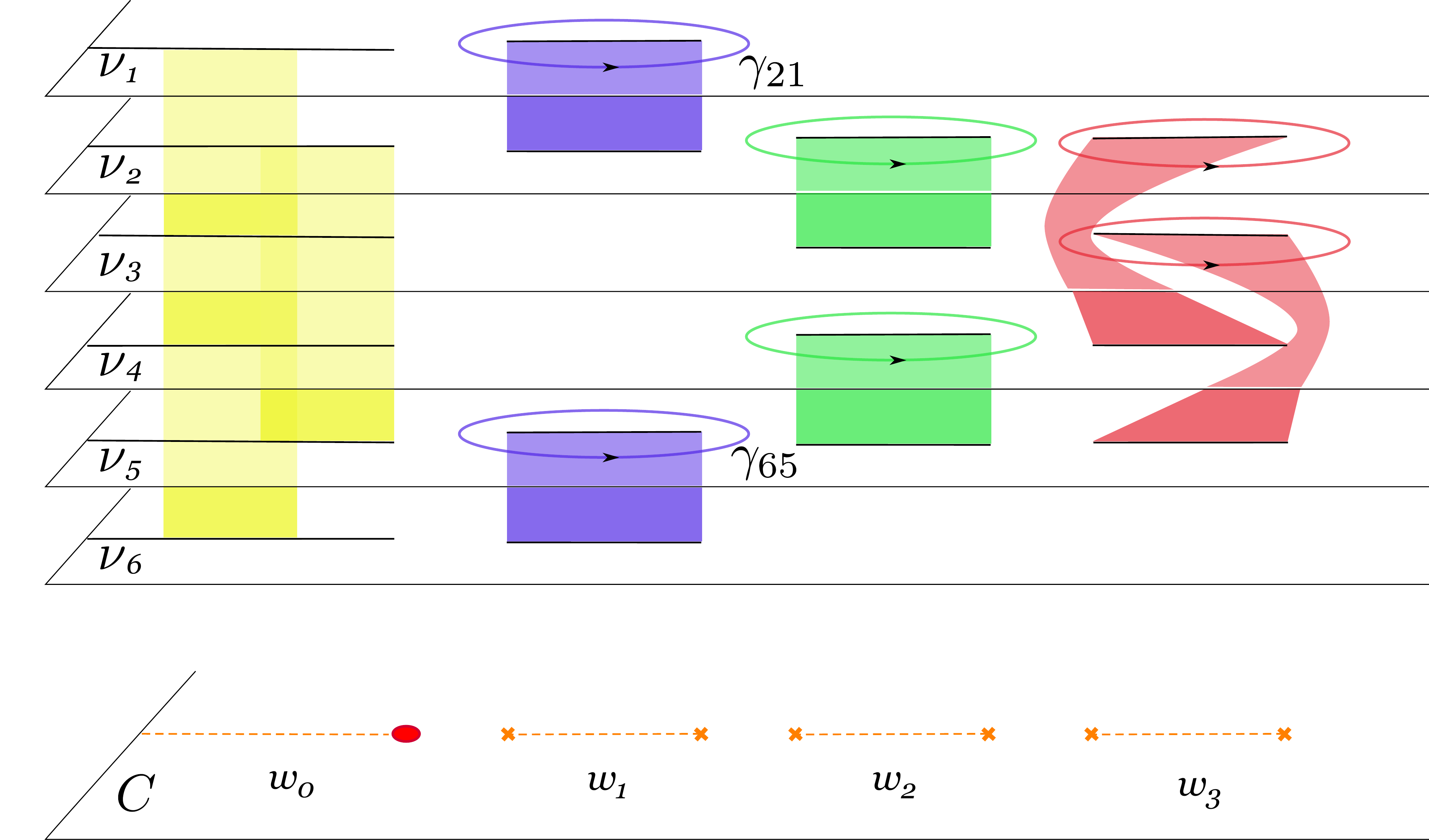}
\caption{A standard choice of trivialization for the spectral cover of $\SO(6)$ SYM theory in the vector representation.}
\label{fig:trivialization-nice}
\end{center}
\end{figure}
There are three branch cuts of square-root type, with sheet monodromy given by simple Weyl reflections $w_{i}$ $(i=1,2,3)$ corresponding to the simple roots $\alpha_{i}$.
The other cut has higher-degree branching and extends to infinity, its counter-clockwise monodromy is a Coxeter element $w_{0}=w_{2}w_{1}w_{3}$. 
The sheets are permuted by $w_{0}$ in the same way as the weights, i.e.\ $(\weight_1, \weight_3, \weight_6, \weight_4)(\weight_2,\weight_5)$. Thus the cut with monodromy $w_0$ can be associated with the vertex of the fundamental Weyl chamber, the origin of $\ft^{*}$. 

Remember that we find generators of $\hat{\Gamma}_\mathrm{g}$ by identifying $\hat{\Gamma}^{(\alpha)}$ for each square-root branch cut. Figure \ref{fig:trivialization-nice} shows $\{\eh{\gamma}_{ij}\}$, generators of $\hat{\Gamma}^{(\alpha_k)}$, above each square-root branch cut. For all three cuts $k=1$, $2$, $3$ we have $\rank (\hat{\Gamma}^{(\alpha_{k})})= k_{\rho} = 2$. The projection $P$ then singles out a combination $A_{1} = \eh{\gamma}_{21}+\eh{\gamma}_{65}$ from $\hat{\Gamma}^{(\alpha_{k})}$ and kills $\eh{\gamma}_{21}-\eh{\gamma}_{65}$, and similarly pick out $A_{2}$, $A_{3}$ for the other two cuts. 
The three distinguished cycles $A_{1}$, $A_{2}$, $A_{3}$ generate a rank-$3$ Lagrangian sub-lattice $\hat{\Gamma}_{A} \subset \hat{\Gamma}_\mathrm{g}$. Their dual cycles $\{B_i\}$ also admit a simple description in this case \cite{Martinec:1995by}. Noting that  $(w_0 w_i)^3=1$ for $i=1$, $2$, $3$, we can choose a path connecting two branch points of type $\alpha_i$, winding three times around the cut $w_0$ and through the cut $w_{\alpha_i}$, this then lifts to a closed cycle $B_i$ that satisfies $\langle A_i,B_j\rangle = k_\rho \delta_{ij} = 2\delta_{ij}$.

Because there is no flavor symmetry, we expect $\{\ephz{A_i}, \ephz{-A_i + 2 B_i}\}$ to generate the 4d charge lattice $\hat{\Gamma}$. The 4d charges are related to $\{ \ehz{\gamma}_{\mathrm{e}, i}, \ehz{\gamma}_{\mathrm{m}, i} \}$, where
\begin{align}
	\ehz{\gamma}_{\mathrm{e}, i} = [\eh{\gamma}_{21}]_{\ker(Z)} \in H_1(\Sigma_\rho, \mathbb{Z}) \big/ \ker(Z),
\end{align}
and similarly for the other charges. 
The isomorphism (\ref{eq:sub-quotient-iso}) provides complementary descriptions of the physical charges.
On the one hand, the physical central charge of a 4d charge $\ephz{\gamma}_{\mathrm{e}, i}\in\hat\Gamma$ is given by evaluating $Z$ on its representative $\ehz{\gamma}_{\mathrm{e}, i} \in H_1(\Sigma_\rho, \mathbb{Z})/\ker(Z)$, i.e.\ $Z(\ehz{\gamma}_{\mathrm{e}, i}) \equiv  Z({\eh{\gamma}_{21}})$. 
On the other hand, to get the physical DSZ pairing, one employs the intersection pairing of the distinguished representative in $\hat\Gamma$ which lies in the  sub-lattice $P\big( H_1(\Sigma_\rho, \mathbb{Z}) \big)$, i.e.\ $A_i$.\footnote{The careful reader will notice that one could as well get the physical central charge from the representative in $\hat\Gamma$, after a suitable rescaling by $k_\rho$. In fact, a better motivation for why we need to consider $H_1 / \ker(Z)$ will become apparent below, when studying 2d-4d wall-crossing.}

\section{Spectral networks for minuscule covers}\label{sec:spectral-networks}

A spectral network $\CW$ consists of two pieces of data: \emph{geometric} data encoded into a network of $\CS$-walls on $C$, and \emph{combinatorial} topological data individually attached to each $\CS$-wall, called soliton data.
The geometric data is obtained as a natural generalization of that in \cite{Gaiotto:2012rg} by rephrasing the latter in a Lie-algebraic language.
The generalization of soliton data, on the other hand, is much less trivial and hinges on specific properties of the spectral cover $\Sigma_{\rho}$ to which $\CW$ is associated.
The combinatorics of minuscule representations turns out to be particularly tractable, which is another reason for focussing on these.

This section is somewhat long and technical. Let us give an overview of how it is developed, and summarize the main points.
In Section \ref{sec:S-walls_and_solitons} we introduce the definition of $\CS$-walls and their soliton data, in particular we discuss the topological classification of soliton charges and its relation to the Lie algebra $\fg$.
An $\CS$-wall ending on a branch point, which we call a primary $\CS$-wall, is labeled by a root $\alpha$ and denoted by $\CS_\alpha$. 
Its geometry is determined by $\alpha$ through the differential equation (\ref{eq:geodesic-eq}), its soliton data is classified by pairs of weights differing by $\alpha$, together with topological data on $\Sigma_\rho$.
In Section \ref{subsec:parallel-transport} we introduce the \emph{formal parallel transport} on $C$, a formal generating series $F(\wp)$ associated to a path $\wp$ on $C$, which depends directly on the $\CS$-wall soliton data. 
The generic expression for $F(\wp)$ is given in (\ref{eq:pushforward-rel-hom}), while its relation to soliton data is described in (\ref{eq:rule-1}) and (\ref{eq:rule-2}).
In Sections \ref{sec:primary-solitons}-\ref{sec:soliton-symmetry} we study how a twisted version of homotopy invariance of $F(\wp)$ determines soliton data on all $\CS$-walls.

The study of twisted homotopy invariance is divided into several parts. 
In Section \ref{sec:primary-solitons} we derive the soliton content of primary $\CS$-walls by requiring flatness of $F(\wp)$ as $\wp$ is deformed across branch points on $C$. 
In Section \ref{sec:joints-algebra} we analyze the constraint of twisted homotopy invariance applied to intersections of $\CS$-walls, or joints, and derive the equations that determine the soliton content of outgoing $\CS$-walls in terms of ingoing ones.
The joint equations factorize in a way that is reminiscent of branching rules for representations of the Lie algebra $\fg$, and are given in (\ref{eq:final_ansatz}).
In Section \ref{sec:primary-joints} we solve the joint equations for intersections of primary $\CS$-walls, for which we know the soliton content. 
The result is closely analogous to the Lie bracket for roots of $\fg$. 
From the joint of primary $\CS$-walls $\CS_\alpha,\CS_\beta$, a new $\CS$-wall $\CS_{\alpha+\beta}$ will be born if $\alpha+\beta$ is also a root.
Moreover, the soliton data of the three $\CS$-walls are encoded in generating functions $\Xi_{\alpha}, \Xi_{\beta}, \Xi_{\alpha+\beta}$ related by
\be
	\Xi_{\alpha+\beta} = [\Xi_\alpha,\Xi_\beta]\,.
\ee
In Section \ref{sec:generic-joints} we study joints of generic $\CS$-walls, not necessarily ending on branch points. By induction we are able to prove the the Lie bracket property extends to all joints of the network. This allows us to determine recursively the soliton data on all $\CS$-walls, in terms of primary $\CS$-wall data and the combinatorics of joints.

A fundamental ingredient in the  derivation of soliton data from homotopy invariance is the existence of a \emph{soliton symmetry}, relating different solitons carried by each $\CS$-wall. 
In a nutshell, the soliton content of a $\CS$-wall, which is classified by pairs in $\CP_\alpha$ and topological data on $\Sigma_\rho$, is symmetric under permutations of the pairs in $\CP_\alpha$.
A more complete formulation is stated in Proposition \ref{prop:soliton-symmetry}. 
In Section \ref{sec:soliton-symmetry} we prove the existence of this symmetry by first observing that it is respected by the soliton data of primary $\CS$-walls and then showing that it is a symmetry of the joint equations, thereby extending the symmetry to all descendant walls.

\subsection{\texorpdfstring{$\CS$}{S}-walls and soliton data}
\label{sec:S-walls_and_solitons}
%
From this point onwards, we assume that a choice of trivialization has been made for the covering $\Sigma_\rho\to C$. 
The choice is to a large extent free, and not necessarily within the class of trivializations described in Section \ref{subsec:trivialization}.

The $\CS$-walls of a spectral network $\CW$ are sourced by branch points or by intersections of other $\CS$-walls, called \emph{joints}. Their evolution is regulated by the differential equation (\ref{eq:geodesic-eq}). 
$\CS$-walls coming from branch points will be denoted as \emph{primary}, whereas others will be called \emph{descendants}. 
A joint among $\CS$-walls induces a splitting of these into a number of \emph{streets}, as shown in Figure \ref{fig:streets}.
To each street we associate individual soliton data, which differs from one street to another even along the same $\CS$-wall.
By a mild abuse of notation, we will sometimes refer to the soliton data, or content, of an $\CS$-wall, whenever it is clear from the context which particular street we are talking about.

\begin{figure}[!ht]
\begin{center}
\includegraphics[width=.3\textwidth]{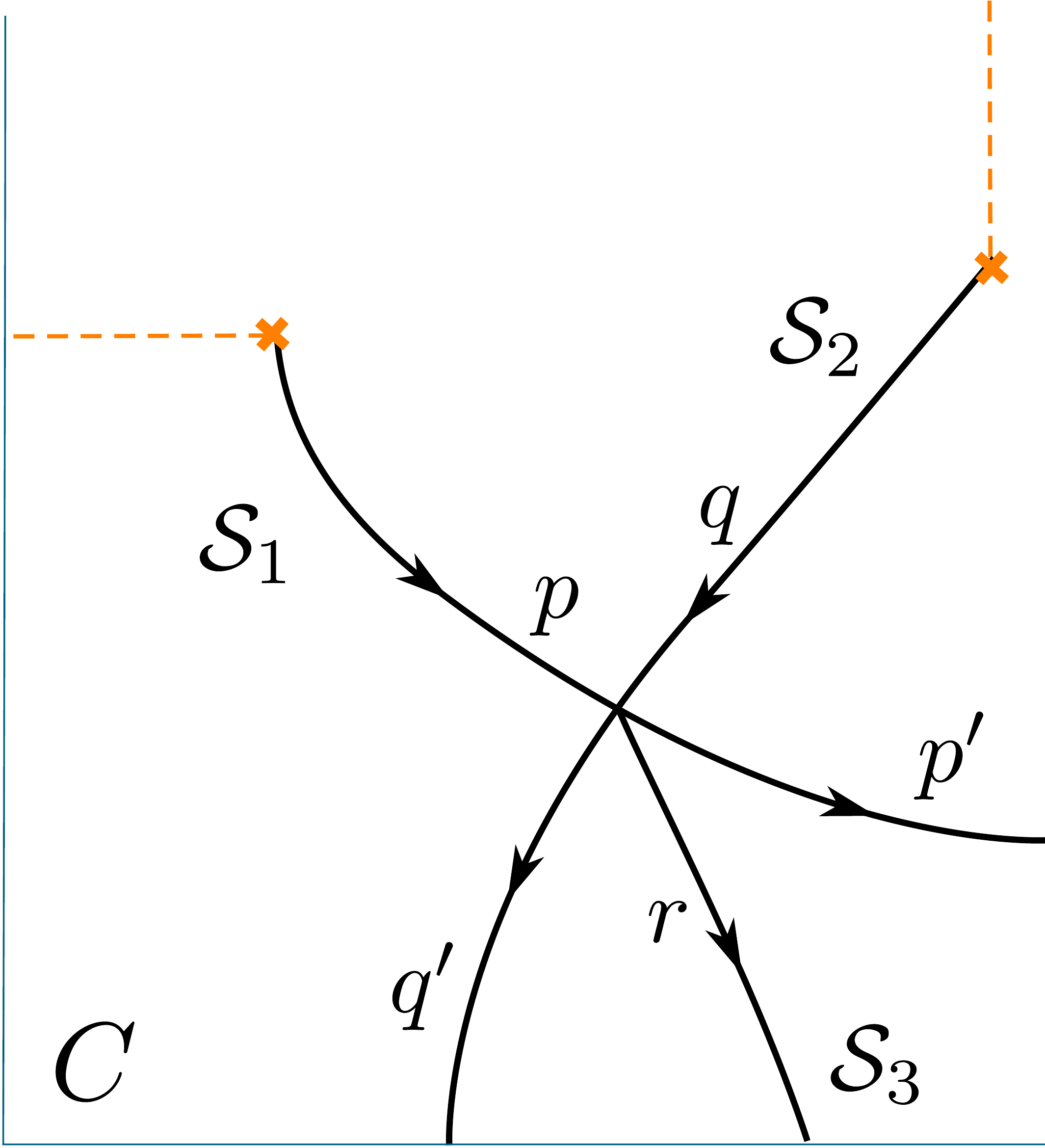}
\caption{Primary $\mathcal{S}$-walls $\CS_1, \CS_2$ are sourced at branch points, while descendant $\mathcal{S}$-walls such as $\CS_3$ are sourced by joints of other walls. Each wall carries several streets, for example $p, p'$ are streets for $\CS_1$.}
\label{fig:streets}
\end{center}
\end{figure}

\subsubsection*{Geometry of $\CS$-walls}
At generic $u\in\CB$, branch points will be of square-root type, and therefore labeled by positive roots $\alpha\in \Phi^{+}$.\footnote{In fact, as we saw in Section \ref{subsec:trivialization}, with a suitable choice of trivialization they are labeled by simple roots. 
We will however relax the constraint on the choice of trivialization, and work in greater generality, allowing for branch points of generic root types.
More precisely, a branch point is labeled by the hyperplane orthogonal to a root in $\ft^*$, which does not distinguish between $\alpha$ and $-\alpha$. 
Assuming a choice of positive roots is made, we adopt the convention of labeling branch points by positive roots from now on.
}
A primary $\mathcal{S}$-wall emanating from a branch-point of type $\alpha$ is labeled by a root $\CS_{\pm\alpha}$, its evolution is described by the following equation
\be\label{eq:geodesic-eq}
	(\partial_{t},\langle\alpha,\varphi\rangle) \in e^{i\vartheta}\IR^{+}\,.
\ee
In the neighborhood of the square-root branch-point at $z = z_0$, $\mathcal{S}$-walls are described by
\be
	z = z_0 +  t\, e^{i {\frac{2}{3}}( \vartheta + 2\pi  k)}\,, \ k\in\IZ/3\IZ,\ t\in\IR^+
\ee
Figure \ref{fig:branch-point} shows such $\mathcal{S}$-walls around a branch point labeled by $\alpha$.

\begin{figure}[t]
\begin{center}
\includegraphics[width=0.40\textwidth]{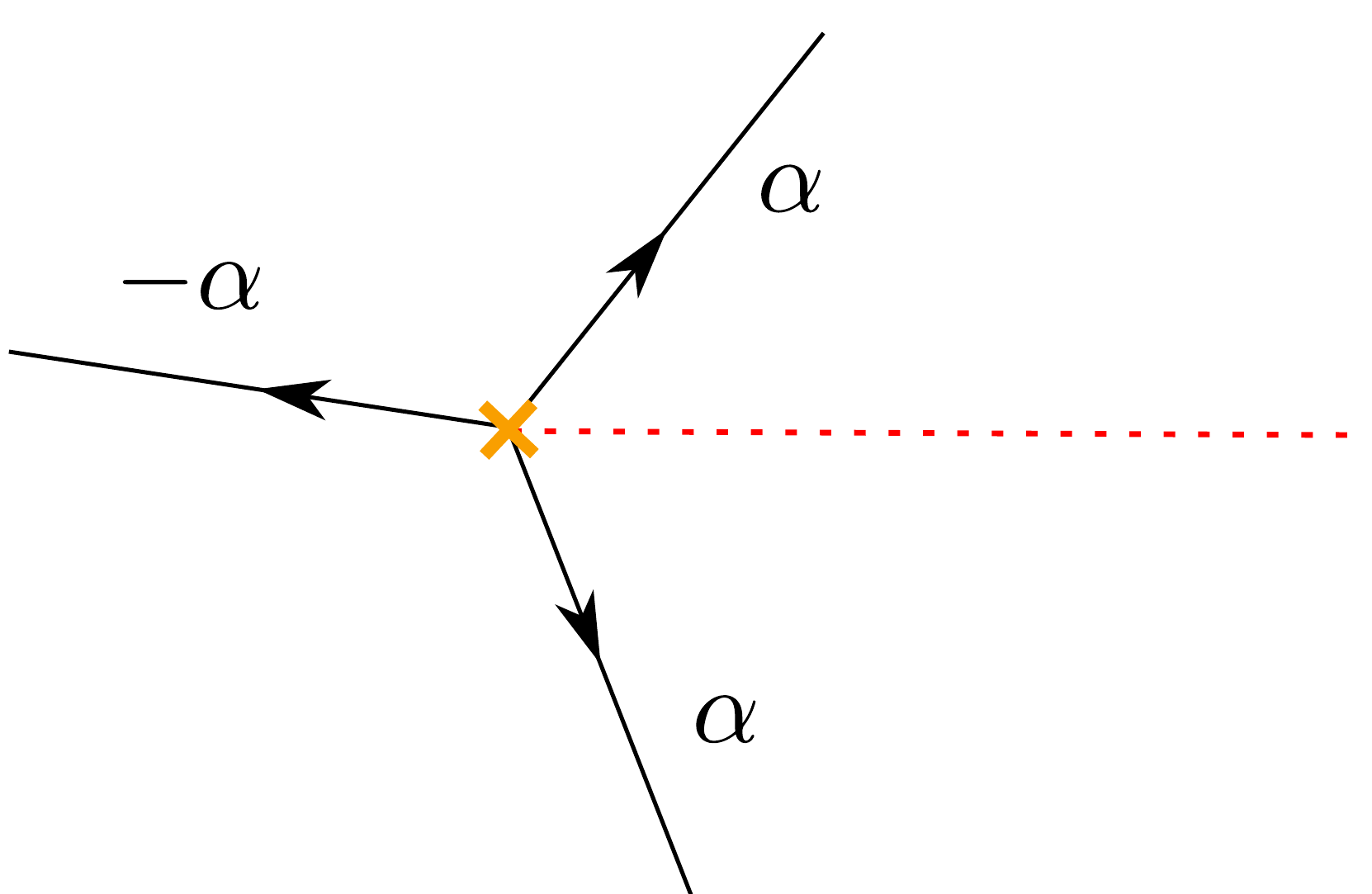}
\caption{$\CS$-walls emanating from a square-root branch point of type $\alpha$.}
\label{fig:branch-point}
\end{center}
\end{figure}

At this stage we cannot say anything general about descendant $\CS$-walls. However, we will show later that they are also labeled by roots. Therefore readers are advised to keep in mind that the current and forthcoming considerations will eventually apply to all streets of a spectral network.

\subsubsection*{Soliton data}

To each root $\alpha$, we associate a set of {ordered pairs} $\CP_{\alpha}$ defined by
\be\label{eq:soliton-charge-pairs}
	\CP_{\alpha}:= \big\{   (i,j) \,\big|  \, \weight_{j}-\weight_{i} = n \alpha  \,,\  \weight_{i},\weight_{j}\in \Lambda_{\rho}\big\}\,,\quad n\in\IN.
\ee
For minuscule representations one has $n=1$ for any pair and any choice of root.\footnote{In Appendix \ref{app:no-hexagons} this is shown by an explicit analysis of all minuscule representations, see in particular (\ref{eq:W2-orbits-A-type}) for A-type, (\ref{eq:W2-orbits-D-type-vector}) and (\ref{eq:W2-orbits-D-type-spinor}) for D-type.}
It is useful to further distinguish 
\be\label{eq:soliton-charge-pairs-pm}
\begin{split}
	\CP_{\alpha}^{-} :=  \big\{   i \,\big|  (i,j)\in\CP_{\alpha}  \big\}\,, \qquad
	\CP_{\alpha}^{+} :=  \big\{  j \,\big|  (i,j)\in\CP_{\alpha}  \big\}\,.
\end{split}
\ee
In minuscule representations these are always disjoint sets $\CP_{\alpha}^{+}\cap \CP_{\alpha}^{-} = \emptyset$, which we prove in Section \ref{sec:primary-solitons} (see in particular (\ref{eq:disjoint-property})).
Finally, note that 
\be\label{eq:k-rho-counts}
	|\CP_\alpha| = |\CP_\alpha^-| = |\CP_\alpha^+| = k_\rho
\ee 
which was defined in (\ref{eq:k-rho}), by virtue of (\ref{eq:k-rho-tilde}).

The above pairs classify the soliton data of the streets of a network, which we now introduce. Let $p$ be a street on a wall of type $\alpha$, and $z\in p$ be any point on the street.
The lift of $z$ to the sheet corresponding to a weight $\weight_{i}$ will be denoted $x_{i}(z)$. 
If $i\in\CP_{\alpha}^{\pm}$, we can further assign a tangent direction (a unit vector in $T_{x_i}\Sigma_\rho$)
by choosing the lift of $\pm$(the tangent direction) of $\CS_{\alpha}$ at $z$.
Then there is a canonical lift of $x_{i}$ to a point $\tilde x_{i}$ in the circle bundle $\tilde\pi:\tilde\Sigma_{\rho}\to\Sigma_{\rho}$.\footnote{See e.g. \cite[Sec. 3.5]{Gaiotto:2012rg} for the physical motivations for considering the lift of paths to $\tilde\Sigma_{\rho}$, also see \cite[Sec. 2.1.3]{Galakhov:2013oja} for further details on the lifting map by tangent framing.}
We then consider the set of relative homology classes of open paths on $\tilde\Sigma_{\rho}$ that start from $\tilde x_{i}(z)$ and end at $\tilde x_{j}(z)$,
\be
	H_{1}^\textrm{rel}(\tilde \Sigma_{\rho},\IZ; (\tilde x_{i}, \tilde x_{j})) / 2H,
\ee
where $H$ is the distinguished class in $H_{1}(\tilde\Sigma,\IZ)$ represented by a cycle winding once around a generic fiber.\footnote{$H_{1}^\textrm{rel}$ is a torsor for $H_{1}$, i.e.\ it carries an action by the latter. The quotient is understood in this sense.} This is a $\IZ_{2}$-extension of the more familiar 
$H_{1}^\textrm{rel}(\Sigma_{\rho}; (x_{i},  x_{j}))$, with grading given by the tangential winding number modulo $2$.
For each pair $(i,j)\in\CP_\alpha$, we define a set of $(i,j)$ soliton charges\footnote{The central charge $Z$ is defined on $H_1(\Sigma_\rho,\IZ)$, so the definition of $\Gamma_{ij}$ is understood to involve a choice of section $\sigma : H_1(\Sigma_\rho,\IZ) \to H_1(\tilde \Sigma_\rho,\IZ) / 2 H$, since $H_{1}^\textrm{rel}(\tilde \Sigma_{\rho},\IZ; (\tilde x_{i}, \tilde x_{j})) / 2H$ is a torsor for $H_1(\tilde \Sigma_\rho,\IZ) / 2 H$. Moreover $\sigma$ must be a homomorphism, so that $\sigma(\ker(Z))$ is again a lattice. Therefore to be precise the quotient should be by $\sigma(\ker(Z))$, but we use a sloppier notation of omitting $\sigma$ in the following.} 
\be\label{eq:soliton_ij_lattice}
	\Gamma_{ij}(z) := \Big( H_{1}^\textrm{rel}(\tilde \Sigma_{\rho},\IZ; (\tilde x_{i}, \tilde x_{j}))  / 2H\Big)\,\big/ \, \ker(Z)\,.
\ee
The quotient by $\ker (Z)$ in (\ref{eq:soliton_ij_lattice}) will play an important role below, and is related to the the definition of the lattice of physical charges $\hat\Gamma$ by a sub-quotient procedure that is discussed in Section \ref{subsec:prym}.
Denoting by $\Gamma$ the natural lift of $\hat\Gamma$ to $\tilde\Sigma_\rho$ (modulo $2 H$), $\Gamma_{ij}(z)$ will be a torsor for $\Gamma$.\footnote{This expectation follows from the physical picture of 2d-4d wall-crossing \cite{Gaiotto:2012rg, Gaiotto:2011tf}.}
There is a natural action of $H_{1}(\widetilde\Sigma_\rho,\IZ)$ on $H_{1}^\textrm{rel}(\widetilde\Sigma_\rho,\IZ;(\widetilde x_{i},\widetilde x_{j}))$, and we claim that this descends to an action of $\Gamma$ on $\Gamma_{ij}(z)$, which follows from the existence of the isomorphism (\ref{eq:sub-quotient-iso}).\footnote{
Both $\Gamma, \Gamma_{ij}(z)$ are obtained by the same quotient by $\ker(Z)$, so given $\eh{\gamma}\in H_1(\widetilde\Sigma_\rho,\IZ)$, $\eh{a}_{ij}\in H_1^{\textrm{rel}}(\widetilde\Sigma_\rho,\IZ)$  we have  $[\eh{\gamma}]+[\eh{a}]=[\eh{\gamma}+\eh{a}]$ if $Z_{[\eh{\gamma}]+[\eh{a}]}=Z_{[\eh{\gamma}+\eh{a}]}$, which does hold.
}
Introducing 
\be
	\Gamma(z) :=\bigcup_{(i,j)\in\CP_\alpha} \Gamma_{ij}(z)\,,
\ee
the soliton data of a street $p$ is the set of pairs
\be
	\big\{(a,\mu(a))\,|\, a\in \Gamma(p)\,,\ \mu(a)\in\IZ \big\} 
\ee
of soliton charges $a$ together with integers $\mu(a)$, known as soliton degeneracies. 
The latter obey the identity
\be
	\mu(a) / \mu(a') = (-1)^{w(a,a')}
\ee
for any pair $a$, $a'$ that differ by a winding number $w(a,\, a')$ around a fiber of $\tilde{\Sigma}_\rho$. 
This definition is closely related to the original one from \cite{Gaiotto:2012rg}, the main new ingredient being the classification by pairs $\CP_\alpha$. 
For each $(i,j)\in\CP_{\alpha}$ there is a class of solutions to the equation (\ref{eq:geodesic-eq}) that lifts to $\tilde\Sigma_\rho$, with endpoints $\tilde x_{i}$, $\tilde x_{j}$. 
The classification of these solutions by relative homology follows from the physical interpretation of $C$, $\Sigma_\rho$ and the points $z$, $\{x_i\}$ in terms of 2d-4d coupled systems \cite{Gaiotto:2011tf, Gaiotto:2013sma, Shifman:2014lba, Longhi:2016bte}.

\subsubsection*{Central charges from soliton trees}

It is useful to establish a simple operational criterion to determine whether two relative homology classes $\eh{a},\eh{a}'\in H_{1}^\textrm{rel}(\widetilde\Sigma_\rho,\IZ;(\widetilde x_{i},\widetilde x_{j}))/2H$ are identified by the quotient by $\ker (Z)$ in the definition of $\Gamma_{ij}(z)$. 
Here we provide such a criterion, by explaining how the central charge of a soliton path $a\in\Gamma_{ij}(z)$ is encoded in certain topological data of $\CW$.

The quotient by $\ker(Z)$ induces the following identification in $\Gamma_{ij}(p)$ 
\be\label{eq:soliton-equivalence}
	[\eh{a}]_{\ker(Z)} = a \sim a' = [\eh{a}']_{\ker(Z)} \qquad \text{if}\qquad Z_{a} = Z_{a'} \,,
\ee
where the central charge of solitons supported on $p$ is 
\be
	Z_{a} = \int_{\eh{a}}\lambda_{}\,.
\ee
\begin{figure}[t]
\begin{center}
\includegraphics[width=0.4\textwidth]{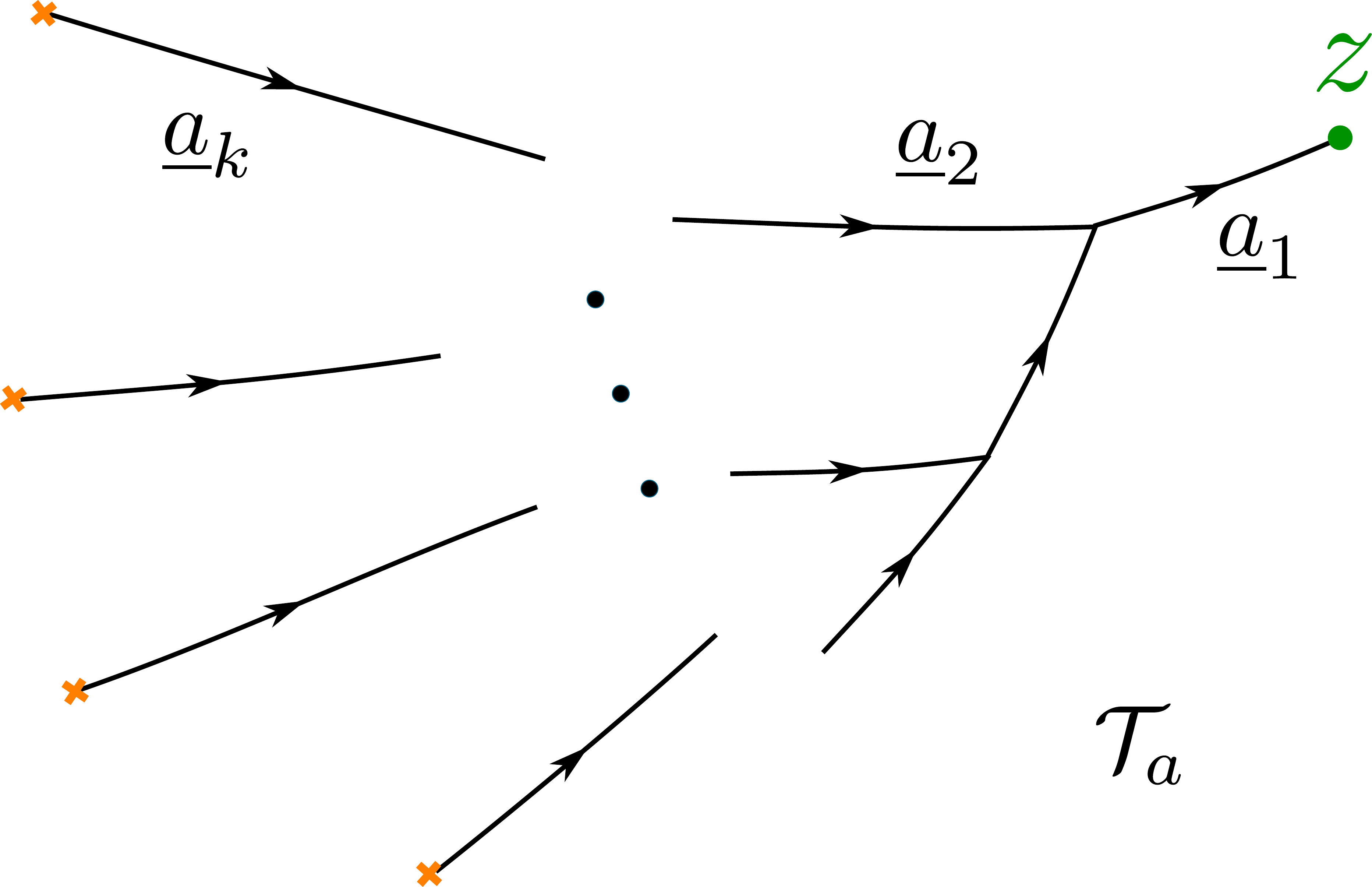}
\caption{A soliton tree $\CT_{a}$.}
\label{fig:soliton-tree}
\end{center}
\end{figure}

To any soliton charge $a$ we associate a \emph{decorated tree} $\CT_a$: its edges consist of streets of $\CW$, its nodes correspond to joints (intersections of $\CS$-walls), its leaves are branch points, and its root is $z$.
The central charge $Z_a$  is then entirely determined by the data of $\CT_a$.
To explain how decorated trees are defined, we need to state a few general properties of soliton data, which will be derived later. 

For each soliton charge $a\in\Gamma(z)$ with $\mu(a)\neq 0$ there is a canonical representative path $\mathfrak{a}$ (i.e.\ an \emph{actual} path) on $\tilde\Sigma_\rho$, whose projection on $C$, denoted by $T_{\mathfrak{a}}$, lies entirely on the network $\CW$. 
$T_{\mathfrak{a}}$ is topologically a tree, its edges $\{\underline a_i\}_{i=1}^{k}$ are streets of $\CW$. 
Above a street labeled by a root $\alpha$, the path $\mathfrak{a}$ runs on sheets $\tilde{x}_i$, $\tilde{x}_j$ of $\tilde\Sigma_\rho$ for some pair $(i,j)\in\CP_\alpha$.
Let $n_i$ be half the number of times $\mathfrak{a}$ runs above the edge $\underline a_i$ (by construction $a$ always runs twice over a street)\footnote{The same kind of counting already appeared, in the context of closed cycles, in \cite[App. B.2]{Galakhov:2014xba}.}, and $\alpha_i$ be the root that labels the underlying street.
Then $\CT_a$ is the decoration of $T_{\mathfrak{a}}$ obtained by associating $(\alpha_i, n_i)$ to each edge $\underline a_i$. See Figure \ref{fig:soliton-tree} for an example of a soliton tree.
The central charge of $T_{\mathfrak{a}}$ can then be expressed entirely in terms of the tree data,
\be
	Z_{a} = \sum_{i=1}^{k} n_{i} \int_{\underline{a}_{i}}\langle\alpha_{i},\varphi(z)\rangle = \sum_{i=1}^{k} n_{i}\,Z_{\underline{a}_{i}}\,,
\ee
where the orientation to be used for the integral is the one shown in Figure \ref{fig:soliton-tree}.\footnote{We tacitly relied on the fact, already mentioned below (\ref{eq:soliton-charge-pairs}), that $\weight_{j}-\weight_{i} = m\alpha$ with $m$  always $1$, since we restrict our analysis to minuscule representations.}
A useful criterion for distinguishing whether two soliton charges  $a,a'\in\Gamma_{ij}(p)$ coincide is then to compare their decorated trees:
\be
	\CT_{a}=\CT_{a'}\qquad\Rightarrow\qquad Z_{a}=Z_{a'}
\ee
and {soliton charges with the same decorated tree are equivalent}.

\subsection{Formal parallel transport}\label{subsec:parallel-transport}

Given a spectral network $\CW$, the associated formal parallel transport is characterized by defining a formal generating function $F(\wp)$ for any open path $\wp$ in $C$ from $z_{1}$ to $z_{2}$.
The definition will make use of the data of $\CW$, but we have not yet specified how to fix the soliton content.
It makes nevertheless good sense to give the formal definition in terms of unspecified soliton data. In fact the latter will ultimately be determined by imposing the flatness condition on $F(\wp)$. 
The definition is similar to that of \cite{Gaiotto:2012rg}, to which we refer for more details.

The first step is to introduce $\Gamma_{ij}(\tilde z_{1},\tilde z_{2})$, a cousin of $\Gamma_{ij}(z)$, which instead of classifying paths from lifts $\tilde x_{i}(z)$ to $\tilde x_{j}(z)$ lying above the same $z$, will classify paths from $\tilde x_{i}(z_{1})$ to $\tilde x_{j}(z_{2})$  up to relative homology,\footnote{The tangential directions encoded within $\tilde x_{i},\tilde x_{j}$ are understood to be the tangents at endpoints of $\wp$.} 
\be
	\Gamma_{ij}(\tilde z_{1},\tilde z_{2}) = \left[H_{1}^\textrm{rel}\big(\tilde\Sigma_{\rho};(\tilde x_{i}(z_{1}), \tilde x_{j}(z_{2}))\big)/2H\right]\Big/\ker(Z) \,.
\ee
We also define
\be
	\Gamma(\tilde z_{1}, \tilde z_{2}) = \bigcup_{ij}\Gamma_{ij}(\tilde z_{1}, \tilde z_{2})\,.
\ee
where the union runs over all values of $i$, $j=1,\, \ldots ,\, d = \dim(\rho)$.

Next we introduce a certain ring of formal variables: to each $a\in\Gamma_{ij}(\tilde z_{1},\tilde z_{2})$ we associate a formal variable $X_{a}$, which obey the following product rule\footnote{Notice that the concatenation operation is well-defined on the equivalence classes, but it's not injective.} 
\be
	X_{a}X_{b} = \left\{\begin{array}{ll} X_{a+b}\qquad  & \text{if $a$ ends where $b$ begins,} \\ 0 & \text{otherwise.}\end{array}\right.
\ee
In addition, given any $a,\, a' \in \Gamma(\tilde z_{1},\tilde z_{2})$ 
\be
	X_{a} / X_{a'} = (-1)^{w(a,a')} 
\ee
where $w(a,a')\in\IZ/2\IZ$ is the relative winding between $a,a'$, modulo $2H$. 

Given any path $\wp$ on $C$, the formal parallel transport $F(\wp,\CW)$ is a generating series
\be\label{eq:formal-transport-generic}
	F(\wp,\CW) = \sum_{a\in\Gamma(\tilde z_{1},\tilde z_{2})} \overline{\underline{\Omega}}(\wp,\CW,a) \, X_{a}
\ee
where the coefficients $\overline{\underline{\Omega}}(\wp,\CW,a)$ are $\IZ$-valued\footnote{$\overline{\underline{\Omega}}(\wp,\CW,a)$ is the \emph{framed} 2d-4d BPS degeneracy, introduced in \cite{Gaiotto:2011tf}, for a supersymmetric line interface characterized by $\wp$ and a framed BPS state of charge $a$.} with the following properties
\be
\begin{split}
	\overline{\underline{\Omega}}(\wp,\CW,a) / \overline{\underline{\Omega}}(\wp,\CW,a') = (-1)^{w(a,a')}\qquad& \text{if }a,\, a'\in\Gamma_{ij}(\tilde z_{1},\tilde z_{2}) \\
	\overline{\underline{\Omega}}(\wp,\CW,a) / \overline{\underline{\Omega}}(\wp',\CW,a) = (-1)^{w(\wp,\wp')}\qquad & \text{if }\wp,\, \wp' \text{ have equal endpoints } \\
& \text{and equal tangents at endpoints\,.}
\end{split}
\ee
These properties ensure that each term of the formal generating series depends only on the relative homology class $\ehz{a} = \tilde\pi_{*}(a) \in H_{1}^\textrm{rel}(\Sigma_{\rho}, (x_{i}, x_{j}))/\ker(Z)$ 
and not on the choice of representative $a$. 
With this understood, we can manifestly express the generating series as a sum over these homology classes
\be\label{eq:pushforward-rel-hom}
		F(\wp,\CW) = \sum_{\ehz{a}\in\Gamma( z_{1}, z_{2})} \overline{\underline{\Omega}}(\wp,\CW,a) \, X_{a},
\ee
where $\Gamma(z_{1}, z_{2}) = \bigcup_{ij}H_{1}^\textrm{rel}(\Sigma_{\rho}, (x_{i}, x_{j}))$.

The coefficients of the formal series (\ref{eq:formal-transport-generic}) are fixed by two rules. 
First, if $\wp\cap\CW = \emptyset$ 
\be\label{eq:rule-1}
	F(\wp,\CW) = D(\wp) := \sum_{i}X_{\wp^{(i)}} 
\ee
where $\wp^{(i)}$ denote the canonical lifts of $\wp$ to sheets of $\tilde\Sigma_{\rho}$ with tangent framing.
Second, if $\wp$ intersects $\CW$ on a street $p$ of type $\alpha$ at $z \in C$, the above formula is modified as
\be\label{eq:rule-2}
	F(\wp,\CW) = D(\wp_{+}) \left( 1 + \sum_{(i,j)\in\CP_{\alpha}}\sum_{a\in\Gamma_{ij}(z)} \mu(a) X_{a} \right) D(\wp_{-}), 
\ee
where $\wp_{\pm}$ are shown in Figure \ref{fig:detour-rule}. This is the standard ``detour rule'' of \cite{Gaiotto:2012rg}.

\begin{figure}[h!]
\begin{center}
\includegraphics[width=0.4\textwidth]{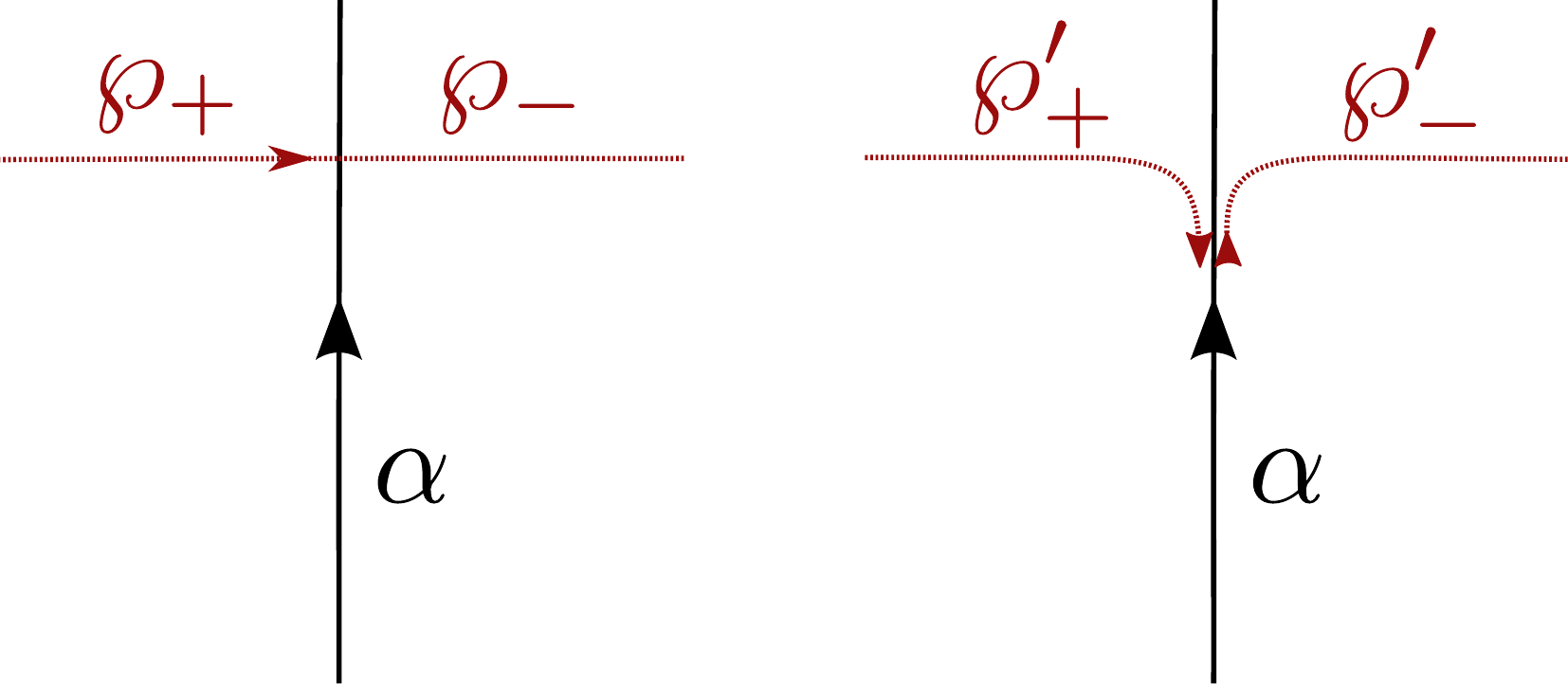}
\caption{When $\wp$ crosses s street $p$ of the network, the concatenation of $\wp^{(i)}_{\pm}$ with solitons on $p$ is understood to involve a slight deformation to match the tangents.}
\label{fig:detour-rule}
\end{center}
\end{figure}

These rules completely determine $F(\wp,\CW)$ in terms of the soliton data on $\CW$. 
We now turn to study the constraints that \emph{flatness}, i.e.\ the condition that $F(\wp,\CW)$ only depends on the (twisted) homotopy class of $\wp$\footnote{With tangents at the endpoints fixed.}, imposes on the soliton data.

\subsection{Primary \texorpdfstring{$\CS$}{S}-walls and their soliton data} \label{sec:primary-solitons}

\begin{figure}[ht]
\begin{center}
\includegraphics[width=0.5\textwidth]{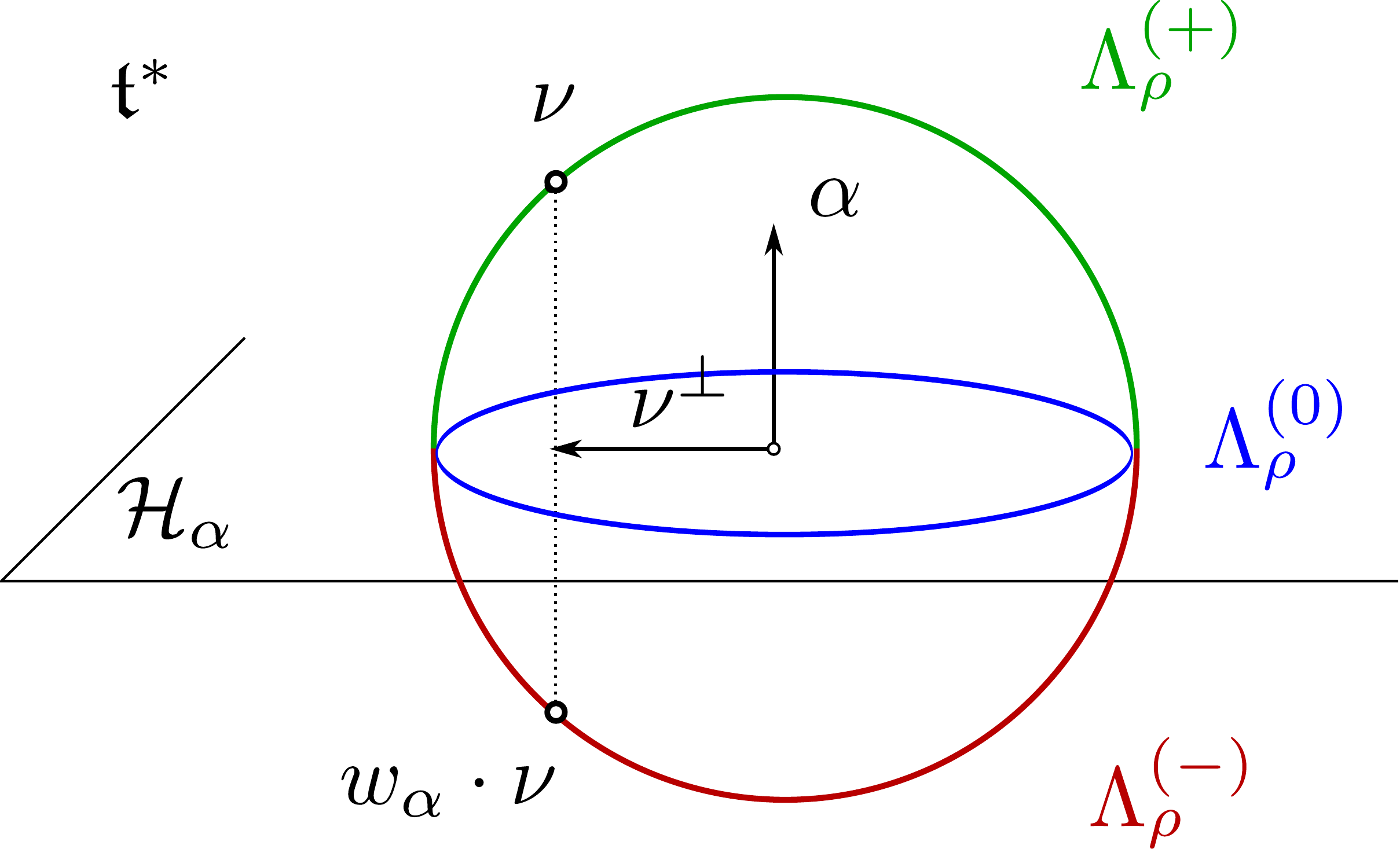}
\caption{The splitting of the weight system $\Lambda_{\rho}$ induced by the root $\alpha$. Weights of a minuscule representation lie on a hypersphere in $\ft^*$.}
\label{fig:weyl-orbit split}
\end{center}
\end{figure}

All weights $\weight\in\Lambda_{\rho}$ of a minuscule representation lie on a hypersphere $S^{r-1} \subset \ft^{*}$, which results in a simple structure of $\CP_{\alpha}$. Consider the partition of $\Lambda_{\rho}$ induced by $\alpha$
\be\label{eq:disjoint-property}
	\Lambda_{\rho} = \Lambda_{\rho}^{(0)}\,\sqcup\, \Lambda_{\rho}^{(+)}\,\sqcup\, \Lambda_{\rho}^{(-)}
\ee
where 
\begin{align*}
	\Lambda_{\rho}^{(0)} &= \{\weight\in\Lambda_{\rho}\,|\, \weight\cdot\alpha=0\},\\ 
	\Lambda_{\rho}^{(+)} &= \{\weight\in\Lambda_{\rho}\,|\, \weight\cdot\alpha>0\},\\ 
	\Lambda_{\rho}^{(-)} &= \{\weight\in\Lambda_{\rho}\,|\, \weight\cdot\alpha<0\}.
\end{align*}
The reflection $w_{\alpha}$ fixes each element of $\Lambda_{\rho}^{(0)}$ and maps $\Lambda_{\rho}^{{(+)}}$, $\Lambda_{\rho}^{(-)}$ into each other. This makes it manifest that for minuscule representations $\CP_{\alpha}^{-}\cap\CP^{+}_{\alpha}=\emptyset$ for such $\CS$-walls. 

At a square-root branch point of type $\alpha$, a sheet $x_i$ corresponding to $\weight_{i}\in\Lambda_{\rho}^{(-)}$ will come together with a sheet $x_j$ corresponding to $\weight_{j}\equiv w_{\alpha}\cdot\weight_{i}\in\Lambda_{\rho}^{(+)}$. 
Hence the soliton types carried by a primary $\mathcal{S}$-wall $\CS_\alpha$ is
\be\label{eq:weyl-paired}
	\CP_{\alpha}=\{(i,j) \,|\, \weight_{i}\in\Lambda_{\rho}^{(-)}\,;\weight_{j}=w_{\alpha}\cdot\weight_{i}  \}.
\ee
Consider ordering of the weights as
\be
	\big(\weight_{i_{1}},\weight_{\bar i_{1}},\dots, \weight_{i_{m}},\weight_{\bar i_{m}},\weight_{k_{1}},\dots,\weight_{k_{r}}   \big)
\ee
with ${i_{s}}\in \CP_{\alpha}^{-}$ and $\bar i_{s}$ the corresponding $w_{\alpha}$-mirror image, and the remaining $\weight_{k_{s}}\in\Lambda_{\rho}^{(0)}$.
Then the Stokes matrix capturing the detour rule for $\CS_{\alpha}$ has the block-diagonal form 
\be
\mathbb{I} + %
\left(\begin{array}{ccccc|ccc}
  0& \star& & & & & & \\
  0& 0& & & & & & \\
  & & \ddots& & & & & \\
  & & & 0 & \star & & & \\
  & & & 0 & 0 & & & \\
  \hline
  & & & & & 0 & & \\
  & & & & & & \ddots & \\
  & & & & & & & 0
\end{array}\right)
\ee
Likewise, the Stokes matrix of detours across a wall $\CS_{-\alpha}$ will have a shape corresponding to the transpose of this matrix.
When a parallel transport along $\wp$ crosses a branch cut of type $w_{\alpha}$, this will naturally be represented by the insertion of a matrix of the form
\be
\left(\begin{array}{ccccc|ccc}
  0& 1& & & & & & \\
  1& 0& & & & & & \\
  & & \ddots& & & & & \\
  & & & 0 & 1 & & & \\
  & & & 1 & 0 & & & \\
  \hline
  & & & & & 1 & & \\
  & & & & & & \ddots & \\
  & & & & & & & 1
\end{array}\right)\,,
\ee
reflecting the gluing of sheets of $\tilde\Sigma$ across the cut.

\begin{figure}[ht]
\begin{center}
\includegraphics[width=0.40\textwidth]{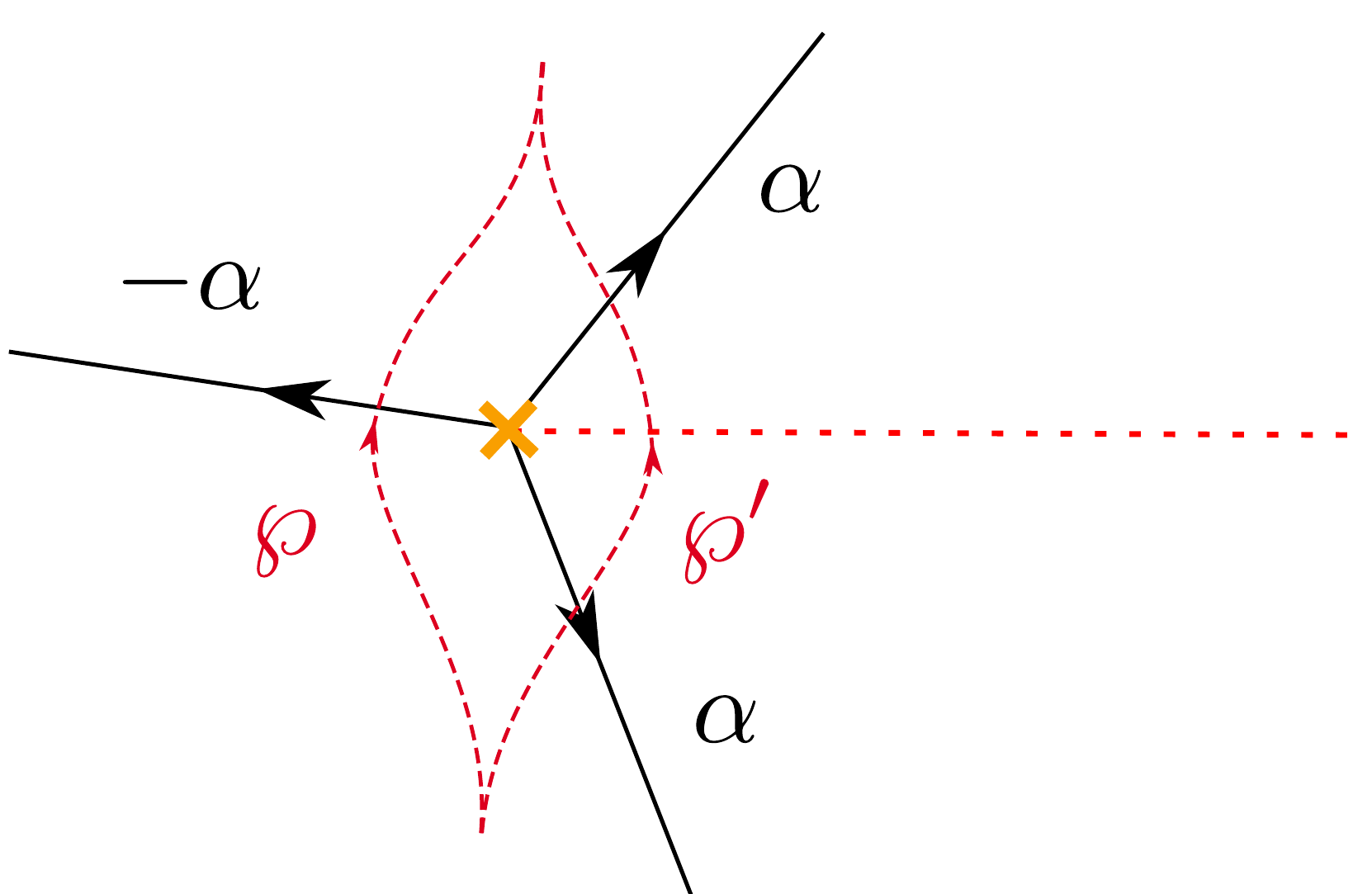}
\caption{
Homotopic paths $\wp, \wp'$ across a branch point. Demanding homotopy invariance $F(\wp,\CW)= F(\wp',\CW)$ fixes the soliton content of the primary $\mathcal{S}$-walls emanating form the branch point.
}
\label{fig:branch-cut-paths}
\end{center}
\end{figure}

Requiring homotopy invariance of $F(\wp,\CW)$ as $\wp$ is deformed across a branch-point (see Figure \ref{fig:branch-cut-paths}) then results in independent equations corresponding to blocks on the diagonal. 
The equation for each (nontrivial) block corresponds exactly to the well-understood case of the $\textrm{A}_{n}$ network in the first fundamental representation \cite{Gaiotto:2012rg}, and gives the following soliton content: a primary $\mathcal{S}$-wall $\CS_{\alpha}$ (resp.\ $\CS_{-\alpha}$) will carry exactly one  \emph{simpleton} for each $(i,j)\in \CP_{\alpha}$ (resp.\ $\CP_{-\alpha}$) with degeneracy $\mu=1$.

In Appendix \ref{app:simpletons} we provide an alternative derivation of the soliton content of primary $\mathcal{S}$-walls, where a detailed computation is carried out entirely by imposing the parallel transport rules and enforcing homotopy invariance.

\subsection{Joints and factorization of homotopy identities}
\label{sec:joints-algebra}

Having determined the soliton data of primary $\mathcal{S}$-walls, or more properly of streets ending on branch points, the next step is to consider  joints,  i.e.\ intersections of primary and descendant $\mathcal{S}$-walls.
For a generic network $\CW$ there are three distinct types of joints, depicted in Figure \ref{fig:joint-types}. 
4-way joints are trivial, in the sense that the soliton data of outgoing streets is the same as that of ingoing streets.
5-way and 6-way joints are instead nontrivial: in the former case a new street is sourced from the joint, in the latter the soliton data of incoming streets may change across the joint. 

\begin{figure}[ht]
\begin{center}
\begin{subfigure}{0.3\textwidth}
	\includegraphics[width=\textwidth]{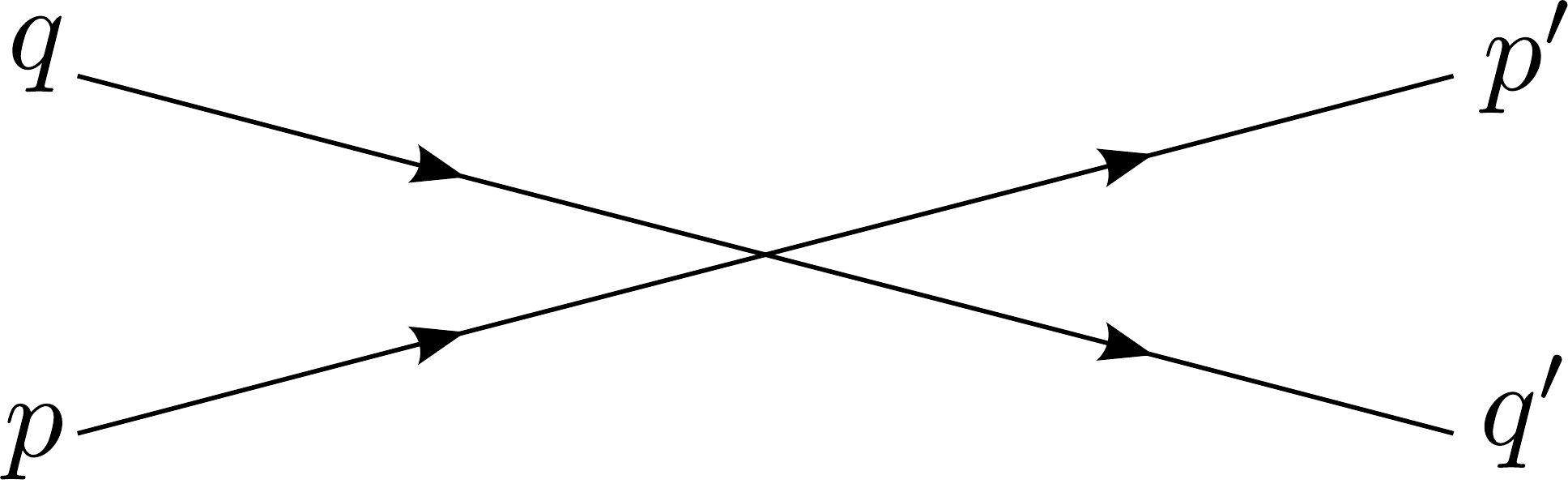}
	\caption{4-way}
\end{subfigure}
\hspace{0.03\textwidth}
\begin{subfigure}{0.3\textwidth}
	\includegraphics[width=\textwidth]{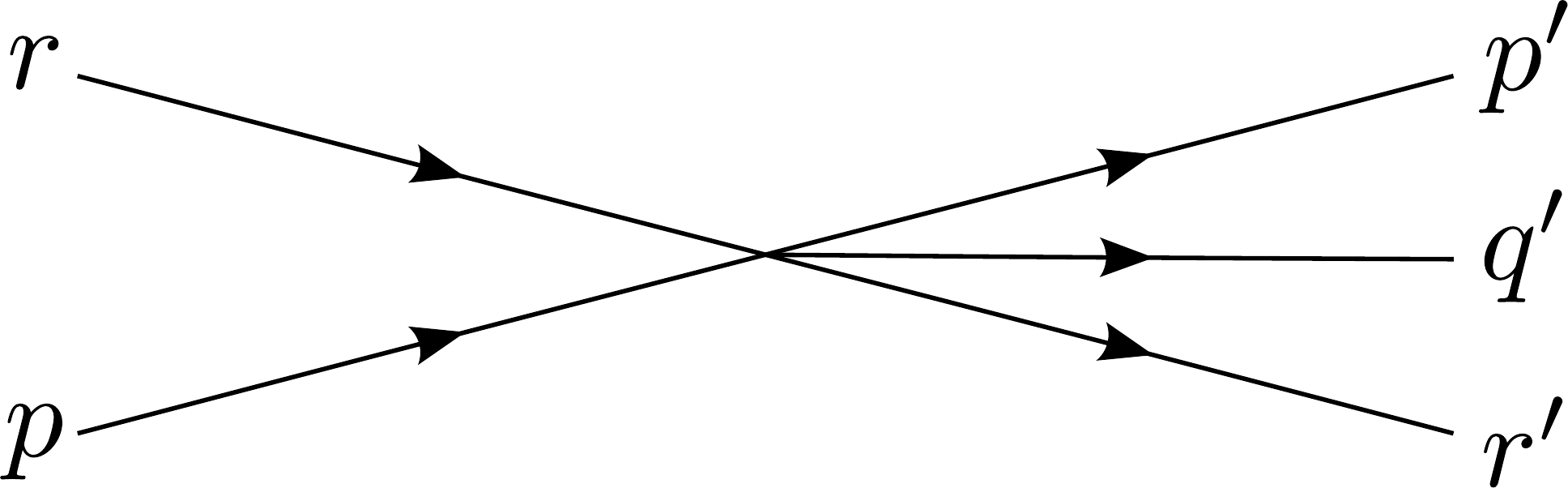}
	\caption{5-way}
	\label{fig:joint-types-5-way}	
\end{subfigure}
\hspace{0.03\textwidth}
\begin{subfigure}{0.3\textwidth}
	\includegraphics[width=\textwidth]{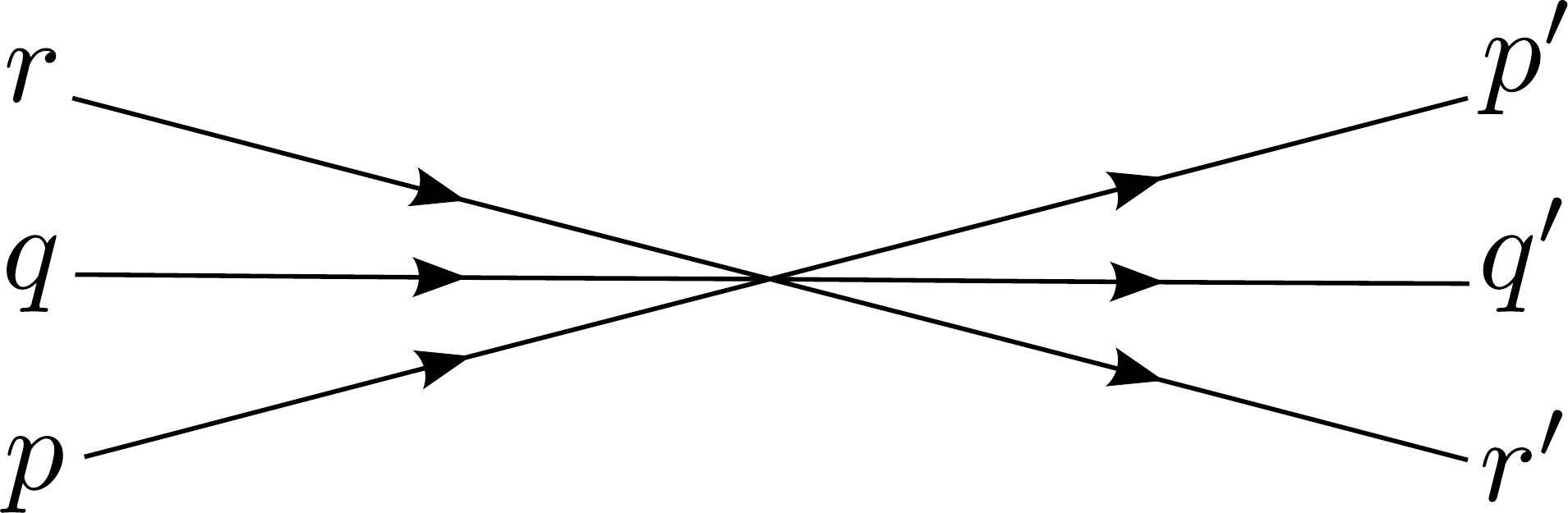}
	\caption{6-way}
\end{subfigure}
\caption{Types of joints.}
\label{fig:joint-types}
\end{center}
\end{figure}

We will begin by studying the joints of primary walls. Then we will proceed to discuss general properties of the soliton content of descendant walls, and argue that joints of descendant walls preserve such properties. 
The analysis is rather long and technical, stretching across the rest of Sections \ref{sec:spectral-networks}. For readers' convenience here we summarize the key results:

\begin{itemize}
\item Two intersecting primary walls $\CS_{\alpha},\CS_{\beta}$ will form a non-trivial joint if and only if $\alpha+\beta$ is a root. Otherwise the joint will be trivial.

\item Joints of descendant streets preserve this property: if all ingoing streets are of root-type, then all outgoing streets will be of root-type. In particular two intersecting descendant walls $\CS_{\alpha}$, $\CS_{\beta}$ will form a non-trivial joint if and only if $\alpha+\beta$ is a root. 

\item The flow of the soliton content across joints is determined by the combinatorics of concatenations of the corresponding soliton charges. No matter how rich the soliton content of a street may be, the flow ``factorizes'' in a way dictated by branching rules of standard representation theory. Concretely, given two roots $\alpha$ and $\beta$, a Weyl subgroup $W_2$ generated by $w_\alpha$ and $w_\beta$ induces an equivalence relation on $\Lambda_\rho$, each equivalence class being an orbit of $W_2$. 
Solitons supported on one street may concatenate with solitons supported on another street only if they fall in the same orbit.

\item For a nontrivial joint the twisted Cecotti-Vafa wall-crossing formula regulates the jump of soliton content (the 2d wall-crossing) across the joint.

\end{itemize}

In the remainder of this section we introduce some useful tools for proving the above claims.

\subsubsubsection*{Soliton diagrams}

For the purpose of studying joints, it helps to think schematically of the soliton types carried by a wall $\CS_{\alpha}$ in terms of {diagrams} in $\ft^*$. 
Such diagrams involve root-type vectors (solitons) connecting pairs of weight-type vectors (pairs of sheets connected by a soliton path), some examples are displayed in Figure {\ref{fig:fundamental-soliton-diagrams}}.

\begin{figure}[ht]
\begin{center}
\begin{subfigure}[b]{0.24\textwidth}
	\includegraphics[width=\textwidth]{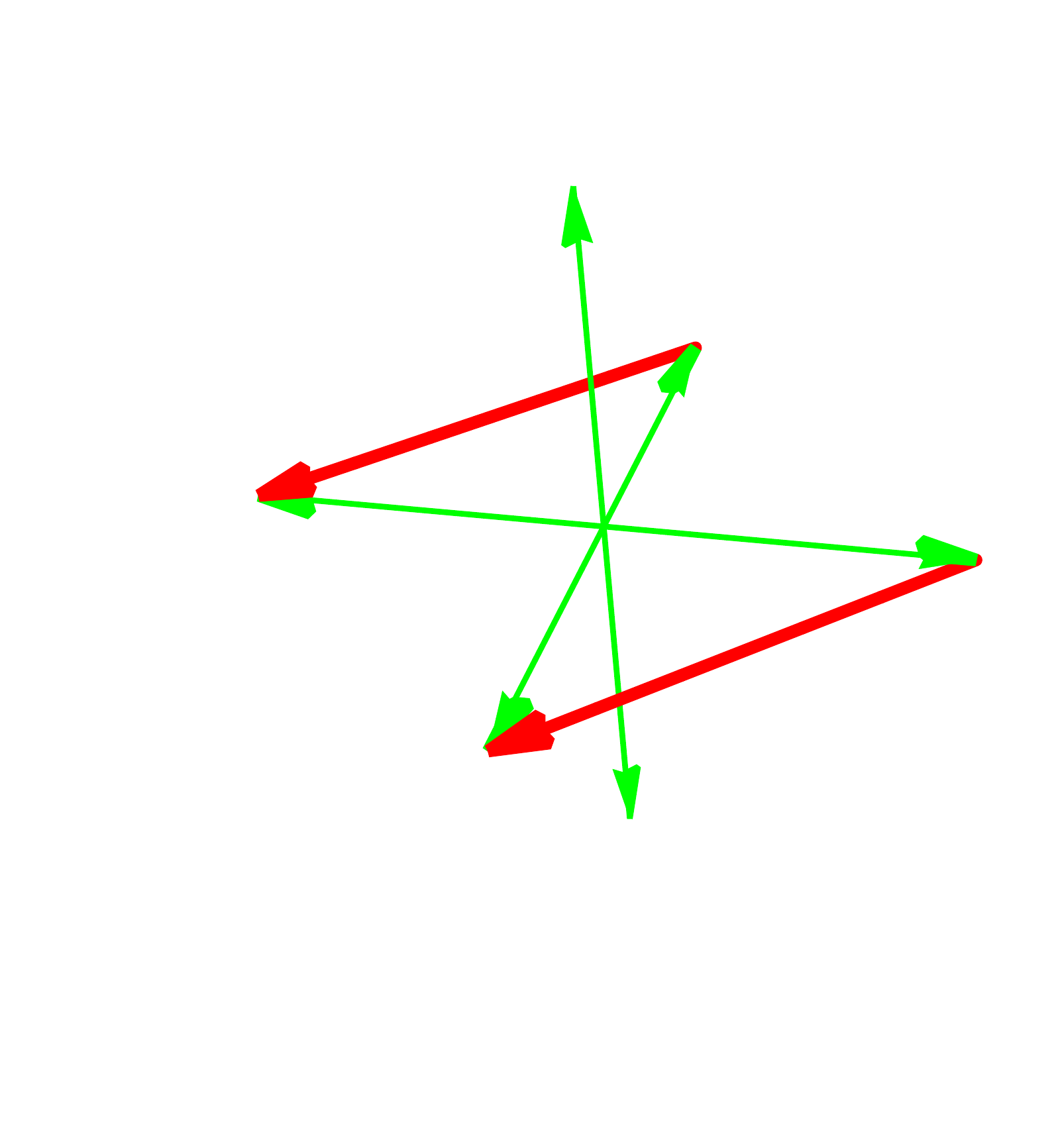}
	\caption{$\CS_{\alpha_{1}}$}
\end{subfigure}
\begin{subfigure}[b]{0.24\textwidth}
	\includegraphics[width=\textwidth]{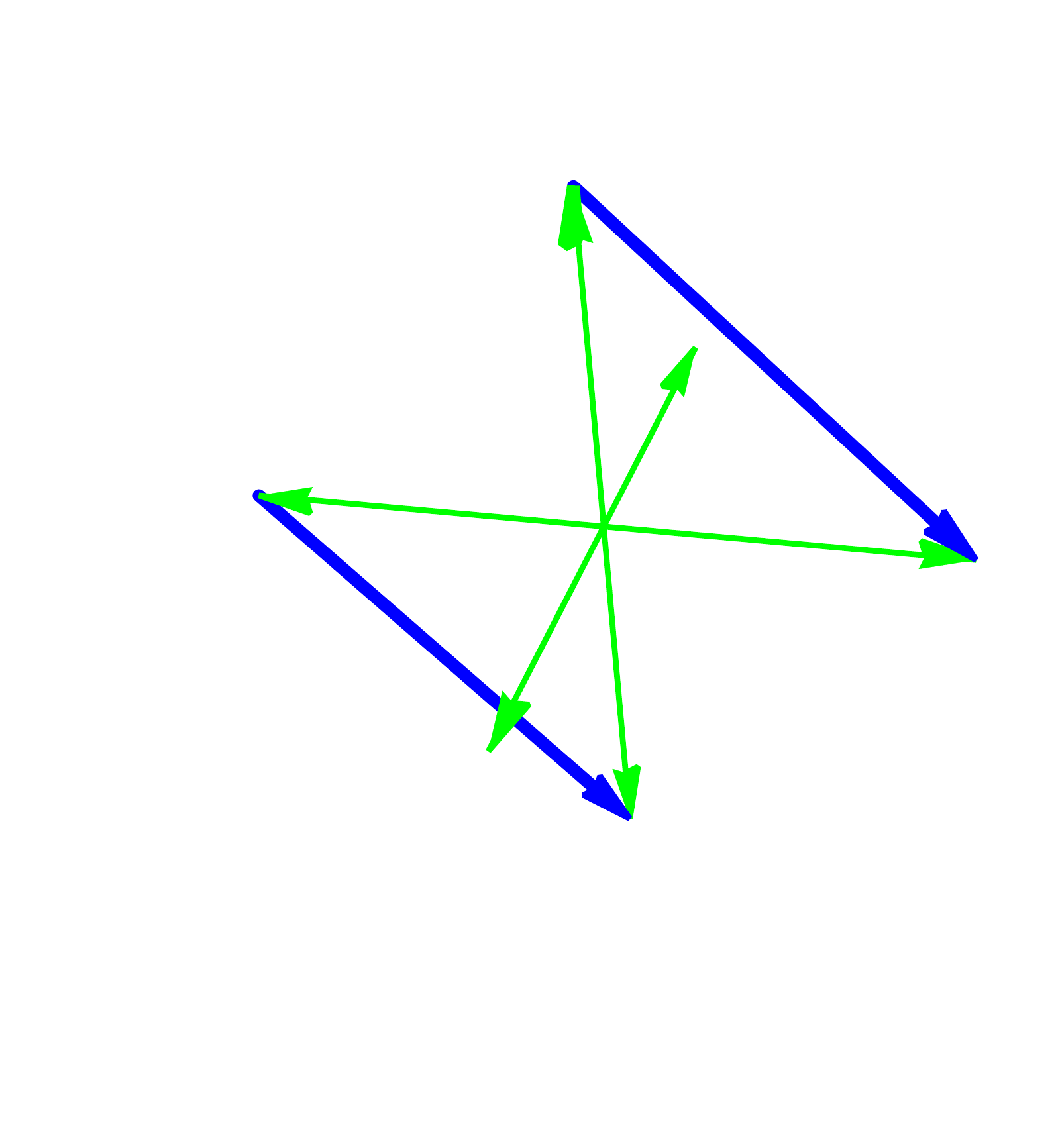}
	\caption{$\CS_{\alpha_{2}}$}
\end{subfigure}
\begin{subfigure}[b]{0.24\textwidth}
	\includegraphics[width=\textwidth]{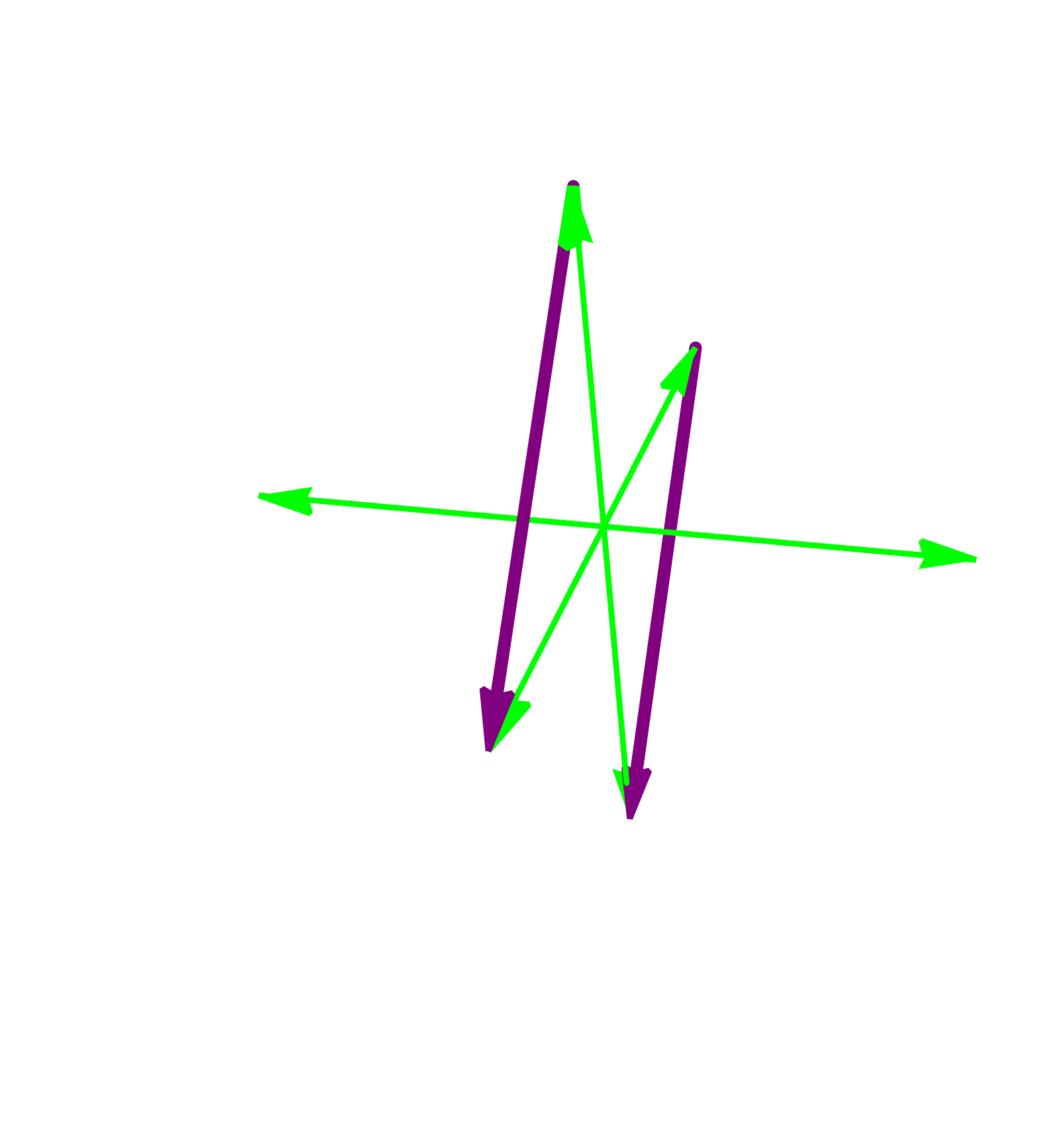}
	\caption{$\CS_{\alpha_{1}+\alpha_{2}}$}
\end{subfigure}
\begin{subfigure}[b]{0.24\textwidth}
	\includegraphics[width=\textwidth]{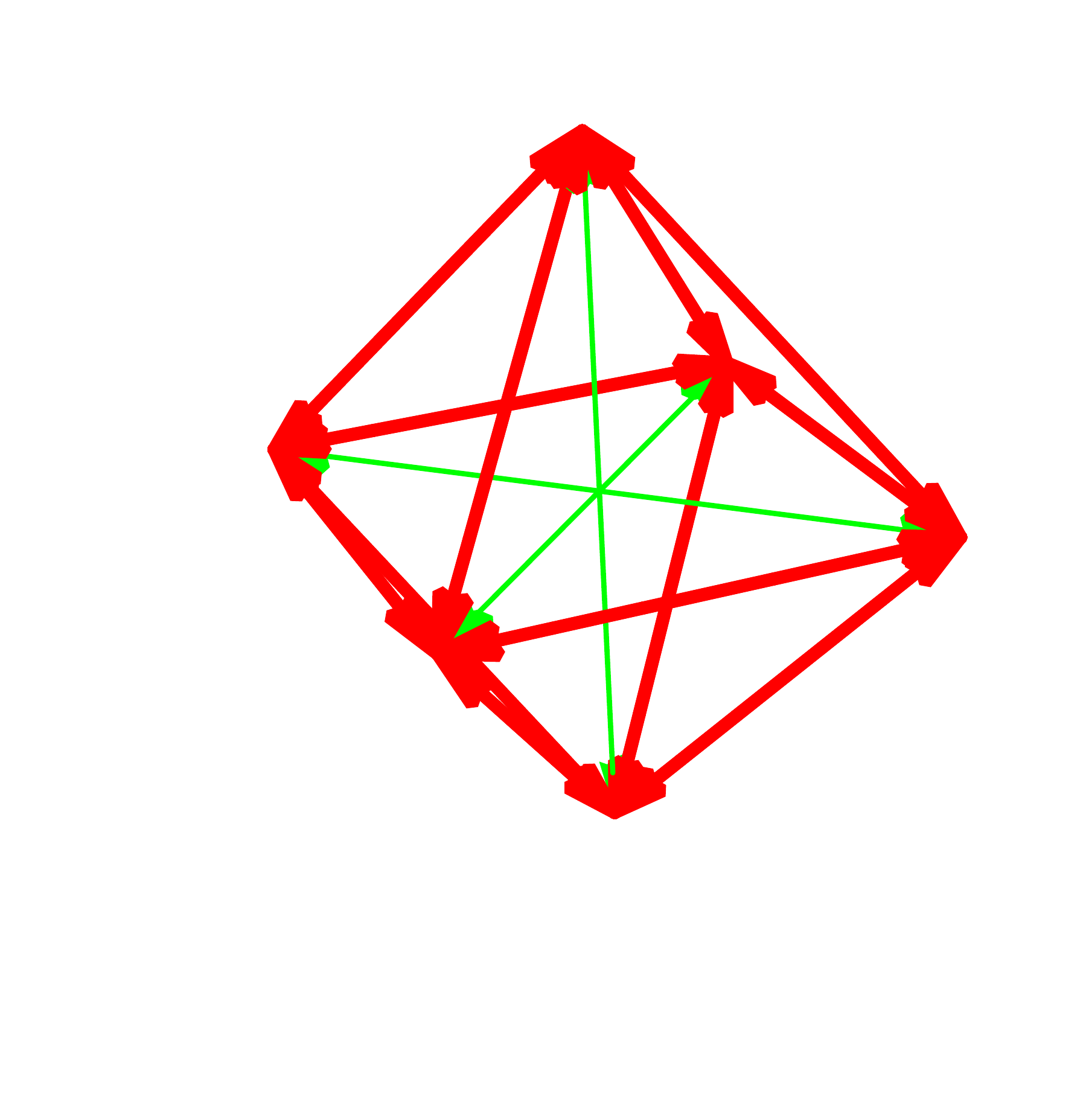}
	\caption{All soliton types}
\end{subfigure}
\caption{Left three frames: Soliton diagrams for the cover of $D_{3}$ in the vector representation. 
Green arrows denote weights (see also Figure \ref{fig:fundamental}). Arrows stretching between 
two weights $v_{i}\to v_{j}$ correspond to a type $(i,j)\in\CP_{\alpha}$ of solitons connecting the corresponding sheets.
}
\label{fig:fundamental-soliton-diagrams}
\end{center}
\end{figure}

Let $\alpha,\, \beta\in\Phi_\fg$ be roots with $\alpha\neq\pm\beta$, and consider walls $\CS_{\alpha},\,\CS_{\beta}$ labeled by roots (primary $\mathcal{S}$-walls would be of this type, but $\CS_\alpha,\, \CS_\beta$ need not be primary).
The choice of $\alpha,\, \beta$ then splits $\Lambda_{\rho}$ into subsets of weights arranged on several affine 2-planes, linearly generated by $\alpha$ and $\beta$. 
Recall that any two roots of a simply-laced Lie algebra must form one of these angles 
\be
\alpha\measuredangle\beta =\frac{\pi}{3},\, \frac{\pi}{2},\, \frac{2\pi}{3},\, \pi\,.
\ee
If the angle is $\pi/3$ or $2\pi/3$, each affine 2-plane contains a subset of the weights in $\Lambda_\rho$ arranged into  Weyl orbits of $\gA_{2}$\footnote{This follows from the fact that $E_{\alpha},\, E_{\beta}\in\fg$ generate an $\gA_{2}$ subalgebra.}, see Figure \ref{fig:joint-diagram-A2} for an illustration of an example.
On the other hand, if the angle is $\pi/2$ each affine plane will contain Weyl orbits of $\gD_{2}$, as in Figure \ref{fig:joint-diagram-D2}. 
\begin{figure}[!ht]
\begin{center}
\begin{subfigure}[t]{0.4\textwidth}
	\begin{center}
	\includegraphics[width=0.6\textwidth]{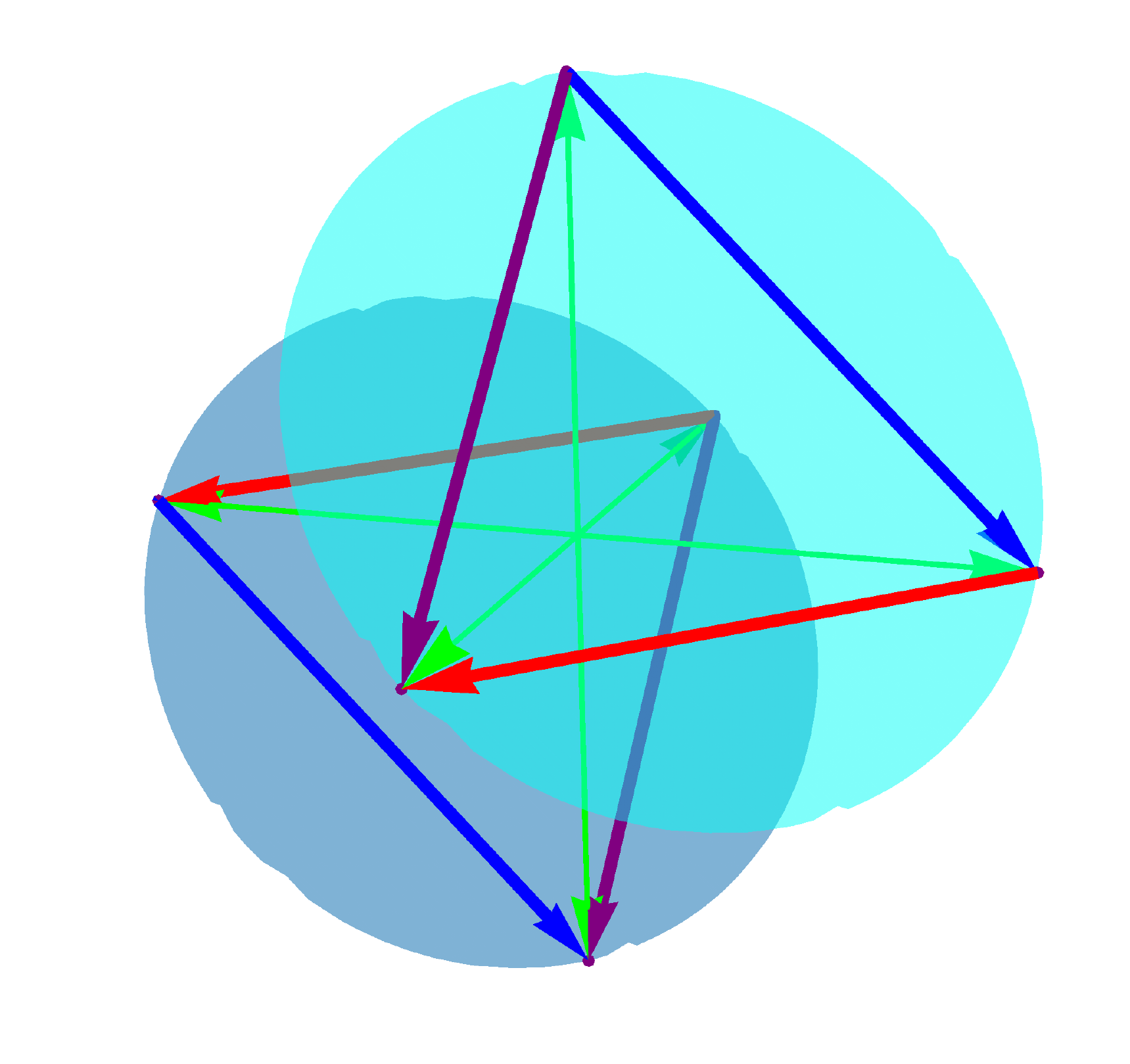}
	\caption{$\Lambda_{\rho}$ splits into two Weyl orbits when sliced parallel to the $\alpha_{1}\IR\oplus\alpha_{2}\IR$ plane. The red arrow is $\alpha_{1}$, the blue arrow is $\alpha_{2}$, and the purple is their sum, also a root.}
	\label{fig:joint-diagram-A2}
	\end{center}
\end{subfigure}
\hspace{1em}
\begin{subfigure}[t]{0.4\textwidth}
	\begin{center}
	\includegraphics[width=0.6\textwidth]{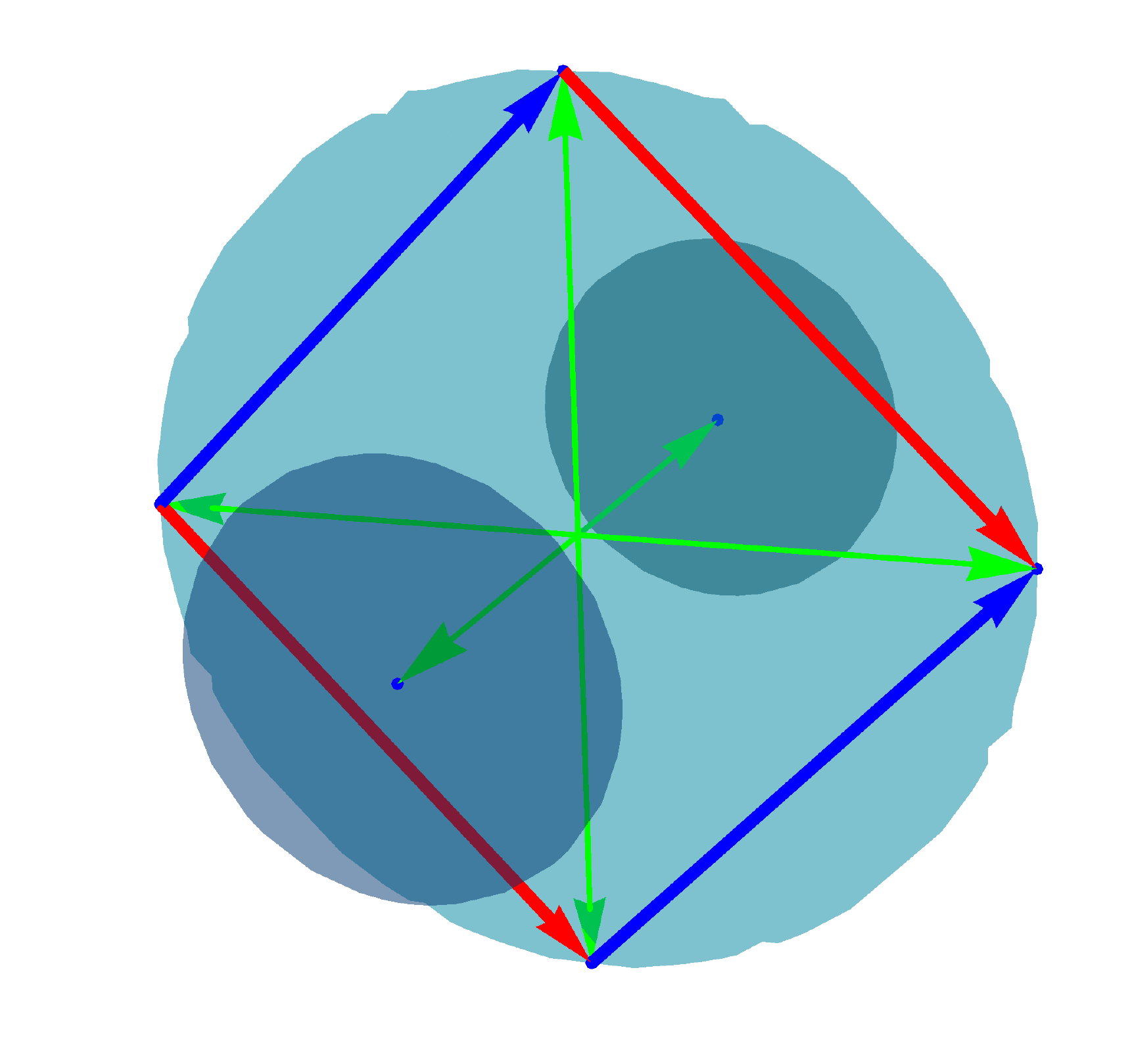}
	\caption{The slicing parallel to the $\alpha_{2}\IR\oplus\alpha_{3}\IR$ plane gives instead 3 Weyl orbits of $\gD_{2}$. Here $\alpha_{2}$ is red and $\alpha_{3}$ is blue.}
	\label{fig:joint-diagram-D2}	
	\end{center}
\end{subfigure}
\caption{Weyl orbits of the vector representation of D$_3$}
\label{fig:joint-diagrams}
\end{center}
\end{figure}
This decomposition of $\Lambda_\rho$ into Weyl orbits of subgroups of $W$ should be familiar, as it corresponds to the well-known {branching rules} of standard representation theory. 
For example, as a representation of the sub-algebra generated by $\alpha_{1},\alpha_{2}$, the vector representation of $\gD_3$ is reducible and branches into ${\bf 6\to 3\oplus \overline 3 }$,
hence the appearance of two triangles in Figure \ref{fig:joint-diagram-A2}. 
Similarly, $\alpha_{2},\, \alpha_{3}$ split $\Lambda_{\rho}$ into ${\bf 1\oplus 4 \oplus 1}$ of $\gD_{2}$, as displayed in Figure \ref{fig:joint-diagram-D2}.

\subsubsubsection*{Factorization of homotopy identities}

The schematic picture of soliton digrams can be put to good use: the appearance of several disjoint affine planes hints to a factorization property for the concatenations of solitons at a joint. 
We will now make this precise, and derive such a property.
For simplicity we discuss the case of 5-way joints, but it's straightforward to repeat the argument for 6-way joints.

Let $\alpha,\, \beta\in\Phi$ and consider the joint where $\CS$-walls of root-types $\CS_{\alpha},\,  \CS_{\beta}$ intersect. 
By a mild abuse of notation, we define for each street the following quantities
\be\label{eq:s-wall-generic-factor}
\begin{split}
	& \Xi_{\alpha}(p):=\sum_{(i,j)\in\CP_{\alpha}}\sum_{ \ehz{a}\in\Gamma_{ij}(p)}\mu(a)X_{a} \,,\\
	& \CS_{\alpha}(p):=1 + \Xi_{\alpha}(p)\,,
\end{split}
\ee
where $\Gamma_{ij}(p)$ is understood to be $\Gamma_{ij}(z)$ for any $z\in p$.
We consider paths $\wp,\, \wp'$ across the joint of $\CS_\alpha$ with $\CS_\beta$ as depicted in Figure \ref{fig:homotopy-invariance}, and study the the implications of homotopy invariance of the formal parallel transport
\be
	F(\wp,\CW) = F(\wp',\CW)\,.
\ee
\begin{figure}[h]
\begin{center}
\includegraphics[width=0.45\textwidth]{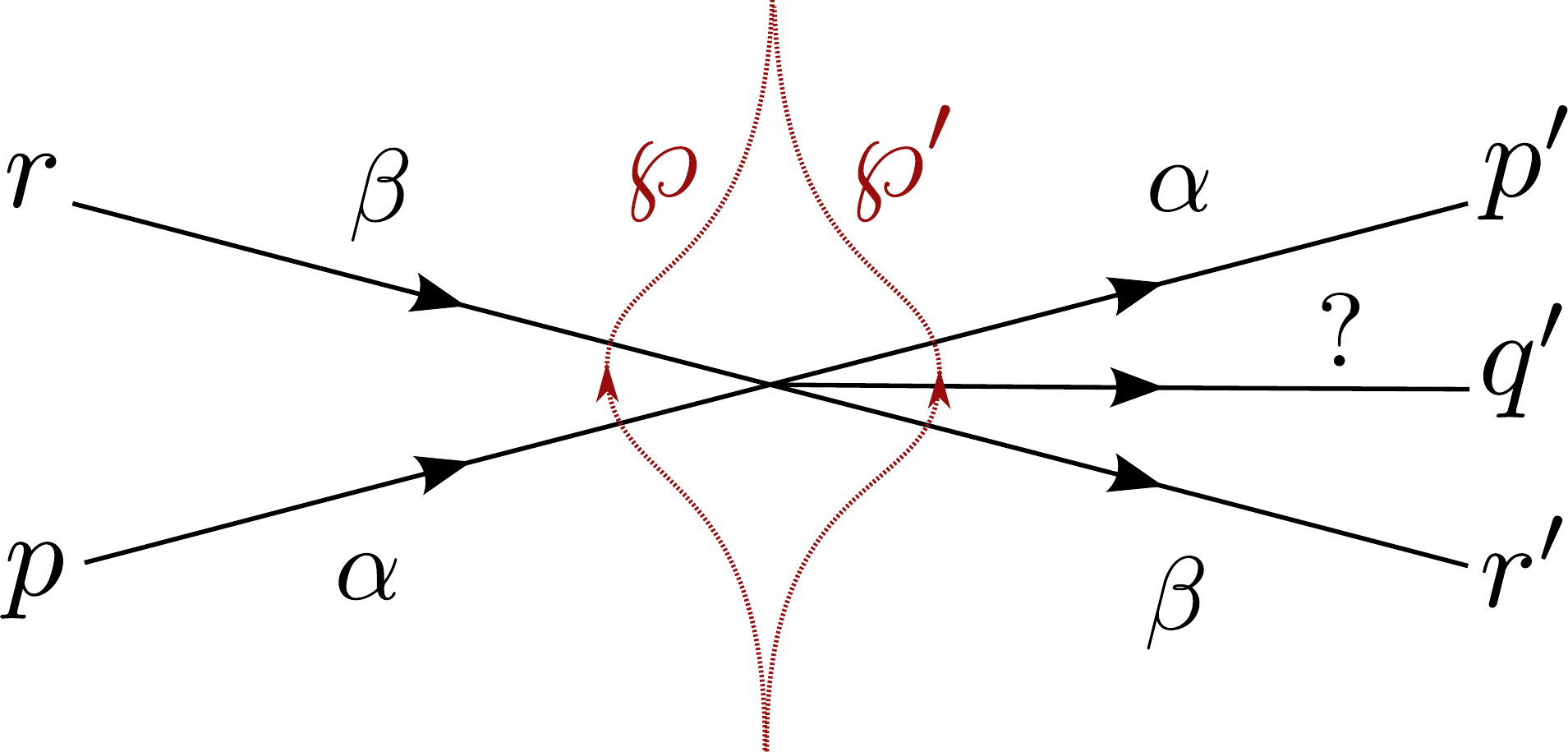}
\caption{Paths crossing the network on different sides of a joint. Homotopy invariance imposes relations between the ingoing and the outgoing soliton content. A priori, there may be more than three outgoing streets, and they might not be of root-type.}
\label{fig:homotopy-invariance}
\end{center}
\end{figure}
Rewriting the above equation in terms of detour rules yields
\be\label{eq:joint-problem}
	\CS_{\alpha}(p)\,\CS_{\beta}(r) \,=\, \CS_{\beta}(r') \, \CT_{(\alpha,\beta)} \, \CS_{\alpha}(p')\,
\ee
where $\CT_{(\alpha,\beta)}$ denotes the possible presence of several $\CS$-walls sourced by the joint (not necessarily walls of root-type).
Solving the constraint of homotopy invariance reduces to the problem of finding an expression for $\CT_{(\alpha,\beta)}$.

The picture of soliton diagrams suggests that we should consider an {orthogonal} decomposition 
\be
	\ft^{*}\simeq (\alpha\IR\oplus\beta\IR)\oplus\ft^{*,\perp} =\ft^{*,||}\oplus\ft^{*,\perp}\,,
\ee
and the corresponding decomposition of weights as 
$\weight =  \weight^{||} +  \weight^{\perp}$. 
Then all weights with the same orthogonal component $\weight^{\perp}$ belong to the same affine 2-plane $\IR^{2}_{\weight^{\perp}}$.
By definition of $\CP_{\alpha}$, a soliton of $\CS_{\alpha}$ runs from a sheet $i$ to another sheet $j$ such that $ \weight_{j}- \weight_{i} = \alpha$. Therefore $ \weight_{i}^{\perp} =  \weight_{j}^{\perp}$ and it makes sense to split solitons accordingly, e.g.
\be
	\Xi_{\alpha}(p) = \sum_{ \weight^{\perp}}\Xi_{\alpha}(p, \weight^{\perp}),
\ee
with the sum running over equivalence classes in $\Lambda_{\rho}$.
Adopting this splitting, the product rules of $X_a$ variables imply that
\be
	\Xi_{\alpha}(p, \weight^{\perp})\Xi_{\beta}(r, \weightB^{\perp}) = \delta_{ \weight^{\perp},\weightB^{\perp}}\Xi_{\alpha}(p, \weight^{\perp})\Xi_{\beta}(r, \weight^{\perp}) \,,
\ee
and therefore
\be
	\Xi_{\alpha}(p)\Xi_{\beta}(r) = %
	\sum_{ \weight^{\perp}}\sum_{\weightB^{\perp}}\Xi_{\alpha}( p,\weight^{\perp})\,\Xi_{\beta}(r,\weightB^{\perp}) %
	=  \sum_{ \weight^{\perp}}\Xi_{\alpha}(p, \weight^{\perp})\,\Xi_{\beta}(r, \weight^{\perp}) \,.
\ee
Moreover, noting that
\be
	\Xi_{\alpha}(p, \weight^{\perp}) \Xi_{\alpha}(r, \weightB^{\perp}) = 0 \qquad \text{if } \weight^{\perp}\neq \weightB^{\perp} \,
\ee
we can express
\be
	\CS_{\alpha}(p) %
	= \prod_{ \weight^{\perp}} \big(1+\Xi_{\alpha}( p, \weight^{\perp}) \big) %
	=: \prod_{ \weight^{\perp}}\CS_{\alpha}(p, \weight^{\perp})\,.
\ee
A similar expression holds for $\CS_\beta(r)$, allowing us to recast the LHS of the homotopy identity in the suggestive form
\be
	\CS_{\alpha}(p)\CS_{\beta}(r) = \prod_{ \weight^{\perp}} \Big[ \CS_{\alpha}( \weight^{\perp},p)\,\CS_{\beta}( \weight^{\perp}, r)  \Big]\,.
\ee
We then make an \emph{ansatz} for (\ref{eq:joint-problem}):
\begin{align}
	& \CT_{(\alpha,\beta)} = \prod_{ \weight^{\perp}}\CT_{(\alpha,\beta)}( \weight^{\perp}), \quad \text{with} \\
	& \CT_{(\alpha,\beta)}( \weight^{\perp}) \CS_{\alpha}(p',\mu^{\perp}) = \CS_{\alpha}(p', \mu^{\perp})  \CT_{(\alpha,\beta)}( \weight^{\perp}) \quad\text{if }\mu^{\perp}\neq \weight^{\perp},\\
	& \CT_{(\alpha,\beta)}( \weight^{\perp}) \CS_{\beta}(r', \mu^{\perp}) = \CS_{\beta}(r', \mu^{\perp})  \CT_{(\alpha,\beta)}( \weight^{\perp}) \quad\text{if }\mu^{\perp}\neq \weight^{\perp}.
\end{align}
This allows us to recast the RHS as
\be
	\CS_{\beta}(r')\CT_{(\alpha,\beta)}\CS_{\alpha}(p') = \prod_{ \weight^{\perp}} \Big[ \CS_{\beta}(r', \weight^{\perp})\,\CT_{(\alpha,\beta)}( \weight^{\perp})\, \CS_{\alpha}( p', \weight^{\perp})  \Big]
\ee
Thus the original homotopy equation (\ref{eq:joint-problem}) factorizes into a set of independent, more tractable equations:
\be
	\CS_{\alpha}( p, \weight^{\perp})\,\CS_{\beta}( r, \weight^{\perp}) = \CS_{\beta}( r', \weight^{\perp})\,\CT_{(\alpha,\beta)}( \weight^{\perp})\, \CS_{\alpha}(p', \weight^{\perp})\,.
\ee
There is one equation for each class of weights lying on the same affine 2-plane $\IR^2 _{\weight^{\perp}}$. 
Through the weight-sheet correspondence, each equation describes the combinatorics of concatenations for solitons which lie on the same plane $\IR_{\weight^\perp}^2$ in the soliton diagrams, where to a class of $(i,j)$ solitons we associate a vector, see Figure \ref{fig:joint-diagrams}.

In fact, these equations can be further factorized.
Let $\Lambda_{\rho}\big|_{ \weight^{\perp}}:= \IR^{2}_{ \weight^{\perp}} \cap \Lambda_{\rho}$ be the weight sub-system lying on $\IR^2_{\weight^{\perp}}$. This must be invariant under Weyl reflections generated by $\alpha$ and $\beta$. Therefore, $\Lambda_{\rho}\big|_{ \weight^{\perp}}$ consists of one or more Weyl orbits of $\gA_{2}$ (resp.\ $\gD_{2}$) when $\alpha\measuredangle\beta=\pi/3,\ 2\pi/3$ (resp.\ $\pi/2$). 
A priori, $\Lambda_{\rho}\big|_{ \weight^{\perp}}$ may contain a single orbit of the subgroup $W_{2}\subset W$ generated by $w_\alpha$ and $w_\beta$, or several ones. 
In either case, it is clear that any $(i,j)$ soliton of $\CS_{\alpha}$ (or $\CS_\beta$) must connect a pair of sheets whose corresponding weights belong to the \emph{same} $W_{2}$ orbit.
This is because if $(i,j)\in\CP_\alpha$ then necessarily $\weight_j = w_\alpha \cdot\weight_i$, see (\ref{eq:weyl-paired}).
But now since an $\alpha$-type soliton $a\in\Gamma_{ij}$ can be concatenated with a $\beta$-type soliton $b\in\Gamma_{kl}$ only when $\weight_j=\weight_k$, this implies both that $\weight_{i}^{\perp} = \weight_{j}^{\perp} =  \weight_{l}^{\perp} $ and that
\be
	  \weight_{i},\, \weight_{j},\, \weight_{l} \text{ belong to the same $W_{2}$ orbit }o\,.
\ee
We then consider splitting  further
\be
	\Xi_{\alpha}( p, \weight^{\perp})=\sum_{o}\Xi_{\alpha}(p, \weight^{\perp},o)
\ee
(and similarly for other streets) into a sum over solitons connecting sheets/weights on different $W_{2}$-orbits. 
Then noting that
\be
	\Xi_{\alpha}(p, \weight^{\perp},o) \,\Xi_{\alpha}( p, \weight^{\perp},o') = 0 \qquad \text{if }o\neq o'\,,
\ee
we can rewrite 
\be
	\CS_{\alpha}(p) = \prod_{ \weight^{\perp},o}\big( 1 + \Xi_{\alpha}(p, \weight^{\perp},o) \big) = : \prod_{ \weight^{\perp},o}\CS_{\alpha}(p, \weight^{\perp},o) \,.
\ee
Correspondingly, the ansatz can be refined in the following fashion
\be
\begin{split}
	& \CT_{(\alpha,\beta)}( \weight^{\perp}) = \prod_{o}\CT_{(\alpha,\beta)}( \weight^{\perp},o)\,,
\end{split}
\ee
eventually reducing equation (\ref{eq:joint-problem}) to the following set of independent equations
\be \label{eq:final_ansatz}
	\CS_{\alpha}( p, \weight^{\perp},o)\,\CS_{\beta}( r, \weight^{\perp},o) = \CS_{\beta}(r',  \weight^{\perp},o)\,\CT_{(\alpha,\beta)}( \weight^{\perp},o)\, \CS_{\alpha}( p', \weight^{\perp},o)\,.
\ee
There is a distinct equation for each equivalence class of weights $( \weight^{\perp},o)$ in $\Lambda_{\rho}$, and we can focus on solving each of these independently.

\subsection{Joints of primary \texorpdfstring{$\CS$}{S}-walls}
\label{sec:primary-joints}

We now focus on joints of primary $\mathcal{S}$-walls, i.e.\ when the ingoing streets $p$ and $r$ of Figure \ref{fig:homotopy-invariance} are sourced by branch points of types $\alpha$ and $\beta$, respectively.
For primary $\mathcal{S}$-walls, the factorization property of the previous section holds since they are of root-type. 
Strictly speaking, this is only true for the LHS of the homotopy identity, which involves primary $\mathcal{S}$-walls, but doesn't need to be true for the RHS.
We will make the ansatz that the RHS factorizes as well, and show that the factorized homotopy identity (\ref{eq:final_ansatz}) indeed admits solutions.
We proceed by considering three different cases separately: $\alpha\measuredangle\beta=\pi/3$, $2\pi/3$, and $\pi/2$.

\subsubsubsection*{Joints for $\alpha\measuredangle\beta = 2\pi/3$}

Weyl orbits of $\gA_{2}$ may include $1,3$ or $6$ weights. 
The trivial orbit containing a single weight $\weight$ occurs on the plane $\IR_{\weight^{\perp}}$ if $ \weight\perp\alpha$ and $ \weight\perp\beta$, since in this case $ \weight^{||}=0$.
There will be $3$ weights when either $ \weight\perp\alpha$ \emph{or} $ \weight\perp\beta$. 
Otherwise, $\weight$ will fall into a (possibly squashed) hexagonal orbit on $\IR^2_{\weight^\perp}$. 

If $o$ is the trivial orbit, there is are no solitons and (\ref{eq:final_ansatz}) has the trivial solution
\be
	\CS_{\alpha}( p,\weight^{\perp}, o) = \CS_{\beta}( r,\weight^{\perp}, o)  = \CS_{\alpha}( p',\weight^{\perp}, o) = \CS_{\beta}( r',\weight^{\perp}, o) =\CT_{(\alpha,\beta)}( \weight^{\perp}, o) = 1\,.
\ee

If $o$ is a triangle, then it must be equilateral since $\mathfrak{g}$ is simply-laced, there are two cases shown in Figure \ref{fig:A2-joint}.

\begin{figure}[h!]
\begin{center}
\begin{subfigure}{0.20\textwidth}
\includegraphics[width=\textwidth]{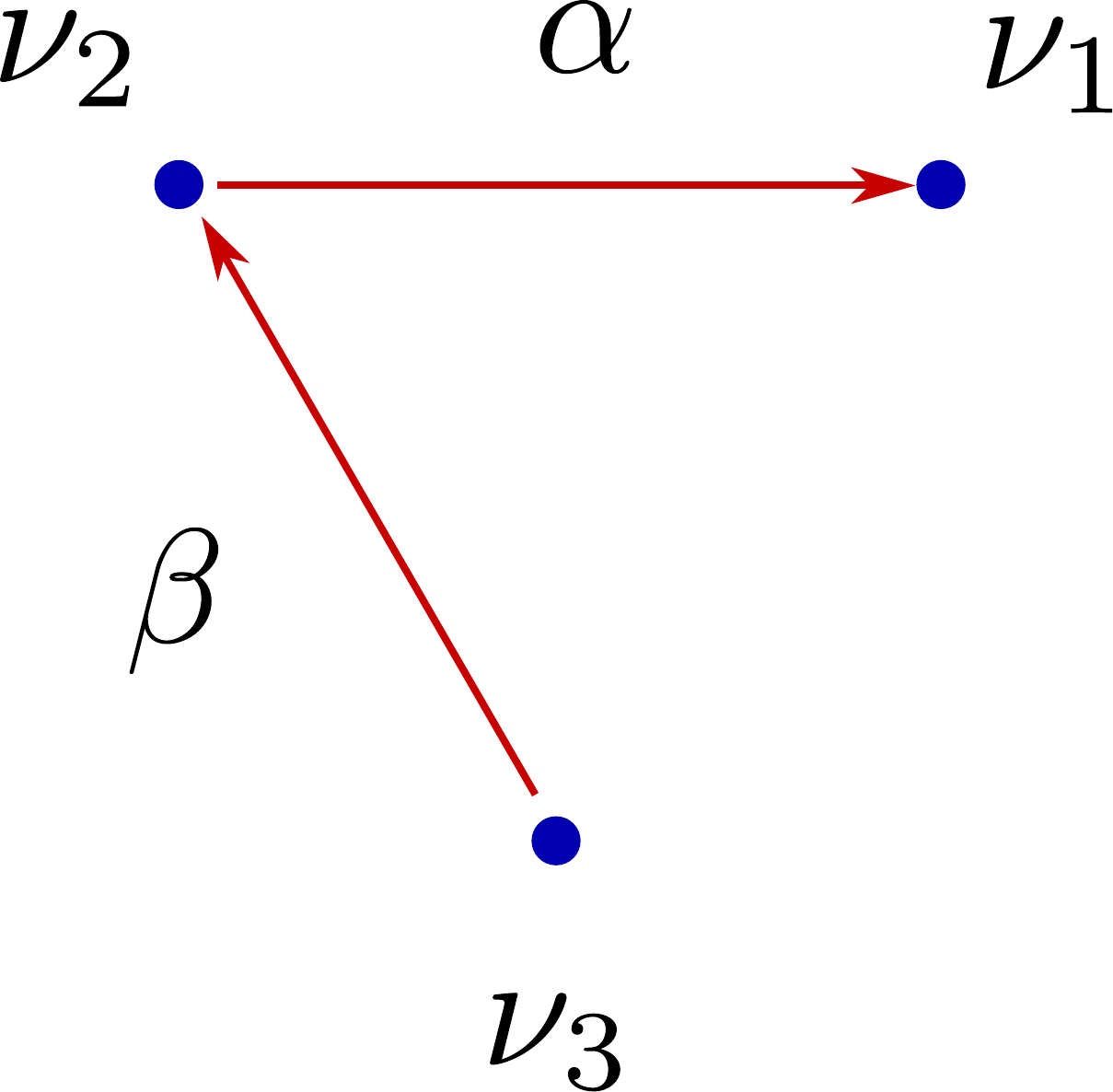}
\caption*{Case 1}
\end{subfigure}
\hspace{0.1\textwidth}
\begin{subfigure}{0.20\textwidth}
\includegraphics[width=\textwidth]{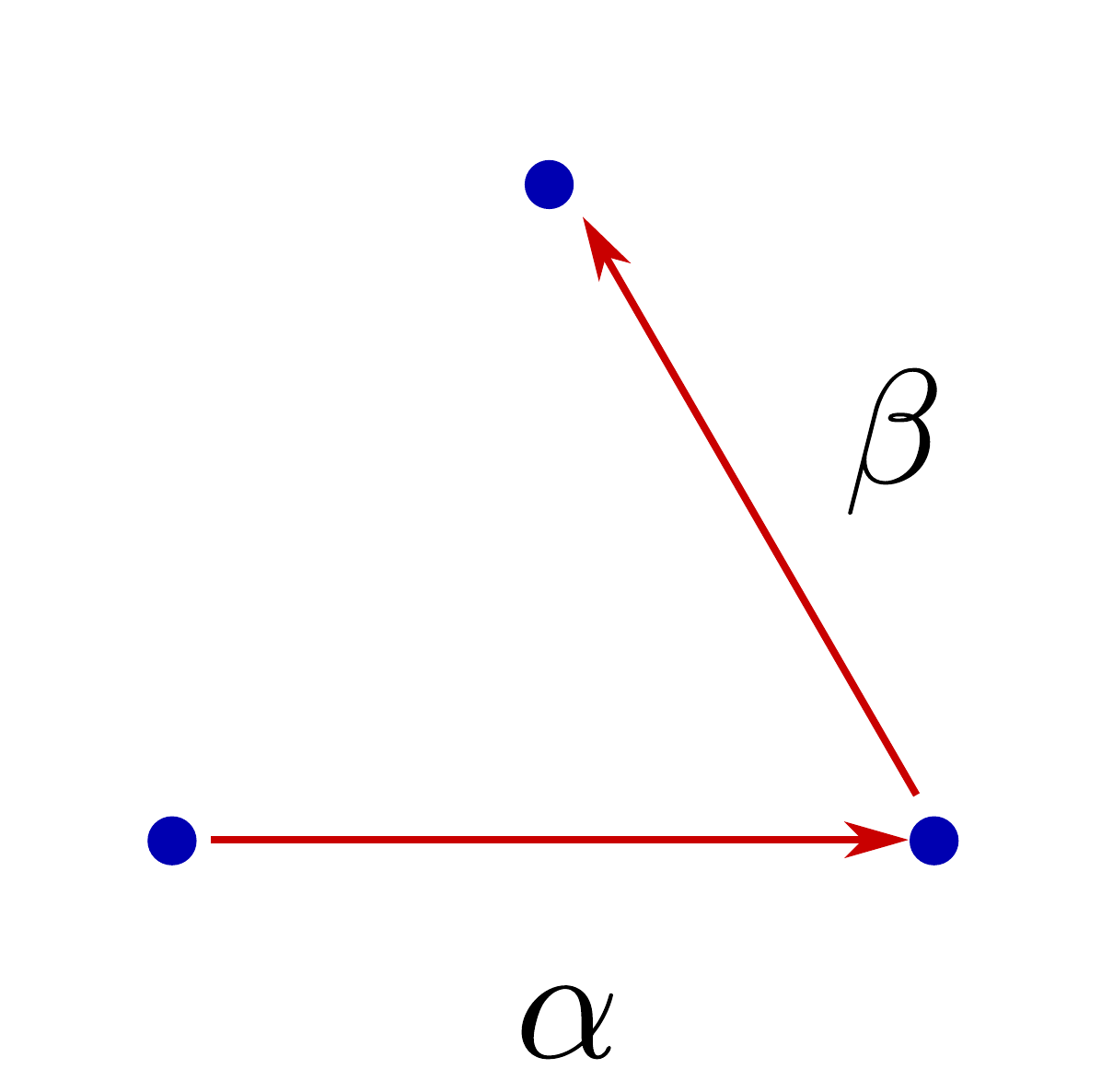}
\caption*{Case 2}
\end{subfigure}
\caption{Soliton diagrams for $\alpha\measuredangle\beta=2\pi/3$. 
}
\label{fig:A2-joint}
\end{center}
\end{figure}
\noindent In either case there is a simpleton $a\in\Gamma_{21}(p)$ with $\mu(a)=1$ supported on $\CS_{\alpha}$, as well as a simpleton $b\in\Gamma_{32}(r)$ with $\mu(b)=1$ supported on $\CS_{\beta}$. 
Both cases are familiar from fundamental $\gA$-type networks. For the first case we know that there will be a newborn wall of type $\alpha+\beta$ carrying solitons of type $(3,1)$, and that streets $p'$ and $r'$  will have the same soliton content as $p$ and $r$, respectively.
The two sides of the homotopy identity are thus
\be\label{eq:nontrivial-joint-soliton-flow}
\begin{split}
	\text{LHS}\,:\qquad & \CS_{\alpha}( p,\weight^{\perp}, o) \CS_{\beta}( r,\weight^{\perp}, o)  = 1+ X_{a} + X_{b} \\
	\text{RHS}\,:\qquad & \CS_{\beta}( r',\weight^{\perp}, o)\Big(1+\sum_{c\in\Gamma_{31}(q')}\mu(c)X_{c}\Big)\CS_{\alpha}( p',\weight^{\perp}, o) \\
	&\qquad\qquad\qquad\qquad\qquad = 1+ X_{a} + X_{b} +X_{ba} +\sum_{c\in\Gamma_{31}}\mu(c)X_{c} 
\end{split}
\ee
homotopy invariance demands that, given $c$ in the same class of $\Gamma_{31}(z)$
\be
	\mu(c) = - (-1)^{w(c,ab)}\,
\ee
and all other vanish. It turns our that the winding number is odd, therefore $\mu(c)=1$.
Unsurprisingly, we recover the (twisted) Cecotti-Vafa wall-crossing formula which appeared in the context of $\gA_{}$-type networks \cite{Gaiotto:2012rg}. The second case works out in a similar way.

Finally, if $o$ is a hexagon the situation is qualitatively different. However, {we show in Appendix \ref{app:no-hexagons} that this never occurs in minuscule representations of simply-laced Lie algebras.

\subsubsubsection*{Joints for  $\alpha\measuredangle\beta = \pi/3$}

In this case the various $\Lambda_{\rho}\big|_{ \weight^{\perp}}$ are of the same types as for the case of $\alpha\measuredangle\beta = 2\pi/3$, and $o$ is either the ${\bf 1, 3}$ or $\bf{\overline 3}$ of $A_{2}$. 
However, since $\alpha\measuredangle\beta=\pi/3$, there is no possible concatenation of solitons, which is evident in the soliton diagrams of Figure \ref{fig:angle-60}. In particular, given any
\be
\begin{split}
	& a\in\Gamma_{ij}(p)\,, \ b\in\Gamma_{kl}(r)\quad \text{for}\ \  (i,j)\in\CP_{\alpha},\ (k,\,l)\in\CP_{\beta},
\end{split}	
\ee
we always have $X_{a}X_{b}=0$. Similarly $X_{b'}X_{a'}=0$ for all solitons supported on $r'$ and $p'$, respectively.
The homotopy identity is then solved by taking $\CT_{(\alpha,\beta)}=1$, and by taking streets $p'$ and $r'$ to have the same soliton content as $p$ and $r$, respectively, which gives
\be\label{eq:trivial-joint-soliton-flow-A2}
	\CS_{\alpha}( p,\weight^{\perp}, o) \CS_{\beta}( r,\weight^{\perp}, o)   =\CS_{\beta}( r',\weight^{\perp}, o) \CS_{\alpha}( p',\weight^{\perp}, o).
\ee
Therefore these streets always form a 4-way joint.

\begin{figure}[h!]
\begin{center}
\begin{subfigure}{0.20\textwidth}
\includegraphics[width=\textwidth]{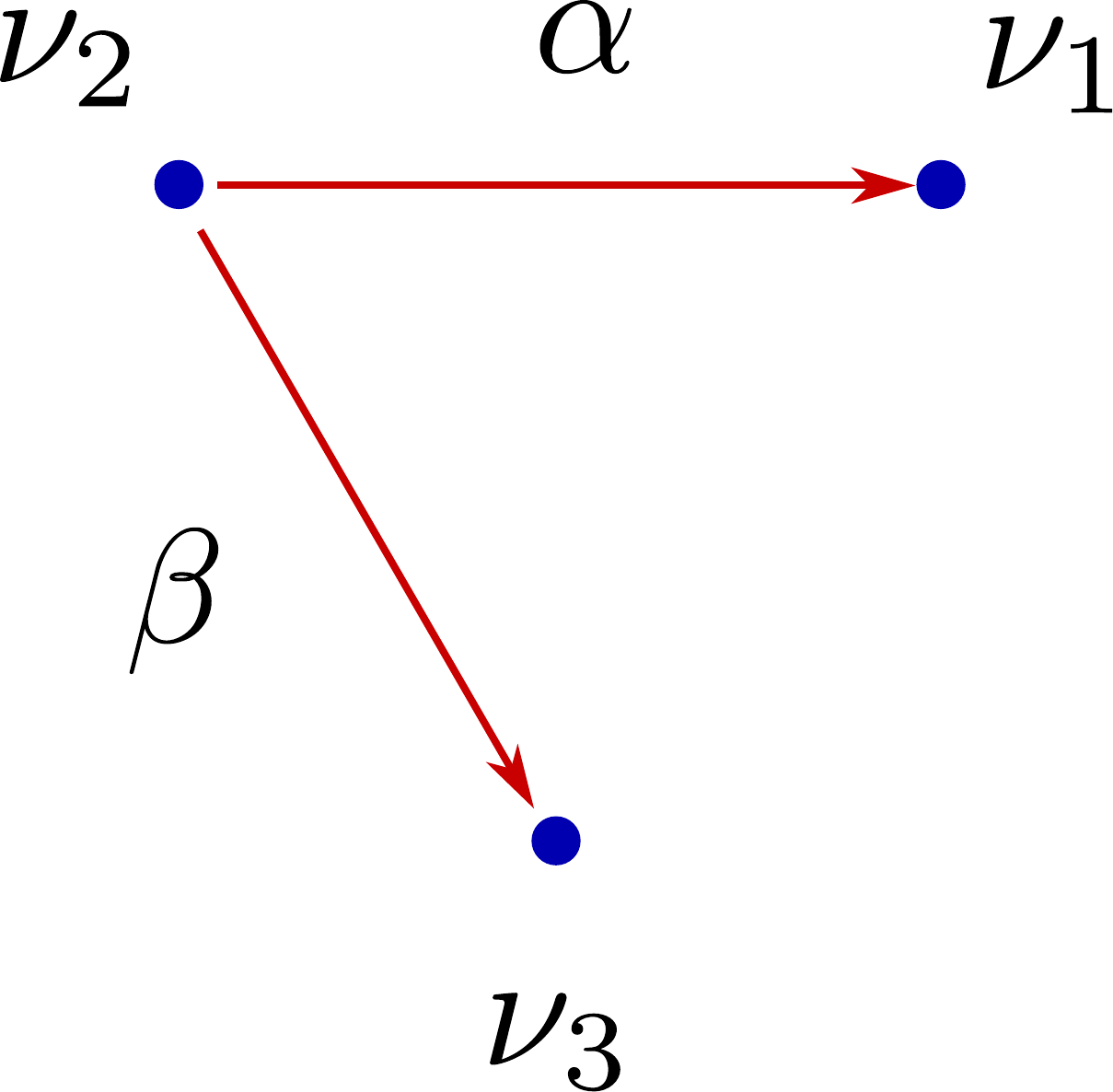}
\caption*{Case 1}
\end{subfigure}
\hspace{0.1\textwidth}
\begin{subfigure}{0.20\textwidth}
\includegraphics[width=\textwidth]{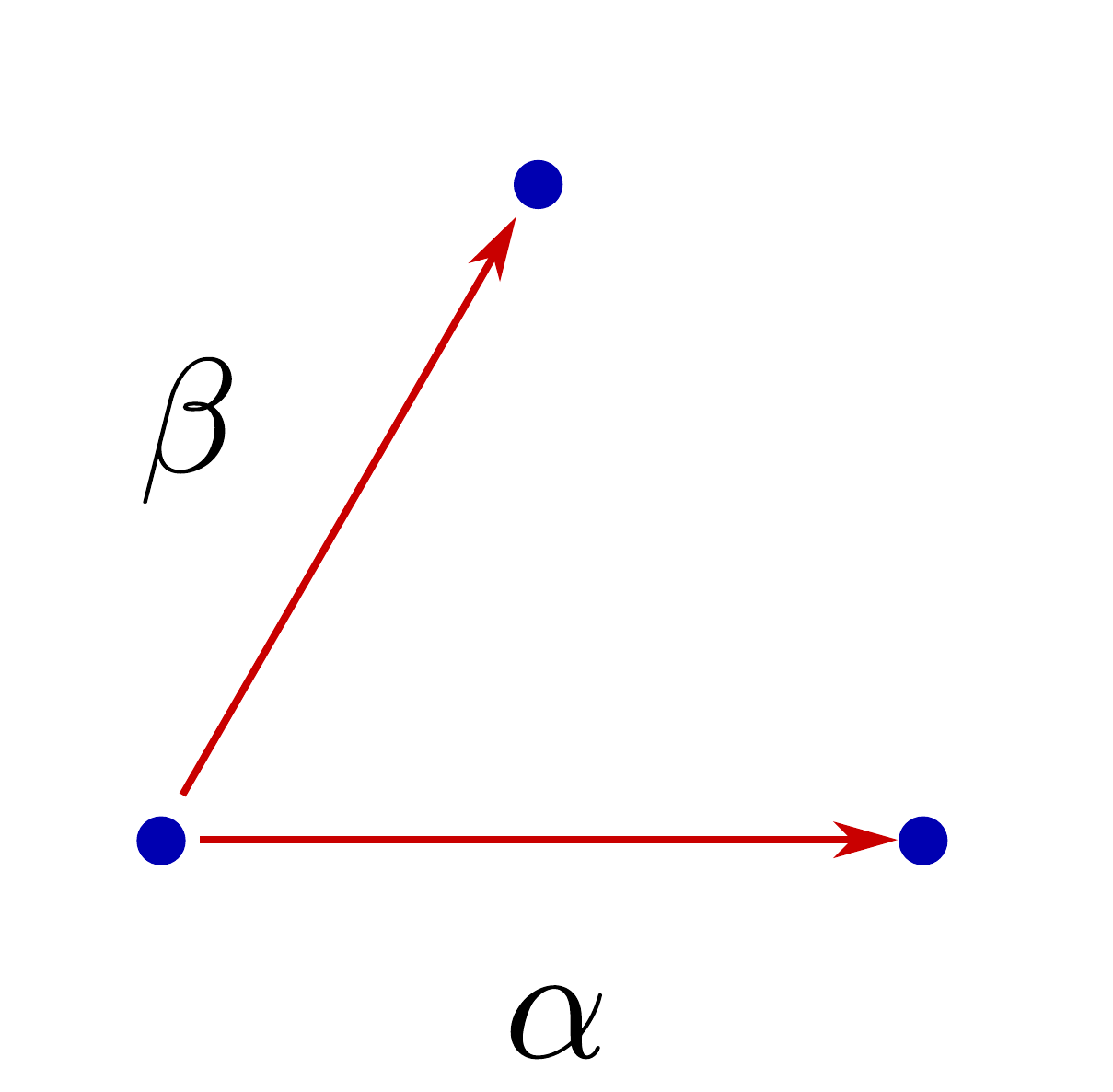}
\caption*{Case 2}
\end{subfigure}
\caption{Soliton diagrams for $\alpha\measuredangle\beta=\pi/3$. 
}
\label{fig:angle-60}
\end{center}
\end{figure}

\subsubsubsection*{Joints for  $\alpha\measuredangle\beta = \pi/2$}

The Weyl group of $D_{2}$ is $\IZ_{2}\times\IZ_{2}$, and its Weyl orbits may contain $1,2$ or $4$ weights.
There are five possible distinct cases which are displayed in Figure \ref{fig:angle-90}.
\begin{figure}[h!]
\begin{center}
\includegraphics[width=0.95\textwidth]{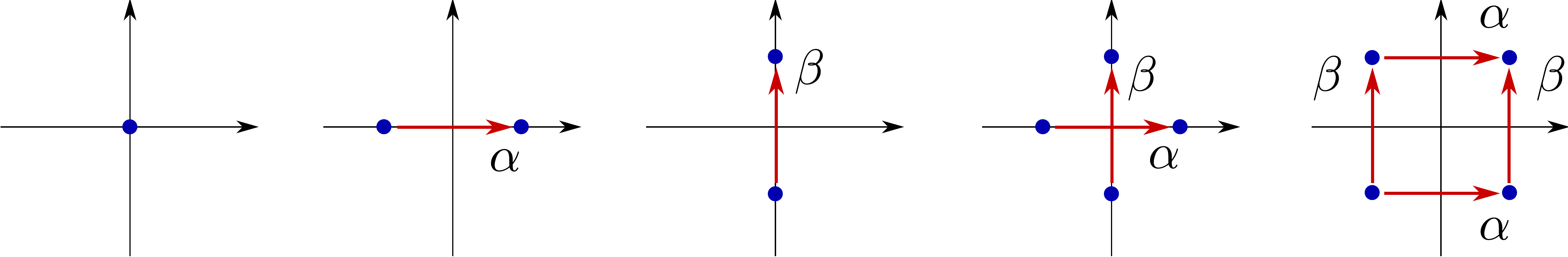}
\caption{Soliton diagrams for $\alpha\measuredangle\beta=\pi/2$.}
\label{fig:angle-90}
\end{center}
\end{figure}
From the figure it is evident that in any of the first four cases there is no possibility to concatenate solitons, much as in the case of $\alpha\measuredangle\beta=\pi/3$. 
Therefore we conclude that 
\be\label{eq:trivial-joint-soliton-flow-D2}
	\CS_{\alpha}( p,\weight^{\perp}, o) \CS_{\beta}( r,\weight^{\perp}, o)   =\CS_{\beta}( r',\weight^{\perp}, o) \CS_{\alpha}( p',\weight^{\perp}, o) \,,
\ee
for the first four cases.
The situation is only slightly more involved in the last case: since we are considering primary $\mathcal{S}$-walls, the soliton data on streets $p$, $r$ is simply
\be
\begin{split}
	& \CS_{\alpha}( p, \weight^{\perp},o) = 1+ X_{a_{1}} + X_{a_{2}},\	\CS_{\beta}( r, \weight^{\perp},o) = 1+ X_{b_{1}} + X_{b_{2}},
\end{split}
\ee
where
\be
\begin{split}
	& a_{1} \in \Gamma_{34}(z),\, a_{2}\in \Gamma_{21}(z),\,
	b_{1} \in \Gamma_{41}(z),\, b_{2}\in \Gamma_{32}(z).
\end{split}
\ee
This fixes the LHS of the homotopy identity. For the RHS, let us make the ansatz that streets $p'$ and $r'$ have the same soliton content as $p$ and $r$, respectively, and that $\CT_{(\alpha,\beta)}=1$. Then the two sides of the identity read
\be
\begin{split}
	\text{LHS}:\quad& \CS_{\alpha}( p, \weight^{\perp},o)  \CS_{\beta}( r, \weight^{\perp},o) = 1 + X_{a_{1}} + X_{a_{2}} + X_{b_{1}} + X_{b_{2}} + X_{a_{1}b_{1}}\\
	\text{RHS}:\quad& \CS_{\beta}(r',  \weight^{\perp},o)  \CS_{\alpha}( p', \weight^{\perp},o) = 1 + X_{a_{1}} + X_{a_{2}} + X_{b_{1}} + X_{b_{2}} + X_{b_{2}a_{2}}\,.
\end{split}
\ee
Now using the equivalence relation (\ref{eq:soliton-equivalence}), it is easy to see that
\be
	a_{1}b_{1}=:c_{1} \simeq c_{2} :=b_{2}a_{2}
\ee
since the soliton trees of simpletons are by definition identical, $\CT_{c_{1}} = \CT_{c_{2}}$. We thus find that (\ref{eq:trivial-joint-soliton-flow-D2}) holds in the last case as well.

\subsection{Descendant \texorpdfstring{$\mathcal{S}$}{S}-walls and generic joints}
\label{sec:generic-joints}
Having determined the soliton data of primary $\mathcal{S}$-walls across their mutual joints, we now move on to discuss descendant $\mathcal{S}$-walls and generic joints. While the soliton data of primary $\mathcal{S}$-walls includes just simpletons, for descendant $\mathcal{S}$-walls we have to work with a fully general soliton data.

The basic strategy is the following. We saw that, at a joint of primary $\mathcal{S}$-walls, all outgoing walls are labeled by roots. Here we consider a joint of generic $\CS$-walls, where all incoming streets are labeled by roots. 
We will see that for these joints all outgoing streets must also be of root-type, thus proving that all streets of the network are labeled by roots.
By adopting the factorization property, which is possible because we are working with root-type $\CS$-walls, the goal is to solve the factorized homotopy identity (\ref{eq:final_ansatz}). Once again it helps to consider different cases classified by $\alpha\measuredangle\beta$ one by one.

The analysis of the joints for $\alpha\measuredangle\beta=\pi/3$, $2\pi/3$ is a straightforward generalization of previous computations for joins of primary $\mathcal{S}$-walls.
In the case of $\alpha\measuredangle\beta=2\pi/3$, the equations for the parallel transport are very similar to the former ones, where the simpleton monomials $X_a, X_b$ in (\ref{eq:nontrivial-joint-soliton-flow}) are replaced by generic sums $\Xi_{\alpha}, \Xi_{\beta}$ as defined in (\ref{eq:s-wall-generic-factor}). The homotopy identity then expresses the outgoing solitons of type $\alpha+\beta$ in terms of concatenated solitons of of $\alpha,\beta$ types.
The corresponding analysis for $\alpha\measuredangle\beta=\pi/3$ easily yields the result that the joint must be trivial (i.e.\ a 4-way joint), beacuse no concatenations of soliton paths are actually possible for the incoming $\CS$-walls.

The case $\alpha\measuredangle\beta=\pi/2$ is considerably more involved, and requires a detailed analysis.
Again, the only nontrivial situation is that of the last frame in Figure \ref{fig:angle-90}, for which we would like to prove 
\be
	\Xi_{\alpha}(p, \weight^{\perp},o)\Xi_{\beta}( r, \weight^{\perp},o) = \Xi_{\beta}( r', \weight^{\perp},o)\Xi_{\alpha}( p', \weight^{\perp},o)\,.
\ee
This means that the soliton data of $p$ and $r$ are equal to that of $p'$ and $r'$, respectively. 
Then, by a small abuse of notation, the above can be recast into the suggestive form 
\be
	\big[\Xi_{\alpha}( \weight^{\perp},o)\,,\,\,\Xi_{\beta}( \weight^{\perp},o) \big]\,=\, 0\,.
\ee
After taking into account cancellations due to the product rule of formal $X$-variables, this amounts to proving that
\be
	\sum_{a\in\Gamma_{34}(p)}\sum_{b\in\Gamma_{41}(r)}\mu(a)\mu(b)\, X_{ab} = \sum_{b'\in\Gamma_{32}(r')}\sum_{a'\in\Gamma_{21}(p')} \mu(a')\mu(b')\, X_{b'a'}\,.
\ee
We claim that this is true, because the following holds: \\

\emph{For each $a\in\Gamma_{34}(p)$ there is an $a'\in\Gamma_{21}(p)$ such that $Z_{a'}=Z_{a}$ and $\mu(a)=\mu(a')$; likewise for each $b\in\Gamma_{41}(r)$ there is a $b'\in\Gamma_{32}(r)$ such that $Z_{b'}=Z_{b}$ and $\mu(b')=\mu(b)$. This ensures that each soliton $c=ab$ on the LHS is {equivalent} to a soliton $c'=b'a'$ on the RHS, with $\mu(c')=\mu(c)$, therefore establishing the equality}.\footnote{In comparing concatenations of solitons on $p$, $r$ with solitons on $p'$, $r'$ it is understood that we consider the lattices $\Gamma_{ij}(z)$ where $z$ is the location of the joint. }\\

To show this, we state and prove a general property of the soliton data of $\CS$-walls in the following.

\subsection{Symmetries of soliton spectra}
\label{sec:soliton-symmetry}

To prove our main result on the symmetry of soliton data, we will need a preliminary result, which we now state.

\begin{proposition}\label{prop:orthogonality}
Let ${\alpha}\in\Phi$ be any root vector of $\fg = \gA_n$, $\gD_n$, or $\gE_n$.
For any minuscule representation $\rho$ of $\fg$, any two weights $\weight_{i}$, $\weight_{i'}\in\Lambda_{\rho}$ with $i$, $i'\in\CP_{\alpha}^{-}$ satisfy 
\be
	\gamma = \weight_{i'}-\weight_{i}\perp\alpha
\ee
\end{proposition}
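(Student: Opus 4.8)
The plan is to reduce the statement to a single numerical fact: that every weight indexed by $\CP_\alpha^-$ has the \emph{same} Killing pairing with $\alpha$. Once this is established, orthogonality of the difference is immediate. The whole argument uses only that $\rho$ is minuscule and that $\fg$ is simply-laced, so it proceeds uniformly for $\gA_n$, $\gD_n$ and $\gE_n$ without any case-by-case inspection of weight systems. The key inputs are all already in hand: the characterization (\ref{eq:weyl-paired}) of $\CP_\alpha$ in terms of the decomposition $\Lambda_\rho = \Lambda_\rho^{(0)}\sqcup\Lambda_\rho^{(+)}\sqcup\Lambda_\rho^{(-)}$, the fact that for minuscule $\rho$ one always has $n=1$ in (\ref{eq:soliton-charge-pairs}), and the Weyl-reflection formula $w_\alpha\cdot\weight = \weight - \langle \weight,\alpha^\vee\rangle\,\alpha$.

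First I would pin down $\weight_i\cdot\alpha$ for $i\in\CP_\alpha^-$. By (\ref{eq:weyl-paired}), $i\in\CP_\alpha^-$ means exactly $\weight_i\in\Lambda_\rho^{(-)}$, i.e. $\weight_i\cdot\alpha<0$, and its partner is $\weight_j = w_\alpha\cdot\weight_i = \weight_i - \langle\weight_i,\alpha^\vee\rangle\,\alpha$. Since $\rho$ is minuscule we know $\weight_j-\weight_i=\alpha$ exactly (the case $n=1$), so comparing the two expressions gives $-\langle\weight_i,\alpha^\vee\rangle\,\alpha=\alpha$, hence $\langle\weight_i,\alpha^\vee\rangle=-1$. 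Because $\fg$ is simply-laced all roots have the same norm, so with the normalization $(\alpha,\alpha)=2$ one has $\alpha^\vee=\alpha$ and therefore
\[
	\weight_i\cdot\alpha \;=\; \langle\weight_i,\alpha^\vee\rangle \;=\; -1 \qquad \text{for every } i\in\CP_\alpha^-.
\]
(Equivalently, this is just the defining property of minuscule representations that every coroot pairing lies in $\{-1,0,1\}$, the negative value being forced to be $-1$.) Applying the same reasoning to $i'$ gives $\weight_{i'}\cdot\alpha=-1$ as well.

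The conclusion then follows by linearity of the pairing:
\[
	\gamma\cdot\alpha \;=\; (\weight_{i'}-\weight_i)\cdot\alpha \;=\; (-1)-(-1) \;=\; 0,
\]
so $\gamma\perp\alpha$. I do not expect a genuine obstacle here; the argument is short and structural. The only points requiring care are conventional: one must fix the normalization of the pairing so that the minuscule condition reads $\weight\cdot\alpha\in\{-1,0,1\}$ rather than being phrased through $\alpha^\vee$, and one should record explicitly that the reduction to $\langle\weight_i,\alpha^\vee\rangle=-1$ relies on the already-proven minuscule fact $n=1$, which is what rules out the a priori possibility of a longer $\alpha$-string through $\weight_i$.
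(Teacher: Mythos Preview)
Your proof is correct, and it is considerably more direct than the paper's own argument. You reduce the claim to the single observation that in a minuscule representation every weight pairs with any coroot in $\{-1,0,1\}$, so all $\weight_i$ with $i\in\CP_\alpha^-$ share the common value $\weight_i\cdot\alpha=-1$; orthogonality of the difference then drops out by subtraction. The paper instead introduces an auxiliary root $\beta\neq\pm\alpha$, decomposes $\Lambda_\rho$ into orbits of the rank-two Weyl subgroup $W_2=\langle w_\alpha,w_\beta\rangle$, and then checks case by case (triangles for $\alpha\measuredangle\beta\in\{\pi/3,2\pi/3\}$, squares and segments for $\pi/2$) that the $\alpha$-components of $\weight_i$ and $\weight_{i'}$ agree, invoking the explicit coordinates from Appendix~\ref{app:no-hexagons}. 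Both routes ultimately rest on the same minuscule input (the $n=1$ fact below (\ref{eq:soliton-charge-pairs})), but the paper's argument is tied to the $W_2$-orbit machinery developed for the joint analysis in Sections~\ref{sec:joints-algebra}--\ref{sec:generic-joints}, whereas yours bypasses that apparatus entirely. Your approach is cleaner for this proposition; the paper's has the incidental virtue of exercising the orbit classification that is needed elsewhere anyway.
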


\begin{proof}
Let $i$, $i'\in\CP_{\alpha}^{-}$ and consider any $\beta\in\Phi, \,\beta\neq\pm\alpha$.
Then consider splitting of $\Lambda_\rho$ into orbits of $W_2$ as we have done previously.  
Denote by $o$ and $o'$ the orbits that include $\weight_i$ and $\weight_{i'}$, respectively.
We already know that each orbit is 
\begin{itemize}
\item a triangle or a point if $\alpha\measuredangle\beta=\pi/3$ or $2\pi/3$,
\item a square, a segment or a point if $\alpha\measuredangle\beta=\pi/2$.
\end{itemize}
Suppose $\alpha\measuredangle\beta=\pi/3$ or $2\pi/3$. If the orbit is a point, then $\CP_{\alpha}^{-}$ doesn't contain any weight from that orbit.
Then both $o$ and $o'$ must be triangles. There are two distinct cases, shown in Figure \ref{fig:perp-weights}.

In case (A), it's clear that $\gamma=\weight_{i'}-\weight_{i}=\weight^{\perp}_{i'}-\weight^{\perp}_{i}$, which is normal to $\alpha$ as claimed.\footnote{We used the fact that we always have $n=n'=1$, this is shown in Appendix \ref{app:no-hexagons} by an explicit analysis of all minuscule representations, see in particular (\ref{eq:W2-orbits-A-type}) for A-type, (\ref{eq:W2-orbits-D-type-vector}), (\ref{eq:W2-orbits-D-type-spinor}) and for D-type.}
In case (B) we have $\weight_{i'}^{\parallel}=(-1,-1/\sqrt{3})$ and $\weight_{i}^{\parallel}=(0,-2/\sqrt{3})$ according to our results from Appendix \ref{app:no-hexagons}, then since  $\alpha=(1,\sqrt{3})$ (as a vector in $\IR^2_{\weight^\perp}$) we find $\alpha\perp\gamma$.
Note that we haven't specified whether $\alpha\measuredangle\beta=2\pi/3$ or $\pi/3$, the argument applies to both cases.

\begin{figure}[ht]
\begin{center}
\begin{subfigure}{0.380\textwidth}
\includegraphics[width=\textwidth]{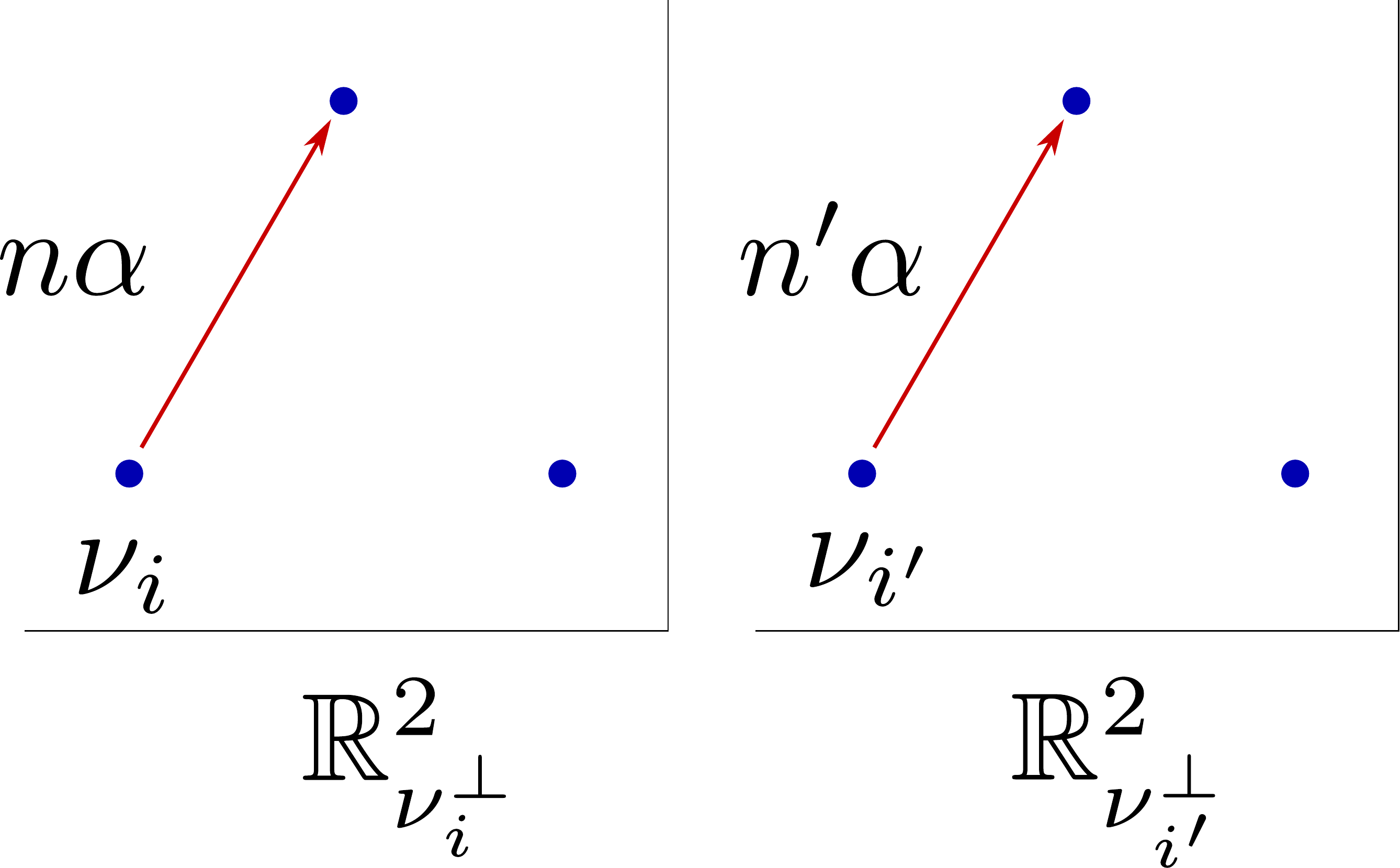}
\caption*{Case A}
\end{subfigure}
\hspace{0.1\textwidth}
\begin{subfigure}{0.380\textwidth}
\includegraphics[width=\textwidth]{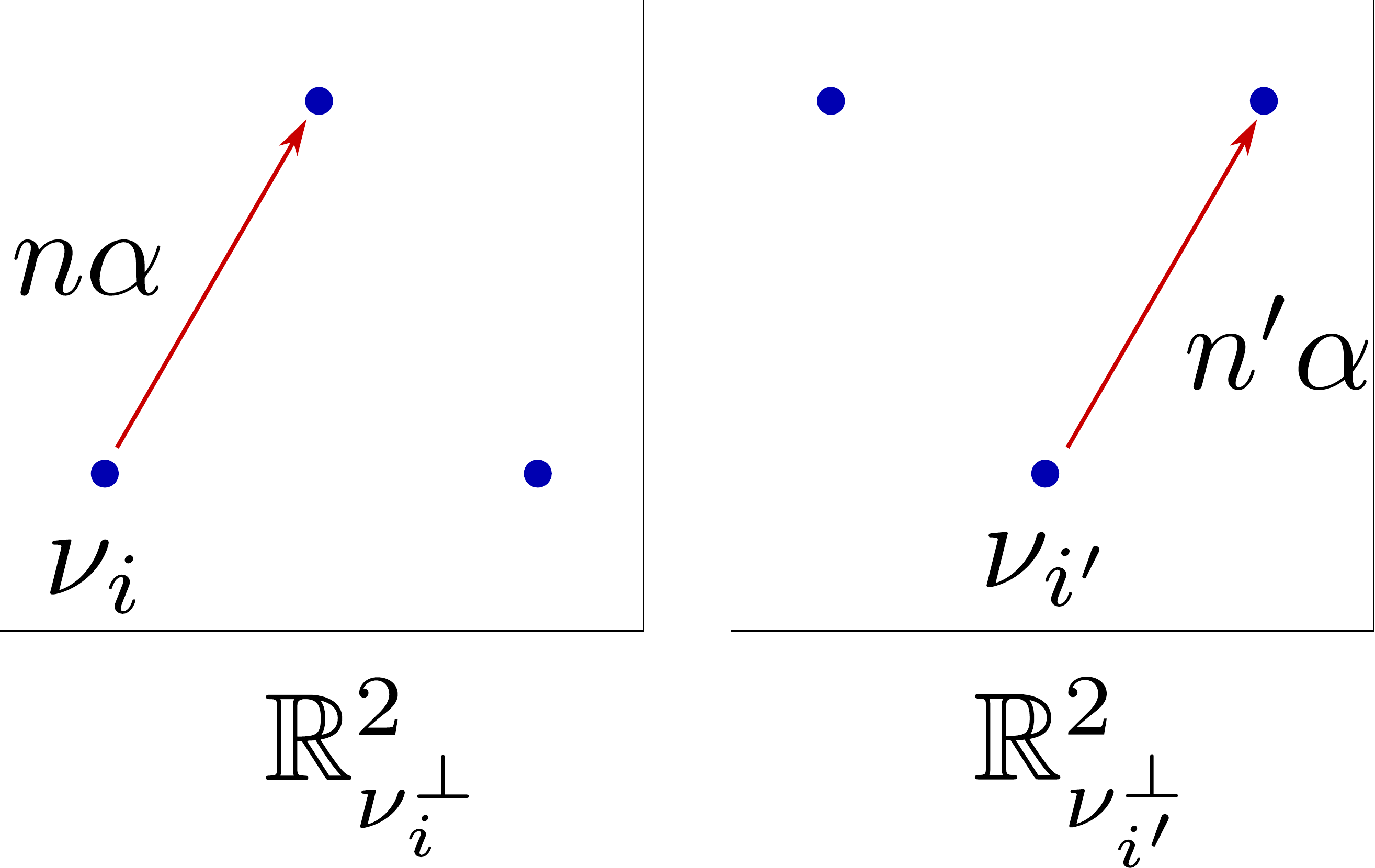}
\caption*{Case B}
\end{subfigure}
\caption{Two possible cases when $i,i'\in\CP_{\alpha}^{-}$ with $\alpha\measuredangle\beta=2\pi/3$ or $\pi/3$. 
}
\label{fig:perp-weights}
\end{center}
\end{figure}

The case $\alpha\measuredangle\beta=\pi/2$ is even simpler: if both $o$ and $o'$ are either points or segments, then $\weight_{i}$, $\weight_{i'}$ are always separated by $\gamma\perp\alpha$. 
The only nontrivial case is when one of $o$ and $o'$, or both of them, is a square. 
But in this case we must have 
\be
	\gamma\cdot\alpha = (\weight_{i'}^{\parallel} - \weight_{i}^{\parallel})\cdot\alpha = \#\beta\cdot\alpha
\ee
where $\#$ is a factor of $0, \pm1/2$ or $\pm 1 $.
Since $\beta\perp\alpha$, we find that our claim holds true.
\end{proof}

We are now ready prove the symmetries of the soliton data of $\CS$-walls.

\begin{proposition}\label{prop:soliton-symmetry}
Let $p$ be any street on  a root-type $\CS$-wall  $\CS_{\alpha}$, and let $(i,j), (i',\,j')\in\CP_{\alpha}$ be two distinct pairs. Then, for any soliton $a\in\Gamma_{ij}(p)$, there is an \underline{inequivalent} soliton $a'\in\Gamma_{i'j'}(p)$ (in the sense of (\ref{eq:soliton-equivalence})), with $Z_{a'}=Z_{a}$ and $\mu(a')=\mu(a)$.
\end{proposition}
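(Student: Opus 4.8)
The plan is to argue by induction on the complexity of solitons, measured by the number of joints (nodes) in their soliton trees $\CT_a$; this ordering is well-founded since every soliton tree is finite, rooted at $z$ with leaves at branch points. For the base case I take solitons with no joints, i.e.\ simpletons on primary walls. By Section \ref{sec:primary-solitons}, on a primary $\CS_\alpha$ each pair $(i,j)\in\CP_\alpha$ supports exactly one simpleton with $\mu=1$, and because $\weight_j-\weight_i=\alpha$ for \emph{every} pair, all of these share the common central charge $Z_a=\int_{z_*}^{z}\langle\alpha,\varphi\rangle=Z^{(\alpha)}$, independent of the channel. Hence for any two pairs the corresponding simpletons have equal $Z$ and $\mu$; they are inequivalent only because they inhabit distinct torsors $\Gamma_{ij}(p)$ and $\Gamma_{i'j'}(p)$. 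This settles the base case.

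For the inductive step, let $p$ be a street of a root-type wall $\CS_\alpha$ born at a joint $J$, and assume the symmetry on every street feeding $J$ (these are of root-type by Section \ref{sec:generic-joints}). Fix distinct pairs $(i,j),(i',j')\in\CP_\alpha$. Choosing any second root $\beta$ and decomposing $\Lambda_\rho$ into orbits of $W_{2}=\langle w_\alpha,w_\beta\rangle$ as in Section \ref{sec:joints-algebra}, the weights $\weight_i,\weight_{i'}$ lie in nontrivial orbits $o,o'$ (possibly coincident; nontrivial because $i,i'\in\CP_\alpha^-$ forces $\weight_i\cdot\alpha,\weight_{i'}\cdot\alpha\neq0$), which by the no-hexagon analysis of Appendix \ref{app:no-hexagons} are triangles or squares. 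Proposition \ref{prop:orthogonality} shows the shift $\gamma_\perp=\weight_{i'}-\weight_i$ is orthogonal to $\alpha$, whence also $\weight_{j'}-\weight_j=w_\alpha\gamma_\perp=\gamma_\perp$, so $o$ and $o'$ have identical $\alpha$-projection and translation by $\gamma_\perp$ is an isomorphism of $W_{2}$-sets $o\to o'$. I then read off the solitons on $p$ from the factorized joint equation (\ref{eq:final_ansatz}), which is indexed precisely by such orbits: a soliton $a$ in channel $(i,j)$ is assembled from incoming solitons on the streets of $J$ restricted to $o$; transporting each incoming factor to its $\gamma_\perp$-image in $o'$ by the induction hypothesis (equal $Z$ and $\mu$ channel-by-channel) and reassembling the concatenation, whose intermediate indices exist because $\gamma_\perp$ maps the orbit isomorphically, yields a soliton $a'$ in channel $(i',j')$ with the identical decorated tree $\CT_{a'}=\CT_a$. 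Central-charge preservation $Z_{a'}=Z_a$ is then immediate, either from additivity $Z_{a'}=\sum_k n_k Z_{\underline a_k}=Z_a$ or from the tree criterion of Section \ref{sec:S-walls_and_solitons}.

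The delicate point, which I expect to be the main obstacle, is to upgrade the degeneracy statement from ``equal up to sign'' to the exact equality $\mu(a')=\mu(a)$. Since the degeneracies are fixed at $J$ by the twisted Cecotti--Vafa rule, this reduces to showing that the relative windings governing the signs, such as $w(c,ab)$ in (\ref{eq:nontrivial-joint-soliton-flow}), are unchanged when passing from $o$ to the parallel orbit $o'$. Here orthogonality $\gamma_\perp\perp\alpha$ is essential: it guarantees that the lift to $\tilde\Sigma_\rho$ near $J$ in the two channels is a genuine parallel copy, so the tangential winding numbers, and hence the $\IZ_2$ gradings, coincide. The only genuinely new combinatorics arises in the $\alpha\measuredangle\beta=\pi/2$ square (the last frame of Figure \ref{fig:angle-90}), which I would treat explicitly, using the equivalence (\ref{eq:soliton-equivalence}) to identify the two concatenation channels. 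Finally, I must check that there is no circularity: solving that same $\pi/2$ joint in Section \ref{sec:generic-joints} already invokes the symmetry of the \emph{incoming} streets, but those are strictly lower in the tree ordering and so are covered by the induction hypothesis. Combining the base case with this inductive step proves the symmetry on every street of $\CW$.
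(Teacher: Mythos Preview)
Your overall inductive scheme matches the paper's, but the heart of your inductive step contains a genuine gap. You apply Proposition~\ref{prop:orthogonality} to the \emph{newborn} root (your $\alpha$) to get $\gamma_\perp=\weight_{i'}-\weight_i\perp\alpha$, and then assert that translation by $\gamma_\perp$ is an isomorphism of $W_2$-sets $o\to o'$. But the $W_2$ that governs the factorized joint equation~(\ref{eq:final_ansatz}) is generated by the two \emph{incoming} roots, say $\gamma_1,\gamma_2$ with $\gamma_1+\gamma_2=\alpha$; orthogonality to their sum does not give orthogonality to each. In particular $\gamma_\perp$ need not be $W_2$-invariant, and rigid translation need not carry $o$ to $o'$. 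This actually fails: when the two triangle orbits have opposite orientation (the $\mathbf{3}$ versus the $\overline{\mathbf{3}}$), one computes from the table~(\ref{eq:W2-orbits-A-type}) that $\gamma_\perp$ has nonzero component in the $(\gamma_1,\gamma_2)$-plane, and translating one vertex of $o$ by $\gamma_\perp$ lands outside $o'$. So ``transporting each incoming factor to its $\gamma_\perp$-image'' does not produce valid channels on the incoming walls, and the induction hypothesis cannot be invoked as you describe.

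The paper's proof does not attempt a parallel-translation argument. Instead it observes directly that any $(i',k')\in\CP_{\alpha+\beta}$ forces $\weight_{i'},\weight_{k'}$ into a triangle orbit, whose third vertex supplies an intermediate weight $\weight_{j'}$. There are exactly two possibilities for how this triangle is oriented (cases~(a) and~(b) of Figure~\ref{fig:scenarios}). In case~(a) one has $(i',j')\in\CP_\alpha$ and $(j',k')\in\CP_\beta$, and the inductive hypothesis on each incoming wall gives $c'=a'b'$ from the $\Xi_\alpha\Xi_\beta$ term of the commutator. In case~(b) the roles of $\alpha$ and $\beta$ are swapped: $(i',j')\in\CP_\beta$ and $(j',k')\in\CP_\alpha$, so the matching soliton is $c'=b'a'$, arising from the \emph{other} term $-\Xi_\beta\Xi_\alpha$ in~(\ref{eq:twisted-CV}), with the commutator sign precisely accounting for the extra winding. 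Your argument misses case~(b) entirely; this is also where the sign issue you flag as ``the delicate point'' is actually resolved, not by a parallel-copy argument but by the commutator structure. (A secondary issue: your ``choosing any second root $\beta$'' should not be arbitrary; the relevant $W_2$ is fixed by the joint, and your $\pi/2$-square remark is misplaced, since no wall is born at such a joint.)
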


\begin{proof}
The property is easily proved for primary $\CS$-walls. Their soliton data include only simpletons, which all have the same degeneracy. 
Moreover, if a simpleton $a$ stretches between sheets $ x_{i}$ and $ x_{j}$ with 
\be
	 \lambda_{j}- \lambda_{i}=\langle\weight_{j}-\weight_{i},\varphi\rangle=n\langle\alpha,\varphi\rangle	
\ee
we likewise have 
\be
	 \lambda_{j'}- \lambda_{i'}=\langle\weight_{j'}-\weight_{i'},\varphi\rangle = n' \,\langle\alpha,\varphi\rangle\,.
\ee
Then, from Appendix \ref{app:no-hexagons} we know that $n=n'=1$, which implies $\CT_{a}=\CT_{a'}$ and therefore $Z_{a}=Z_{a'}$, as desired. 
Note that $a$ and $a'$ are not equivalent because they do not belong to the same charge lattice, and this is in line with the statement of our proposition. 

To show the property for descendant $\CS$-walls, we need to consider what happens at joints. As we have seen, all $4$-way and $5$-way joints preserve the soliton spectrum of the incoming $\CS$-walls as they come out of the joint. So we only need to worry about the newborn $\CS$-wall at the $5$-way joint. 
Let us therefore consider a joint between $\CS_{\alpha}$ and $\CS_{\beta}$ with $\alpha\measuredangle\beta=2\pi/3$, producing a newborn wall $\CS_{\alpha+\beta}$, see e.g. Figure \ref{fig:joint-types-5-way}. We want to prove the property above for the soliton content of $\CS_{\alpha+\beta}$. 
The soliton content of the newborn wall $\CS_{\alpha+\beta}$ is encoded in the generating function $\Xi_{\alpha+\beta}(q')$, which is computed by the twisted Cecotti-Vafa wall-crossing formula\footnote{We suppressed the labeling by streets, to avoid cluttering notation. In equation (\ref{eq:twisted-CV}) we implicitly made use of the fact that the soliton data of streets $p$ and $r$ are the same as those of $p'$ and $r'$, respectively, when we write the commutator $[\Xi_{\alpha},\Xi_{\beta}]$.}
\be\label{eq:twisted-CV}
	\Xi_{\alpha+\beta} = [\Xi_{\alpha},\Xi_{\beta}] = \Xi_{\alpha}\Xi_{\beta} - \Xi_{\beta}\Xi_{\alpha}\,.
\ee
So let $c\in\Gamma_{ik}(q')$ be any soliton of $\CS_{\alpha+\beta}$, obtained by composing $c=ab$ with 
\be
	a\in\Gamma_{ij}(p),\ (i,j)\in\CP_{\alpha}\,;\ b\in\Gamma_{jk}(r),\ (j,k)\in\CP_{\beta}\,.
\ee
What we wish to prove is that, given any other pair $(i',k')\in\CP_{\alpha+\beta}$, there is a $c'\in\Gamma_{i'k'}(q')$ such that $\mu(c')=\mu(c)$ and $Z_{c'}=Z_{c}$. 
As an inductive hypothesis, we shall assume this property to hold for the soliton data of $\CS_{\alpha}$, $ \CS_{\beta}$.

As we saw several times previously, if $\alpha\measuredangle \beta =2\pi/3$ then $W_2$ orbits must be either triangles or points.
If $(i',k')\in\CP_{\alpha+\beta}$, then neither of $\weight_{i'}$, $\weight_{k'}$ can belong to an point-like orbit: if this were the case, it would imply that $\weight_{i'}\perp\alpha,\,\beta$ and therefore $\weight_{i'}\perp\alpha+\beta$, which contradicts the fact that $i'\in\CP_{\alpha+\beta}^{-}$ (a similar argument applies to $\weight_{k'}$).
This proves that $\nu_{i'}$ and $\nu_{k'}$ must belong to a triangle orbit $o'$, and there are only two possible ways to accommodate this, depicted in Figure \ref{fig:scenarios}:
\begin{enumerate}
\item[(a)] $i'\in\CP_{\alpha}^{-}\cap\CP_{\beta}^{0}, \,k'\in\CP_{\beta}^{+}\cap\CP_{\alpha}^{0}$, then $\Lambda_{\rho}$ contains a weight $\weight_{j'}\in\CP_{\alpha}^{+}\cap\CP_{\beta}^{-}$
\item[(b)] $i'\in\CP_{\beta}^{-}\cap\CP_{\alpha}^{0}, \,k'\in\CP_{\alpha}^{+}\cap\CP_{\beta}^{0}$ then $\Lambda_{\rho}$ contains a weight $\weight_{j'}\in\CP_{\beta}^{+}\cap\CP_{\alpha}^{-}$
\end{enumerate}
By the inductive hypothesis, in case (a) we there will be solitons $a'\in\Gamma_{i'j'}(p),\,b'\in\Gamma_{j'k'}(q)$ with $Z_{a'}=Z_{a}$ and $Z_{b'}=Z_{b}$ as well as $\mu(a)=\mu(a')$ and $\mu(b)=\mu(b')$. 
Then their concatenation $c'=a'b'$ would be part of the soliton content of the street $q'$, according to the analysis of joints from Section \ref{sec:generic-joints}.\footnote{Note that this is a statement about joints for $\alpha\measuredangle\beta=2\pi/3$, which does not depend on the property we are currently proving.}
Explicitly, the homotopy identity includes the following term
\be
\begin{split}
	& \mu(c')X_{c'} = \mu(a')\mu(b')X_{a'}X_{b'}\subset \Xi_{a}(p)\Xi_{b}(r)\subset\Xi_{\alpha+\beta}(q')
\end{split}
\ee
with $c'=a'b'$. Since 
\be
\begin{split}
	& \mu(c') = \mu(a')\mu(b') = \mu(a)\mu(b) = \mu(c) \\
	& Z_{c'} = Z_{a'}+Z_{b'} = Z_{a}+Z_{b}=Z_{c} 
\end{split}
\ee
the soliton $c'$ is precisely the one we were looking for.

In case (b) the roles are simply reversed, and the extra sign from equation (\ref{eq:twisted-CV}) accounts for the correct extra winding in concatenating $c'=b'a'$.

\end{proof}

\begin{figure}[ht]
\begin{center}
\begin{subfigure}{0.380\textwidth}
\includegraphics[width=\textwidth]{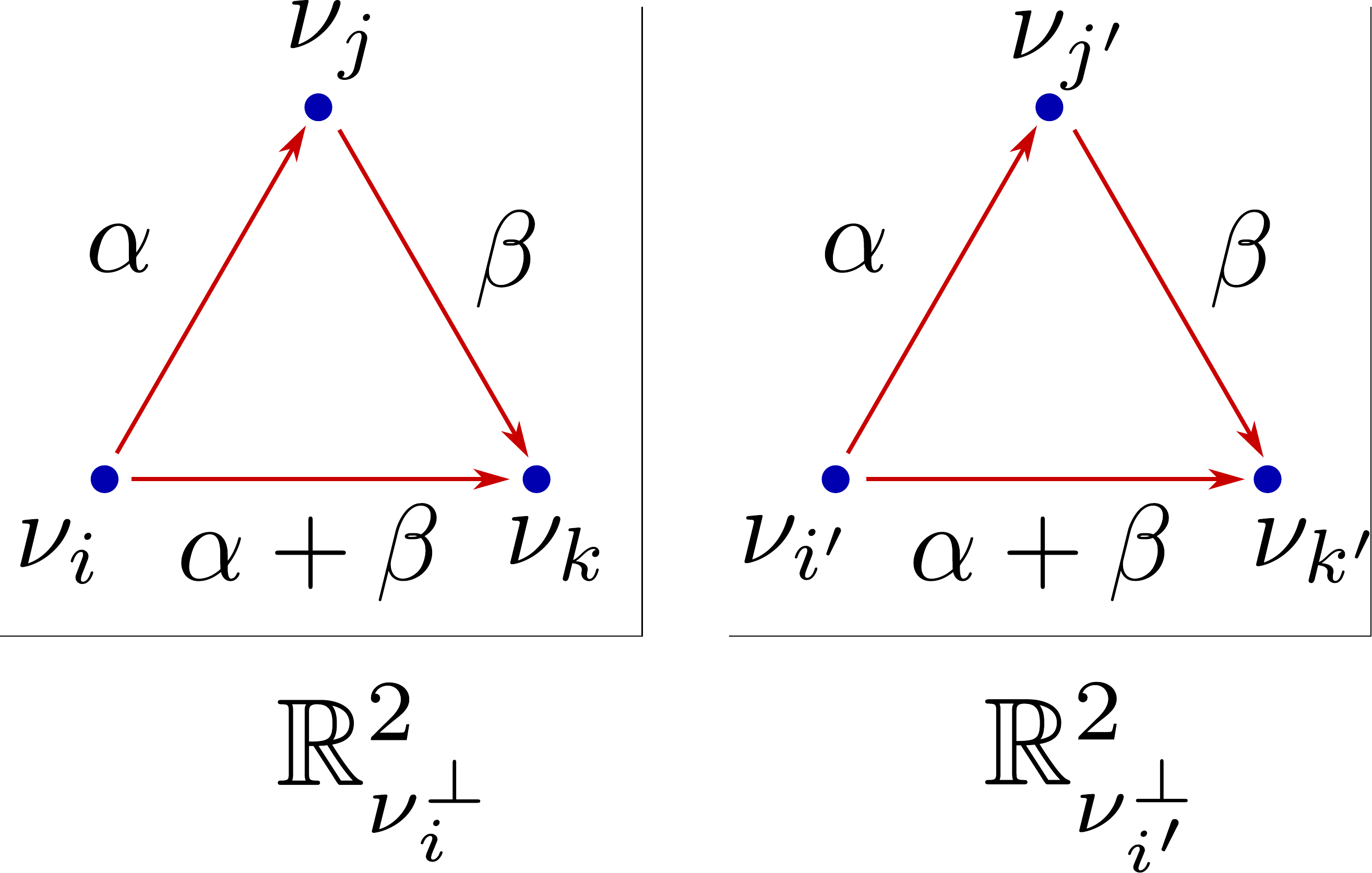}
\caption*{case (a)}
\end{subfigure}
\hspace{0.1\textwidth}
\begin{subfigure}{0.380\textwidth}
\includegraphics[width=\textwidth]{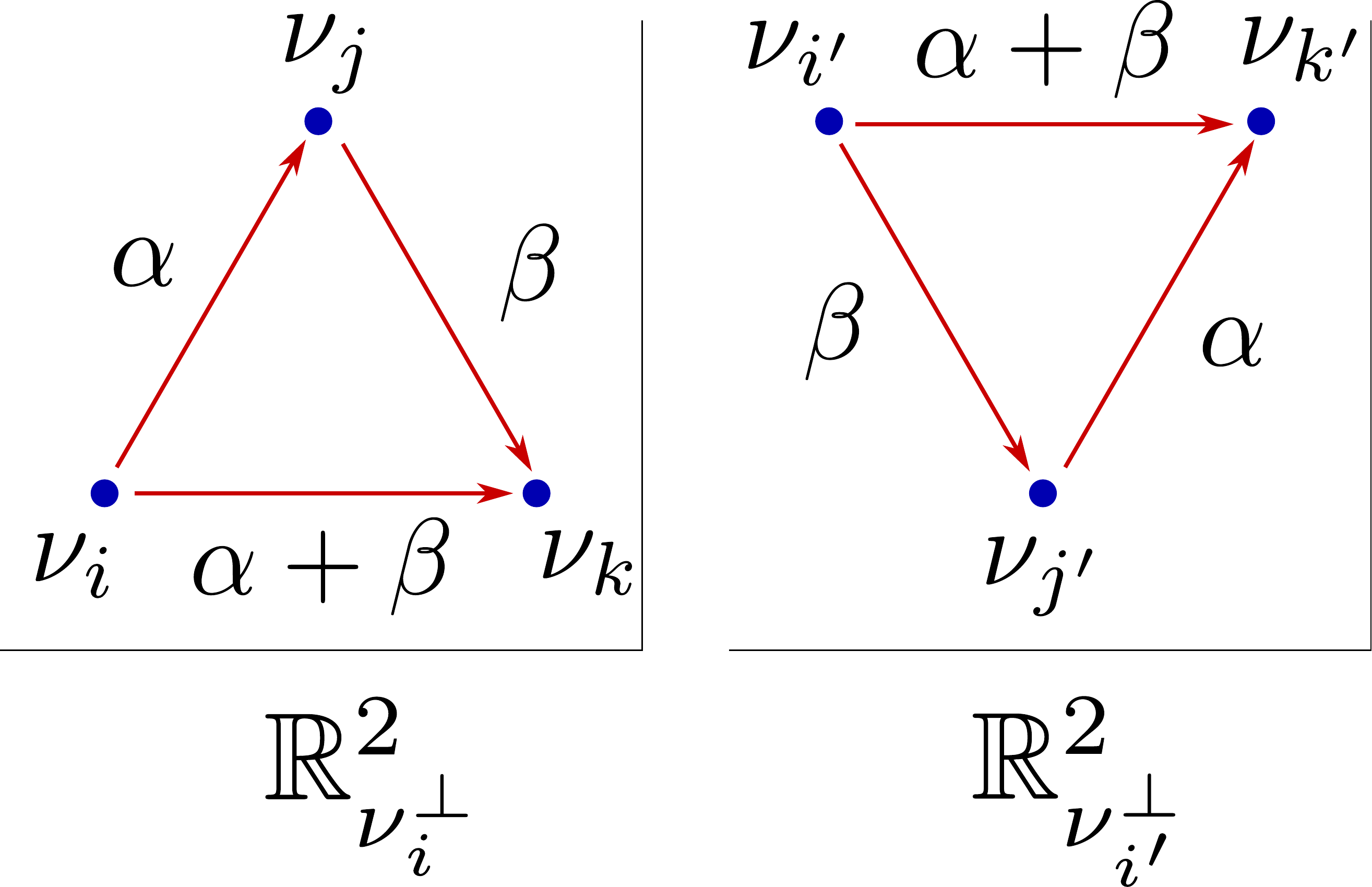}
\caption*{case (b)}
\end{subfigure}
\caption{The two possible cases: we show the different weight sub-systems at $\IR^{2}_{\weight_{i}^{\perp}}$ and at $\IR^{2}_{{\weight'_{i}}{}^{\perp}}$. In both cases $(ik), (i'k')\in\CP_{\alpha+\beta}$. 
}
\label{fig:scenarios}
\end{center}
\end{figure}

\smallskip

The symmetry of solitons supported by $\CS$-walls simplifies the problem of computing the propagation of soliton data across a network $\CW$ and plays an important role in Section \ref{subsec:k-wall-jumps}. 
We can rephrase the symmetry property in terms of the detour rule: for the Stokes factor of an $\CS$-wall,
\be\label{eq:stokes-factor-def}
	\CS_{\alpha} = 1 +\Xi_{\alpha} = 1+\sum_{(i,j)\in\CP_{\alpha}}\Xi_{ij},
\ee
the symmetry establishes a relation among all the $\Xi_{ij}$'s.
In particular, for $\Xi_{ij}$ and $\Xi_{i'j'}$ that are two formal series counting solitons charges $a\in\Gamma_{ij}$, $a'\in\Gamma_{i'j'}$ in different homology classes, the symmetry says that each series has solitons that have the same trees $\CT_{a} = \CT_{a'}$, the same central charges $Z_{a}=Z_{a'}$, and the same degeneracies $\mu(a)=\mu(a')$ as solitons from other series.

\section{\texorpdfstring{$\CK$}{K}-wall jumps and 4d BPS states}\label{subsec:k-wall-jumps}

\begin{figure}[ht]
\begin{center}
	\begin{subfigure}[b]{.31\textwidth}
		\includegraphics[width=\textwidth]{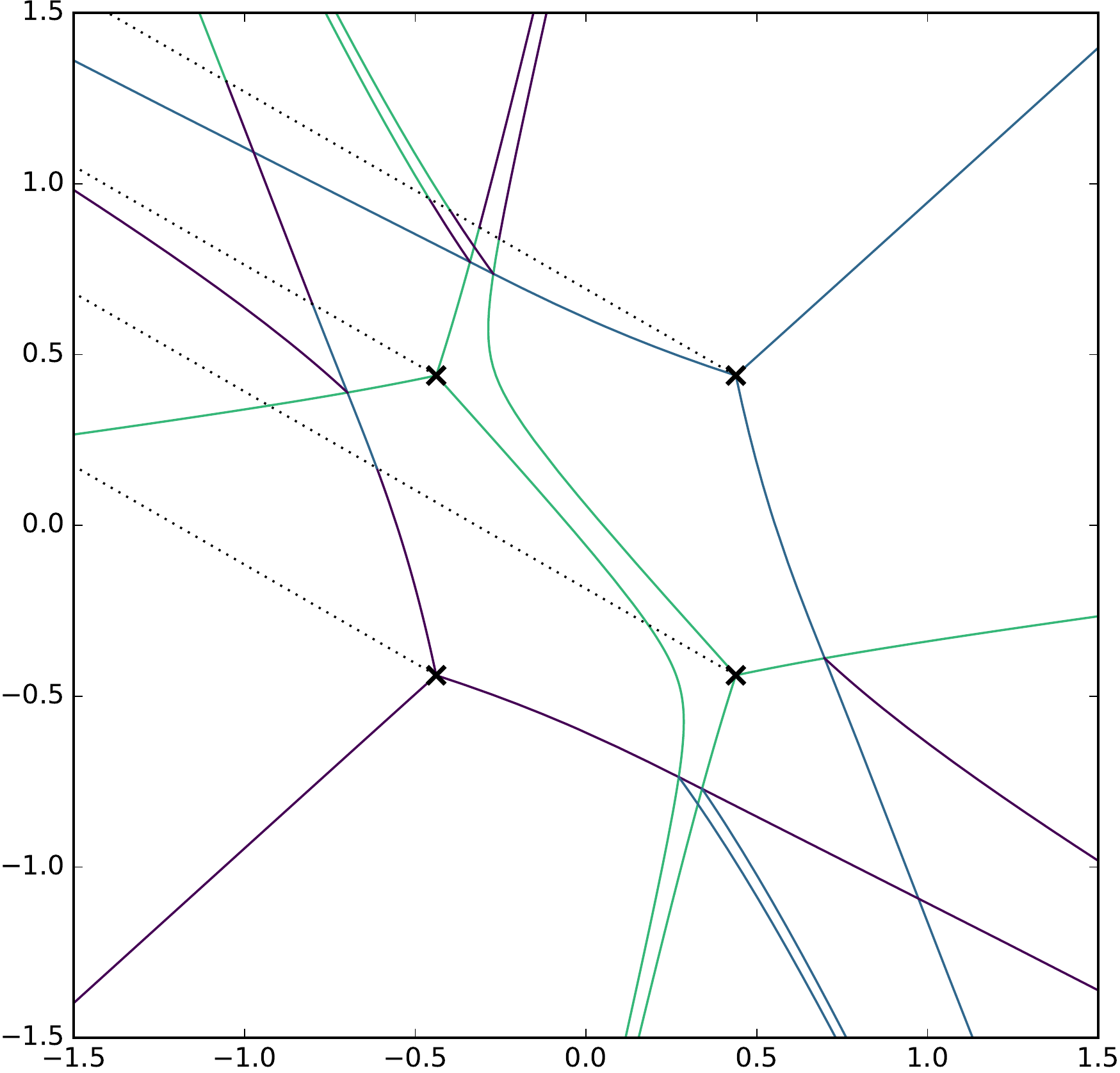}
		\caption{$\vartheta < \vartheta_\mathrm{c}$}	
		\label{fig:W_before_theta_c}
	\end{subfigure}
	\begin{subfigure}[b]{.31\textwidth}
		\includegraphics[width=\textwidth]{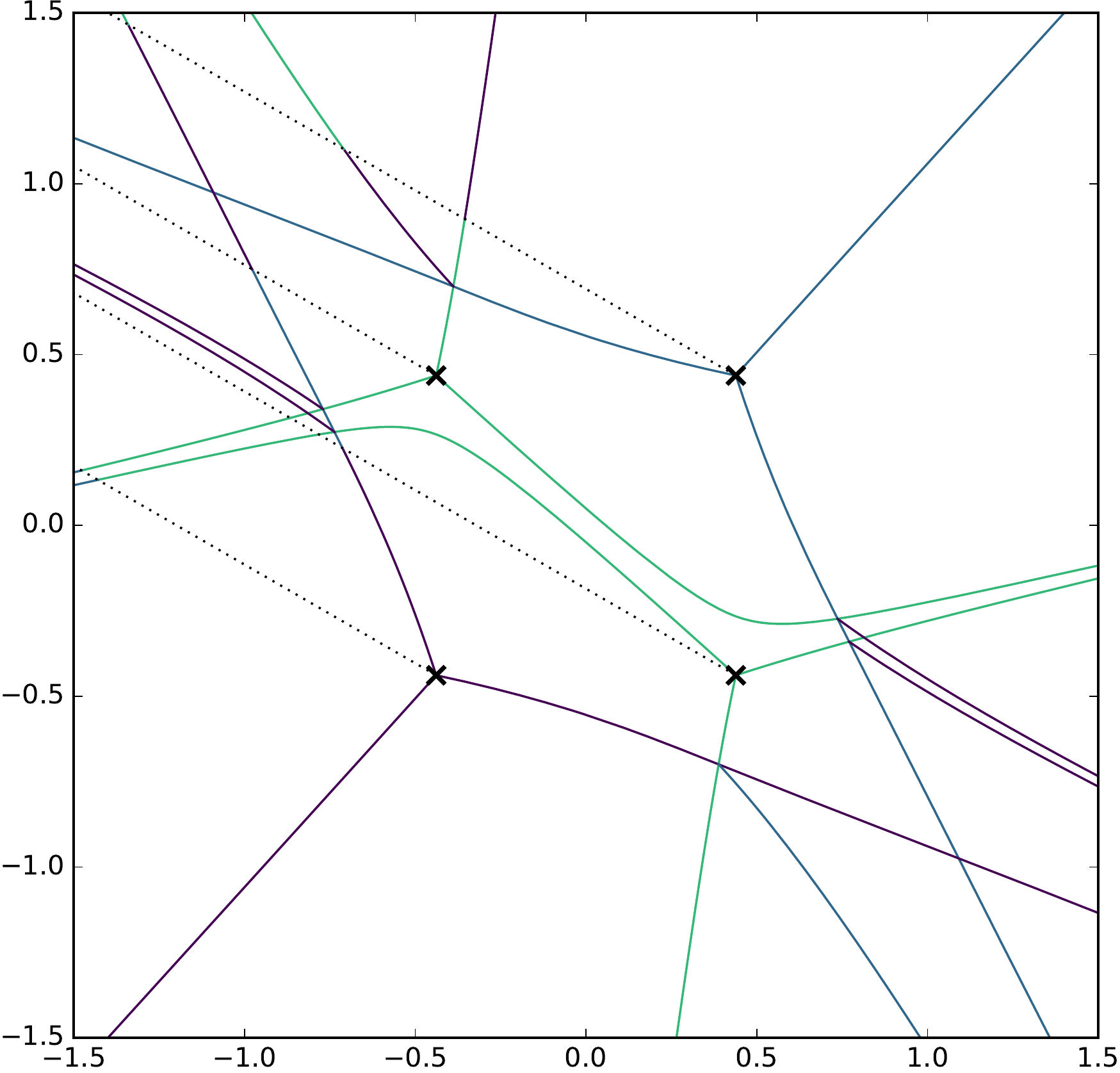}
		\caption{$\vartheta > \vartheta_\mathrm{c}$}		
		\label{fig:W_after_theta_c}
	\end{subfigure}
	\begin{subfigure}[b]{.31\textwidth}
		\includegraphics[width=\textwidth]{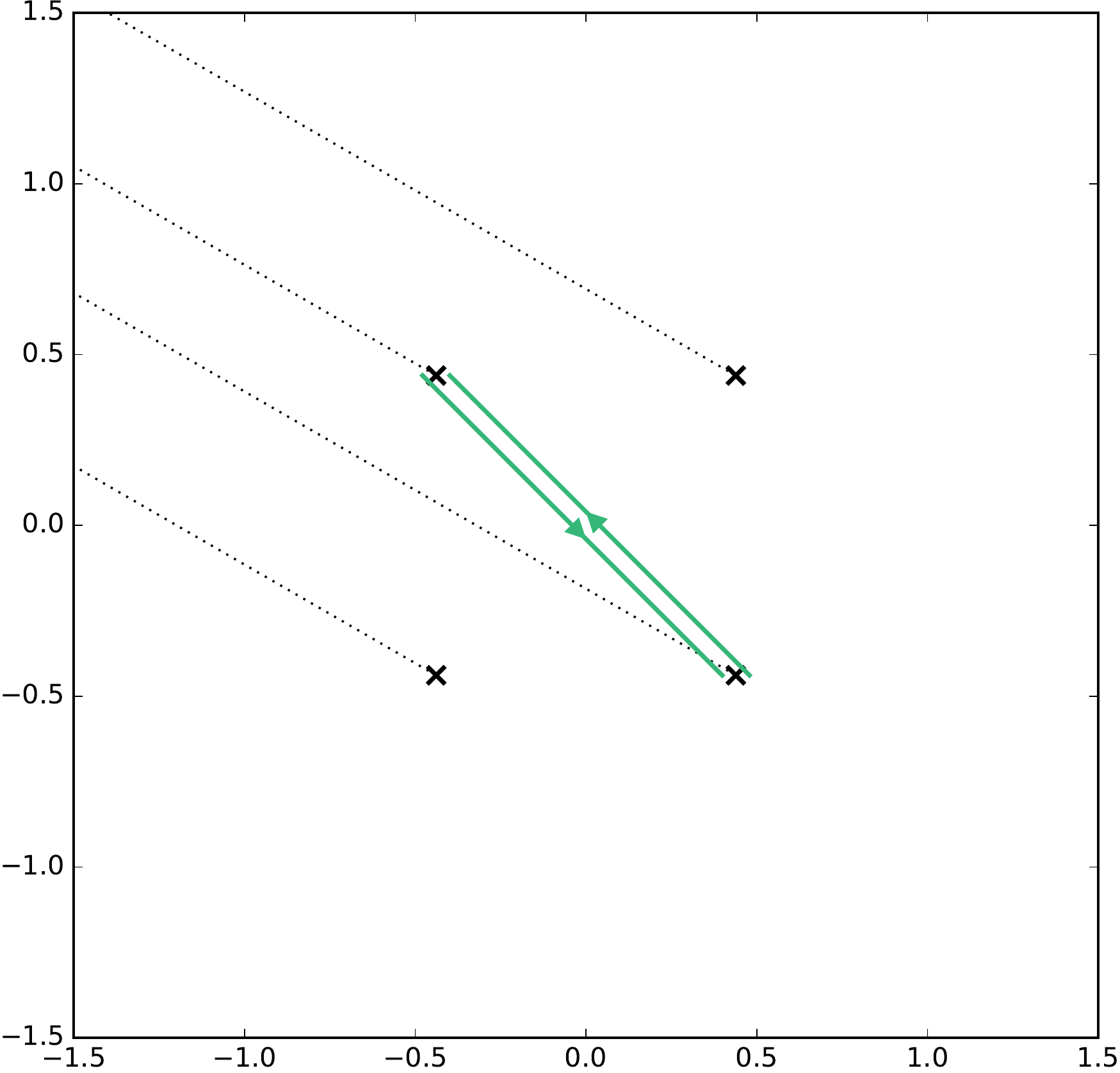}
		\caption{$\mathcal{W}_\mathrm{c}$}		
		\label{fig:W_c}
	\end{subfigure}	
\caption{Spectral networks around $\vartheta = \vartheta_\mathrm{c}$}
\label{fig:critical_spectral_network}
\end{center}
\end{figure}

So far we have been working with a spectral network $\CW$ at a fixed but generic $\vartheta$, which determines the geometry of $\CS$-walls on $C$ via (\ref{eq:geodesic-eq}). 
Now we consider a 1-parameter family of networks $\CW_{\vartheta}$ obtained by varying $\vartheta$. 
As in the standard $\gA_n$-type networks, we find that the formal generating function $F(\wp,\CW_{\vartheta})$ is piecewise constant in $\vartheta$ and exhibits jumps at some values $\vartheta = \vartheta_\mathrm{c}$.\footnote{There are actually two types of jumping behaviors that occur. 
One happens when, as $\vartheta$ changes, some $\CS$-wall crosses an endpoint of $\wp$. The description of this jump is captured by the detour rules described in the previous section, and its physical interpretation is as wall-crossing of framed 2d BPS states, induced by the change in the moduli of the supersymmetric interface $L_{\wp,\vartheta}$ (see \cite{Gaiotto:2011tf, Gaiotto:2012rg} for details). The second type of jump, which is the main subject of this section, is a topological jump of $\CW$ itself.} Figures \ref{fig:W_before_theta_c} and \ref {fig:W_after_theta_c} show spectral networks before and after a jump, respectively. These are from the spectral network of the original Argyres-Douglas fixed point theory from the 4d pure $\SU(3)$ gauge theory \cite{Argyres:1995jj}, whose full spectral networks can be found at \foothref{http://het-math2.physics.rutgers.edu/loom/plot?data=AD_from_pure_SU_3}{this web page}.

Interpreted through the picture of (framed) 2d-4d wall-crossing \cite{Gaiotto:2010be, Gaiotto:2011tf, Gaiotto:2012rg}, these jumps detect 4d BPS states with central charges of phase $\arg(Z) = \vartheta_\mathrm{c}$. Here we explain how to read out the IR charge, the central charge, and the degeneracy of 4d BPS states from a spectral network $\mathcal{W}_{\vartheta_\mathrm{c}}$.

\subsection{Two-way streets}\label{subsubsec:two-way-streets}
\begin{figure}[ht]
\begin{center}
	\begin{subfigure}{.25\textwidth}
		\includegraphics[width=\textwidth]{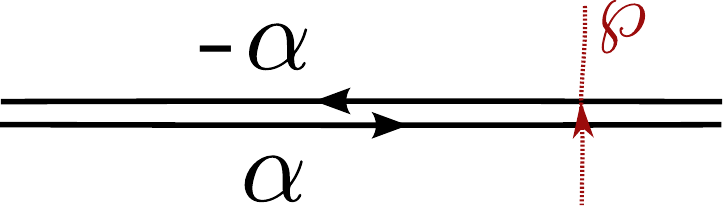}
		\caption*{$\vartheta < \vartheta_\mathrm{c}$}	
		\label{fig:2-way-res-before}
	\end{subfigure}
	\hspace{0.05\textwidth}
	\begin{subfigure}{.25\textwidth}
		\includegraphics[width=\textwidth]{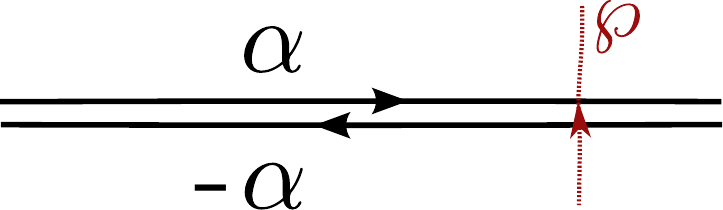}
		\caption*{$\vartheta > \vartheta_\mathrm{c}$}		
		\label{fig:2-way-res-after}
	\end{subfigure}
\caption{Two resolutions of a two-way street. On the left the American resolution, on the right the British one. 
}
\label{fig:2-way-resolutions}
\end{center}
\end{figure}

When $\CW_{\vartheta}$ undergoes a topological jump at a critical phase  $\vartheta_\mathrm{c}$, 
two $\mathcal{S}$-walls of opposite root types, $\CS_{\alpha}$ and $\CS_{-\alpha}$, come to overlap along a segment, called a \emph{two-way} street \cite{Gaiotto:2012rg}. Let $\CW_\mathrm{c}$ be a sub-network of $\CW_{\vartheta_\mathrm{c}}$ that consists of two-way streets only, as shown in Figure \ref{fig:W_c}. 
By perturbing $\vartheta$ away from $\vartheta_\mathrm{c}$ one obtains two distinct resolutions of $\CW_\mathrm{c}$, depicted in Figure \ref{fig:2-way-resolutions}. 

Working in either resolution, we wish to study $F(\wp,\CW_{\vartheta^\pm})$ for a path $\wp$ crossing a two-way street $p$. 
We are going to combine the soliton data of each one-way street of $p$ into a new kind of generating function that involves formal variables  $X_{\ehz{{\gamma}}}$, with $\ehz{{\gamma}} \in H_1(\widetilde\Sigma_\rho,\IZ)/(2H\times \ker(Z))$.\footnote{The central charge $Z$ is defined on $H_1(\Sigma_\rho,\IZ)$ strictly speaking. The same technicality was confronted in equation (\ref{eq:soliton_ij_lattice}), identical considerations apply here.} 
The multiplication rules are extended to these variables as\footnote{For certain purposes it may be important to twist these product rules as $X_{\ehz{\gamma}} X_{\ehz{\gamma}'} = (-1)^{\langle\ehz{\gamma},\ehz{\gamma}'\rangle} X_{\ehz{\gamma}+\ehz{\gamma}'}$. The apparent difficulty here is that there seems to be no natural notion of intersection pairing on the quotient lattice we are considering. Below we will argue that it is possible to define a natural pairing.} 
\be
	X_{\ehz{{\gamma}}}X_{a} = X_{\ehz{{\gamma}}+a}\,,\quad X_{\ehz{{\gamma}}}X_{\ehz{\gamma}'}=X_{\ehz{\gamma} + \ehz{\gamma}'}\,.
\ee

For any two pairs $(i,j)$, $(i',j')\in\CP_\alpha$, a computation of $F(\wp,\CW_{\vartheta^\pm})$ involves terms of the following type 
\begin{align}
	Q(p)&:=1+\Xi_{ij}(p)\Xi_{ji}(p) = 1+\sum_{\substack{\overline a\in\Gamma_{ij}(p)\\ \overline b\in\Gamma_{ji}(p)}}\mu(a)\mu(b)\,X_{\mathrm{cl}(ab)}
\end{align}
\begin{align}
	Q'(p)&:=1+\Xi_{i'j'}(p)\Xi_{j'i'}(p) = 1+\sum_{\substack{\overline{a}'\in\Gamma_{i'j'}(p)\\ \overline{b}'\in\Gamma_{j'i'}(p)}}\mu(a')\mu(b')\,X_{\mathrm{cl}(a'b')} \,.
\end{align}

It follows from Proposition \ref{prop:soliton-symmetry} that each term $\mu(a)\mu(b)\,X_{\mathrm{cl}(ab)}$ of $Q(p)$ has a counterpart $\mu(a')\mu(b')\,X_{\mathrm{cl}(a'b')}$ in $Q'(p)$, such that 
\be
	\mu(a')\mu(b') = \mu(a)\mu(b)\quad \text{ and }\quad Z_{\mathrm{cl}(ab)}=Z_{\mathrm{cl}(a'b')}\,.
\ee
Since $X$'s are class functions on the quotient by $\ker(Z)$, it follows that
\be
	X_{\mathrm{cl}(ab)} = X_{\mathrm{cl}(a'b')} =X_{\ehz{{\gamma}}} \,,
\ee
where $\ehz{{\gamma}}$ denotes the corresponding class in $H_1(\widetilde\Sigma_\rho,\IZ)/(2H\times \ker(Z))$. Therefore we find that 
\be
	Q(p) \equiv Q'(p)\,,
\ee
i.e.\ for any choice of pairs $(i,j), (i',j')\in\CP_\alpha$ their contributions to $F(\wp,\CW)$ are identical.

{The isomorphism in (\ref{eq:sub-quotient-iso}) can be lifted to $H_1(\tilde\Sigma_\rho,\IZ)/2H$, by considering the corresponding $\IZ_2$ extension of each side.} Then each class $\ehz{{\gamma}}$ corresponds to a unique element  $\ephz{\gamma} \in \Gamma$. 
One way to define $\ephz{\gamma}$ is by the following relation\footnote{The uniqueness follows from the definition of $\hat{\Gamma}$, given in Section \ref{subsec:prym}, which involves a quotient by $\ker(Z)$. 
The existence of such an element is not obvious a priori: the main issue is the integrality of $\gamma$. This is precisely the reason why there is a factor of $k_\rho$. This is related to the non-idempotency of the operator $P$ defined in  Section \ref{subsec:prym}.}
\be
	Z_{\ephz{\gamma}} = k_\rho Z_{\overline\gamma}\,.
\ee
As an example, for the pure $\SO(6)$ gauge theory that we studied in Section \ref{subsec:explicit-trivialization}, for $\ehz{{\gamma}} = [\eh{\gamma}_{21}]_{\ker(Z)}= [\eh{\gamma}_{65}]_{\ker(Z)}$ we have $\ephz{\gamma} = [A_1]_{\ker(Z)} = [\eh{\gamma}_{21}+\eh{\gamma}_{65}]_{\ker(Z)}$, where we neglected the winding around the circle fiber of $\tilde\Sigma_\rho$ for simplicity.

For a generic value of $u\in\CB$, the set $\Gamma_c\subset\Gamma$ of charges with $Z_\gamma \in e^{i\vartheta_\mathrm{c}}\IR_{-}$ is rank-1, which we assume to be generated by a single charge $\ephz{\gamma_\mathrm{c}}$. 
Therefore the charge $\ephz{\gamma}$ must be proportional to $\ephz{\gamma_\mathrm{c}}$. 
Then to each two-way street of the network $p\in\CW_\mathrm{c}$ we can uniquely associate a set of integers $\alpha_{\ephz{\gamma}}(p)$ by factorizing $Q(p)$ as
\be
	Q(p) = \prod_{n=1}^{\infty}\big(1 + X_{{n\ehz{\gamma}_\mathrm{c}}}\big)^{\alpha_{{n\ephz{\gamma_\mathrm{c}}}}(p)},\ \alpha_{\ephz{\gamma}}(p)\in\IZ\,.\label{eq:Q_p_gamma_c}
\ee

\subsection{4d BPS degeneracies}\label{subsubsec:BPS-index}
Here we present a formula for computing BPS indices from spectral networks. Its derivation is essentialy that of \cite{Gaiotto:2012rg}, with a bit of extra structure.

As a preliminary, let us set a few conventions on two-way streeets.
Let $p$ be a two-way street formed by walls $\CS_{\pm\alpha}$, where $\alpha$ is a positive root. 
We define the \emph{canonical lift} of $p$ as the formal sum
\be
	\pi^{-1}(p) := \sum_{\weight\in\Lambda_\rho}(\weight\cdot\alpha)\, p_{(\weight)}  = \sum_{(i,j)\in\CP_\alpha} p_{(\weight_j)} - p_{(\weight_i)}\,,
\ee
where $p_{(\weight)}$ is the lift of $p$ to a sheet of $\Sigma_\rho$ corresponding to a weight $\weight$, with orientation given by $\CS_\alpha$, and a minus sign denoting an reversal of the orientation.
Note that this canonical lift is very closely related to the projection of gauge charges presented in Section \ref{subsec:prym} --- in fact (\ref{eq:projection-gauge}) may be understood as taking a hypothetical two-way street $p$ stretching between the two branch points in Figure \ref{fig:single-cut}, and constructing $\pi^{-1}(p)$!

Now, when the topology of $\CW$ jumps, so does the soliton data on its streets.
This results in a jump in $F(\wp,\CW)$, whose definition makes use of the soliton data through the detour rule.
We will prove in Section \ref{sec:K-wall-proof} that the jump of the formal parallel transport is described by a universal formula called a $\mathcal{K}$-wall formula,
\be\label{eq:framed-wcf}
	F(\wp,\CW_{\vartheta^{+}_{c}}) = \CK\big(F(\wp,\CW_{\vartheta^{-}_{c}})\big)\,,
\ee
where $\CK$ is the following substitution on all formal variables
\be\label{eq:K-wall-jump}
	\CK(X_{a}) = X_{a}\,\prod_{n=1}^{\infty}\big(1+X_{{n\ehz{\gamma}_\mathrm{c}}}\big)^{\langle L(\ephz{n \gamma_\mathrm{c}}),\, \overline{a}  \rangle}.
\ee
Here $\overline a$ is the pushforward of the soliton charge $a$ onto $\Sigma_{\rho}$ (see discussion around (\ref{eq:pushforward-rel-hom})), while 
$L(\ephz{n \gamma_\mathrm{c}})$ is a formal sum determined by the soliton data of two-way streets
\be
	L(\ephz{\gamma}) :=\sum_{p\in\CW_\mathrm{c}}\alpha_{\ephz{\gamma}}(p)\,\pi^{-1}(p)\,.\label{eq:L_gamma}
\ee
In Section \ref{sec:L_gamma_closed} we will show that $L(\ephz{\gamma})$ is a closed 1-cycle on $\Sigma_\rho$, in the sense that $\partial L(\ephz{\gamma}) = \emptyset$. 
The fact that $L(\ephz{\gamma})$ is closed is necessary to make sense of (\ref{eq:K-wall-jump}), in particular of the intersection pairing $\langle L(\ephz{n\gamma_\mathrm{c}}), \overline a  \rangle$.%
\footnote{To be precise, to make sense of the pairing more information is needed than what is contained in elements of $H_1^\text{rel}(\tilde\Sigma_\rho,\IZ) / \ker(Z)$. In fact we will give a proof of formula (\ref{eq:K-wall-jump}) in Section \ref{sec:K-wall-proof} by using the actual paths coming from lifting the spectral network geometry.
Nevertheless, this detail is of secondary importance for the present discussion, since the BPS index formula (\ref{eq:BPS-index-formula}) only involves differences of open paths, which live in $H_1(\tilde\Sigma_\rho,\IZ) / \ker(Z)$, and we already gave a proper definition of the DSZ pairing for these.}
Moreover, by genericity of $u\in\CB$, the physical charge  corresponding to $L(\gamma)$ (for definiteness, $[L(\gamma)]_{\ker(Z)}\in\Gamma$), must be proportional to $\ephz{\gamma_{c}}$. 

The formula (\ref{eq:K-wall-jump}) was interpreted in \cite{Gaiotto:2012rg} in terms of framed 2d-4d wall-crossing in 2d-4d coupled systems, describing how bound states of 2d BPS solitons on IR surface defects mix with 4d BPS states of the bulk 4d theory.
The same interpretation applies here, so it is natural to identify the {enhanced 2d-4d degeneracies} of \cite{Gaiotto:2011tf, Gaiotto:2012rg} with
\footnote{The intersection pairing is understood to be evaluated, as usual, upon choosing the unique representative for each of the charges in $H_1(\tilde\Sigma_\rho,\IZ)$. See for example the final remarks of Section \ref{subsec:explicit-trivialization}.}
\be
	\omega(\ephz{\gamma},\overline a) = \langle L(\ephz{\gamma}), \overline{a} \rangle\,.
\ee
These enhanced degeneracies are characterized by the property\footnote{Here $\langle\cdot,\cdot\rangle$ denotes the intersection pairing, which differs from the DSZ pairing by a factor of $k_\rho$. 
It would be interesting to understand this factor from the physical perspective of the halo picture of framed 2d-4d wall-crossing.}
\be
	\omega(\ephz{\gamma},\overline{a} + \ephz{\gamma}') = \omega(\ephz{\gamma},\overline{a})+\Omega(\ephz{\gamma})\,\langle \eh{\gamma}, \eh{\gamma}'\rangle\,,
\ee
which leads to the claim
\be\label{eq:BPS-index-formula}
	\Omega(\ephz{\gamma}) = [L(\ephz{\gamma})]_{\ker(Z)}\, /\,\ephz{\gamma}\,.
\ee
Here $\Omega(\ephz{\gamma})$ is the degeneracy of 4d BPS states of charge $\ephz{\gamma}\in\Gamma$, while $[L(\gamma)]_{\ker(Z)} \in \Gamma=P(H_1(\tilde\Sigma_\rho,\IZ))/\ker(Z)$ is the physical charge corresponding to $L(\gamma)$.
In fact, to make sense of (\ref{eq:BPS-index-formula}), it is actually necessary that both $[L(\ephz{\gamma})]_{\ker(Z)}$ and $\ephz{\gamma}$ are elements of the same lattice. 
This is true provided that the homology class $[L(\gamma)] \in P(H_1(\tilde\Sigma_\rho, \mathbb{Z}))$, 
i.e.\ that it falls in the sub-lattice defined by the image of the operator $P$ from section \ref{subsec:prym}.
From the definition of $L(\gamma)$ it seems plausible this condition holds true generally, 
and we assume it to be true generally, but we do not have a rigorous proof. 
We will show that this holds true in explicit examples in Section \ref{sec:examples}.

\subsection{Framed wall-crossing at \texorpdfstring{$\CK$}{K}-walls}\label{sec:K-wall-proof}

\subsubsection*{Proof of $\partial L(\gamma) = \emptyset$}\label{sec:L_gamma_closed}

From its definition it is obvious that $\partial L(\gamma)$ gets contributions only from (lifts of) endpoints of two-way streets of $\CW_\mathrm{c}$, which come in two types: branch points or joints. 
Branch-points are easily seen to give a null contribution to $\partial L(\gamma)$, while joints require a bit more care. 
For the sake of generality, we consider a 6-way joint\footnote{The analysis of any joint with fewer than $6$ two-way streets follows as a special case of the forthcoming analysis.} of two-way streets located at $z\in C$, as depicted in Figure \ref{fig:6-way-joint}.

\begin{figure}[ht]
\begin{center}
\includegraphics[width=0.35\textwidth]{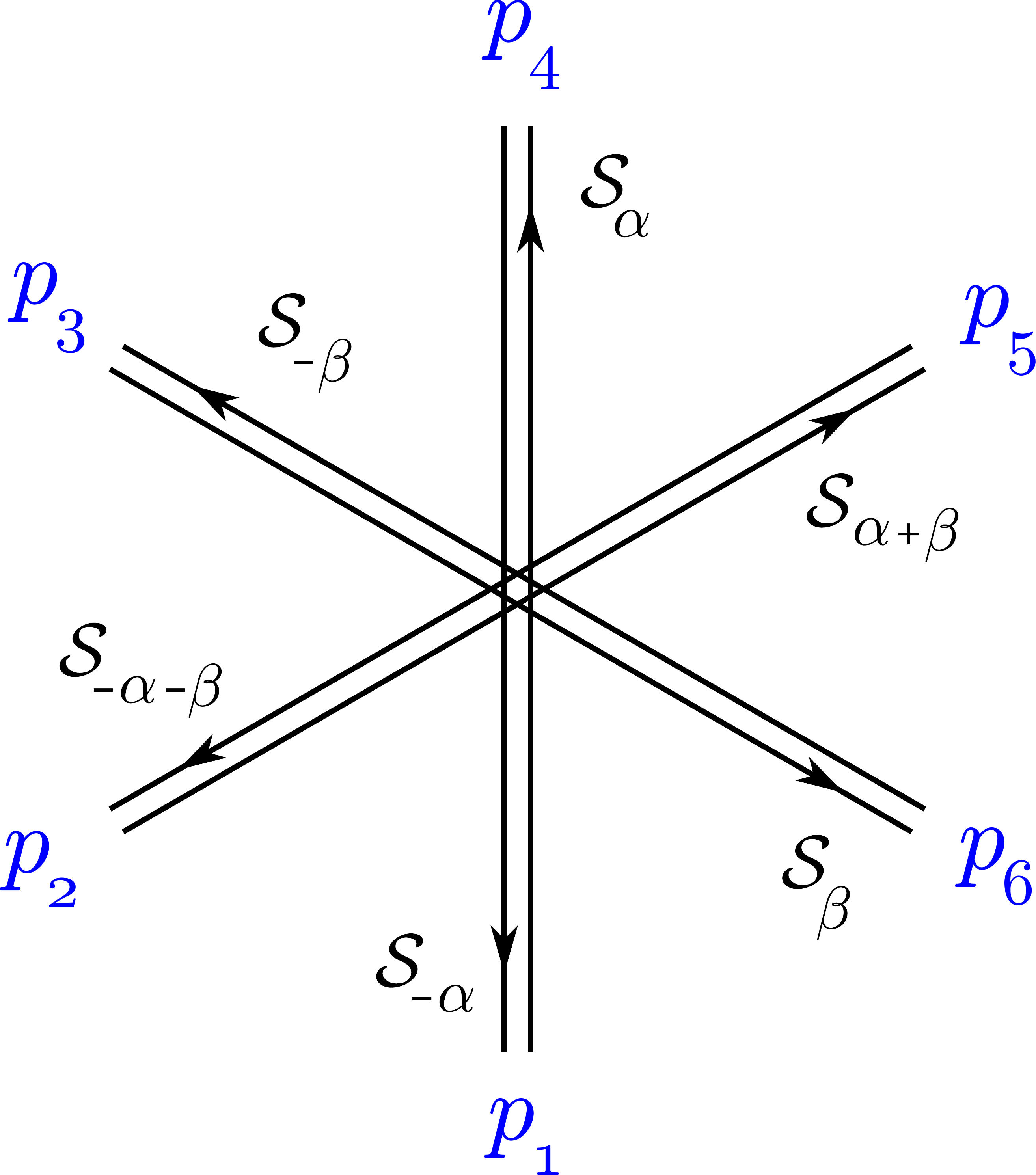}
\caption{}
\label{fig:6-way-joint}
\end{center}
\end{figure}

Each two-way street $p_{i},\,i=1\dots6$ enters the formal sum $L(\gamma)$ as
\be
	L(\gamma) \supset \sum_{i=1}^{6}\alpha_{\gamma}(p_{i})\,\pi^{-1}(p_{i}) \,.
\ee
Denoting by $z$ the location of the joint, we want to compute the contribution to $\partial L(\gamma)$ by the lifts $\{x_\weight(z)\}_{\weight\in\Lambda_\rho}$ of $z$ to the various sheets.
The existence of such a joint implies that $\alpha\measuredangle\beta=2\pi/3$, let us therefore consider the corresponding A$_2$ sub-algebra of $\fg$.
For any $x_\weight \in \partial L$, the weight $\weight$ must belong to a Weyl orbit of $\gA_{2}$, and there are three possible orbits which we collect in Figure \ref{fig:W2-orbits}. 
\begin{figure}[h!]
\begin{center}
\begin{subfigure}{0.15\textwidth}
\includegraphics[width=\textwidth]{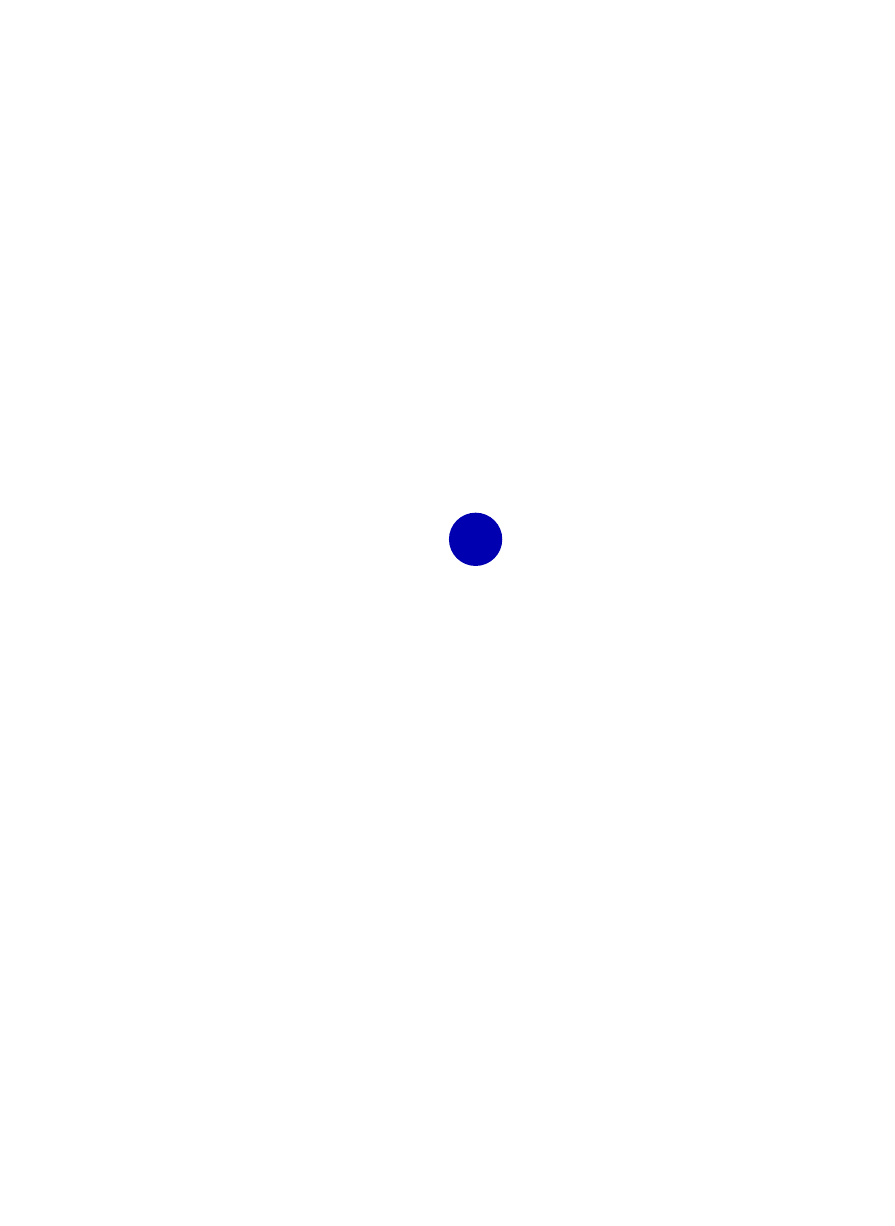}
\caption*{(1)}
\end{subfigure}
\hspace{0.1\textwidth}
\begin{subfigure}{0.15\textwidth}
\includegraphics[width=\textwidth]{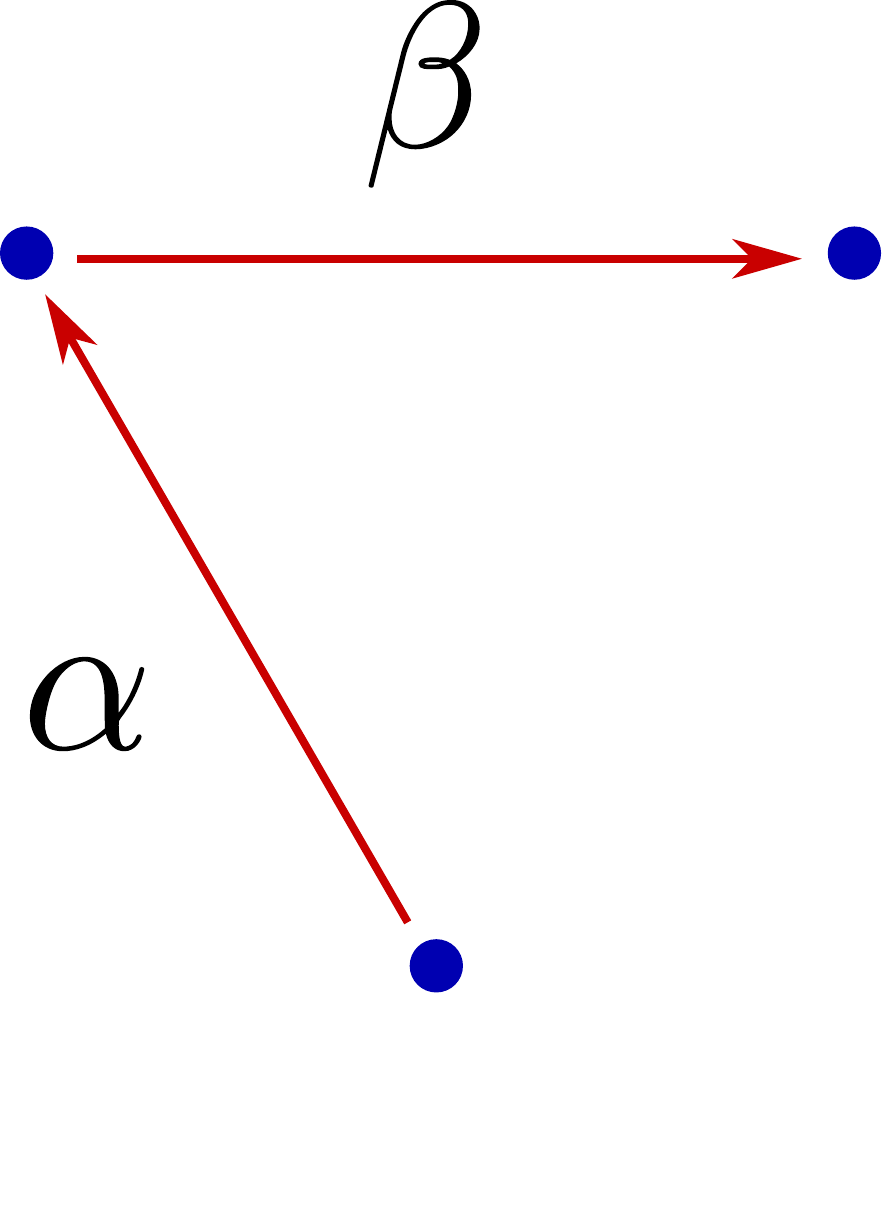}
\caption*{(2)}
\end{subfigure}
\hspace{0.1\textwidth}
\begin{subfigure}{0.15\textwidth}
\includegraphics[width=\textwidth]{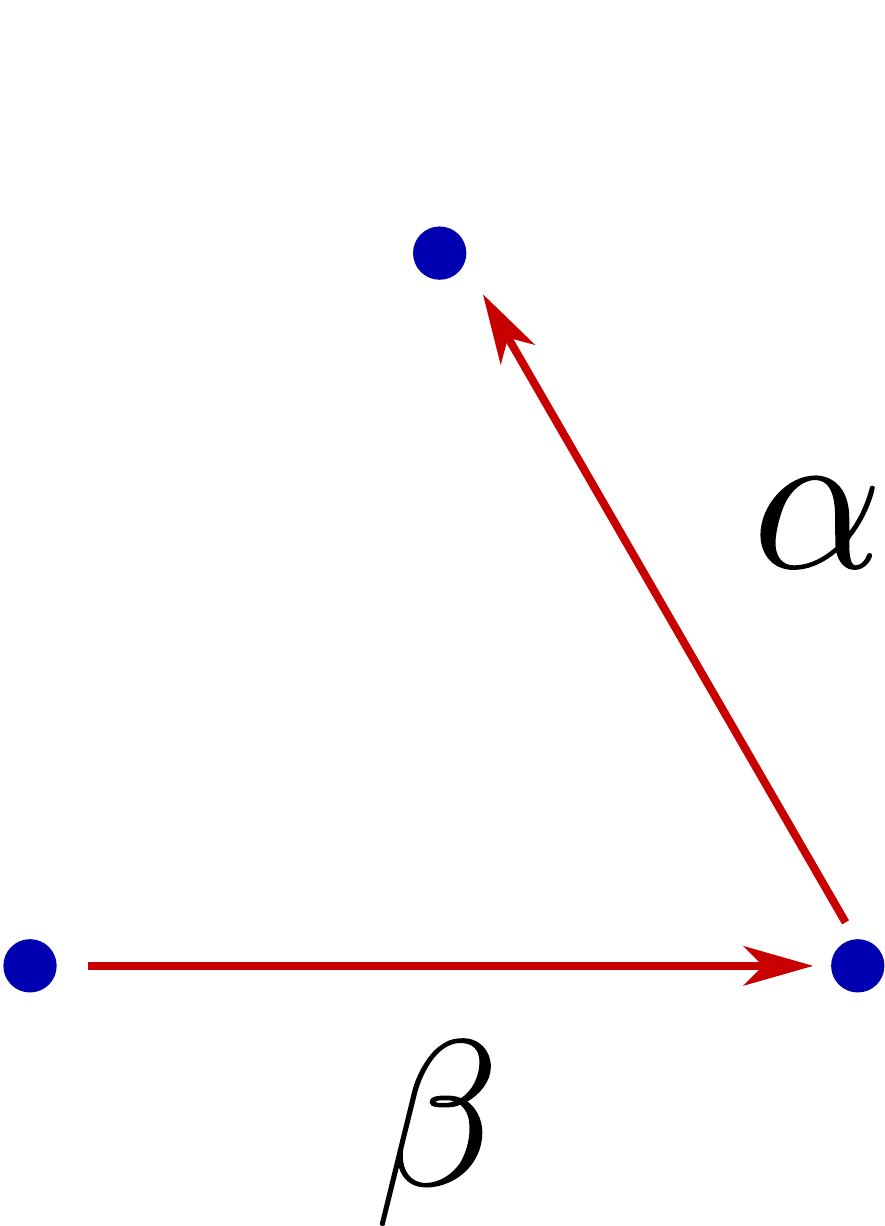}
\caption*{(3)}
\end{subfigure}
\caption{Three Weyl orbits of $\gA_{2}$, only these can occur in a minuscule representation of $\gADE$. 
}
\label{fig:W2-orbits}
\end{center}
\end{figure}
If $\weight_{i}$ belongs to the trivial orbit of case (1), we see that $i\in\CP_{\alpha}^{0}\cap\CP_{\beta}^{0}$ and therefore $x_{i}$ doesn't contribute to $\partial L[\gamma]$.
In case (2), we must consider where $\weight_{i}$ sits in the orbit:
\begin{itemize}
\item[(a)] $i\in\CP_{\alpha}^{-}\cap\CP_{\beta}^{0}$ i.e.\ on the bottom, 
\item[(b)] $i\in\CP_{\alpha}^{+}\cap\CP_{\beta}^{-}$ i.e.\ on the top-left,
\item[(c)] $i\in\CP_{\alpha}^{0}\cap\CP_{\beta}^{+}$ i.e.\ on the top-right. 
\end{itemize}
In case (a), we have 
\be\label{eq:boundary-contribution}
	\partial L(\gamma)\supset x_{i}\big( \alpha_{\gamma}(p_{4}) + \alpha_{\gamma}(p_{5}) - \alpha_{\gamma}(p_{1}) - \alpha_{\gamma}(p_{2}) \big)\,.
\ee
By choosing $\wp,\wp'$ as in Figure \ref{fig:artful-paths} and studying $F(\wp)_{ii}=F(\wp')_{ii}$
we find
\be
		F(\wp)_{ii}  \sim Q(p_{5}) \, Q(p_{4}) \,,\quad 	F(\wp')_{ii} \sim Q(p_{1}) \, Q(p_{2}) \,,
\ee
homotopy invariance then implies
\be
\begin{split}
	&\quad \alpha_{n\gamma_{c}}(p_{5}) + \alpha_{n\gamma_{c}}(p_{4}) = \alpha_{n\gamma_{c}}(p_{1}) + \alpha_{n\gamma_{c}}(p_{2})\quad \forall n\,,
\end{split}
\ee
therefore the sum in the parentheses of (\ref{eq:boundary-contribution}) vanishes as a consequence of the flatness of $F(\wp)$.
\begin{figure}[ht]
\begin{center}
\includegraphics[width=0.35\textwidth]{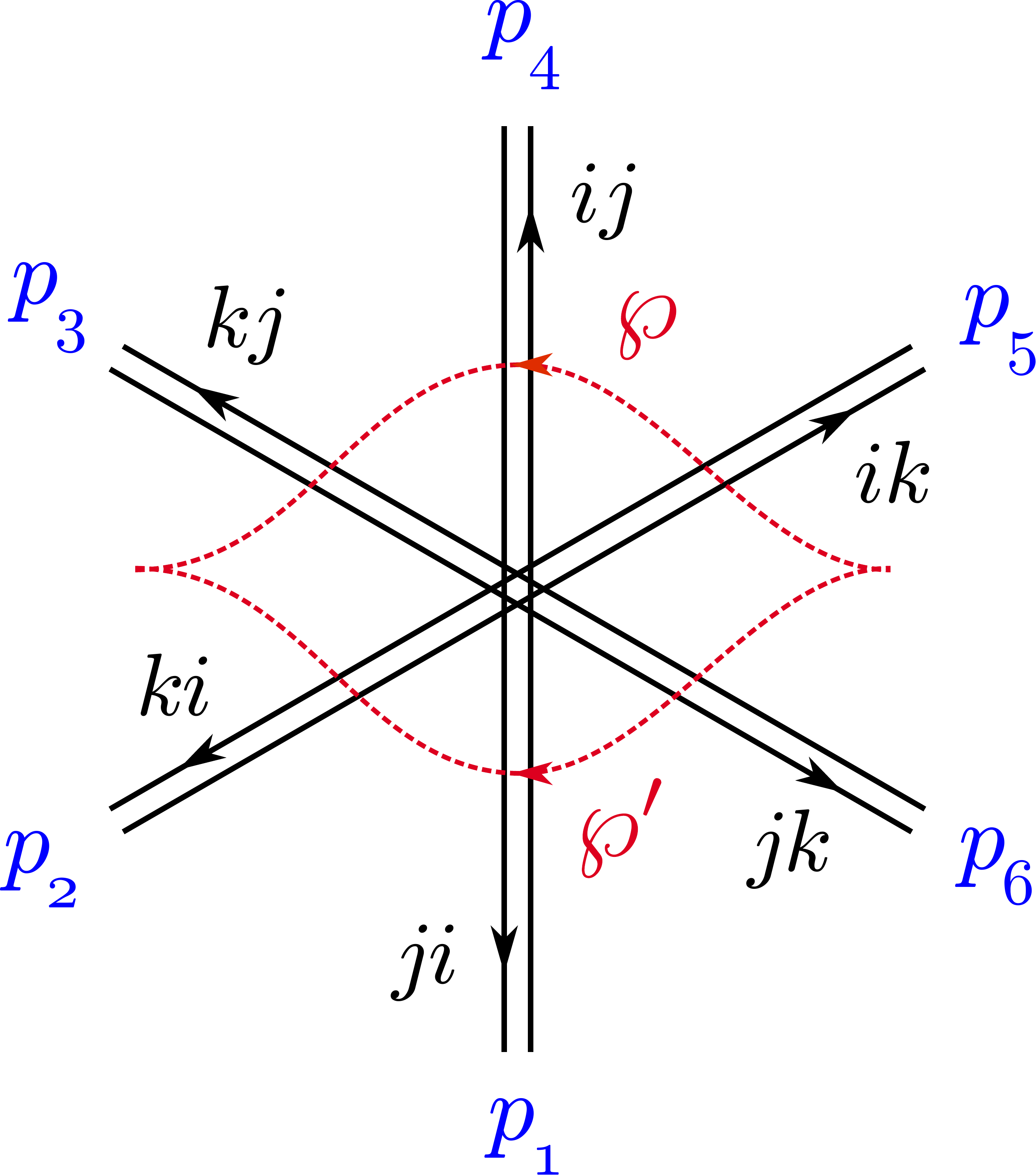}
\caption{A choice of $\wp,\wp'$ suitable for deriving (\ref{eq:boundary-contribution}).}
\label{fig:artful-paths}
\end{center}
\end{figure}

A similar argument shows that the contribution of $x_{i}$ vanishes also in cases (b) and (c), and also for all analogous cases arising for the last type of $\gA_{2}$-orbit in Figure \ref{fig:W2-orbits}.

\subsubsection*{Proof of the $\CK$-wall formula}

We finally proceed to prove that the jump of $F(\wp,\CW)$ is described by equation (\ref{eq:framed-wcf}), which was a crucial assumption in the derivation of the BPS index formula.
The analysis given here is only a mild generalization of that of \cite{Gaiotto:2012rg}, but relies crucially on the analysis of $\CS$-walls and their soliton content carried out in Section \ref{sec:spectral-networks}.

Let $\CW_{\pm}^{}$ be the networks at $\vartheta_\mathrm{c}^{\pm}$ respectively, then we claim that both
\be
\CF(\wp) = F(\wp, \CW_{+})\quad\text{and}\quad \CF(\wp) = \CK\left(F(\wp, \CW_{-})\right)
\ee
have the following properties:
\begin{description}
 \item[P1.] $\CF(\wp)$ is a (twisted) homotopy invariant of $\wp$.
 \item[P2.] If $\wp\cap\CW=\emptyset$, then
\be
	 \CF(\wp) = D(\wp).
\ee
 \item[P3.] If $\wp,\wp'$ have endpoints off $\CW$, then
\be
 \CF(\wp) \CF(\wp') = \CF(\wp \wp').
\ee
 \item[P4.] If $\wp\cap\CW=\{z\}$ on a one-way street $p$ of type ${\alpha}$, then 
\be\label{eq:f-one-way}
 \CF(\wp) = D(\wp_+) \left(1 + \sum_{(i,j)\in\CP_{\alpha}}\sum_{\bar a \in \Gamma_{ij}(p)} \mu(a) X_{a}\right) D(\wp_-),
\ee
for some $\mu(a) \in \IZ$.
\item[P5.] If $\wp\cap\CW=\{z\}$ on a two-way street $p$, and the intersection between $\wp$ and $p$
is positive\footnote{With respect to the orientation of $C$ as a complex curve, and with the orientation of $p$ given by the underlying $\CS_{\alpha}$.},
\be \label{eq:f-two-way}
 \CF(\wp) = D(\wp_+) \left(1 + \sum_{(j,i)\in\CP_{\alpha}}\sum_{\bar a \in \Gamma_{ji}(p)} \mu(a) X_{a}\right) \left(1 +  \sum_{(i,j)\in\CP_{-\alpha}}\sum_{\bar b \in \Gamma_{ij}(p)} \mu(b) X_{b} \right) D(\wp_-)
\ee
for some $\mu(a),\,\mu(b) \in \IZ$.
\end{description}

\noindent For $\CF(\wp) = F(\wp, \CW^+)$ these properties follow from the definitions.  
We then have to prove them for $\CF(\wp) = \CK(F(\wp, \CW^-))$.
The first three are true for $F(\wp, \CW^-)$ and are preserved by $\CK$.
The fourth is also true for $F(\wp, \CW^-)$, and is also preserved by $\CK$:
in fact $\CK$ just multiplies each term by a function of the $X_{\tilde \gamma}$,
thus preserving the form \eqref{eq:f-one-way}.  

The last property requires a bit more work: we begin with the formula 
\be\label{eq:two-way-crossing}
	F(\wp, \CW^-) = D(\wp_{+})\left(1+\Xi_{-\alpha}\right) \left(1+\Xi_{\alpha}\right) D(\wp_{-})
\ee
for a $\wp$ crossing a two-way street in the American resolution (see Figure \ref{fig:2-way-crossing}), going first through $\CS_{-\alpha}$ and then through $\CS_{\alpha}$.
We want to show that $\CK$ transforms it into the form \eqref{eq:f-two-way}.
Expanding out \eqref{eq:two-way-crossing} we find
various classes of terms: for each $(i, j)\in\CP_{-\alpha}$ and each $k\in\CP_{\alpha}^{0}$
\be\label{eq:K-wall-eqs}
\begin{split}
 F(\wp, \vartheta_\mathrm{c}^-)_{ii} &= X_{\wp_+^{(i)}} \left( 1 + \sum_{\substack{\bar a\in\Gamma_{ji}\\ \bar b\in\Gamma_{ij}}}\mu^-(a) \mu^-(b) X_b X_a \right) X_{\wp_-^{(i)}} = X_{\wp^{(i)}} Q(p) , \\
 F(\wp, \vartheta_\mathrm{c}^-)_{ij} &= X_{\wp_+^{(i)}} \left( \sum_{\bar b\in\Gamma_{ij}} \mu^-(b) X_b \right) X_{\wp_-^{(j)}}, \\
 F(\wp, \vartheta_\mathrm{c}^-)_{ji} &= X_{\wp_+^{(j)}} \left( \sum_{\bar a\in\Gamma_{ji}} \mu^-(a) X_a \right) X_{\wp_-^{(i)}}, \\
 F(\wp, \vartheta_\mathrm{c}^-)_{jj} &= X_{\wp^{(j)}}, \\
 F(\wp, \vartheta_\mathrm{c}^-)_{kk} &= X_{\wp^{(k)}} .
\end{split}
\ee

\begin{figure}[ht]
\begin{center}
\includegraphics[width=0.35\textwidth]{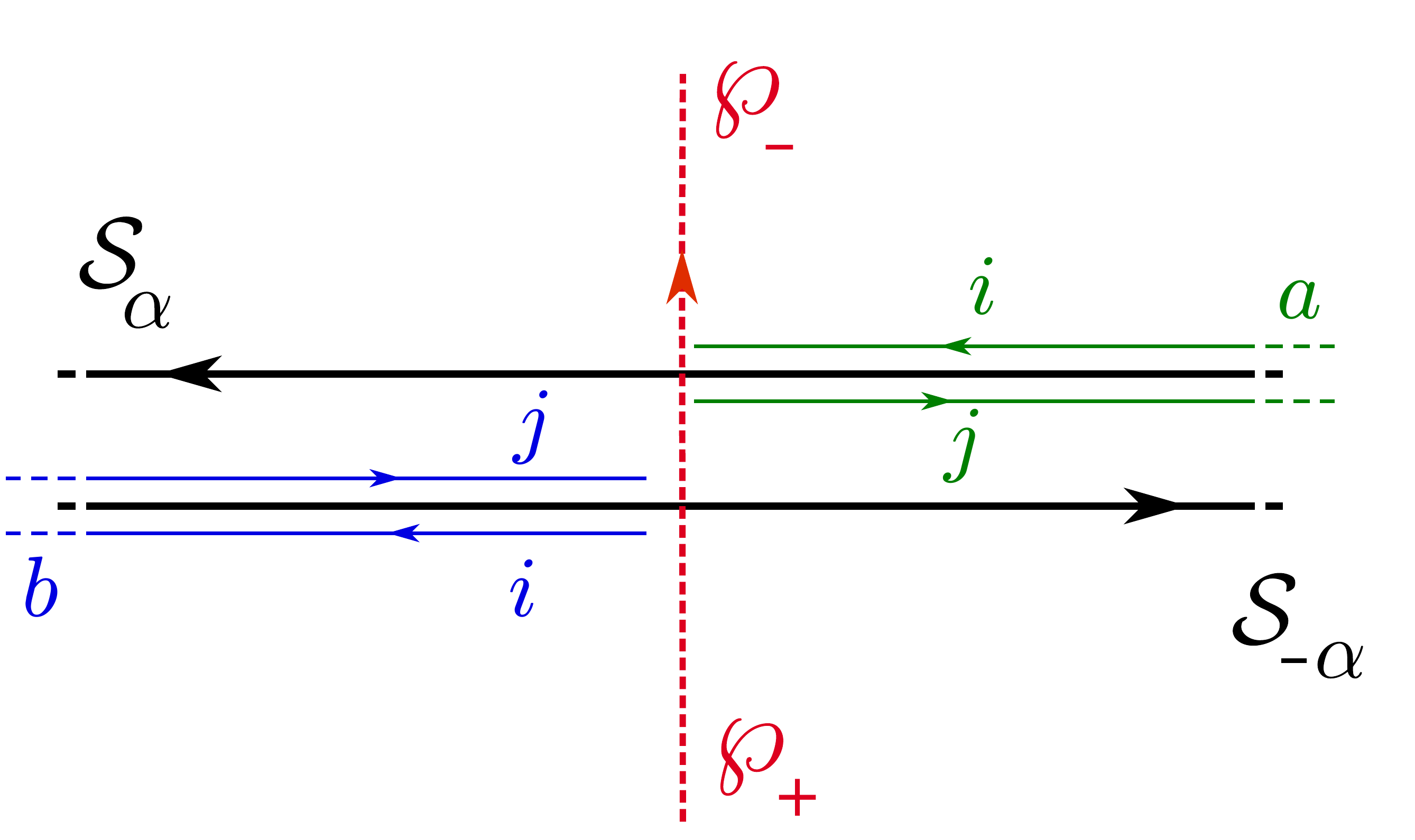}
\caption{The American resolution of a two-way street. of type $\pm\alpha$. Two solitons $a, b$ respectively of types $(i,j)$ and $(j,i)$  are depicted, the arrows denote the direction of the flow on the sheets $i$ and $j$.}
\label{fig:2-way-crossing}
\end{center}
\end{figure}

For each soliton type, such as $(i,i)$,  $(i,j)$, etc., the terms in the sum differ only by powers of $X_{\gamma}$.  
Since $[L(\gamma)]\propto\gamma$, it follows that $\langle{\gamma, L(\gamma)}\rangle = 0$, therefore the action of $\CK$ on each class of terms is independent of the particular term.  
For the $F_{ii}$ terms, $\CK$ acts by multiplication by
\be
	\prod_{n=1}^{\infty} (1 + X_{n \ehz{\gamma}})^{\langle{ L(\gamma) , \wp^{(i)} } \rangle} = \prod_{n=1}^{\infty} (1 + X_{\ehz{\gamma}})^{ -\alpha_\gamma(p)} = Q(p)^{-1}
\ee
Similarly, on $F_{jj}$ terms $\CK$ acts by multiplication by $Q(p)$.
The action on $F_{ij}$ terms is more complicated: it consists of multiplication by a new function 
\be
	H = \prod_{n=1}^{\infty} (1 + X_{\ehz{\gamma}})^{\langle{L(\gamma), \wp_+^{(i)} + \bar b + \wp_-^{(j)}}\rangle}
\ee
But note that
\be
	\langle{L(\gamma), \wp_+^{(i)} + \bar b + \wp_-^{(j)}} \rangle = - \langle{L(\gamma) , \wp_+^{(j)} + \bar a + \wp_-^{(i)}}\rangle,
\ee 
which follows from the fact that $\cl(\bar a+\bar b)$ is proportional to $L(\gamma)$ and
$\langle{L(\gamma), \wp^{(i)}}\rangle = - \langle{L(\gamma), \wp^{(j)}}\rangle$.
This means that $\CK$ acts on the $F_{ji}$ terms by multiplication by $H^{-1}$.
These facts together are sufficient to prove the last property.

The above five properties in fact {determine} $\CF(\wp)$, for the following reasons.
Off-criticality, a network $\CW_{\vartheta}$ only has one-way streets, then in this case properties P1-P4 completely determine the soliton content on all streets, and therefore fix $\CF(\wp)$.
When two-way streets are present, a similar argument applies. A basic ingredient now is a direct computation which, relying only on properties P1-P5, computes the whole \emph{outgoing} soliton content on all two-way streets attached to a joint (for instance, like $p_{1}\dots p_{6}$ in Figure \ref{fig:6-way-joint}), in terms of the \emph{ingoing} soliton content for those same streets.
The explicit formulae for the soliton flow at joints of two-way streets were first given in \cite[App. A]{Gaiotto:2012rg}\footnote{Also see \cite{Galakhov:2013oja} for a small correction of those that takes into account winding of solitons on the circle fiber across the joint.}. Instead of replicating their analysis, we point the reader to the original reference, since the same formulae can be promoted \emph{mutatis mutandis} to ADE networks.
Therefore, all the soliton degeneracies $\mu(a)$ everywhere on $\CW$ are completely determined by the above properties, even when the network contains two-way streets. 
This concludes the proof that the properties completely determine $\CF(\wp)$, which in turn implies (\ref{eq:framed-wcf}).

\section{Examples}
\label{sec:examples}

For the purpose of computing BPS spectra, one great advantage of the spectral network framework is that the computation is \emph{algorithmic}.  In this section we summarize this procedure and provide several examples of its application for theories with ADE gauge groups. 
General forms of Seiberg-Witten curves presented as spectral covers of $C$ in the vector representation are

\be
\begin{split}
	\gA_{N-1} \, :  \qquad &  \lambda^N + \sum_{w_i} \phi_{w_i} \lambda^{N-w_i} = 0 \\
	\gD_{N} \, :  \qquad &    \lambda^{2N} + \sum_{w_i \neq N} \phi_{w_i} \lambda^{2N-w_i} + \phi_{N}^2= 0 
\end{split}
\ee
and similarly for E-type covers (\ref{eq:generic-curves}).

\subsection{An algorithmic approach to BPS spectra}

While spectral networks involve lots of structure, for the practical purpose of computing BPS spectra, much of the technicalities can be ignored.
In this section we collect a few essential steps that allow us to extract the BPS spectrum from spectral networks. This algorithm was proposed in \cite{Gaiotto:2012rg}, and was used to analyze all the examples presented below.

For a given theory, the first step is to choose a generic point $u$ on its Coulomb branch. This fixes the geometry of the spectral curve $\Sigma_\rho$, and therefore the geometry of $\CS$-walls at phase $\vartheta$.
To study the BPS spectrum, it is necessary to draw a family of spectral networks $\CW_\vartheta$. This step typically requires the use of a computer program. 
Examples presented here are studied using \texttt{loom}, a program designed for this purpose. It is an \foothref{https://github.com/chan-y-park/loom}{open source project}, and a \foothref{http://het-math2.physics.rutgers.edu/loom/config}{web user interface} to a server running \texttt{loom} is available for public use.

Given a family of spectral networks $\CW_\vartheta$, one looks for special values of the phase $\vartheta\in[0,\pi]$ where the network topology degenerates, undergoing a $\CK$-wall jump as described in section \ref{subsec:k-wall-jumps}. Each critical phase $\vartheta_c$ signals the presence of one or several BPS states in the spectrum with central charges of phase $\arg(Z) = \vartheta_{\mathrm{c}}$.
The topology of $\CW_{\vartheta_c}$ can be used to extract their IR gauge and flavor charges $\gamma\in\Gamma$, as well as their BPS indices $\Omega(\gamma,u)$.
The charge $\gamma$ of each BPS state takes value in a rank-1 sub-lattice $\Gamma_\mathrm{c} \subset \Gamma$ fixed by $\vartheta_{\mathrm{c}}$. 

Focusing on the \emph{sub}-network $\CW_{\mathrm{c}}\subset\CW$ composed of two-way streets (see for example Figure \ref{fig:W_c}), one computes $\alpha_{\gamma}(p)$ for each $p\in\CW_{\mathrm{c}}$ by using soliton data on each street and equation (\ref{eq:Q_p_gamma_c}).
The integers $\alpha_{\gamma}(p)$ then define a choice of lift $L(\gamma)$ of $\mathcal{W}_\mathrm{c}$ to $\Sigma_\rho$, defined in (\ref{eq:L_gamma}). 
The BPS degeneracy of $\gamma$ is then obtained from $L(\gamma)$ by applying equation (\ref{eq:BPS-index-formula})

Charges of BPS states are often expressed by choosing an electromagnetic duality frame. While the choice of a frame is to a certain extent arbitrary, invariant physical information is encoded in the DSZ electromagnetic pairing between the charges. In our setup, this can be computed as 
$\langle \gamma, \gamma' \rangle_{\text{DSZ}} = (1/k_\rho) \langle L(\gamma), L(\gamma')  \rangle$ in terms of the intersection pairing between $L(\gamma)$, $L(\gamma')$.

\subsection{The hypermultiplet revisited}\label{sec:hyper_revisited}
Readers familiar with the literature on spectral networks will recognize that the BPS index formula (\ref{eq:BPS-index-formula}) and the $\mathcal{K}$-wall formula (\ref{eq:K-wall-jump}) look very similar to those first derived in \cite{Gaiotto:2012rg}. 
Nevertheless, this fact conceals a rather intricate interplay among the projection on the physical charge lattice, the symmetry property of 2d solitons, and the $\CK$-wall formula, which appears already in the simplest $\mathcal{K}$-wall jump: the hypermultiplet.

In this subsection we don't specify a specific theory, instead we work on a local patch of $C$ and study a kind of jump of the network that appears quite commonly in many theories.
For this purpose, we fix a Lie algebra $\fg$ of ADE type and a minuscule representation $\rho$, and we take $C$ to be the complex plane. We take $\pi:\Sigma_\rho\to C$ to have two square-root branch points of the same root-type $\alpha$. 

\begin{figure}[h!]
\begin{center}
\includegraphics[width=0.85\textwidth]{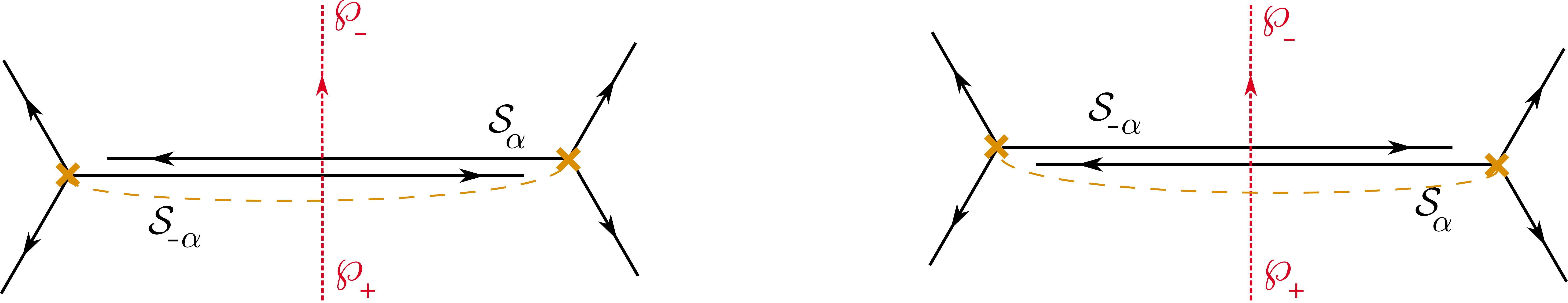}
\caption{The two-way street of type $\alpha$ appearing at the jump of the network $\CW$. On the left the American resolution for $\vartheta_{-}$, on the right the British resolution for $\vartheta_{-}$.}
\label{fig:hypermultiplet}
\end{center}
\end{figure}

We choose a trivialization by drawing a branch cut between the two branch points,  as shown in Figure \ref{fig:hypermultiplet}.
Each branch point sources three $\CS$-walls, we wish to study the jump when a wall of type $\alpha$ forms a two-way street with a wall of type $-\alpha$.
For simplicity, we assume that $k_\rho=2$, although what we will say has a straightforward generalization to higher $k_\rho$. 
There are two soliton types on each $\CS$-wall
\be
	\CP_\alpha = \{(12),(34)\}\qquad \CP_{-\alpha} = \{(21),(43)\}\,.
\ee
The soliton content consists of one simpleton for each soliton type, since each $\CS$-wall is a primary.
Let us denote the simpletons by $a_{12}, a_{34}, b_{21}, b_{43}$ in self-evident notation.
After computing $Q(p)$ we find 
\be
	Q(p) = 1+ X_{a_{12}}X_{b_{21}} = 1+ X_{a_{34}}X_{b_{43}} \,.
\ee
Note that
\be
\left.\begin{array}{c}
\ehz{\gamma}_{12} := \mathrm{cl}(a_{12} b_{21}) \\
\ehz{\gamma}_{34}:= \mathrm{cl}(a_{34} b_{43}) 
\end{array}\right\}\in H_1(\tilde\Sigma_\rho,\IZ)/\ker(Z)
\ee
are really the same equivalence class  $\ehz{\gamma}_{12}=\ehz{\gamma}_{34}=\ehz{\gamma}_\mathrm{c}$ because of (\ref{eq:ij-period}). Therefore we may express $Q(p)$ as
\be
	Q(p) = 1+X_{\ehz{\gamma}_\mathrm{c}}. \label{eq:Q_p_hyper}
\ee
Moreover, assuming $u\in\CB$ to be generic, we can define the unique charge $\ephz{\gamma_\mathrm{c}} \in \Gamma$ such that $Z_{\ephz{\gamma}_c} = k_\rho Z_{\overline\gamma_c}$. $\ephz{\gamma_\mathrm{c}}$ is the equivalence class in $\Gamma$ that contains the $A$-cycle $\eh{\gamma}_\mathrm{c}$ from this branch cut (see the discussion in Section \ref{subsec:prym}): denoting by $\eh{\gamma}_{12}$, $\eh{\gamma}_{34}$ two {distinct} homology classes in $H_1(\tilde \Sigma_\rho,\IZ)$, consisting of cycles around the gluing fixtures above the cut (as in Figure \ref{fig:single-cut}),\footnote{The choice of tangent framing lift to $\tilde\Sigma_\rho$ is understood.}  
\be
	\ephz{\gamma_\mathrm{c}} := \left[{\eh{\gamma}_{12} + \eh{\gamma}_{34}}\right]_{\ker(Z)}
\ee
is the unique element of $\Gamma$ such that
\be
	Z_{\ephz{\gamma}_\mathrm{c}} = Z_{\eh{\gamma}_{12}}+Z_{\eh{\gamma}_{34}} = 2 Z_{\ehz{\gamma}_c}\,.
\ee
From (\ref{eq:Q_p_hyper}) we get $\alpha_{\ephz{\gamma_\mathrm{c}}} = 1$. Then we can construct
\be
	L(\ephz{\gamma_\mathrm{c}}) = p^{(2)}-p^{(1)} + p^{(4)}-p^{(3)}\,.
\ee
The BPS index can be obtained by directly applying equation (\ref{eq:BPS-index-formula}). For simplicity here we consider a lifted (but equivalent) version of that equation, by considering the representative $\eh{\gamma}_c = \eh{\gamma}_{12}+ \eh{\gamma}_{34}$ for $\ephz{\gamma}_c$, which gives 
\begin{align}
	\Omega(\ephz{\gamma_\mathrm{c}}) = [L(\ephz{\gamma_\mathrm{c}})]/ \eh{\gamma}_\mathrm{c} = 1\,.
\end{align}
We thus recover the expected degeneracy for a hypermultiplet from the BPS index formula.

Next we show that this is also compatible with the jump of the formal parallel transport.
The components  $F(\wp,\CW_\vartheta)_{ij}$ of the formal parallel transport can easily be  computed from Figure \ref{fig:hypermultiplet}.
For example, the $12$-component before and after the jump is 
\be
\begin{split}
	& F(\wp,\CW_{\vartheta_-})_{12} = X_{\wp^{(12)}} \\
	& F(\wp,\CW_{\vartheta_+})_{12} = X_{\wp^{(1)}_+} (1+X_{\ehz{\gamma}_{12}}) X_{\wp^{(2)}_-}  = X_{\wp^{(12)}} (1+X_{\ehz{\gamma}_\mathrm{c}})\,.
\end{split}
\ee
where $\wp^{(12)}$ is a path $\wp^{(1)}_+ \, \wp^{(2)}_-$ through the branch cut denoted in Figure \ref{fig:hypermultiplet}.
This confirms that
\be
	F(\wp,\CW_{\vartheta_+})_{12} = \CK( F(\wp,\CW_{\vartheta_-})_{12}  ) = X_{\wp^{(12)}} (1+X_{\ehz{\gamma}_\mathrm{c}})^{\langle L({\ephz{\gamma}_\mathrm{c}}),\wp^{(12)}\rangle}  \,,
\ee
as claimed in (\ref{eq:K-wall-jump}).
It is simple to extend the check to all other components of $F(\wp,\CW_{\vartheta_\pm})$.

\subsection{The strong coupling spectra of \texorpdfstring{$\gADE$}{g\_ADE} SYM}

The Seiberg-Witten curves for $\CN=2$ pure gauge theory with simply laced gauge groups were given in \cite{Martinec:1995by, Lerche:1991re, Eguchi:2002fc, Klemm:1994qs, Argyres:1994xh,Gorsky:1995zq,Danielsson:1995is}. 
The differentials depend on the invariant Casimirs of the vector-multiplet scalars (here denoted by $u_k$), which parametrize the Coulomb branch, and on the choice of a strong coupling scale (here denoted by $\mu$) \cite{Keller:2011ek}
\begin{align}
	\phi_{k}(z) = u_k \left(\frac{\dd z}{z}\right)^k,\quad 
	\phi_{h^\vee}(z) = \left( \mu^{h^\vee} z + u_{h^\vee} + \frac{\mu^{h^\vee}}{z} \right) \left(\frac{\dd z}{z}\right)^{h^\vee}.
\end{align}
The explicit form of the curves in terms of ${\phi_{k}}$ are given in Appendix \ref{sec:spectral_curves_for_g}.

At generic $u_i$, there are $2r$ square-root branch points, and with a suitable choice of trivialization they can be arranged in pairs $(\fb_{\alpha_i}, \bar\fb_{\alpha_i})$ corresponding  to simple roots of $\fg$.
At the ``origin'' of the Coulomb branch,\footnote{We fix the scale $\mu=1$.} i.e.\ where $\{u_i=0\}_i$, all spectral curves assume a common pattern: there are two branch points, at $z=\pm i$ and irregular singularities at $z=0,\infty$.
The ramification structure at each branch point is given by a Coxeter element of $W$: in this degeneration limit the $\fb_{\alpha_i}$ coalesce together at $z=i$ and the $\bar\fb_{\alpha_i}$ do the same at $z=-i$.
Since the coalescing branch points are all  of distinct simple-root types, there cannot be any 4d BPS states becoming massless at $u=0$
\footnote{Such a BPS state would have a critical network of two-way streets stretching among the $\fb_{\alpha_i}$, subject to the condition that at any joint the sum of the roots vanishes; this is not possible since the simple roots form a basis for $\ft^*$}.
Thus this is a regular point of $\CB$, not a singular one, and the problem of computing the BPS states is well-posed.

\begin{figure}[h!]
\begin{center}
\includegraphics[width=0.30\textwidth]{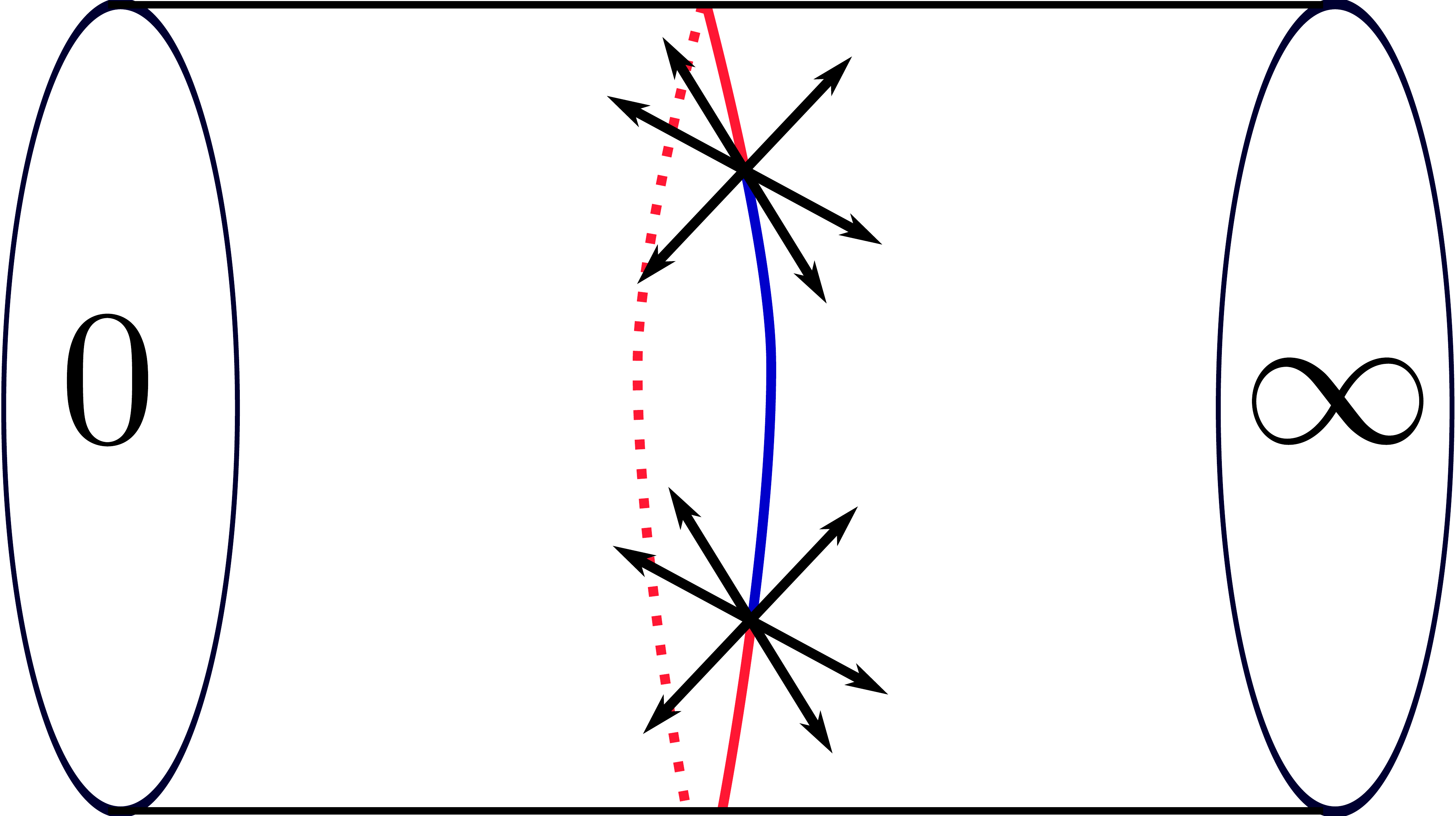}
\includegraphics[width=0.35\textwidth]{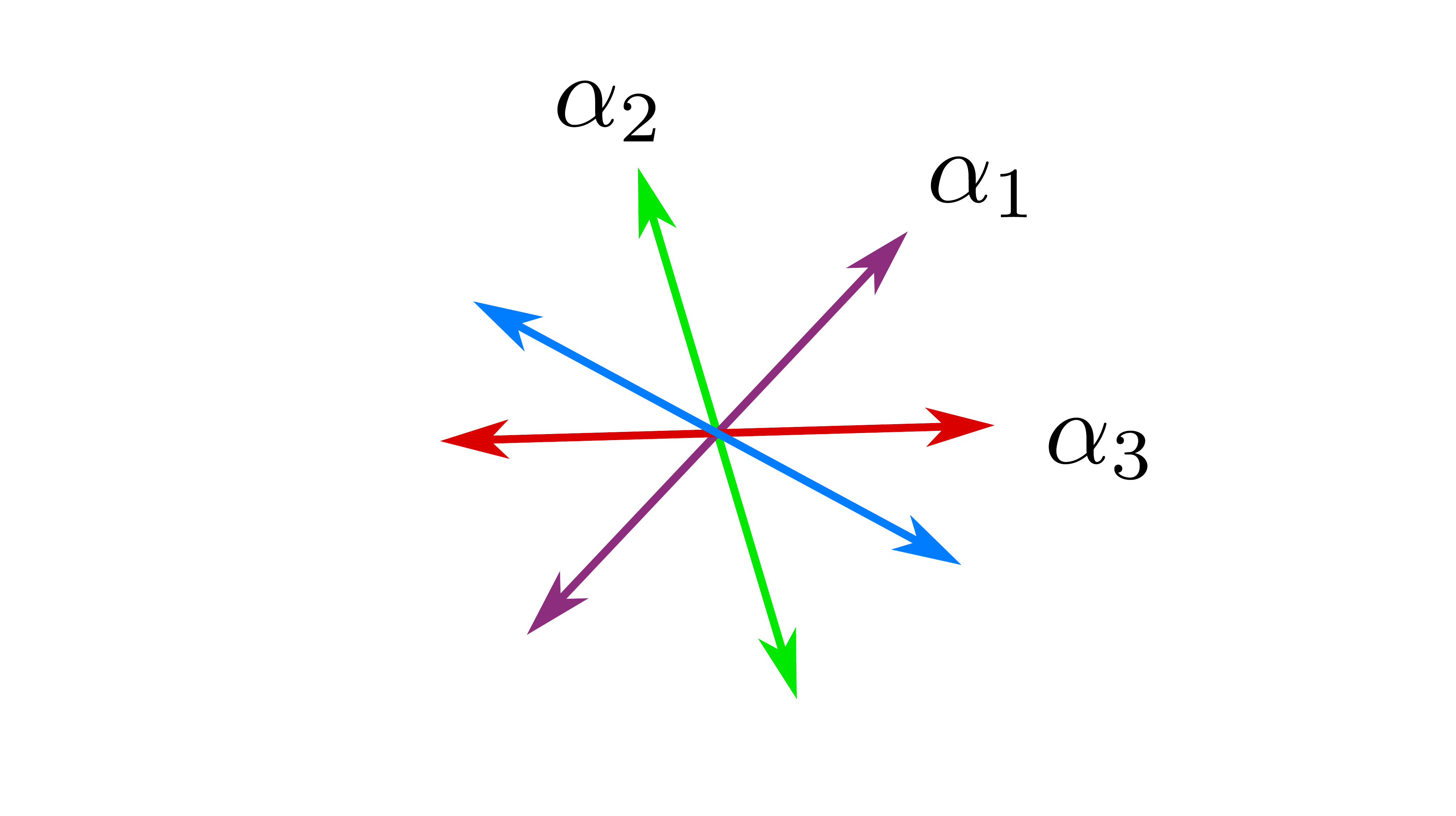}
\caption{A cartoon of the spectral network of SYM with a simply-laced gauge group, for $u=0$. The branch points coalesce into two groups, and the network assumes a highly degenerate structure, with $\CS$-walls of several root types emanating from the same branch point. This situation is highly non-generic, but nevertheless a regular point on $\CB$, no BPS state becomes massless here.
Every jump of the network involves several two-way streets, depicted in red and blue on the left, with both ends on branch points. Several two-way streets may appear simultaneaously at the same phase, even overlapping entirely.}
\label{fig:strong-coupling-SYM}
\end{center}
\end{figure}

The spectral network is particularly simple: from each branch point there will be a number of $\CS$-walls emanating, and there is at least one $\CS$-wall for each root-type, although some of the walls may be overlapping due to the high-degree of discrete symmetry of the curve.
The primary $\mathcal{S}$-walls evolve from the branch points and stretch towards the punctures at $z=0,\infty$.
In so doing, they ocasionally intersect each other, giving rise to joints and secondary walls.

As we vary $\vartheta$, the only $\CK$-wall jumps that appear in this chamber of $\CB$ involve pairs of primary $\CS$-walls of opposite root types running between the two branch points, see Figure \ref{fig:strong-coupling-SYM}.
There will be a sequence of jumps as $\vartheta$ is varied from $0$ to $\pi$.\footnote{This is the half-spectrum of BPS states, the other half being the CPT conjugates}
For any such a jump, the critical network consists of several two-way streets, each of them stretching between the two branch points: $n_a$ of these will run on one side of the singularity at $z=0$, while another $n_b$ will run on the other side. An example is discussed in Figure \ref{fig:jumps-sequence-su3-sym}.

\begin{figure}[ht]
\begin{center}
\includegraphics[width=0.3\textwidth]{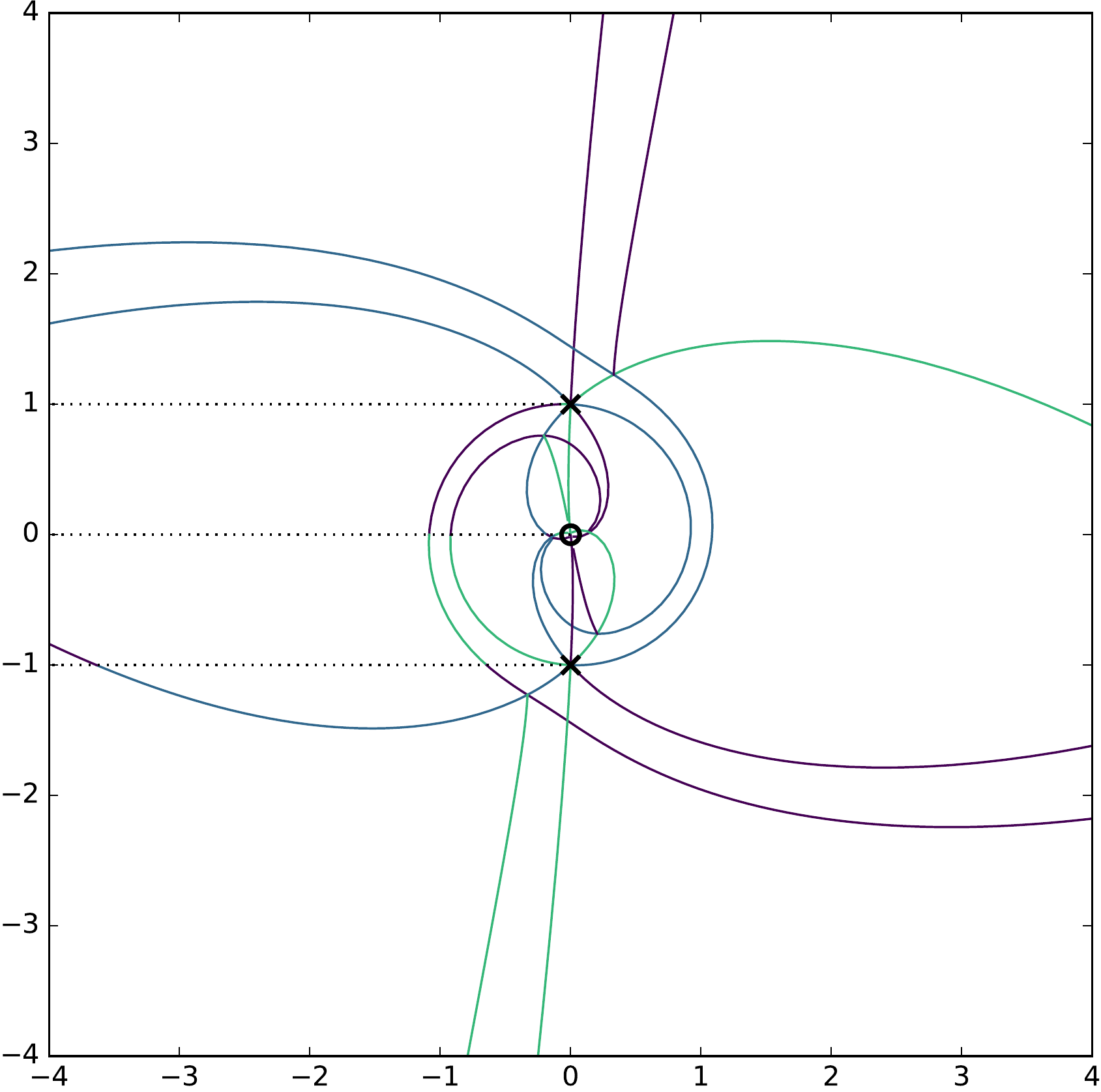}
\includegraphics[width=0.3\textwidth]{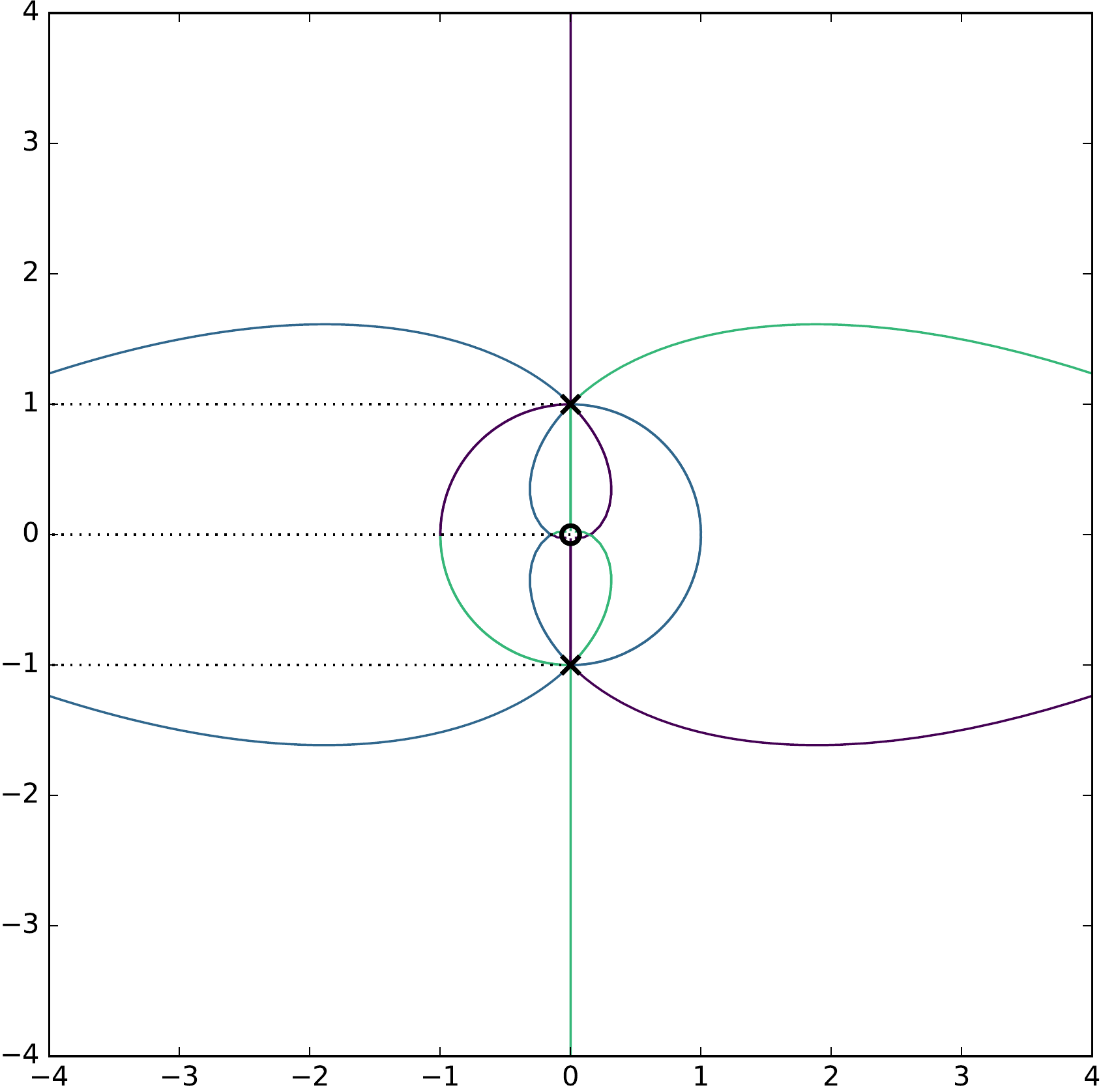}
\includegraphics[width=0.3\textwidth]{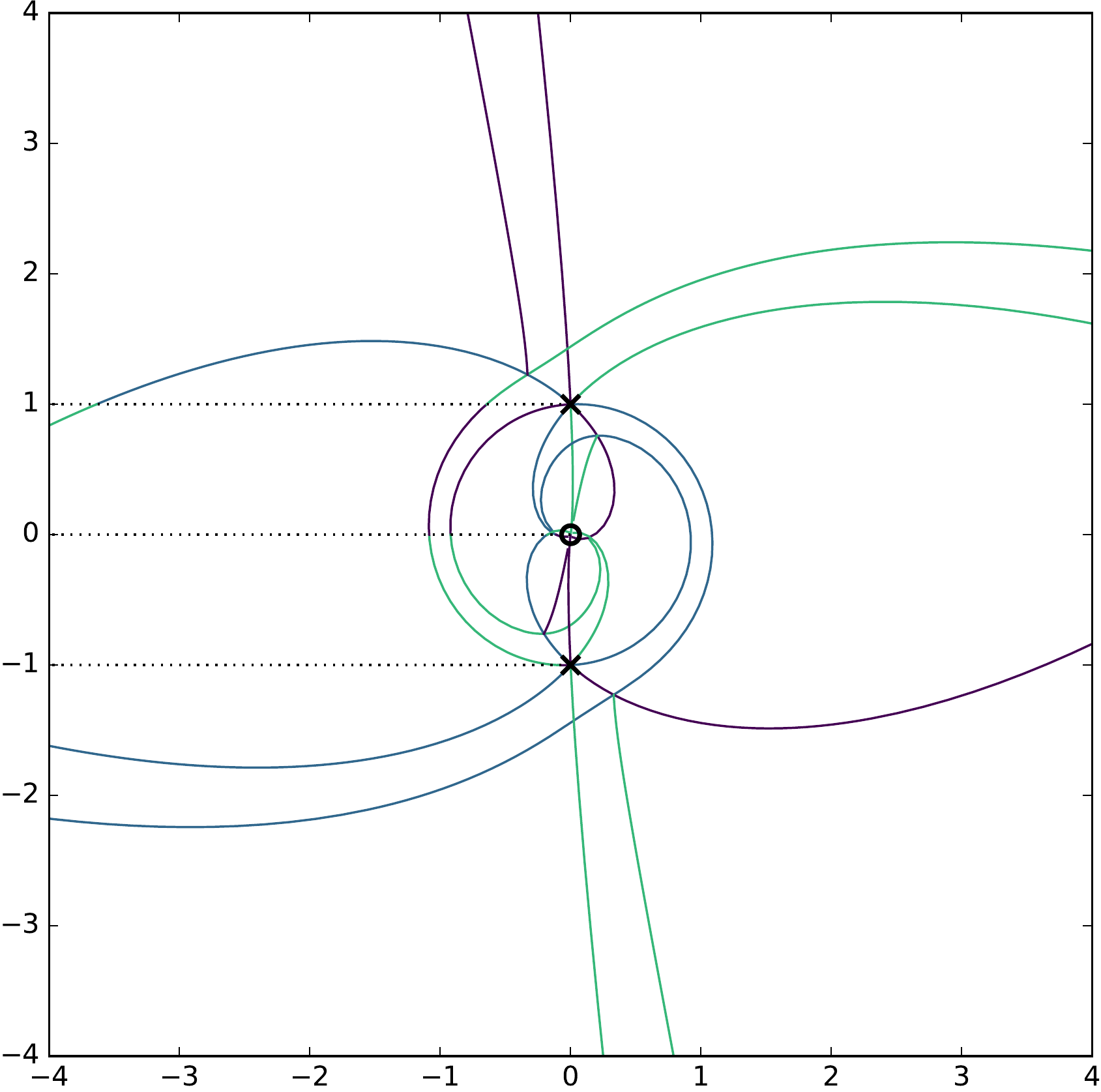}
\caption{A jump for the spectral network of SU(3) SYM. In this case there will be 3 jumps. At each jump $n_a=n_b=1$, i.e.\ two distinct two-way streets appear, one on the left of the singularity and the other on the right of the singularity. In the middle frame these streets lie on the unit circle.
The circle in the middle denotes the irregular singularity at $z=0$, while the branch points are denoted by cross markers. Dashed lines represent branch cuts, and colors denote different root-types along $\CS$-walls.}
\label{fig:jumps-sequence-su3-sym}
\end{center}
\end{figure} 

The soliton content of primary $\mathcal{S}$-walls is particularly simple (see Section \ref{sec:primary-solitons}), and it's easy to see that each jump accounts for a number of BPS states with $\Omega=1$, i.e.\ hypermultiplets.\footnote{Recalling that critical networks with a single two-way street always correspond to hypermultiplets, it's obvious that the jumps such as that of Figure \ref{fig:strong-coupling-SYM} gives us hypermultiplets.}
More precisely, there is one BPS hypermultiplet for each two-way street involved in the jump.
Considering all jumps together gives the full strong coupling spectrum: there is a BPS hypermultiplet for each root of $\fg$. This is in agreement with results from quiver methods \cite{Alim:2011kw}  and provides a nontrivial check of our construction of ADE networks.
We now give some explicit examples.

\subsubsection{\texorpdfstring{$\SO(6)$}{SO(6)} SYM in the vector representation}

At the origin of the Coulomb branch, the curve in the vector representation is
\be
	\lambda^6 + \frac{\lambda^2}{z^4}\left(z+\frac{1}{z}\right) = 0.
\ee
There is a manifest $\IZ_4$ symmetry on the $x$-plane, the fiber of $T^*C$ at fixed $z$. At generic $z$ the sheets will be arranged in the way depicted in Figure \ref{fig:so6_sheet_symmetry}.

\begin{figure}[ht]
\begin{center}
\includegraphics[width=0.21\textwidth]{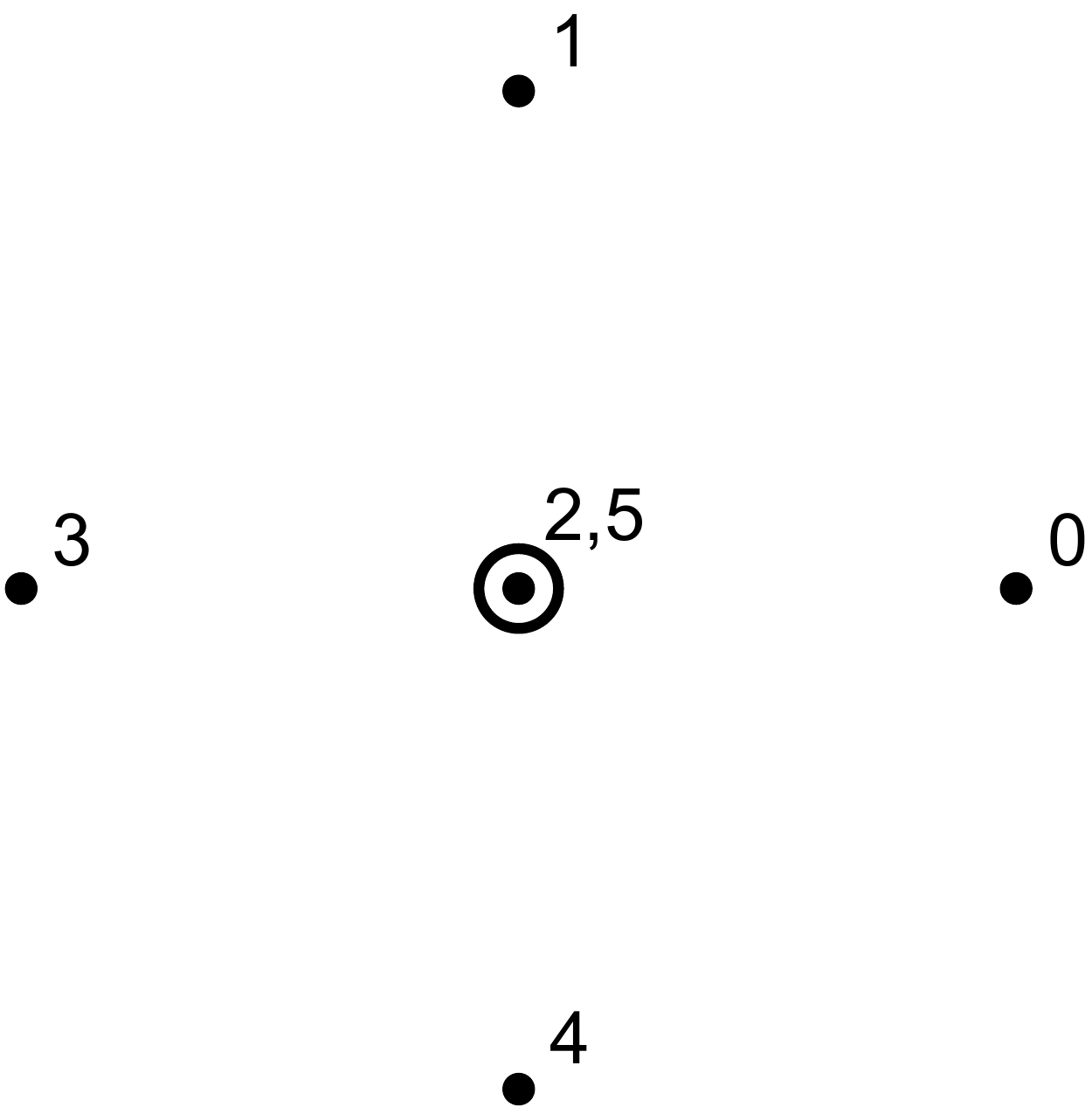}
\caption{Sheets of the vector representation cover of pure SO(6) gauge theory at the origin of $\CB$, shown in the fiber $T_z^*C\simeq \IC$ for generic values of $z$. The label $i$ stands for the sheet corresponding to weight $\weight_i$, its value is $\lambda_i=\langle\weight_i,\varphi(z)\rangle$.}
\label{fig:so6_sheet_symmetry}
\end{center}
\end{figure} 

Let us choose a basis for $\ft^*$, and denote the weights of the vector representation as
\be
\begin{array}{c|ccc|c}
	\weight_0 & (1,0,0) &$\  $& \weight_3 & (-1,0,0) \\
	\weight_1 & (0,1,0) & & \weight_4 & (0,-1,0) \\
	\weight_2 & (0,0,1) & & \weight_5 & (0,0,-1) 
\end{array}
\ee
In this basis, the simple roots are 
\be
	\alpha_1=(1,-1,0)\,,\qquad%
	\alpha_2=(0,1,-1)\,,\qquad%
	\alpha_3=(0,1,1)\,.
\ee
From the diagram of Figure \ref{fig:so6_sheet_symmetry}, together with equation (\ref{eq:geodesic-eq}), it is evident that a wall of root-type $\alpha_{2} = \weight_1-\weight_2 = \weight_5 - \weight_4 = (0,1,-1)$ will be parallel to another wall of different root-type. 
\be
\begin{split}
	\alpha_3 = & \weight_1 - \weight_5 = \weight_2 - \weight_4 = (0,1,1) \,.
\end{split}	
\ee
{To see this, recall that $\lambda_i(z) = \langle \weight_i,\varphi(z)\rangle$, so the quantity $\langle\alpha,\varphi(z)\rangle$ in the figure is represented by the vector stretching between sheets $\lambda_i$ and $\lambda_j$ for each $(i,j)\in\CP_\alpha$.}
Likewise, walls $\CS_{\alpha_1+\alpha_2}$ and $\CS_{\alpha_1+\alpha_3}$ will be parallel.
The primary $\mathcal{S}$-walls of the spectral network emanate from branch points at $z = \pm i$, and there are 15 $\CS$-walls sourced by each branch point.
Due to the degeneracy explained above, however, only 10 distinct trajectories appear, a picture of the network is given in Figure \ref{fig:so6_network}.

\begin{figure}[ht]
\begin{center}
\includegraphics[width=0.65\textwidth]{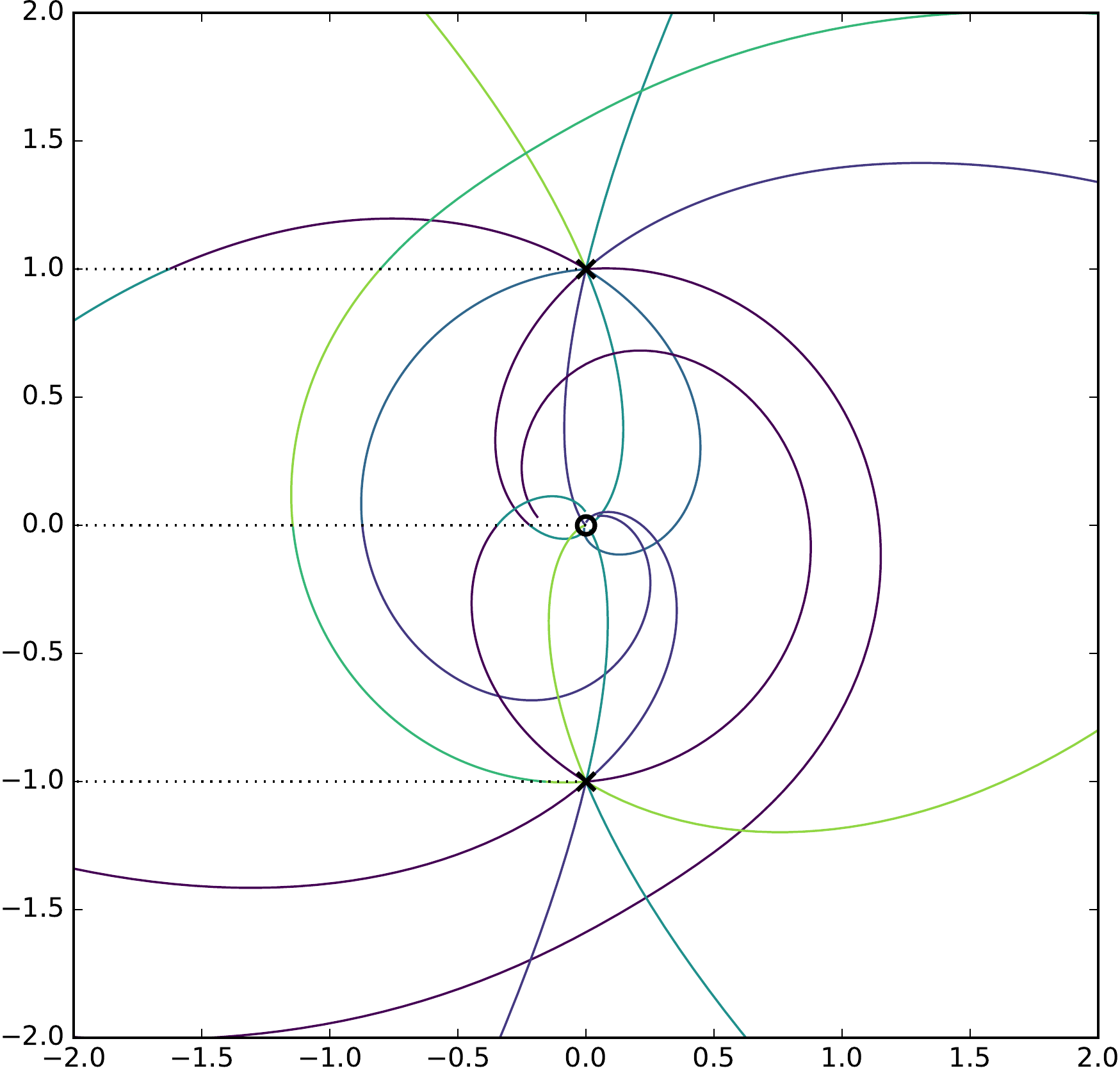}
\caption{The spectral network for a generic phase $\vartheta$, for clarity only the primary $\CS$-walls are shown.
The circle in the middle denotes the irregular singularity at $z=0$, while the branch points are denoted by cross markers. Dashed lines represent branch cuts, and colors denote different root-types along $\CS$-walls.
}
\label{fig:so6_network}
\end{center}
\end{figure}

\noindent The sheet monodromies are as follows: going counter-clockwise around the branching loci at $z=0$, $\pm i$ we have
\be
\begin{split}
	M_0 & : \weight_0 \to \weight_4\to \weight_3\to \weight_1\to \weight_0 \\
	M_1,M_{-1} & : \weight_0 \to \weight_1 \to \weight_3 \to \weight_4 \to \weight_0\,,\ \weight_2\leftrightarrow\weight_5\,.
\end{split}
\ee

There are four jumps $J_1,\dots,J_4$ as $\vartheta$ is varied from $0$ to $\pi$, it is easy to see this from the direct analysis of $\CW_{\vartheta}$, which can be found at \foothref{http://het-math2.physics.rutgers.edu/loom/plot?data=pure_SO_6}{this link}.
For any such jump, the critical network consists of three two-way streets, each of them stretching between the two branch points: two of these will run on one side of the singularity at $z=0$, while the remaining one will run on the other side. 

For example, at the first jump $J_1$ there will be 2 overlapping two-way streets of types $\beta_{1,m}$, $m=1,2$ running around the singularity from the right,
\be
	\beta_{1,1}= (0,1,-1)=\alpha_2\,,\quad \beta_{1,2}= (0,1,1)=\alpha_3\,,
\ee
and a two-way street of type $\alpha_{1,1}$ running around the singularity from the left,
\be
	\alpha_{1,1} =  (1,1,0) = \alpha_1+\alpha_2+\alpha_3 \,.
\ee 
There are then three $L(\gamma_\delta)$ (one for each two-way street of type $\delta$) to construct from the network data. 
To obtain them, note that S-walls $\CS_{\beta_{1,m}}$ and $\CS_{\alpha_{1,1}}$ emanating from the branch point at $z = -i$ will carry \emph{simpletons} of types\footnote{In this equation we pick the convention of labeling 2-way streets by the roots of $\CS$-walls emanating from $z = -i$. More precisely, we use the \emph{local} root type of such $\CS$-walls near the branch point at $z = -i$, in the region between its cut and the cut from $z=0$. With this convention, the $\CS$-walls from $z = -i$ and forming two-way streets to the right of $z=0$ at jump 1 have root types $(0,-1,1)$ and $(0,-1,-1)$ as can be seen by direct inspection.}
\be
\begin{split}
	&  \CP_{\beta_{1,1}} = \{(1,2) , (5,4)\} \qquad \CP_{\beta_{1,2}} = \{(1,5) , (2,4)\} \\
	& \CP_{\alpha_{1,1}} = \{(4,0) , (3,1)\} 
\end{split}
\ee
whereas the walls running into that branch point will carry simpletons of opposite types. Denoting by $p^\pm$ the street on the left and that on the right respectively, we find 
\be
	L(\gamma_{\beta_{1,1}}) = (p^-_2 - p^-_1) + (p^-_4 - p^-_5)
\ee
and so on for $L(\gamma_{\beta_{1,2}}),L(\gamma_{\alpha_{1,1}})$. This gives an explicit characterization of the charges $\gamma_{\beta_{1,m}},\gamma_{\alpha_{1,1}}$ and also proves that the BPS index is $\Omega=1$ for all of them.\footnote{For example, the cycle $L(\gamma_{\beta_{1,1}})$ is primitive, so $\Omega(\gamma_{\beta_{1,1}})$ must be $1$. Also note that the computation of $\Omega$ here is almost identical to the one carried out in Section \ref{sec:hyper_revisited}.}
From the explicit knowledge of the cycles, we may compute their intersection pairing, which turns out to be
\be
	\langle\gamma_{\beta_{1,1}}, \gamma_{\beta_{1,2}}\rangle = 0 \qquad  \langle\gamma_{\alpha_{1,1}}, \gamma_{\beta_{1,m}}\rangle = 0\qquad m=1,2\,.
\ee
The jump $J_1$ therefore captures three mutually local BPS states with $\Omega=1$, i.e.\ hypermultiplets.

The other 3 jumps organize in a similar manner: there will be alternatingly 2 overlapping streets on one side of the singularity, and a single two-way street on the other side.
Having as many as three BPS states appearing at the same jump, hence with multiple central charges at the same phase, a crucial consistency requirement is that all BPS states which appear at the same jump be mutually local.\footnote{I.e. that the intersection pairing of their charges vanishes, otherwise $u$ would be on a wall of marginal stability, and the spectrum would be ill-defined.} We have checked that this is always the case.
{Note that in this example $k_\rho=2$, and by direct inspection one can see that all intersection pairings of populated BPS states are indeed multiples of $2$, guaranteeing integer-valued physical DSZ pairings according to (\ref{eq:intersection-DSZ})}.

The root-types of two-way streets appearing for each jumps are collected in the following table (recall that a two-way street is made of $\CS$-walls of both a root-type and its opposite, we give the root of the $\CS$-wall emanating from the branch point at $z=-i$.)
\be
\begin{array}{c|c|c}
\text{jump}	& \text{street on the left}& \text{street on the right} \\
\hline
J_1 & (1,1,0) & (0,1,-1)\,, \ (0,1,1) \\
J_2 & (0,1,-1)\,,\ (0,1,1) & (1,-1,0) \\
J_3 & (1,-1,0) & (1,0,-1)\,, \ (1,0,1) \\
J_4 & (1,0,-1)\,,\ (1,0,1) & (1,1,0)
\end{array}
\ee
As the table shows, each positive root appears exactly once both on the  street on the left and on the street on the right.

\subsubsection{\texorpdfstring{$\SO(6)$}{SO(6)} SYM in a spinor representation}

Let us consider again pure $\SO(6)$ gauge theory, this time choosing $\rho$ to be one of the spinor representations.
The shape of the spectral network is not affected by this change, and the root-types of $\CS$-walls are the same as above. The sequence of $\CK$-wall jumps is also identical.
However, the soliton data carried by each $\CS$-wall does change, as does the computation of the 4d BPS degeneracies, although this should not affect the final result.

The weights of the spinor representation are
\be
\begin{array}{c|ccc|c}
	\weight_0 & (1/2,1/2,-1/2) &$\  $& \weight_2 & (-1/2,1/2,1/2) \\
	\weight_1 & (1/2,-1/2,1/2) & & \weight_3 & (-1/2,-1/2,-1/2) \\
\end{array}
\ee
Each root connects now exactly one pair of weights, hence $k_\rho=1$. 
Moreover, by virtue of the linear relation between sheets and weights, the sheets of the spinor representation are arranged in the $x$-plane as shown in Figure \ref{fig:so6_sheet_symmetry_spinor}.

\begin{figure}[ht]
\begin{center}
\includegraphics[width=0.16\textwidth]{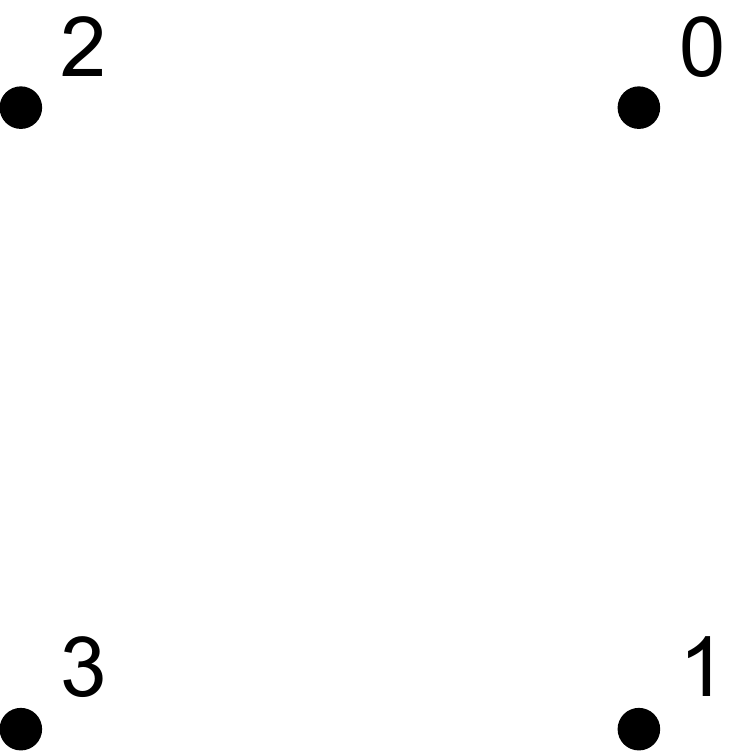}
\caption{Sheets of the spinor representation cover of pure SO(6) gauge theory at the origin of $\CB$, shown in the fiber $T_z^*C\simeq \IC$ for generic values of $z$.}
\label{fig:so6_sheet_symmetry_spinor}
\end{center}
\end{figure} 

From the diagram of Figure \ref{fig:so6_sheet_symmetry_spinor}, we see that a wall of root-type $\alpha_{2}$ will be degenerate with a wall of type $\alpha_3$, and similarly for ${\alpha_1+\alpha_2}$ and ${\alpha_1+\alpha_3}$, as in the case of the vector representation.
At the first $\CK$-wall jump $J_1$, there are three two-way streets in $\CW_c$. These involve three $\CS$-walls emanating from the branch point at $z = -i$, which carry {simpletons} of types
\be
\begin{split}
	&  \CP_{\beta_{1,1}} = \{(0,1)\} \qquad \CP_{\beta_{1,2}} = \{(2,3)\} \qquad \CP_{\alpha_{1,1}} = \{(3,0)\} 
\end{split}
\ee
Therefore we find the following simple expression for the lift of a critical network
\be
	L(\gamma_{\beta_{1,1}}) = (p^-_1 - p^-_0) 
\ee
and so on for $L(\gamma_{\beta_{1,2}}),L(\gamma_{\alpha_{1,1}})$. 
Once again, this gives an explicit characterization of the charges $\gamma_{\beta_{1,m}},\gamma_{\alpha_{1,1}}$ and also proves that the BPS index is $\Omega=1$ for all of them.

A similar analysis for the other jumps recovers the same 4d BPS spectrum as derived in the vector representation.
{With this choice of representation, the physical DSZ pairing coincides with the intersection pairing.}

\subsubsection{\texorpdfstring{$\SO(8)$}{SO(8)} SYM in the vector representation}\label{sec:SO_8}

At the origin of the Coulomb branch, the curve in the vector representation is
\be
	\lambda^8 + \frac{\lambda^2}{z^6}\left(z+\frac{1}{z}\right) = 0.
\ee
There is a manifest $\IZ_6$ symmetry of the $x$-plane, the fiber of $T^*C$ at fixed $z$. Therefore at  generic $z$ the sheets will be arranged in the way depicted in Figure \ref{fig:so8_sheet_symmetry}.

\begin{figure}[ht]
\begin{center}
\includegraphics[width=0.25\textwidth]{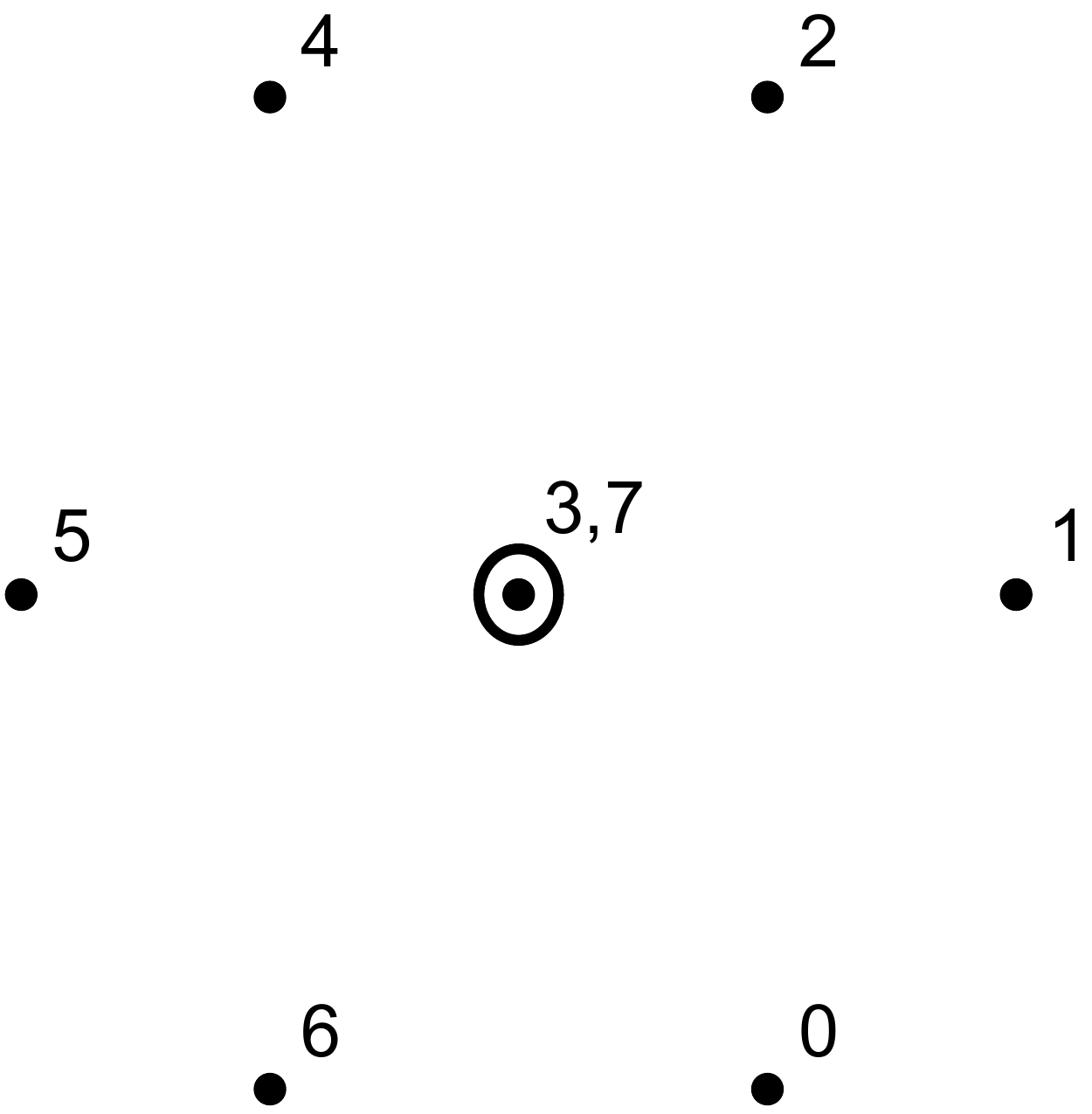}
\caption{Sheets of the vector representation cover of pure SO(8) gauge theory at the origin of $\CB$, shown in the fiber $T_z^*C\simeq \IC$ for generic values of $z$. The label $i$ stands for the sheet corresponding to weight $\weight_i$, its value is $\lambda_i=\langle\weight_i,\varphi(z)\rangle$.}
\label{fig:so8_sheet_symmetry}
\end{center}
\end{figure} 

\noindent Let us choose a basis for $\ft^*$, and denote the weights of the vector representation as
\be
\begin{array}{c|ccc|c}
	\weight_0 & (1,0,0,0) &$\  $& \weight_4 & (-1,0,0,0) \\
	\weight_1 & (0,1,0,0) & & \weight_5 & (0,-1,0,0) \\
	\weight_2 & (0,0,1,0) & & \weight_6 & (0,0,-1,0) \\
	\weight_3 & (0,0,0,1) & & \weight_7 & (0,0,0,-1) \\
\end{array}
\ee
From the diagram of Figure \ref{fig:so8_sheet_symmetry}, it is evident that a wall of root-type $\weight_0-\weight_6 = \weight_2 - \weight_4 = (1,0,1,0)$ will be parallel to walls of different root-types
\be
\begin{split}
	& \weight_1 - \weight_3 = \weight_7 - \weight_5 = (0,1,0,-1) \\ 
	& \weight_1 - \weight_7 = \weight_3 - \weight_5 = (0,1,0,1) 
\end{split}	
\ee
and so on.
The primary $\mathcal{S}$-walls of the spectral network emanate from branch points at $z = \pm i$, and there are 28 walls sourced by each branch point.
Due to the degeneracy explained above, however, only 14 distinct trajectories appear. We give a picture of the network and of the walls emanating from a branch point in Figure \ref{fig:so8_network}.

\begin{figure}[ht]
\begin{center}
\includegraphics[width=0.65\textwidth]{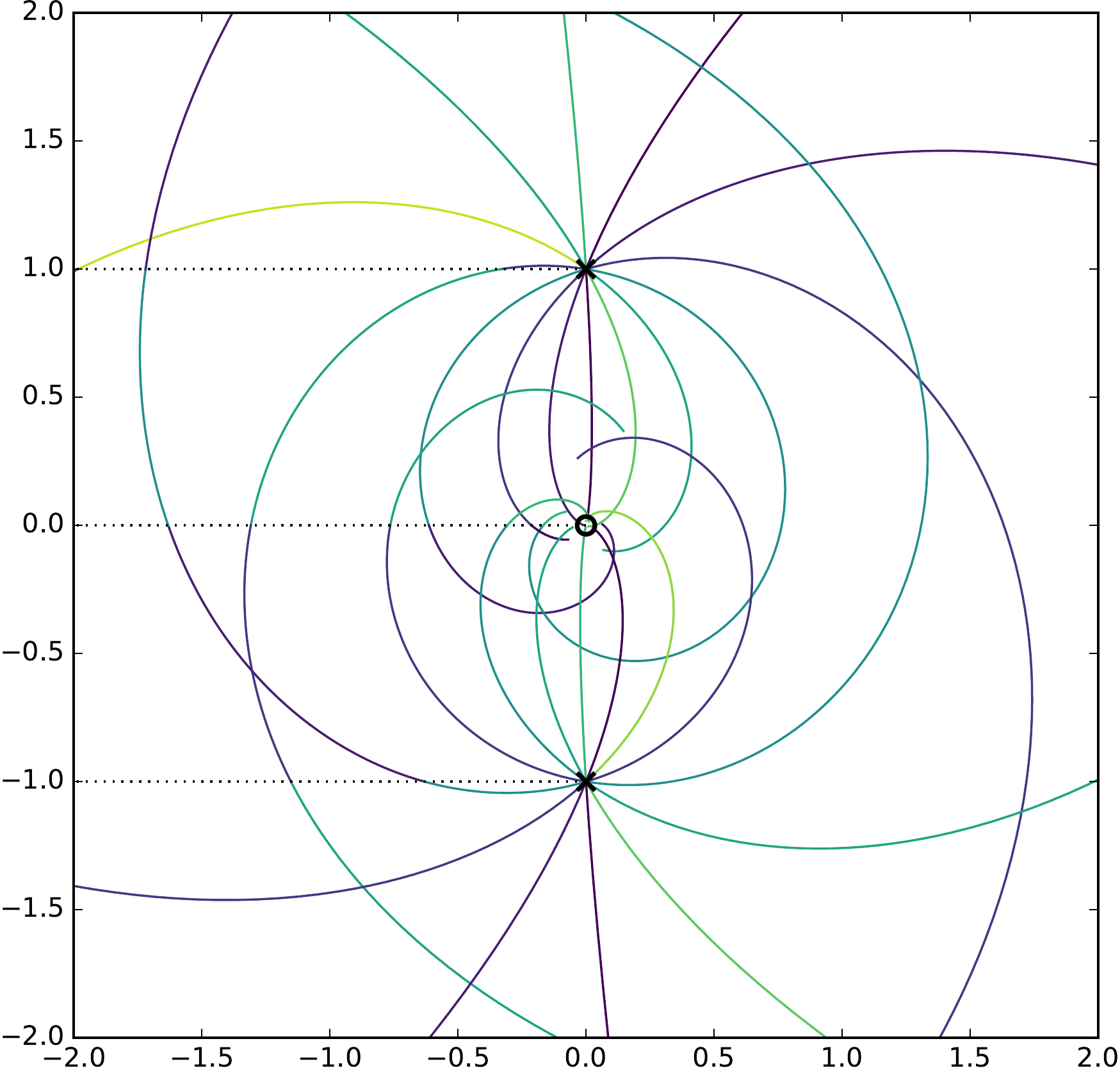}
\caption{The spectral network for a generic phase $\vartheta$, for clarity only the primary streets are drawn. 
The circle in the middle denotes the irregular singularity at $z=0$, while the branch points are denoted by cross markers. Dashed lines represent branch cuts, and colors denote different root-types along $\CS$-walls.
}
\label{fig:so8_network}
\end{center}
\end{figure}

\noindent The sheet monodromies are as follows: going counter-clockwise around the branching loci at $z=0$, $\pm i$ we have
\be
\begin{split}
	M_0 & : \weight_0 \to \weight_6\to \weight_5\to \weight_4\to \weight_2\to \weight_1\to \weight_0 \\
	M_1,M_{-1} & : \weight_0 \to \weight_1 \to \weight_2 \to \weight_4 \to \weight_5 \to \weight_6 \to \weight_0\,,\ \weight_3\leftrightarrow\weight_7\,.
\end{split}
\ee

There are six jumps $J_1,\dots,J_6$ as $\vartheta$ is varied from $0$ to $\pi$, they can be detected by a direct analysis of $\CW_{\vartheta}$, which can be found at \foothref{http://het-math2.physics.rutgers.edu/loom/plot?data=pure_SO_8}{this link}.
For any such a jump, the critical network consists of four two-way streets, each of them stretching between the two branch points: 3 of these will run on one side of the singularity at $z=0$, while the remaining one will run on the other side. 
For example, at the first jump $J_1$ there will be 3 overlapping two-way streets of types $\alpha_{1,i}$, $i=1,2,3$ running around the singularity from the left, 
\be
	\alpha_1= (1,0,1,0)\,,\ \alpha_2=(0,1,0,-1)\,,\ \alpha_3=(0,1,0,1) 
\ee
and a two-way street of type $\beta_{1,1}$ running around the singularity from the right,
\be
	\beta =  (0,-1,-1,0) \,.
\ee
Importantly, $\alpha_{1,i}\cdot\alpha_{1,j}=0$ and therefore solitons from the three overlapping streets will not concatenate. There are then four $L(\gamma_\delta)$ (one for each two-way street of type $\delta$) to construct from the network data. 
To obtain them, note that $\mathcal{S}$-walls $\CS_{\alpha_i}$ and $\CS_\beta$ emanating from the branch point at $z = -i$ will carry \emph{simpletons} of types 
\be
\begin{split}
	&  \CP_{\alpha_1} = \{(6,0) , (4,2)\} \qquad \CP_{\alpha_2} = \{(3,1) , (5,7)\} \\
	& \CP_{\alpha_3} = \{(7,1) , (5,3)\} \qquad \CP_{\beta} = \{(1,6) , (2,5)\}
\end{split}
\ee
whereas the walls running into that branch point will carry simpletons of opposite types. Denoting by $p^\pm$ the left/right streets respectively, we find 
\be
	L(\gamma_{\alpha_1}) = (p^+_0 - p^+_6) + (p^+_2 - p^+_4)
\ee
and so on for $L(\gamma_{\alpha_2}),L(\gamma_{\alpha_3}), L(\gamma_{\beta})$. This gives an explicit characterization of the charges $\gamma_{\alpha_i},\gamma_\beta$ and also proves that the BPS index is $\Omega=1$ for all of them.\footnote{The cycle $L(\gamma_{\alpha_1})$ is primitive, so $\Omega(\gamma_{\alpha_1})$ must be $1$.}
From the explicit knowledge of the cycles, we may compute their intersection pairings, which turn out to be
\be
	\langle\gamma_{\alpha_i}, \gamma_{\alpha_j}\rangle = 0 \qquad  \langle\gamma_{\alpha_i}, \gamma_{\beta}\rangle = 0\,.
\ee
The jump $J_1$ therefore captures four mutually local BPS states with $\Omega=1$, i.e.\ hypermultiplets.

The other 5 jumps organize in a similar manner: there will be alternatingly 3 overlapping streets on one side of the singularity, and a single two-way street on the other side.
Having as many as four BPS states appearing at the same jump, hence with multiple central charges at the same phase, a crucial consistency requirement is that all BPS states which appear at the same jump be mutually local,\footnote{I.e. that the intersection pairing of their charges vanishes, otherwise $u$ would be on a wall of marginal stability, and the spectrum would be ill-defined.} we have checked that this is always the case.

The root-types of two-way streets appearing for each jumps are collected in the following table (recall that a two-way street is made of $\CS$-walls of both a root-type and its opposite, we give the root of the $\CS$-wall emanating from the branch point at $z=-i$.)
\be
\begin{array}{c|c|c}
\text{jump}	& \text{street on the left}& \text{street on the right} \\
\hline
J_1 & (1,0,1,0)\,,\ (0,1,0,-1)\,,\ (0,1,0,1) & (0,-1,-1,0) \\
J_2 & (0,1,1,0) & (0,0,-1,-1)\,,\ (0,0,-1,1)\,,\ (1,-1,0,0) \\
J_3 & (-1,1,0,0)\,,\ (0,0,1,-1)\,,\ (0,0,1,1) & (1,0,-1,0) \\
J_4 & (-1,0,1,0) & (0,1,-1,0)\,,\ (1,0,0,1)\,,\ (1,0,0,-1) \\
J_5 & (0,-1,1,0)\,,\ (-1,0,0,-1)\,,\ (-1,0,0,1) & (1,1,0,0) \\
J_6 & (-1,-1,0,0) & (1,0,1,0)\,,\ (0,1,0,-1)\,,\ (0,1,0,1) 
\end{array}
\ee
As the table shows, each root appears exactly once both on the  street on the left and on the street on the right.

\subsubsection{\texorpdfstring{$\gE_6$}{E6} SYM in the \texorpdfstring{$\rho={\mathbf{27}}$}{rho = 27} representation}

At the origin of the Coulomb branch, the curve of the $\mathbf{27}$ representation is
\be
	\lambda^{3}\left(\lambda^{24} + 5\,\frac{\lambda^{12}}{z^{12}} \left(z + \frac{1}{z}\right) - \frac{{1}}{108\, z^{24}} \left(z + \frac{1}{z}\right)^{2}\right)=0.
\ee
There is a manifest $\IZ_{12}$ symmetry of the $x$-plane. At  generic $z$ the sheets will be arranged in the way depicted in Figure \ref{fig:e6_sheet_symmetry}.

\begin{figure}[ht]
\begin{center}
\includegraphics[width=0.25\textwidth]{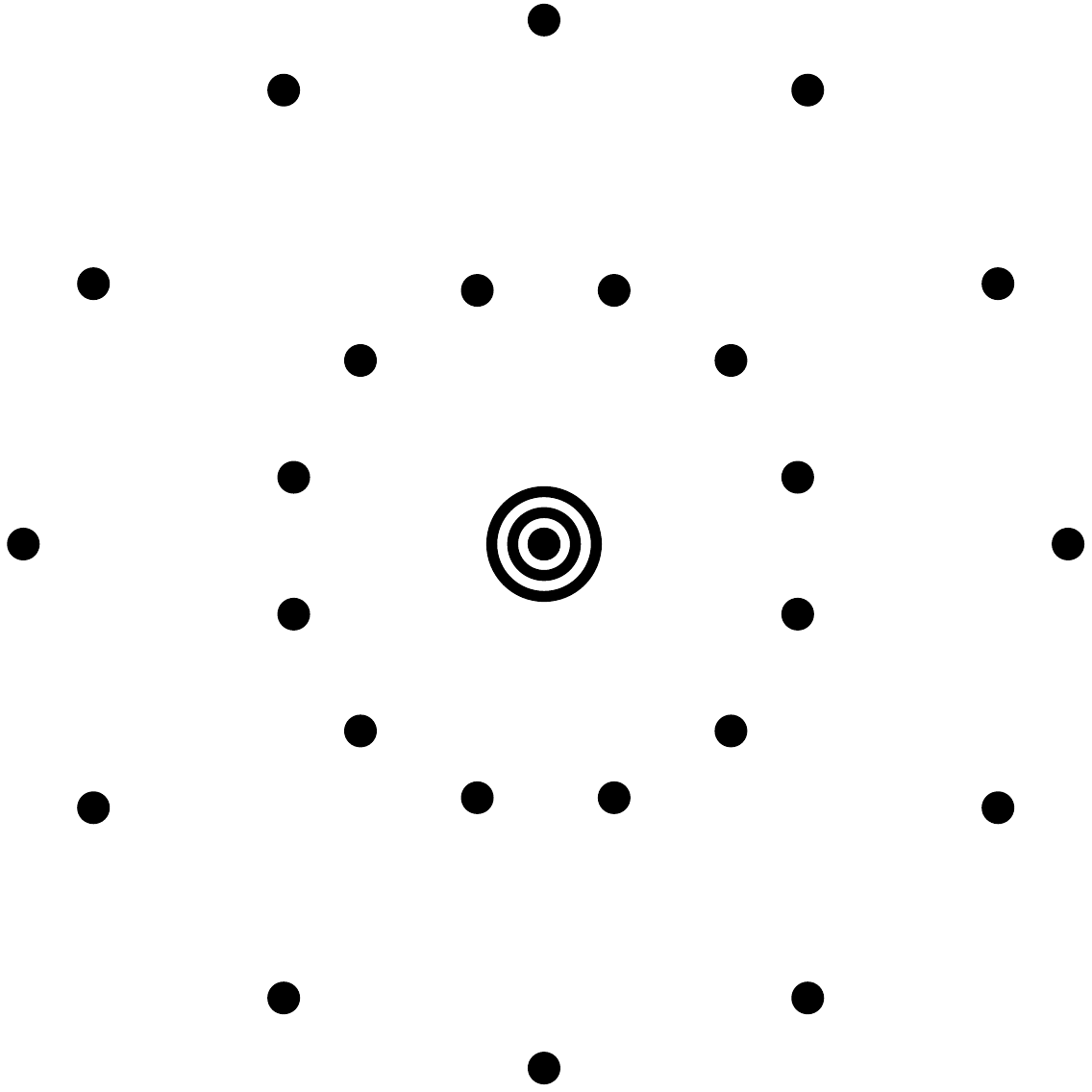}
\caption{Sheets of the $\rho=\mathbf{27}$ cover of pure $E_6$ gauge theory at the origin of $\CB$, shown in the fiber $T_z^*C\simeq \IC$ for generic values of $z$.}
\label{fig:e6_sheet_symmetry}
\end{center}
\end{figure}

From the diagram of Figure \ref{fig:e6_sheet_symmetry}, it is possible to deduce that certain roots have the same phase when considering their projection to the the $x$ plane, 
\be
	\arg(\langle\alpha,\varphi(z)\rangle)=\arg(\langle\alpha',\varphi(z)\rangle)\,,
\ee
resulting in degenerate $\CS$-walls $\CS_\alpha$ and $\CS_{\alpha'}$.
More precisely, each populated phase occurs for precisely three roots. 
This means that every wall emanating from a branch point is 3-fold degenerate.
There are $78$ primary $\CS$-walls sourced by each branch point.
Due to the degeneracy explained above, however, only $26$ distinct trajectories appear. A picture of the network is given in Figure \ref{fig:e6_network}.

\begin{figure}[ht]
\begin{center}
\includegraphics[width=0.65\textwidth]{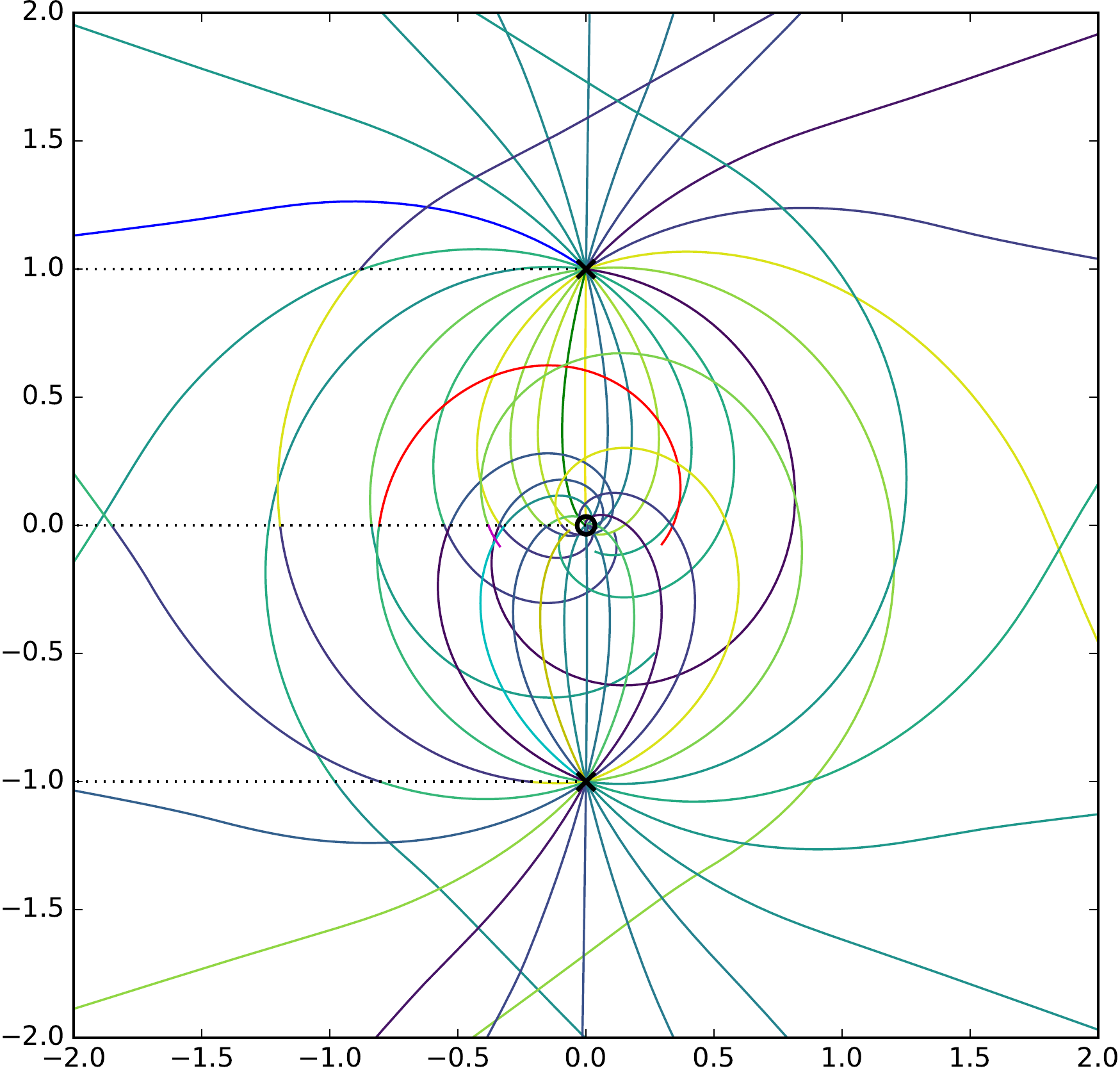}
\caption{The spectral network for a generic phase $\vartheta$, for clarity only the primary streets are shown. 
}
\label{fig:e6_network}
\end{center}
\end{figure} 

As we vary $\vartheta$ from $0$ to $\pi$ we observe 12 $\CK$-wall jumps, as can be seen from direct inspection of the family of networks, which can be found at \foothref{http://het-math2.physics.rutgers.edu/loom/plot?data=pure_E_6}{this link}.
At each of these jumps, the critical network consists of 6 two-way streets, each of them stretching between the two branch points. 3 of these will run on one side of the singularity at $z=0$, while the other 3 will run on the other side. 
There are then 6 $L(\gamma)$'s (one for each two-way street appearing at a given phase) to construct from the network data, for each jump.
To construct them, note that given any root $\alpha$, its set of soliton pairs $\CP_\alpha$ is of order 6. 
For example, for $\alpha_1:=(1, 1, 0, 0, 0, 0, 0, 0)$ we have
\be
	\CP_{\alpha_1} = \{(4, 3), (6, 5), (9, 7), (20, 18), (21, 19), (23, 22)\}
\ee
in the basis of weights given in (\ref{eq:table_E6_weights}).
This means that each $L(\gamma)$ will contain six contributions
\be
	L(\gamma_{\alpha_k}) = \sum_{(i,j)\in\CP_{\alpha_k}} (p^\pm_{j} - p^\pm_{i})\,.
\ee
Like the previous examples, $\CS$-walls emanating from both branch points will carry simpletons. 
Each two-way street of a given root-type will then contribute a hypermultiplet to the spectrum, for a total of 6 for each jump. We find 72 hypermultiplets in total, corresponding to the roots of $E_6$.
It a tedious but straighforward task to construct explicitly the cycles for all these BPS states, but once accomplished this gives an explicit characterization of the charges.

\subsection{Equivalence of Argyres-Douglas fixed points of different 4d \texorpdfstring{$\mathcal{N}=2$}{N=2} theories}

The spectral networks and BPS spectra around the Argyres-Douglas (AD) fixed point of the 4d $\cN=2$ $\SU(N)$, $N_\mathrm{f} = 2$ theory are studied in Section 4.1.1 of \cite{Maruyoshi:2013fwa}. 
At a point in the Coulomb branch moduli space where the number of BPS states is minimal, the BPS spectrum can be conveniently encoded into a quiver.\footnote{To avoid possible confusion, we stress that these are somewhat reminiscent of, but not the same as, the BPS quivers of \cite{Alim:2011kw}. 
The quivers of \cite{Maruyoshi:2013fwa} were employed to classify BPS spectra, but were not endowed with any stability condition, and no application of quiver representation theory was implied.} 
A quiver consists of one node for each populated BPS state, and arrows connecting the nodes pairwise. The number of arrows between two nodes is equal to the DSZ electromagnetic pairing of their charges, whereas the sign of the DSZ pairing determines the orientation of the arrows.
For the class of theories in question, the BPS quiver has the shape of a $\mathrm{D}_{N}$ Dykin diagram, and the fixed point theory (and its deformations) is termed a $\mathrm{D}_{N}$-class theory in \cite{Maruyoshi:2013fwa}.

The AD fixed point of a 4d $\cN=2$ $\SO(2N)$ pure gauge theory is claimed in \cite{Eguchi:1996vu} to be equivalent to the AD fixed point of a 4d $\cN=2$ $\SU(N)$, $N_\mathrm{f} = 2$ theory. Here we study spectral networks of a $\mathrm{D}_{N}$-class theory from the 4d $\cN=2$ $\SO(8)$ pure gauge theory, and show that its BPS spectra across different chambers in the Coulomb branch moduli space are the same as those of a $\mathrm{D}_{N}$-class theory from the 4d $\cN=2$ $\SU(N)$, $N_\mathrm{f} = 2$ theory. This adds evidence to the equivalence of the fixed point theories, and it also shows that the construction of a $\gD$-type spectral network provides physical information that is consistent with previous studies.

The Seiberg-Witten curve of a $\mathrm{D}_{N}$-class theory with $\mathfrak{g} = \gD_{N}$ is obtained by choosing
\begin{align}
	\phi_k &= s_k \left(\dd z \right)^k, \quad (k \neq h^\vee = 2N-2),\\ 
	\phi_{h^\vee} &= \left( s_{h^\vee} + z^2 \right) (\mathrm{d} z)^{h^\vee},
\end{align}
where an irregular singularity is at $z = \infty$. Scaling dimensions of the parameters are $\Delta(s_k) = 2k/N$.

Let us focus on a $\mathrm{D}_{4}$-class theory, whose curve is $\lambda^8 + \phi_2 \lambda^6 + \phi_4 \lambda^4 + \phi_6 \lambda^2 + (\tilde{\phi}_4)^2$ with
\begin{align}
	\phi_{2} = s_{2} (\dd z)^{2},\ \phi_{4} = s_{4} (\dd z)^{4},\ \phi_{6} = \left( s_{6} + z^2 \right) (\dd z)^{6},\ \tilde{\phi}_4 = \tilde{s}_4 (\dd z)^{4}.
\end{align}
By investigating the residue of the irregular singularity at $z = \infty$, we expect that the flavor symmetry is enhanced to $\SU(3)$ when
\begin{align}
	s_4 = \frac{s_2^2}{4},\ \tilde{s}_4 = 0,
\end{align}
and when we fix $s_4$, $\tilde{s}_4$ to satisfy the above relations, the curve equation factorizes to
\begin{align}
	\lambda^2 \left( \lambda^6 + s_2 \lambda^4 + \frac{s_2^2}{4} \lambda^2 + s_6 + z^2 \right).
\end{align}
The two sheets from $\lambda^2 = 0$ correspond to the two zero weights of the vector representation of $\mathfrak{g} = \gD_4$. We can rescale the curve to absorb $s_2$, therefore the theory has a complex 1-dimensional moduli space determined by $s = s_6/(s_2)^3$. By studying the discriminant of the curve, we find that there are two singularities along the real axis of the $s$-plane in the moduli space, one at $s = 0 \eqqcolon s_\mathrm{t}$ and the other at $s = 1/54 \eqqcolon s_\mathrm{s}$. Using spectral networks, we find that at $s = s_\mathrm{t}$ we have three BPS states, which form a triplet of the flavor symmetry $\SU(3)_\mathrm{f}$, becoming massless; and at $s = s_\mathrm{s}$, we have a BPS state without a flavor charge becoming massless.

\begin{figure}[ht]
	\centering
	\begin{tikzpicture}[scale=2]
		\begin{scope}
			\clip (-.5, -.5) rectangle (2.5, .5);
			\fill[black!20] (1, 0) circle (3);
		\end{scope}
		
		\begin{scope}
			\clip (-.5, -.5) rectangle (1, .5);		
			\fill[black!30] (2, 0) circle (2);
		\end{scope}

		\begin{scope}
			\clip (1, -.5) rectangle (2.5, .5);		
			\fill[black!30] (0, 0) circle (2);
		\end{scope}

		\draw[->] (-.5, 0) -- (2.5, 0);
		\node[right] at (2.5, 0) {Re$(s)$};
		\draw[->] (0, -0.5) -- (0, .5);
		\node[above] at (0, 0.5) {Im$(s)$};
		
		\draw[fill] (0,0) circle (.025);
		\node (s_t) at (0,0) {};
		\node (s_t_label) at (.25, -.75) {$s_\mathrm{t}$};
		\draw[->, thick] (s_t_label) -- (s_t);

		\draw[fill] (2,0) circle (.025);
		\node (s_s) at (2,0) {};
		\node (s_s_label) at (2 - .25, -.75) {$s_\mathrm{s}$};
		\draw[->, thick] (s_s_label) -- (s_s); 
		
		\node[above] (region_1_label) at (2, .6) {maximal chamber};
		\draw[->, thick] (region_1_label) -- (-.25,.25); 
		\draw[->, thick] (region_1_label) -- (2.25,.25);
				
		\node[above] (region_2_label) at (1, 0.1) {minimal chamber};
		
	\end{tikzpicture}
	\caption{Moduli space of a $\gD_4$-class theory from a pure $\SO(8)$ gauge theory.}
	\label{fig:D_4_s_plane}
\end{figure}
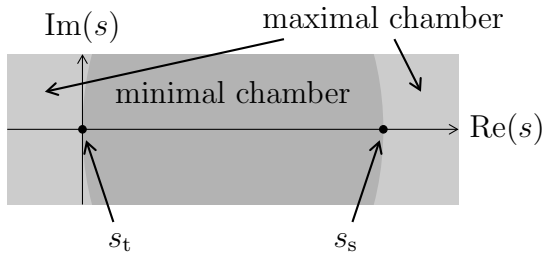

Figure \ref{fig:D_4_s_plane} shows the $s$-plane and the location of the singularities, as well as its BPS chamber structure. There are two chambers that are separated by a boundary containing the two singularities. Across the boundary there is a wall-crossing in the BPS spectrum, and the wall of marginal stability divides the Coulomb branch into a minimal chamber, where we have 4 BPS states, and a maximal chamber, where we have 12 BPS states.

\begin{figure}[!ht]
	\centering
	\begin{subfigure}[c]{.45\textwidth}	
		\includegraphics[width=\textwidth]{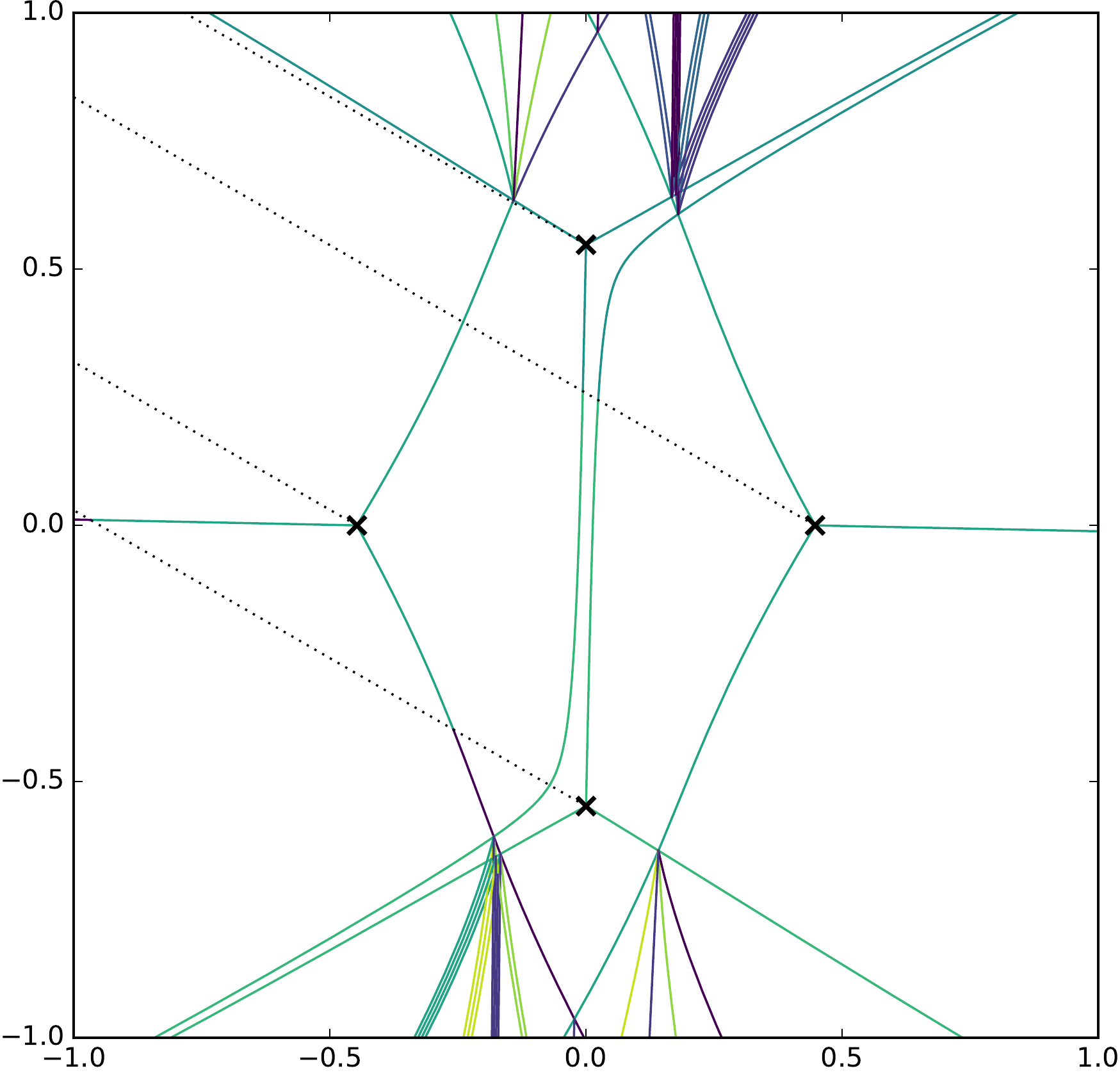}
		\caption{$\theta < \arg \left(Z_{(0,1)}^{\mathbf{3}} \right)$}
		\label{fig:D_4_min_W_before_theta_t}
	\end{subfigure}
	\begin{subfigure}[c]{.005\textwidth}
		\centering
		\begin{tikzpicture}[scale=2]
			\draw[red, thick] (0,2) -- (0, -2);		
		\end{tikzpicture}
		\caption*{}
	\end{subfigure}
	\begin{subfigure}[c]{.45\textwidth}	
		\includegraphics[width=\textwidth]{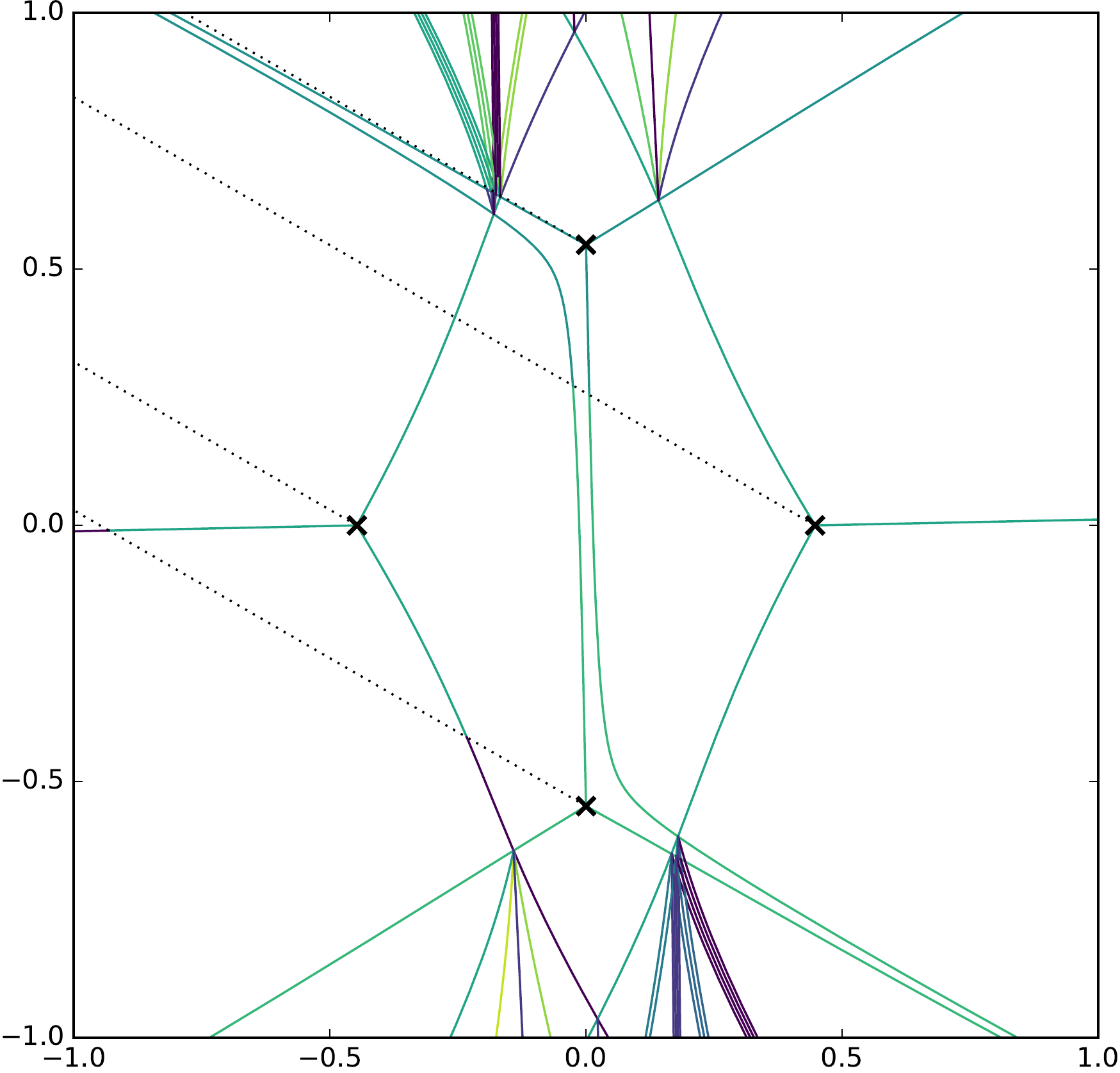}
		\caption{$\theta > \arg \left(Z_{(0,1)}^{\mathbf{3}} \right)$}		
		\label{fig:D_4_min_W_after_theta_t}
	\end{subfigure}
	
	\begin{subfigure}[c]{.45\textwidth}	
		\includegraphics[width=\textwidth]{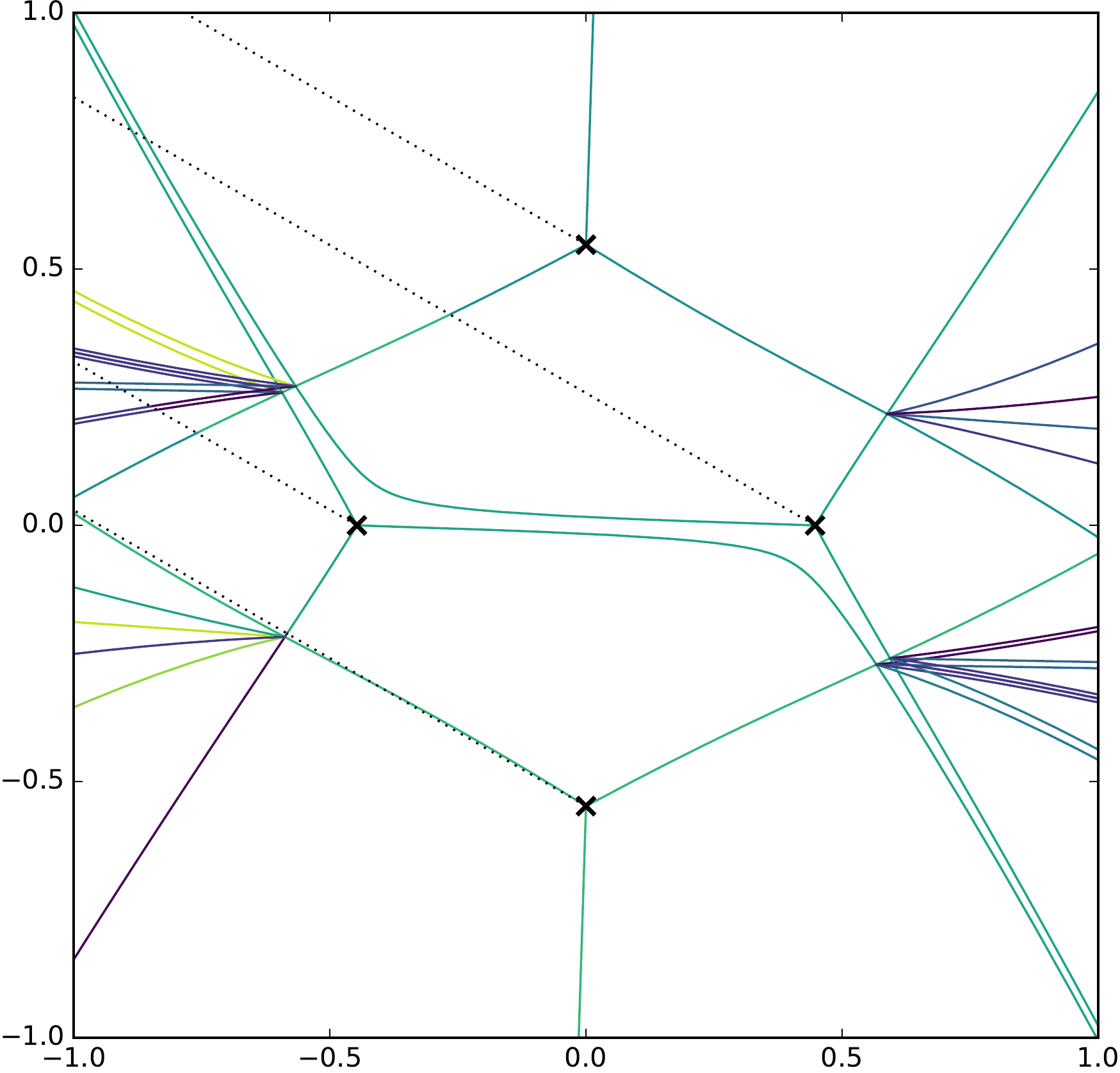}
		\caption{$\theta < \arg \left(Z_{(1,0)} \right)$}		
		\label{fig:D_4_min_W_before_theta_s}
	\end{subfigure}
	\begin{subfigure}[c]{.005\textwidth}
		\centering
		\begin{tikzpicture}[scale=2]
			\draw[blue, thick] (0,2) -- (0, -2);		
		\end{tikzpicture}
		\caption*{}
	\end{subfigure}
	\begin{subfigure}[c]{.45\textwidth}	
		\includegraphics[width=\textwidth]{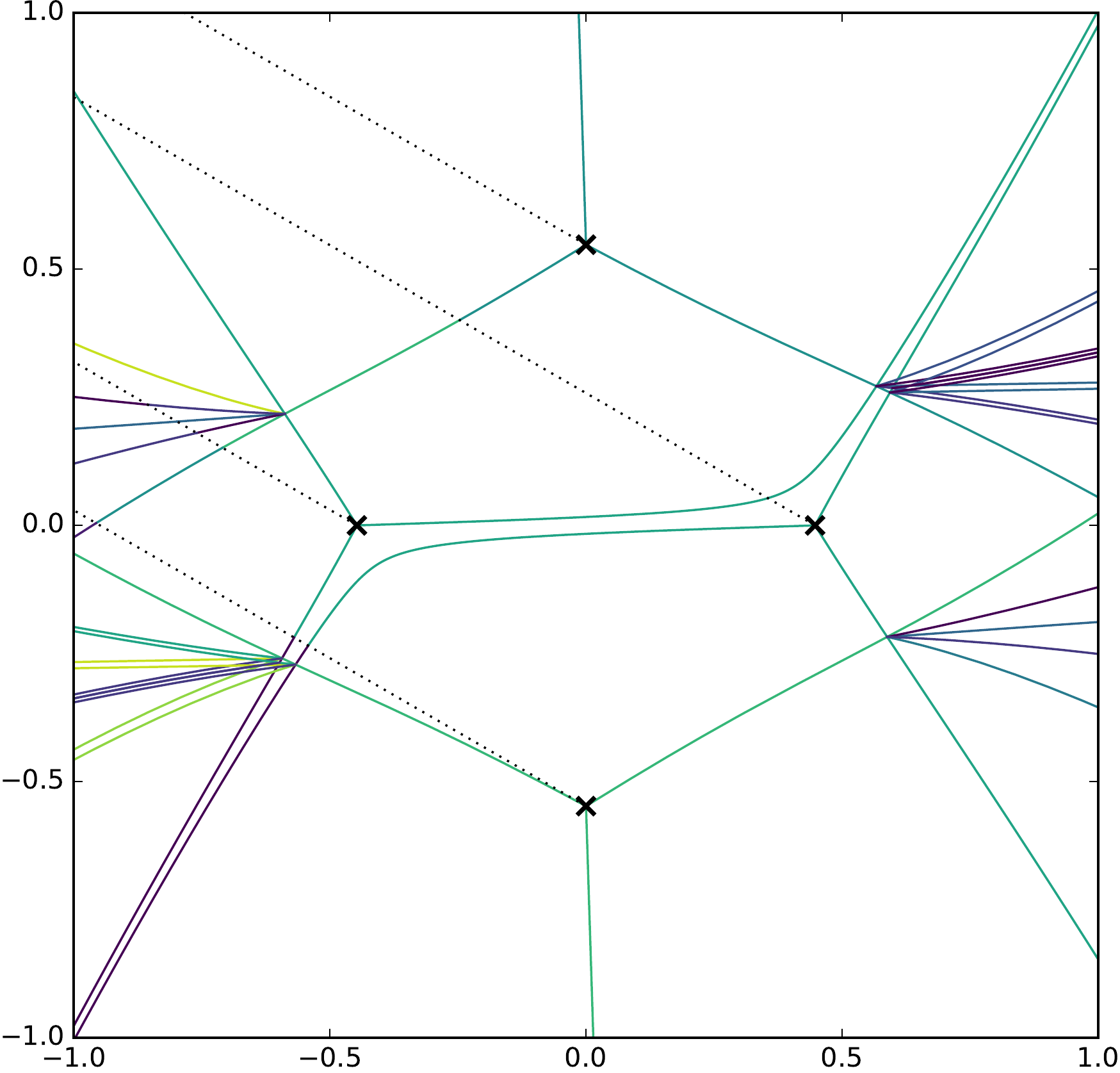}
		\caption{$\theta > \arg \left(Z_{(1,0)} \right)$}
		\label{fig:D_4_min_W_after_theta_s}
	\end{subfigure}

	\caption{Snapshots of spectral networks of the $\gD_4$-class theory}
	\label{fig:snapshots_of_D_4_min_W}
\end{figure}

Figure \ref{fig:snapshots_of_D_4_min_W} shows spectral networks around critical phases when the $\gD_4$-class theory is in the minimal chamber. The full family of spectral networks at various phases can be found at \foothref{http://het-math2.physics.rutgers.edu/loom/plot?data=D_4_AD_from_SO_8_min_BPS}{this web page}. At $\vartheta = 0$ there are three two-way streets corresponding to the triplet of $\SU(3)_\mathrm{f}$, showing that the flavor symmetry is enhanced as expected. At $\vartheta = \pi/2$ there is a single two-way street, giving another BPS state. We can calculate the central charges of the 4 BPS states, which is shown in Figure \ref{fig:D_4_minimal_Z}. Because we have a trivialization, we can calculate the DSZ pairings among the BPS states, and using that information we can construct a BPS quiver as shown in \ref{fig:D_4_BPS_quiver}, which is a $\gD_4$ quiver that manifests the $\SU(3)_\mathrm{f}$ as a $S_3$ outer automorphism of the diagram. 

\begin{figure}[!ht]
	\centering	
	\begin{subfigure}[b]{.3\textwidth}	
		\centering	
		\begin{tikzpicture}[scale=1.5]
			\pgfmathsetmacro{\tip}{0.1}
			\begin{scope}[>=latex, very thick]
			\foreach \x in {1, .9, .8}		
				\draw[<->, red] (-1 * \x, 0) -- (1 * \x, 0);				
			\node[right, red] at (1, 0) {$Z_{(0,1)}^{\mathbf{3}}$};
			\draw[<->, blue] (0, -1.5) -- (0, 1.5);
			\node[above, blue] at (0, 1.5) {$Z_{(1,0)}$};
			\end{scope}
		\end{tikzpicture}
		\caption{central charges}
		\label{fig:D_4_minimal_Z}
	\end{subfigure}
	\begin{subfigure}[b]{.3\textwidth}	
		\centering	
		\begin{tikzpicture}[scale=1.5]
			\node[W,red] (21) at (0,0) {2};
			\node[W,blue] (1) at (1,0) {1};
			\node[W,red] (22) at (1.5,1.732/2) {2};
			\node[W,red] (23) at (1.5,-1.732/2) {2};
			
			\path (1) edge[->] (21);
			\path (1) edge[->] (22);
			\path (1) edge[->] (23);	
		\end{tikzpicture}
		\vspace{1.5em}
		\caption{BPS quiver}
		\label{fig:D_4_BPS_quiver}
	\end{subfigure}	
	\caption{Minimal BPS spectrum of the $\gD_4$ class theory.}
	\label{fig:D_4_minimal_BPS}
\end{figure}
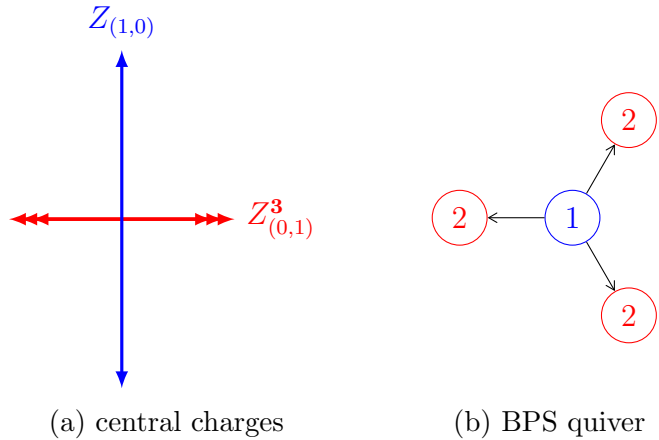

Because of the enhanced flavor symmetry, wall-crossing results in the appearance of 8 additional BPS states in the spectrum. More precisely, the wall of marginal stability on the $s$-plane should be thought as an intersection of multiple walls of marginal stability. 
The Kontsevich-Soibelman identity describing the wall-crossing of the BPS spectrum is
\be\label{eq:AD-wcf}
	\CK_{(1,0)} \CK_{(0,1)}^3 = \CK_{(0,1)}^3  \CK_{(1,3)}  \CK_{(1,2)}^3  \CK_{(2,3)}  \CK_{(1,1)}^3  \CK_{(1,0)} 
\ee
where $(m,n)$ denotes the electric charge of each BPS state, with $\langle(1,0),(0,1)\rangle = 1$.
Flavor charges have been suppressed since they don't play a role in the wall-crossing formula, but are easily recovered as follows. 
In the minimal chamber, the state with charge $(0,1)$ is a triplet of $SU(3)_{\mathrm{f}}$, then in the maximal chamber those states with charge $(m,1)$ are in the $\mathbf{3}$, states with charges $(m,2)$ are in the $\mathbf{\overline{3}}$, and  states with charges $(m,3)$ are singlets. These considerations are reflected in the BPS indices that appear in (\ref{eq:AD-wcf}).

\begin{figure}[!ht]
	\centering
	\begin{subfigure}[c]{.45\textwidth}	
		\includegraphics[width=\textwidth]{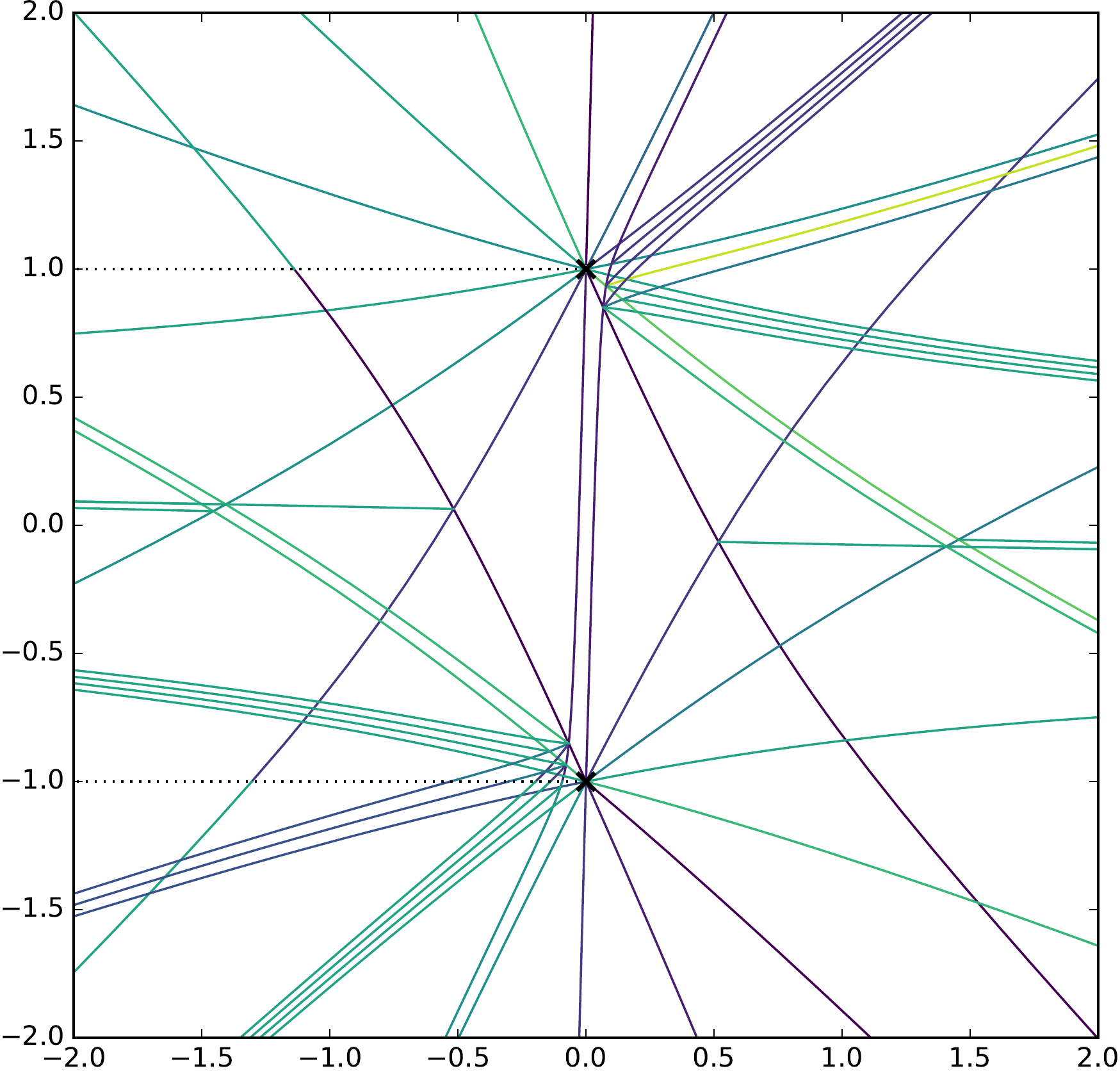}
		\caption{$\theta < \arg \left(-Z_{(0,1)}^{\mathbf{3}} \right)$}
		\label{fig:D_4_max_sym_W_before_theta_t}
	\end{subfigure}
	\begin{subfigure}[c]{.005\textwidth}
		\centering
		\begin{tikzpicture}[scale=2]
			\draw[red, thick] (0,2) -- (0, -2);		
		\end{tikzpicture}
		\caption*{}
	\end{subfigure}
	\begin{subfigure}[c]{.45\textwidth}	
		\includegraphics[width=\textwidth]{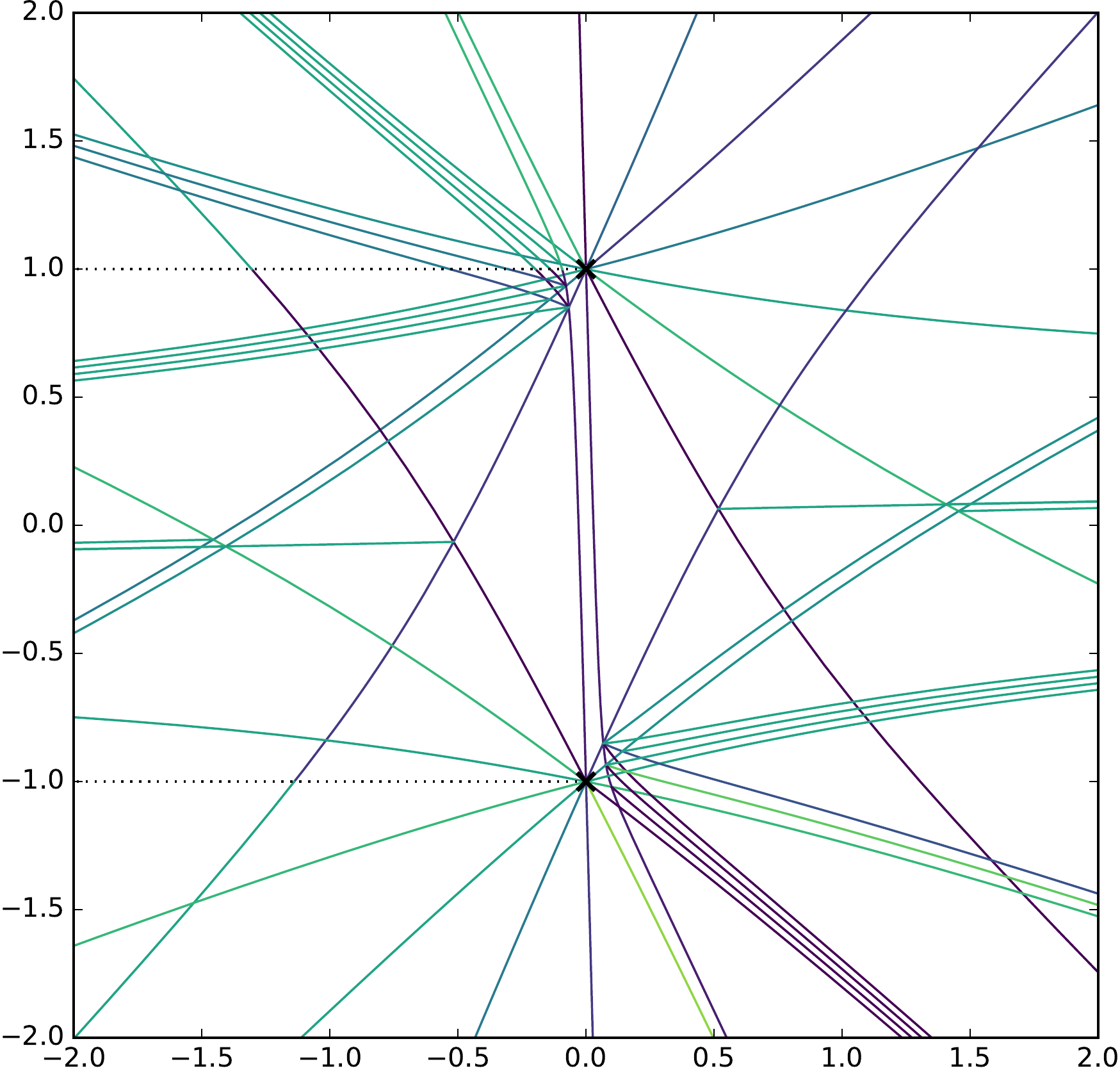}
		\caption{$\theta > \arg \left(-Z_{(0,1)}^{\mathbf{3}} \right)$}
		\label{fig:D_4_max_sym_W_after_theta_t}
	\end{subfigure}
	
	\begin{subfigure}[c]{.45\textwidth}	
		\includegraphics[width=\textwidth]{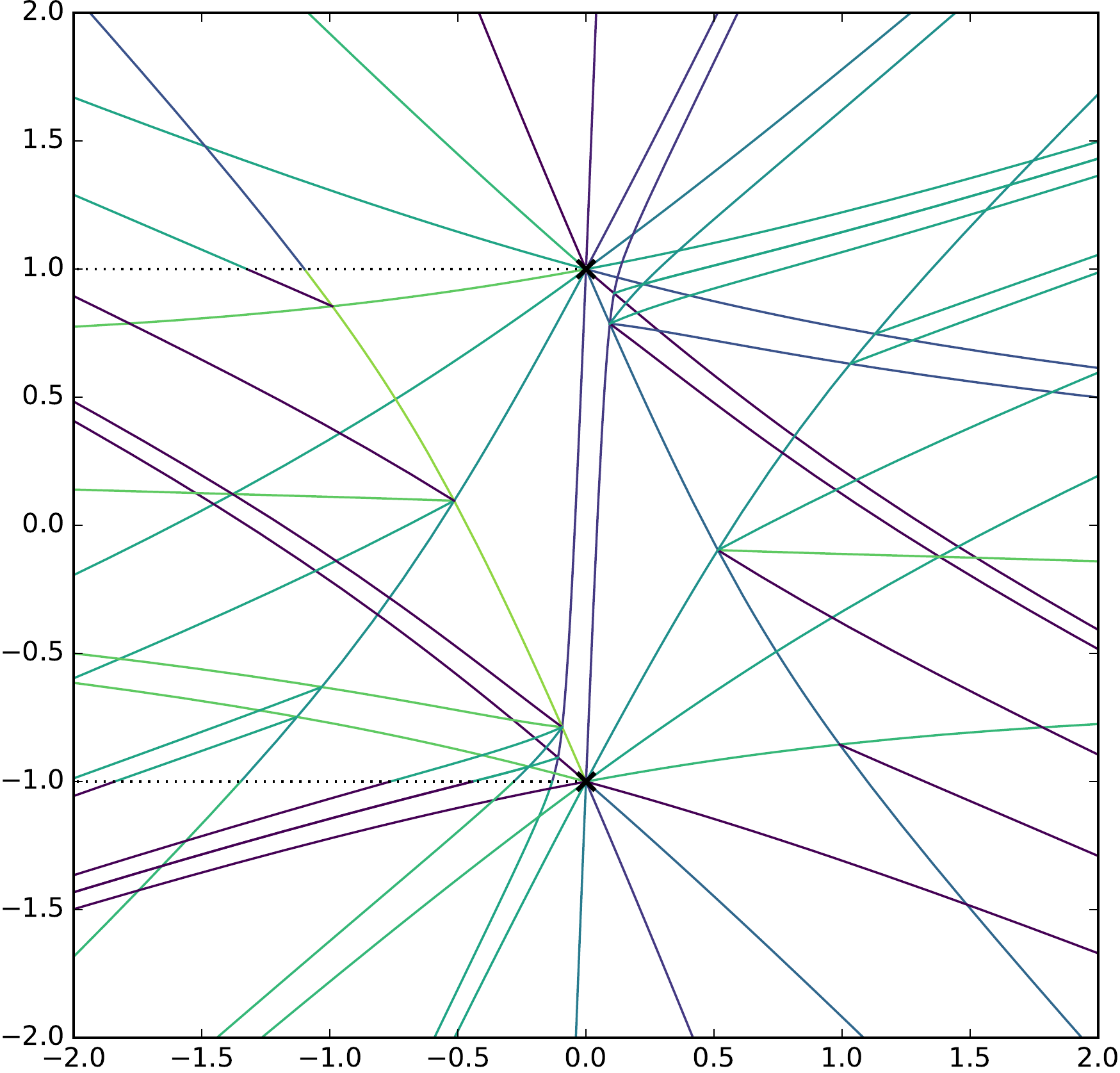}
		\caption{$\theta < \arg \left(Z_{(1,0)}\right)$}
		\label{fig:D_4_max_sym_W_before_theta_s}
	\end{subfigure}
	\begin{subfigure}[c]{.005\textwidth}
		\centering
		\begin{tikzpicture}[scale=2]
			\draw[blue, thick] (0,2) -- (0, -2);		
		\end{tikzpicture}
		\caption*{}
	\end{subfigure}
	\begin{subfigure}[c]{.45\textwidth}	
		\includegraphics[width=\textwidth]{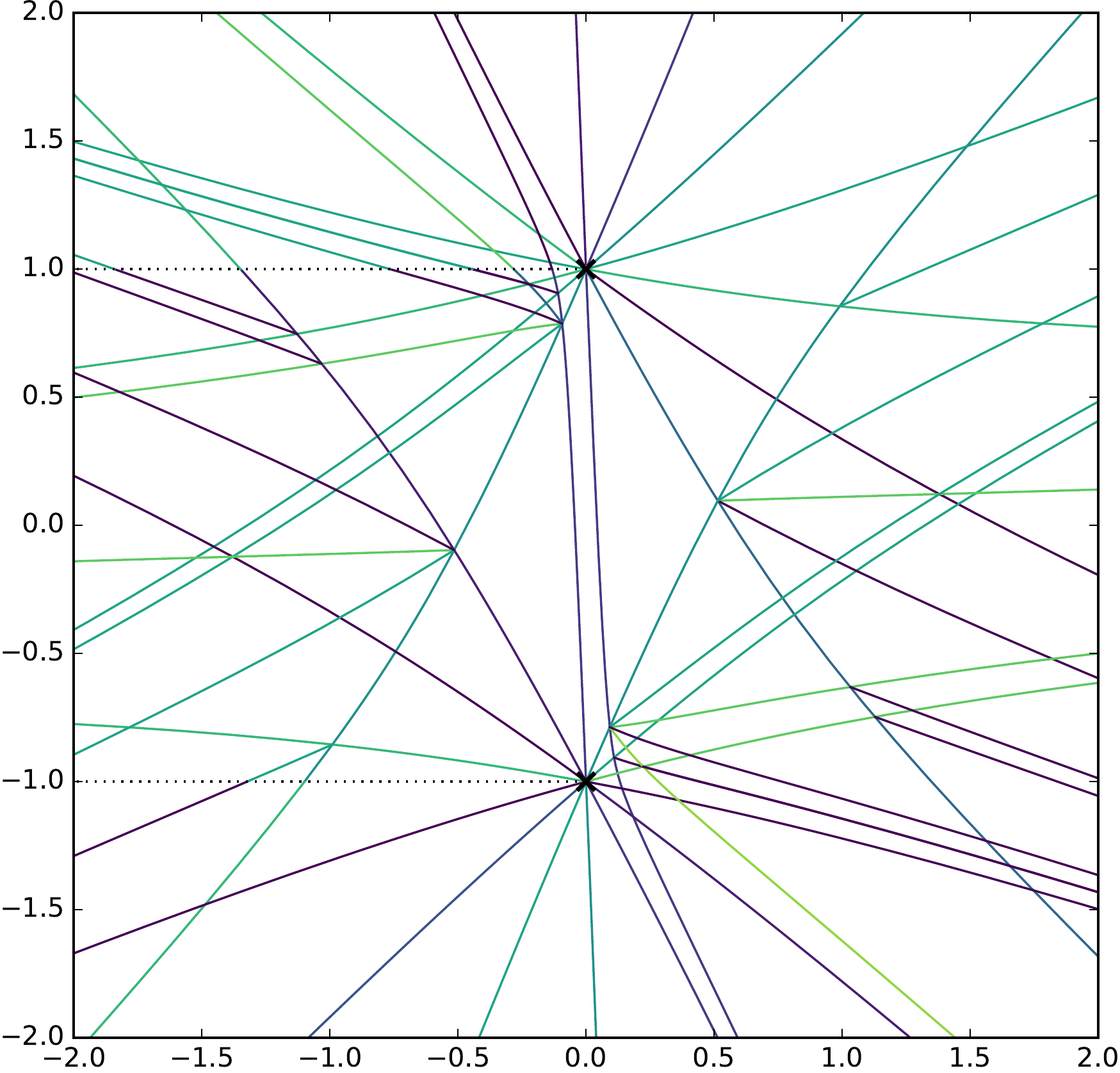}
		\caption{$\theta > \arg \left(Z_{(1,0)}\right)$}
		\label{fig:D_4_max_sym_W_after_theta_s}
	\end{subfigure}

	\caption{Snapshots of spectral networks of the $\gD_4$-class theory}
	\label{fig:snapshots_of_D_4_max_sym_W}
\end{figure}

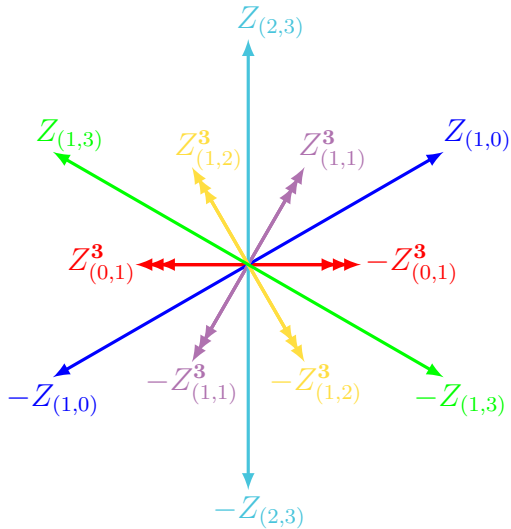
\begin{figure}[!ht]
	\centering	
	\begin{tikzpicture}[scale=1.5]
		\begin{scope}[>=latex, very thick]
			\foreach \r in {1, .9, .8}		
				\draw[<->, red] (-1 * \r, 0) -- (1 * \r, 0);
			\node[red] at (1 + .45, 0) {$-Z_{(0,1)}^{\mathbf{3}}$};
			\node[red] at (-1 - .3, 0) {$Z_{(0,1)}^{\mathbf{3}}$};
			
			\pgfmathsetmacro{\a}{30}
			\pgfmathsetmacro{\tx}{2*cos(\a)}
			\pgfmathsetmacro{\ty}{2*sin(\a)}			
			\draw[<->, blue] (-\tx, -\ty) -- (\tx, \ty);
			\node[blue] at (\tx + .3, \ty + .15) {$Z_{(1,0)}$};	
			\node[blue] at (-\tx , -\ty - .2) {$-Z_{(1,0)}$};	

			\pgfmathsetmacro{\a}{60}
			\pgfmathsetmacro{\tx}{cos(\a)}
			\pgfmathsetmacro{\ty}{sin(\a)}			
			\foreach \r in {1, .9, .8}		
				\draw[<->, Orchid] (-\tx * \r, -\ty * \r) -- (\tx * \r, \ty * \r);			
			\node[Orchid] at (\tx + .25, \ty + .15) {$Z_{(1,1)}^{\mathbf{3}}$};
			\node[Orchid] at (-\tx, -\ty - .2) {$-Z_{(1,1)}^{\mathbf{3}}$};

			\draw[<->, SkyBlue] (0, -2) -- (0, 2);
			\node[SkyBlue] at (0.2, 2.15) {$Z_{(2,3)}$};
			\node[SkyBlue] at (0.1, -2.2) {$-Z_{(2,3)}$};
			
			\pgfmathsetmacro{\a}{120}
			\pgfmathsetmacro{\tx}{cos(\a)}
			\pgfmathsetmacro{\ty}{sin(\a)}			
			\foreach \r in {1, .9, .8}		
				\draw[<->, Goldenrod] (-\tx * \r, -\ty * \r) -- (\tx * \r, \ty * \r);			
			\node[Goldenrod] at (\tx + .15, \ty + .15) {$Z_{(1,2)}^{\mathbf{3}}$};
			\node[Goldenrod] at (-\tx + .1, -\ty - .2) {$-Z_{(1,2)}^{\mathbf{3}}$};
			
			\pgfmathsetmacro{\a}{150}
			\pgfmathsetmacro{\tx}{2*cos(\a)}
			\pgfmathsetmacro{\ty}{2*sin(\a)}
			\draw[<->, green] (-\tx, -\ty) -- (\tx, \ty);
			\node[green] at (\tx + .15, \ty + .15) {$Z_{(1,3)}$};			
			\node[green] at (-\tx + .15 , -\ty - .2) {$-Z_{(1,3)}$};
			
		\end{scope}
	\end{tikzpicture}
	
	\caption{Central charges of the maximal, symmetric BPS spectrum of the $\gD_4$ class theory.}
	\label{fig:D_4_maximal_symmetric_Z}
\end{figure}

At $s = \infty$, the configuration of the spectral network becomes highly symmetric, as  shown in Figure \ref{fig:snapshots_of_D_4_max_sym_W}. The full family of spectral networks are available at \foothref{http://het-math2.physics.rutgers.edu/loom/plot?data=D_4_AD_from_SO_8_max_BPS_sym}{this web page}. Note that the simpleton spectrum around a branch point is the same as that of the spectral network of 4d $\mathcal{N}=2$ pure $\SO(8)$ theory studied in Section \ref{sec:SO_8}. The symmetric configuration enables us to calculate the central charges of the 12 BPS states analytically, and the result is given in Figure \ref{fig:D_4_maximal_symmetric_Z}, where $Z_{(1,0)} / (-Z_{(0,1)}^\mathbf{3}) = \sqrt{3} \exp(i \pi/6)$.

This structure of BPS spectra over the moduli space is exactly same as that of a $\gD_4$-class theory from a 4d $\mathcal{N}=2$ theory with $\mathfrak{g} = \gA_{N-1}$, which is described in \cite{Maruyoshi:2013fwa}. Therefore we conclude that the study of BPS spectra of a $\gD_4$-theory from a 4d $\mathcal{N}=2$ theory with $\mathfrak{g} = \gD_{4}$ provides evidence for the equivalence of the two AD fixed point theories.


\section{Conclusions and future directions}

In this paper we defined spectral networks for 4d $\mathcal{N} = 2$ class $\mathcal{S}$ theories of types $\mathfrak{g} = \gA_n$, $\gD_n$, $\gE_6$, or $\gE_7$, for spectral covers determined by a choice of minuscule representation $\rho$. 
Our construction makes contact with the classical theory of Lie algebras in several aspects.
We found a Lie algebraic description for both the soliton content of spectral networks, and the 2d wall-crossing formula for 2d theories on canonical surface defects $\mathbb{S}_{z, \rho}$.
Then we proposed a definition for the physical charge lattice of the 4d $\mathcal{N} = 2$ class $\mathcal{S}$ theory in terms of the homology lattice of a spectral cover $\Sigma_\rho$ in a minuscule representation. 
We also found a formula for the BPS index in terms of ADE spectral networks data, which allows to compute the BPS spectrum in an algorithmic way. 
Finally, we applied our framework to various examples, both to illustrate the use of spectral networks for studying BPS spectra and to put our construction through nontrivial tests.

Our work raises a number of questions and points to future directions for exploration. 

\begin{itemize}

\item
Our definition of minuscule defects $\IS_{z,\rho}$ and the computation of their soliton spectra are rather formal. It is would be very interesting to check our definitions and results by studying a brane construction of the defects and computing the 2d BPS spectra from first principles.
These and related matters are the subjects of ongoing work \cite{Longhi:2016bte}.

\item 
For technical reasons we found it convenient to restrict to minuscule representations for the spectral covers we study. But work of Donagi suggests the possible existence of a more general kind of \emph{Cameral} surface defect \cite{Donagi:1993}.\footnote{This idea was pointed out to us by Greg Moore and Andy Neitzke, in private communications.}
It would be very interesting to understand a further generalization to ``Cameral spectral networks'', and what kind of physics they would describe. 
An immediate payout would be the ability to deal with the one egregious case that we have left behind by restricting to minuscule representations: theories of $\gE_8$ type.\footnote{It seems plausible that $\gE_8$ would require only a mild generalization of our construction, and not necessarily the study of cameral covers. This is because, while lacking minuscule representations, it does admit a \emph{quasi-minuscule} representation, the adjoint.}
We hope to return to these points in the future.

\item
It would be interesting to extend the construction of spectral networks described in this paper to class $\mathcal{S}$ theories associated with non-simply-laced Lie algebras. 
In \cite{Keller:2011ek}, a class $\mathcal{S}$ construction of pure gauge theories of types $\mathfrak{g} = \mathrm{B}_n$, $\mathrm{C}_n$, $\mathrm{F}_4$, or $\mathrm{G}_2$ was proposed by including outer-automorphism twist lines on the class $\mathcal{S}$ curve $C$. 
We believe that the formalism developed in the present paper is well-suited for constructing spectral networks in presence of such twist lines, which would allow us to study the BPS spectra of the pure gauge theories. 
On the other hand, a fully general construction of class $\mathcal{S}$ theories of non-simply-laced type is not yet available, and is certainly an interesting problem in its own respect.  If such theories admit a construction of non-simply laced spectral networks, these could also be employed to study chiral rings and solitons of coset models based on $\SO(2n+1)$ or $\mathrm{Sp}(2n)$ \cite{Kazama:1988uz, Hori:2013ewa}, which were studied in \cite{Lerche:1989uy} and do not admit a Landau-Ginzburg description.

\item 
In this paper we focus on the basic definition of spectral networks, while a few other extensions already appeared in the literature. 
One of them is the ``lifting'' construction of \cite{Gaiotto:2012db}, which establishes a relation between certain $\gA_K$ theories and $\gA_1$ theories. It is natural to wonder whether there exists a lifting construction to D- and E-type theories, as well as to other choices of $\rho$.
Another extension is the inclusion of spin proposed in \cite{Galakhov:2014xba} for A-type spectral networks. We believe that it should admit a generalization to ADE-types.

\end{itemize}

\section*{Acknowledgements}

We thank Greg Moore and Andy Neitzke for sharing their notes and ideas on generalizing spectral networks.
We also thank Greg Moore for collaboration during the early stages of this project, as well as for important comments and suggestions.
We further thank Andy Neitzke for discussions at various stages of this project that led to the development of important ideas including the classification of 2d soliton charges, and also for contributions to \texttt{loom}. 
PL is grateful to Vassil Kanev for helpful correspondence. 
CYP wants to thank Tom Mainiero and Jacques Distler for helpful discussions.
We thank Joe Minahan and Dima Galakhov for comments on the manuscript.
CYP thanks the Mathematics Department at the University of Texas at Austin, the organizers of the workshop `Mathematical Aspects of Six-Dimensional Quantum Field Theories' at the University of California, Berkeley, and the organizers of the 2015 Simons Summer Workshop at the Simons Center for Geometry and Physics for hospitality while this work was in progress.
The work of PL is supported by  the Carl Tryggers Stiftelsen. The work of CYP is supported in part by the National Science Foundation under Grant No.\ NSF PHY-140405.

\appendix

\section{Branch points and roots}\label{app:root-bp}

In this Section we derive a key property of the branching structure of spectral curves:
square-root branch points are always labeled by roots of the Lie algebra.

We define a square-root branch-point as a locus where $\langle\alpha,\varphi(z_{0})\rangle=0$ for some $\alpha$ living in a (strictly) rank-1 sub-lattice of $\Lambda_\textrm{root}$. 
Let $\Lambda_{\rho}=\{\weight_{i}\}$ be the weight system of a representation $\rho$. 
At $z_{0}$, all sheets whose corresponding weights are separated by multiples of $\alpha$ collapse together. In ${\mathfrak{t}}^{*}$, the sheets collapsing together are aligned on a straight line parallel to $\alpha$
\be
	\weight_{j}-\weight_{i} \propto \alpha \Rightarrow x_{i},\ x_{j}\ \text{sheets collide}\,.
\ee
There will be many such ``lines'', each one being labeled by its intersection with the hyperplane ${\cal H}_{\alpha}$ that is normal to $\alpha$ and goes through the origin.

Consider {one} of these lines $\ell$, the sheet monodromy around the branch point at $z_{0}$ will permute the sheets corresponding to the weights aligned along $\ell$. 
To characterize the monodromy at a branch point, let $\Lambda_{\ell}=\Lambda_{\rho}\cap \ell$, then $w\in W$ must:
\begin{itemize}
\item act linearly, because W is generated by reflections, which act linearly,
\item preserve the line $\ell$ (in particular, preserve $\Lambda_{\ell}$ as a set)
\item preserve norms
\end{itemize}
To analyze these constraints it's convenient to split ${\mathfrak{t}}^{*}\simeq \alpha\IR\oplus {\cal H}_{\alpha}$. Then, choosing a basis in which $\alpha = (|\alpha|,\vec 0)$ any $\nu_{i}\in\Lambda_{\ell}$ has coordinates $\nu_{i}=(x_{i},\vec x^{\perp} )$ with $\vec x^{\perp}\in {\cal H}_{\alpha}$ being the same for all $\nu_{i}\in\Lambda_{\ell}$.

In order to preserve the line $\ell=\{(s,\vec x^{\perp})\, |\, s\in\IR\}$, a linear operator representing $w$ must be of the form (acting from the left)
\be
\left(\begin{array}{c|ccc}
	a_{1} & a_{2} & \dots & a_{n} \\
	\hline
	0 & 1 & & \\
	\vdots & & \ddots & \\
	0 & & & 1
\end{array}\right)
\ee
The isometric property reads (here $a,x$ are $n$-vectors: $a=(a_{1}\dots a_{n})$ and $x=(x_{||},\vec x_{\perp})$)
\be
	(a\cdot x)^{2} + \vec x_{\perp}^{2} = |w(x)|^{2} = |x|^{2} = x_{||}^{2} + \vec x_{\perp}^{2}
\ee
thus fixing $a_{1}=\pm1$ and $a_{i}=0$ for $i=2,\dots,n$.
The only nontrivial action corresponds to $a_{1}=-1$, so we fix it that way. 

We thus find that $w$ must be an involution, in fact it has to be a reflection about the hyperplane ${\cal H}_{\alpha}$ perpendicular to $\alpha$. Note that this is intimately tied to our definition of square-root branch point, in fact if we had let $\langle\alpha,\varphi\rangle=0$ for any $\alpha$ from a higher-rank sub-lattice of $\Lambda_\text{root}$, then we would not be talking about a line $\ell$, but a higher-dimensional affine space, and the above reasoning wouldn't necessarily lead to an involution.

So $w$ must be a reflection about ${\cal H}_{\alpha}$  for $\alpha\in\Lambda_\text{root}$, and it must also be Weyl. 
It is a well-known result\footnote{See for example \cite[\S 1.14]{Humpreys}} that any reflection in $W$ corresponds to a root, so $\alpha$ (suitably normalized) can indeed be identified with a root.

\section{Identifying sheets with weights} \label{app:sheet-weight-identification}

In this section we present an algorithm 
for identifying the weights of a representation $\rho$ with 
the sheets $\{x_i\}_i$ of a spectral cover 
$\Sigma_\rho$, after branch cuts have been specified. 

This kind of problem arises immediately in concrete approaches 
to the study of these covers, where one makes a choice of 
branch cuts, and wants to explicitly trivialize the cover.
The input data is given simply by a set of complex numbers $\{x_{i}\}$,
marking the positions of the sheets in the fiber $\IC$ above 
$z$, the task is then to identify consistently each of these
with a weight. 
Once this is accomplished for a single point (the basepoint 
of the trivialization, in particular), the choice of cuts then
makes sure that all points of a sheet corresponds to the same weight.

We will discuss algorithms for accomplishing this task for
all representations of ADE Lie groups 
(with the exception of $\gE_{8}$, although out methods 
could conceivably generalize to that group as well).
A basic fact which holds in general is that the positions
of the sheets are linear functions of the weights.
\be
  x_i = \langle\weight_i,\varphi(z)\rangle
\ee
This entails a great simplification: for each Lie algebra,
it is sufficient to solve the problem for any (nontrivial)
representation, the solution will then extend to all other
representations by linearity.

A bit more concretely, if we solve the problem of identifying
the weights of a certain rep $\rho$ with the sheets of the 
cover $\Sigma_\rho$, we may then pick a basis for $\ft^*$ 
among the $\{\weight_i\}_i$, and expand all the weights of any
other rep $\rho'$ in that basis
\be
  \weight'_j = \sum_i c_{ij} \weight_i
\ee
this data being easily recovered from the knowledge of the 
weight systems themselves.
By linearity, this gives the desired identification of the
sheets of $\Sigma_\rho'$ as
\be
  x'_j = \sum_i c_{ij} x_i
\ee
where $x_i$ are the sheets of $\Sigma_\rho$.

\subsection{\texorpdfstring{$\gA_n$}{A\_n}}

We choose the first fundamental representation, 
and label the weights $\weight_1\dots \weight_{n+1}$.
The residual gauge freedom after diagonalizing $\varphi(z)$
consists in the action of the Weyl group $W=S_{n+1},$ which permutes
all sheets $x_i\mapsto x_{\sigma(i)}$.
Given this freedom, we can choose to assign any weight to
any sheet, as long as no repetitions are made.

\subsection{\texorpdfstring{$\gD_n$}{D\_n}}
We choose the vector representation, which is the same as the first fundamental representation for $n > 2$.
This consists of $2n$ weights, subject to the 
relations 
\be
  \weight_i + \weight_{i+n} = 0\qquad (i \in \IZ_{2n})\,.
\ee
The Weyl group is $H_{n-1}\rtimes S_n$, of order $2^{n-1} n!$.
Here $H_{n-1}\subset \IZ_2^n$ is the kernel of the product 
homomorphism 
$\{\epsilon_1,\dots,\epsilon_n\}\to\epsilon_1\dots\epsilon_n$ 
where $\epsilon_i=\pm1$. In other words, $H_{n-1}$ 
is the subgroup of elements with an even number of $-1$'s.

The action of $W$ is as follows: it permutes all the pairs
\be
  \{\weight_i, \weight_{i+n}\}\mapsto\{\weight_{\sigma(i)},\weight_{\sigma(i)+n}\}\qquad \sigma\in S_n
\ee
and independently switches an \emph{even} number of signs 
\be
  \weight_i \mapsto -\weight_i = \weight_{i+n}\,.
\ee

The identification of sheets and weights can be carried out as 
follows: first identify the $n$ pairs of opposite sheets 
such that $x + x' = 0$.
Then, the permutation symmetry $S_n$ allows us to match any 
pair of opposite sheets with any pair of opposite weights.
The problem thus boils down to identifying consistently a 
'positive' and a 'negative' sheet in \emph{each} pair.

The choices are not independent: the $H_{n-1}$ freedom however
allows us to choose freely the positive sheet from the first 
$n-1$ pairs. Only the choice of positive vs negative within 
the last pair 
\be
\begin{array}{c}
 x_n \leftrightarrow \weight_n \\
 x_{2n} \leftrightarrow \weight_{2n}
\end{array}\qquad \text{vs}\quad %
\begin{array}{c}
 x_n \leftrightarrow \weight_{2n} \\
 x_{2n} \leftrightarrow \weight_{n}
\end{array}
\ee
is constrained.
This can be seen as follows: suppose we knew a 'reference' 
weight-sheet identification that works, 
then we can compare it to ours by first permuting the sheet
pairs suitably, then by 'flipping' the positive/negative
role within each sheet pair, for the first $n-1$ pairs.
If we had to perform an even number of switches, then 
we should make the same positive/negative identification
for the last pair as in the reference one; 
if instead we had to perform an odd number of switches,
we should invert the last identification with respect to the 
reference one.

Given however that we don't have a reference identification
to compare with, we cannot determine how to make the 
positive/negative identification in the last pair.
All we can do is evaluate the consequence of a wrong choice.
As it turns out,the wrong choice corresponds to acting with 
the outer automorphism which exchanges the spinor weights
in the $D_n$ Dynkin diagram.
We can therefore choose the identification for the last pair at will, 
the price for such will be that we cannot 
tell whether a given spinor cover $\Sigma_{\rho}$ corresponds 
to one spinor rep, or the other.

\subsection{\texorpdfstring{$\gE_6$}{E\_6}}

The Cartan Matrix is
\be
\left(\begin{array}{cccccc}
2 &  0 &  -1 &   0 &   0 &   0 \\
0 &   2 &   0 &  -1 &   0 &   0 \\
-1 &   0 &   2 &  -1 &   0 &   0 \\
0 &  -1 &  -1 &   2 &  -1 &   0 \\
0 &   0 &   0 &  -1 &   2 &  -1\\
0 &   0 &   0 &   0 &  -1 &   2 
\end{array}\right)
\ee

Thus the Dynkin diagram and the corresponding simple roots read\\
\begin{center}
\includegraphics[width=0.3\textwidth]{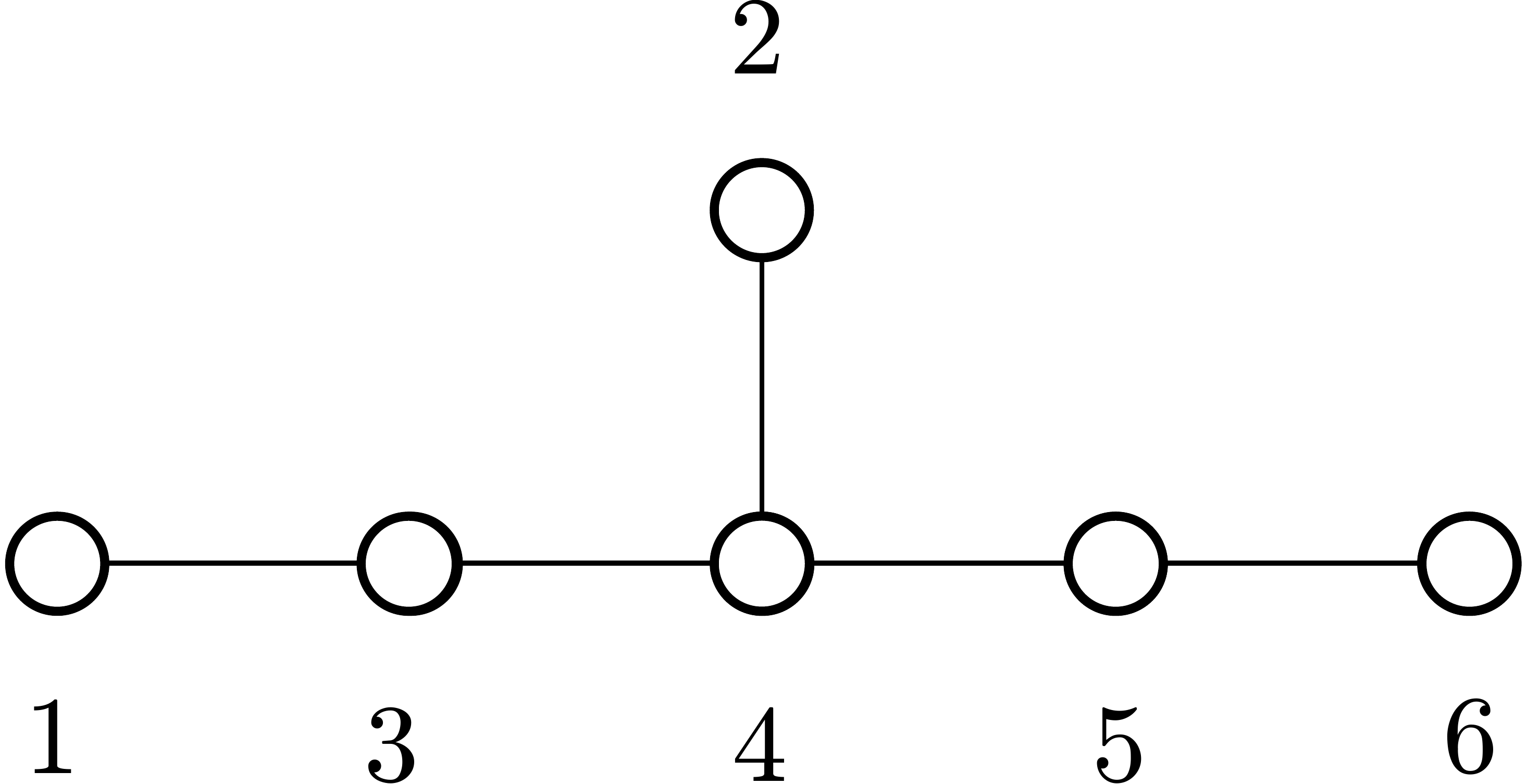}
\end{center}

The weights of the representation we study read, explicitly
{\footnotesize
\be\label{eq:table_E6_weights}
\begin{array}{c|l}
0 & (0, 0, 0, 0, 0, -2/3, -2/3, 2/3)\\
1 & (-1/2, 1/2, 1/2, 1/2, 1/2, -1/6, -1/6, 1/6)\\
 2 & (1/2, -1/2, 1/2, 1/2, 1/2, -1/6, -1/6, 1/6)\\
 3 & (1/2, 1/2, -1/2, 1/2, 1/2, -1/6, -1/6, 1/6)\\
 4 & (-1/2, -1/2, -1/2, 1/2, 1/2, -1/6, -1/6, 1/6)\\
 5 & (1/2, 1/2, 1/2, -1/2, 1/2, -1/6, -1/6, 1/6)\\
 6 & (-1/2, -1/2, 1/2, -1/2, 1/2, -1/6, -1/6, 1/6)\\
 7 & (1/2, 1/2, 1/2, 1/2, -1/2, -1/6, -1/6, 1/6)\\
 8 & (-1/2, 1/2, -1/2, -1/2, 1/2, -1/6, -1/6, 1/6)\\
 9 & (-1/2, -1/2, 1/2, 1/2, -1/2, -1/6, -1/6, 1/6)\\
 10 & (1/2, -1/2, -1/2, -1/2, 1/2, -1/6, -1/6, 1/6)\\
 11 & (-1/2, 1/2, -1/2, 1/2, -1/2, -1/6, -1/6, 1/6)\\
 12 & (0, 0, 0, 0, 1, 1/3, 1/3, -1/3)\\
 13 & (1/2, -1/2, -1/2, 1/2, -1/2, -1/6, -1/6, 1/6)\\
 14 & (-1/2, 1/2, 1/2, -1/2, -1/2, -1/6, -1/6, 1/6)\\
 15 & (0, 0, 0, 1, 0, 1/3, 1/3, -1/3)\\
 16 & (1/2, -1/2, 1/2, -1/2, -1/2, -1/6, -1/6, 1/6)\\
 17 & (0, 0, 1, 0, 0, 1/3, 1/3, -1/3)\\
 18 & (1/2, 1/2, -1/2, -1/2, -1/2, -1/6, -1/6, 1/6)\\
 19 & (0, 1, 0, 0, 0, 1/3, 1/3, -1/3)\\
 20 & (-1/2, -1/2, -1/2, -1/2, -1/2, -1/6, -1/6, 1/6)\\
 21 & (-1, 0, 0, 0, 0, 1/3, 1/3, -1/3)\\
 22 & (1, 0, 0, 0, 0, 1/3, 1/3, -1/3)\\
 23 & (0, -1, 0, 0, 0, 1/3, 1/3, -1/3)\\
 24 & (0, 0, -1, 0, 0, 1/3, 1/3, -1/3)\\
 25 & (0, 0, 0, -1, 0, 1/3, 1/3, -1/3)\\
 26 & (0, 0, 0, 0, -1, 1/3, 1/3, -1/3)
\end{array}
\ee
}

We choose to work with the $\rho_1$ representation, 
with Dynkin indices $(1,0,0,0,0,0)$, of dimension $27$.
The Weyl group is of order $51840\ll 27!$.
The precise goal will be to ``label'' all the sheets 
$x_{0}\dots x_{26}$ so that $x_{i}=\langle \weight_{i},\varphi(z)\rangle$.

As a preliminary, we state the following empirical
observation.
The $27$ weights of $\rho_1$ can be arranged into 
\emph{null triples}, obeying
\be
  \weight_i + \weight_j + \weight_k = 0
\ee
There are $45$ such triples (up to $S_3^{\times 45}$ 
permutations within each triple), 
each weight appears in exactly $5$ of them.
Similarly, by linearity, the sheets must also organize into 
$45$ null triples, and each sheet will feature in exactly 
$5$ of them.

We can now start identifying sheets and weights.
Choose any sheet, then by using part of the Weyl freedom 
(to be quantified more precisely below) we can label it by 
$x_{0}$, thus identifying $\weight_0\leftrightarrow x_0$.
Then consider the quintet $Q_0$ of null triples to which $\weight_0$ 
belongs, they are:
\be
  \begin{array}{c|c|c|c|c}
   t_1^{(0)} & t_2^{(0)} & t_3^{(0)} & t_4^{(0)} & t_5^{(0)} \\
   \hline
   (0, 12, 26) & (0, 15, 26) & (0, 17, 24) & (0, 19, 23) & (0, 21, 22)
  \end{array}
\ee
In notation where $(i,j,k)$ stands for $(\weight_i,\weight_j,\weight_k)$; 
here we stick to weight-labels as they are given by the standard 
ordering employed by \texttt{SAGE}. We provide explicit expressions in (\ref{eq:table_E6_weights}).

Our choice to identify $x_0,\weight_{0}$ cost us part of the Weyl
freedom, but we still have the remaining freedom given by the 
stabilizer subgroup 
\be
  W_0 := \{w \in W \,|\, w(\weight_0) = \weight_0\}
\ee
Now $W_{0}\simeq W(D_{5})$, as follows from the fact that $\weight_{0}$ can be taken to be the first 
fundamental weight $\weight_{0}=\omega_{1}$, and that the stabilizer 
of a dominant weight $\weight = \sum_{i}m_{i}\omega_{i}$ in $W$ is 
generated by the simple reflections for which $m_{i}=0$
\cite[p.324]{Procesi}.

The order of $W_0$ is $1920$. 
The action of $W_0$ on the 
\emph{quintet of ordered triples} $(t_1^{(0)},\dots, t_5^{(0)})$
gives $1920$ distinct quintets of ordered triples.
For clarity, two quintets 
\be
\begin{split}
  (t_1^{(0)},t_2^{(0)},t_3^{(0)},t_4^{(0)},t_5^{(0)})\\
  (t_2^{(0)},t_1^{(0)},t_3^{(0)},t_4^{(0)},t_5^{(0)})
\end{split}
\ee
are considered different. Moreover, also two quintets
differing by a triplet changing from $(0, 12, 26)$ to 
$(0, 26, 12)$ are considered different in our count.
Note that the index $0$ is left invariant by $W_0$, 
by definition.

In fact $1920 = 5! \cdot 2^4$ is the number of such ordered
quintets obtained by considering all their permutations by 
$S_5$, and by flipping an \emph{even} number of pairs 
within each triple (in accordance with the fact that 
$W_0=H_4 \rtimes S_5\simeq W(D_{5})$).

This freedom means the following. Consider the null triples of 
\emph{sheets} in which $x_0$ features. This singles out $5$ 
pairs of sheets $(x_i, x_j)$ such that $x_i+x_j=-x_0$.
We choose to label the sheets of those five triples as follows:
\be
    \tilde Q_0 := \ \big\{(x_0, x_{12}, x_{26}) \ (x_0, x_{15}, x_{26}) \ (x_0, x_{17}, x_{24}) \ (x_0, x_{19}, x_{23}) \ (x_0, x_{21}, x_{22})\big\}
\ee
where the identification of $x_0$ is fixed by our initial choice, 
and the $W_0$ gauge freedom accounts for possible permutations
among the five pairs of other sheets, as well as for an 
\emph{even} number of 'flips' of the pairs of sheets 
$x_i\leftrightarrow x_j$ within each triple $(x_0, x_i, x_j)$.
More precisely, we have the freedom to choose who is 
$x_i$ vs $x_j$ (where we associate $x_{i,j}\to \weight_{i,j}$), 
in 4 out of 5 pairs, but the last choice is constrained by the
first four ones (the reasoning is the same as for $D_n$,
and likewise a wrong choice on the last pair corresponds 
to the $\IZ_2$ outer automorphism of the Lie algebra).

Having identified the first $11$ sheets, the subgroup of $W$ 
which stabilizes all of our choices is precisely of order $1$. 
Thus, the Weyl freedom has been used exhaustively at this point.

How to identify the remaining 16 sheets?
We have identified the $11$ sheets 
\be
  x_0, x_{12}, x_{26}, x_{15}, x_{25}, x_{17}, x_{24}, x_{19}, x_{23}, x_{21}, x_{22}
\ee
with the corresponding weights ($x_i\leftrightarrow \weight_i$).\footnote{More precisely, we have done so up to a $\IZ_2$ ambiguity. One cannot tell at this point between $x_0, x_{12}, x_{26}, x_{15}, x_{25}, x_{17}, x_{24}, x_{19}, x_{23}, x_{21}, x_{22}$ and $x_0, x_{12}, x_{26}, x_{15}, x_{25}, x_{17}, x_{24}, x_{19}, x_{23}, x_{22}, x_{21}$, because the `flip' of a single pair (or an odd nomber of them) is not accounted for by the Weyl symmetry. This means that the following procedure must be actually carried out twice, once for each possibility. Only one of the two cases will yield a successfull identification of sheets and weights. In the other case, one won't be able to match all the sheets with the weights.}
We can therefore construct their quintets:
\be
  \tilde Q_0, \tilde Q_{12}, \tilde Q_{26}, \tilde Q_{15}, \tilde Q_{25}, \tilde Q_{17}, \tilde Q_{24}, \tilde Q_{19}, \tilde Q_{23}, \tilde Q_{21}, \tilde Q_{22}
\ee
Again, each quintet will contain $5$ triples of the form 
$(x_i, x_j, x_k)$ with $x_i$ being one of the sheets we 
identified, and $x_j, x_k$ being generally sheets that 
we have yet to identify with a weight.

Now, from the weight data, we know that each one of the missing 
weights has a unique pattern of whether it belongs or not
to each of these quintets.
The same must be true for the weights, by linearity,
and therefore the problem of identifying the remaining weights 
and sheets is completely solved. We give below the data table: 
each row tells whether one of the missing weights belongs or not
to the quintets which label the columns. A '$0$' stands for 
'not contained', while a '$1$' stands for 'contained'.
The same analysis can be carried out for the remaining sheets,
since we know their coordinates, and we know the sheet quintets
corresponding to the weight quintets $Q_i\leftrightarrow\tilde Q_i$. 
The requirement that the pattern in table of $(x_i, \tilde Q_j)$ 
matches with the pattern of the table $( \weight_i, Q_j)$
uniquely fixes all the remaining $x_i$'s.
\be
\begin{array}{c|ccccccccccc}
 & Q_0 & Q_{12} & Q_{26}& Q_{15}& Q_{25}& Q_{17}& Q_{24}& Q_{19}& Q_{23}& Q_{21}& Q_{22}\\
 \hline
  \weight_1 & 0 & 0 & 1 & 0 & 1 & 0 & 1 & 0 & 1 & 0 & 1 \\
 \weight_2 & 0 & 0 & 1 & 0 & 1 & 0 & 1 & 1 & 0 & 1 & 0 \\
 \weight_3 & 0 & 0 & 1 & 0 & 1 & 1 & 0 & 0 & 1 & 1 & 0 \\
 \weight_4 & 0 & 0 & 1 & 0 & 1 & 1 & 0 & 1 & 0 & 0 & 1 \\
 \weight_5 & 0 & 0 & 1 & 1 & 0 & 0 & 1 & 0 & 1 & 1 & 0 \\
 \weight_6 & 0 & 0 & 1 & 1 & 0 & 0 & 1 & 1 & 0 & 0 & 1 \\
 \weight_7 & 0 & 1 & 0 & 0 & 1 & 0 & 1 & 0 & 1 & 1 & 0 \\
 \weight_8 & 0 & 0 & 1 & 1 & 0 & 1 & 0 & 0 & 1 & 0 & 1 \\
 \weight_9 & 0 & 1 & 0 & 0 & 1 & 0 & 1 & 1 & 0 & 0 & 1 \\
 \weight_{10} & 0 & 0 & 1 & 1 & 0 & 1 & 0 & 1 & 0 & 1 & 0 \\
 \weight_{11} & 0 & 1 & 0 & 0 & 1 & 1 & 0 & 0 & 1 & 0 & 1 \\
 \weight_{13} & 0 & 1 & 0 & 0 & 1 & 1 & 0 & 1 & 0 & 1 & 0 \\
 \weight_{14} & 0 & 1 & 0 & 1 & 0 & 0 & 1 & 0 & 1 & 0 & 1 \\
 \weight_{16} & 0 & 1 & 0 & 1 & 0 & 0 & 1 & 1 & 0 & 1 & 0 \\
 \weight_{18} & 0 & 1 & 0 & 1 & 0 & 1 & 0 & 0 & 1 & 1 & 0 \\
 \weight_{20} & 0 & 1 & 0 & 1 & 0 & 1 & 0 & 1 & 0 & 0 & 1
\end{array}
\ee

\subsection{\texorpdfstring{$\gE_7$}{E\_7}}

The Cartan Matrix is
\be
\left(\begin{array}{ccccccc}
2 &  0 &  -1 &   0 &   0 &   0 &   0\\
0 &   2 &   0 &  -1 &   0 &   0 &   0\\
-1 &   0 &   2 &  -1 &   0 &   0 &   0\\
0 &  -1 &  -1 &   2 &  -1 &   0 &   0\\
0 &   0 &   0 &  -1 &   2 &  -1&   0\\
0 &   0 &   0 &   0 &  -1 &   2 &   -1 \\
0 & 0 &   0 &   0 &   0 &  -1 &   2 
\end{array}\right)
\ee

Thus the Dynkin diagram and the corresponding simple roots read\\
\begin{center}
\includegraphics[width=0.35\textwidth]{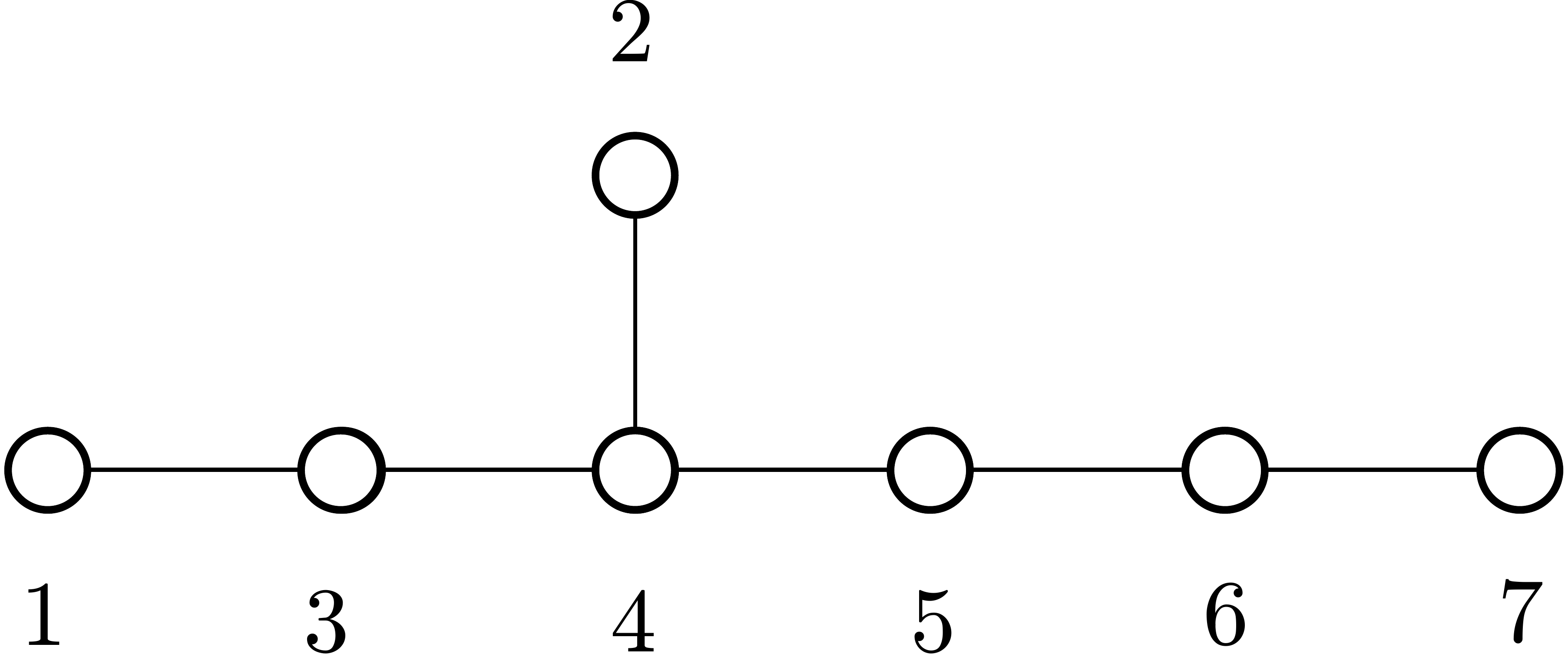}
\end{center}

The weights of the representation we study read, explicitly
{\footnotesize
\be
\begin{array}{c|l}
0 & (-1/2, -1/2, 1/2, 1/2, -1/2, -1/2, 0, 0)\\
1 & (-1/2, 1/2, 1/2, 1/2, -1/2, 1/2, 0, 0)\\
2 & (0, 0, 0, 0, 0, 1, 1/2, -1/2)\\
3 & (-1/2, 1/2, 1/2, -1/2, 1/2, 1/2, 0, 0)\\
4 & (0, 0, 0, 0, -1, 0, -1/2, 1/2)\\
5 &  (-1/2, 1/2, -1/2, 1/2, -1/2, -1/2, 0, 0)\\
6 &  (0, 0, 1, 0, 0, 0, 1/2, -1/2)\\
7 &  (-1/2, -1/2, -1/2, -1/2, 1/2, 1/2, 0, 0)\\
8 &  (1/2, 1/2, 1/2, 1/2, -1/2, -1/2, 0, 0)\\
9 &  (1/2, -1/2, 1/2, -1/2, -1/2, -1/2, 0, 0)\\
10 &  (0, 0, 0, 1, 0, 0, 1/2, -1/2)\\
11 &  (0, 0, 0, 0, 0, -1, -1/2, 1/2)\\
12 &  (-1/2, 1/2, 1/2, 1/2, 1/2, -1/2, 0, 0)\\
13 &  (1/2, -1/2, -1/2, -1/2, -1/2, 1/2, 0, 0)\\
14 &  (0, 0, 1, 0, 0, 0, -1/2, 1/2)\\
15 &  (1/2, -1/2, 1/2, -1/2, 1/2, 1/2, 0, 0)\\
16 &  (-1/2, -1/2, -1/2, -1/2, -1/2, -1/2, 0, 0)\\
17 &  (-1, 0, 0, 0, 0, 0, 1/2, -1/2)\\
18 &  (0, 0, -1, 0, 0, 0, -1/2, 1/2)\\
19 &  (1, 0, 0, 0, 0, 0, 1/2, -1/2)\\
20 &  (0, 1, 0, 0, 0, 0, -1/2, 1/2)\\
21 &  (1/2, 1/2, -1/2, 1/2, -1/2, 1/2, 0, 0)\\
22 &  (0, 0, 0, -1, 0, 0, 1/2, -1/2)\\
23 &  (1/2, -1/2, -1/2, 1/2, -1/2, -1/2, 0, 0)\\
24 &  (-1/2, -1/2, -1/2, 1/2, 1/2, -1/2, 0, 0)\\
25 &  (-1/2, 1/2, -1/2, -1/2, 1/2, -1/2, 0, 0)\\
26 & (0, -1, 0, 0, 0, 0, 1/2, -1/2)\\
27 &  (1/2, 1/2, 1/2, -1/2, -1/2, 1/2, 0, 0)\\
\end{array}
\begin{array}{c|l}
28 & (1/2, -1/2, 1/2, 1/2, -1/2, 1/2, 0, 0) \\
29 &  (0, 0, -1, 0, 0, 0, 1/2, -1/2)\\
30 &  (1, 0, 0, 0, 0, 0, -1/2, 1/2)\\
31 &  (-1/2, -1/2, -1/2, 1/2, -1/2, 1/2, 0, 0)\\
32 &  (1/2, -1/2, -1/2, 1/2, 1/2, 1/2, 0, 0)\\
33 &  (-1/2, -1/2, 1/2, -1/2, -1/2, 1/2, 0, 0)\\
34 &  (1/2, -1/2, -1/2, -1/2, 1/2, -1/2, 0, 0)\\
35 &  (-1/2, -1/2, 1/2, 1/2, 1/2, 1/2, 0, 0)\\
36 &  (0, -1, 0, 0, 0, 0, -1/2, 1/2)\\
37 &  (0, 0, 0, 1, 0, 0, -1/2, 1/2)\\
38 &  (0, 0, 0, 0, 0, -1, 1/2, -1/2)\\
39 &  (-1/2, -1/2, 1/2, -1/2, 1/2, -1/2, 0, 0)\\
40 &  (1/2, 1/2, -1/2, -1/2, -1/2, -1/2, 0, 0)\\
41 &  (-1, 0, 0, 0, 0, 0, -1/2, 1/2)\\
42 &  (-1/2, 1/2, -1/2, -1/2, -1/2, 1/2, 0, 0)\\
43 &  (0, 1, 0, 0, 0, 0, 1/2, -1/2)\\
44 &  (0, 0, 0, 0, -1, 0, 1/2, -1/2)\\
45 &  (1/2, -1/2, 1/2, 1/2, 1/2, -1/2, 0, 0)\\
46 &  (1/2, 1/2, -1/2, -1/2, 1/2, 1/2, 0, 0)\\
47 &  (-1/2, 1/2, 1/2, -1/2, -1/2, -1/2, 0, 0)\\
48 &  (1/2, 1/2, -1/2, 1/2, 1/2, -1/2, 0, 0)\\
49 &  (-1/2, 1/2, -1/2, 1/2, 1/2, 1/2, 0, 0)\\
50 &  (0, 0, 0, 0, 1, 0, -1/2, 1/2)\\
51 &  (1/2, 1/2, 1/2, 1/2, 1/2, 1/2, 0, 0)\\
52 &  (0, 0, 0, 0, 0, 1, -1/2, 1/2)\\
53 &  (0, 0, 0, 0, 1, 0, 1/2, -1/2)\\
54 &  (1/2, 1/2, 1/2, -1/2, 1/2, -1/2, 0, 0)\\
55 &  (0, 0, 0, -1, 0, 0, -1/2, 1/2)
\end{array}
\ee
}

We choose to work with the $\rho_7$ representation, 
with Dynkin indices $(0,0,0,0,0,0,1)$, of dimension $56$.
The Weyl group is of order $2903040\ll 56!$.

The $56$ weights of $\rho_7$ can be arranged into 
$28$ \emph{null pairs}, obeying
\be
   \weight_i +  \weight_j = 0
\ee
The $56$ weights of $\rho_7$ can be also arranged into 
\emph{null quartets}, obeying
\be
   \weight_i +  \weight_j +  \weight_k +  \weight_l = 0
\ee
There are $1008$ such quartets, 
each weight appears in exactly $72$ of them.
But, if we exclude the $378$ quartets obtained
from combining null pairs, we are left with $630$
genuine null quartets, and each weight
appears in $45$ of them.

Similarly, by linearity, the sheets must also organize into $28$ 
null pairs, as well as $630$ ``genuine'' null quartets, 
and each sheet will feature in exactly $45$ of them.

Choose a labeling for the sheets $x_0, \dots, x_{55}$, and 
make an ansatz by identifying $ \weight_0\leftrightarrow x_0$.
The Weyl group is
\be
	W(E_{7}) = \IZ_{2}\times PSp_{6}(2)
\ee
where the $\IZ_{2}$ takes
$ \weight\to- \weight\,,\ \forall \weight\in\ft^{*}$.

We consider the $45$-plet $Q_0$ of null quartets to which $ \weight_0$ 
belongs, they are:
\be
  Q_{0}\ :\quad %
  \begin{array}{c|c|c}
   q_1^{(0)} & \dots & q_{45}^{(0)} \\
   \hline
   (0,2,18,54) & \dots & (0,40,52,53)
  \end{array}
\ee
In notation where $(i,j,k,l)$ stands for $( \weight_i, \weight_j, \weight_k, \weight_l)$.
It turns out that all the weights appearing in $Q_{0}$ come from all the $28$ distinct null pairs.
To be explicit, these 28 weights are:
\be
\begin{split}
\CW_{0}= \{ & 0, 2, 18, 54, 20, 34, 25, 30, 40, 50, 48, 55, 3, 13,  \\
	& 19, 21, 29, 32, 7, 27, 43, 51, 53, 49, 15, 42, 22, 52\}
\end{split}
\ee
On the side of sheets, having fixed the correspondence $ \weight_{0}\leftrightarrow x_{0}$, we will obtain by the same procedure an unordered set of $28-1 = 27$ sheets, which we call $\CS_{0}$. The next task is to understand how to identify each sheet in $\CS_{0}$ with a weight from $\CW_{0}-\{0\}$.

From the $45$ quartets in $Q_{0}$ we can extract $45$ triplets, in the obvious way: these will be $(i,j,k)$ such that $ \weight_{i}+ \weight_{j}+ \weight_{k} = - \weight_{0}$.
Now, taken any $i\in\CW_{0}$, with $i\neq 0$, it turns out that it appears in \emph{exactly} $5$ triplets.
The same story, by linearity, must be true of the sheets $x_{i}$: the sheets in $\CS_{0}$ must arrange in triples $(x_{i}, x_{j}, x_{k})$ such that $x_{i}+x_{j}+x_{k}=-x_{0}$; moreover there will be 45 triples, and each $x_{i}$ will appear in precisely 5 of them.

The similarity with the previous section is now evident: in fact, the stabilizer of $ \weight_0$ is $W_0\simeq W(E_{6})$, following a reasoning analogous to that employed in the previous section (where the stabilizer was found to be $W(D_{5})$).

So we know from the $E_{6}$ case how to proceed now: the residual Weyl symmetry can be used to uniquely fix all the $27$ weights/sheet pairs in $\CW_{0}-\{0\}$, precisely following the algorithm devised in the previous section, i.e.\ choose the subgroup of $W_{0}$ which stabilizes, say, $ \weight_{2}$, it will be of order $1920$, and so on. 
The remaining $28$ sheets are related to these by the $ \weight\to- \weight$ $\IZ_{2}$ symmetry, the same relation extends by linearity to the sheets $x\to -x$. This completely fixes the sheet/weight identification.

\section{Simpletons from twisted homotopy invariance} \label{app:simpletons}

In this section we repeat the analysis of the soliton content of primary $\CS$-walls, this time we carry it out in the language of parallel transport and homotopy invariance. 
Consider $\wp,\wp'$ as in Figure \ref{fig:branch-point-homotopy},

\begin{figure}[!ht]
\begin{center}
\includegraphics[width=0.45\textwidth]{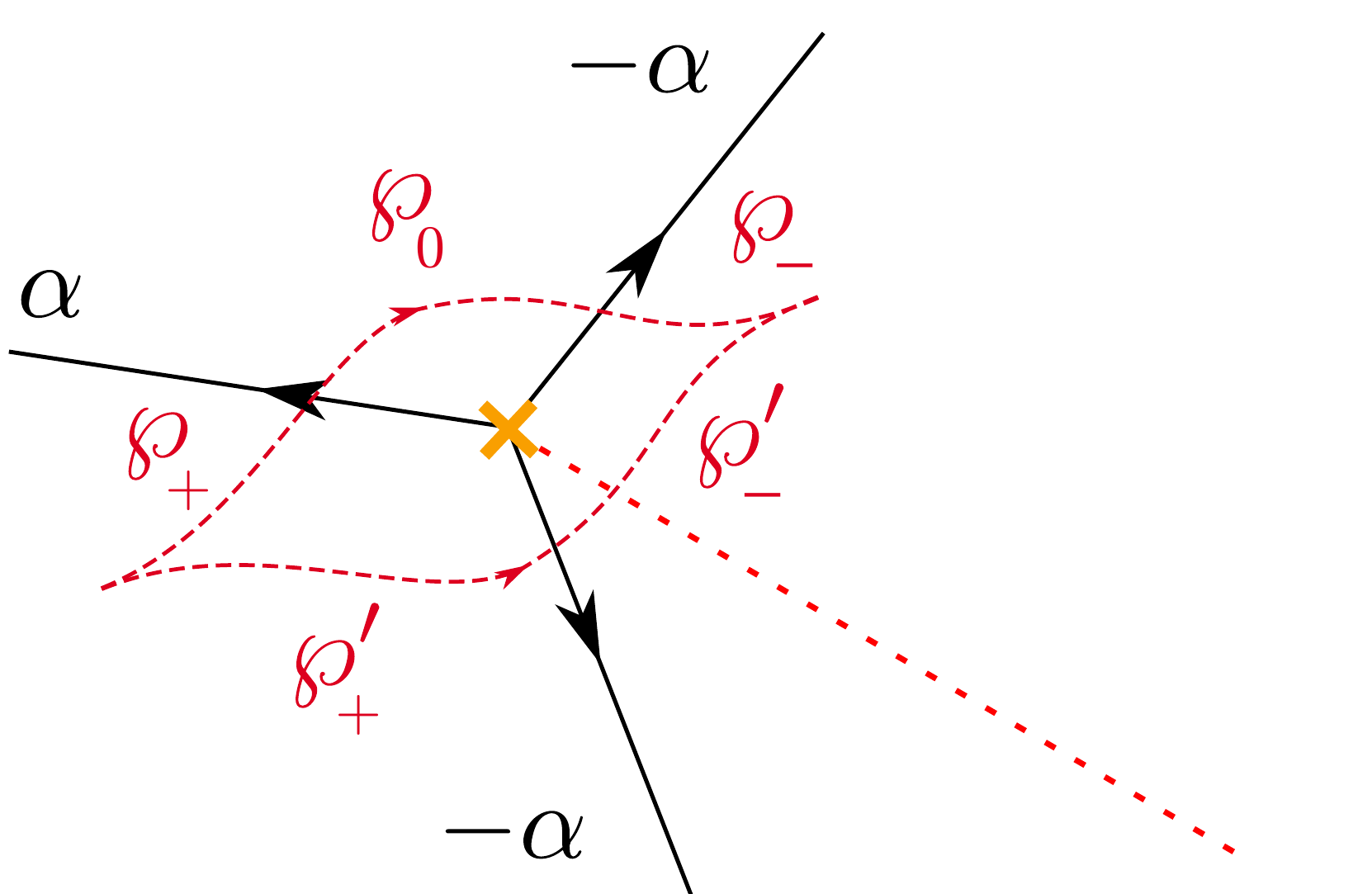}
\caption{Two paths $\wp,\wp'$ in the same relative homotopy class, across a branch point.}
\label{fig:branch-point-homotopy}
\end{center}
\end{figure}

\be
\begin{split}
	F(\wp) & = %
	D(\wp_{+})\,%
	\left(1+\sum_{(ij)\in\CP_{\alpha}}\sum_{a\in\Gamma_{ij}}\mu(a)X_{a}\right)\,%
	D(\wp_{0})\,%
	\left(1+\sum_{(i'j')\in\CP_{-\alpha}}\sum_{b\in\Gamma_{i'j'}}\mu(b)X_{b}\right)\,%
	D(\wp_{-})\\
	& = D(\wp) %
	+ \sum_{(ij)\in\CP_{\alpha}}\sum_{a\in\Gamma_{ij}}\mu(a)X_{\wp_{+}^{(i)}a\wp_{0}^{(j)}\wp_{-}^{(j)}} %
	+ \sum_{(i'j')\in\CP_{-\alpha}}\sum_{b\in\Gamma_{i'j'}}\mu(b)X_{\wp_{+}^{(i')}\wp_{0}^{(i')}b\wp_{-}^{(j')}} \\%
	& + \sum_{(ij)\in\CP_{\alpha}}\sum_{a\in\Gamma_{ij}} \sum_{(i'j')\in\CP_{-\alpha}}\sum_{b\in\Gamma_{i'j'}} \mu(a)\mu(b)  X_{\wp_{+}^{(i)}a\wp_{0}^{(j)} b \wp_{-}^{(j')}} \delta_{ji'}
\end{split}
\ee
in the last line the $\delta_{ji'}$ is enforced by the $X_{a}$ algebra, and together with our considerations about minuscule representations, it also entails that $j'\equiv i$. So the last piece simplifies to
\be
\begin{split}
	& \sum_{(ij)\in\CP_{\alpha}}\sum_{a\in\Gamma_{ij}} \sum_{b\in\Gamma_{ji}} \mu(a)\mu(b)  X_{\wp_{+}^{(i)}a\wp_{0}^{(j)} b \wp_{-}^{(i)}}
\end{split}
\ee
On the other hand
\be
\begin{split}
	F(\wp') & = %
	D(\wp'_{+})\,%
	\left(1+\sum_{(i''j'')\in\CP_{-\alpha}}\sum_{c\in\Gamma_{i''j''}}\mu(c)X_{c}\right)\,%
	\widetilde D(\wp'_{-}) \\
	& = 	D(\wp'_{+})\,%
	\left(1+\sum_{(i''j'')\in\CP_{-\alpha}}\sum_{c\in\Gamma_{i''j''}}\mu(c)X_{c}\right)\,%
	\left(\sum_{i=1}^{d} X_{{\wp'_{-}}^{(i\bar i)}} \right) \\
	& = 	\sum_{i=1}^{d} X_{{\wp'}^{(i\bar i)}} \,%
	+ \sum_{(i''j'')\in\CP_{-\alpha}}\sum_{c\in\Gamma_{i''j''}}\mu(c)X_{{\wp'_{+}}^{(i'')}c{\wp'_{-}}^{(j'',\overline{j''})}}
\end{split}
\ee
but notice that, because of the peculiar structure of $\CP_{-\alpha}$ for minuscule representations, we must have $i''\equiv \overline{j''}$. So we rewrite the last line as
\be
	F(\wp') =\sum_{i=1}^{d} X_{{\wp'}^{(i\bar i)}} \,%
	+ \sum_{(i''j'')\in\CP_{-\alpha}}\sum_{c\in\Gamma_{i''j''}}\mu(c)X_{{\wp'_{+}}^{(i'')}c{\wp'_{-}}^{(j''i'')}}
\ee
We now turn to examine the equation $F(\wp)=F(\wp')$. 

\medskip

\noindent{\bf\underline{Off-diagonal piece}}
We focus on two distinct pieces: first $(ij)$ paths with $i\in\CP_{\alpha}^{-}$:
\be
	\sum_{(ij)\in\CP_{\alpha}}\sum_{a\in\Gamma_{ij}}\mu(a)X_{\wp_{+}^{(i)}a\wp_{0}^{(j)}\wp_{-}^{(j)}} %
	= \sum_{i\in\CP_{\alpha}^{-}} X_{{\wp'}^{(i\bar i)}} 
\ee
again, for minuscule $\rho$ we precisely must have $j=\bar i$ on the LHS of the above equation. So we can recast it as
\be
	\sum_{(i\bar i)\in\CP_{\alpha}}\sum_{a\in\Gamma_{i\bar i}}\mu(a)X_{\wp_{+}^{(i)}a\wp_{0}^{(\bar i)}\wp_{-}^{(\bar i)}} %
	= \sum_{i\in\CP_{\alpha}^{-}} X_{{\wp'}^{(i\bar i)}} 
\ee
moreover, and this is very important, for minuscule reps $\CP_{\alpha}$ is 1-1 with $\CP_{\alpha}^{\pm}$, so we can further turn the above into
\be
	\sum_{(i)\in\CP_{\alpha}^{-}}\sum_{a\in\Gamma_{i\bar i}}\mu(a)X_{\wp_{+}^{(i)}a\wp_{0}^{(\bar i)}\wp_{-}^{(\bar i)}} %
	= \sum_{i\in\CP_{\alpha}^{-}} X_{{\wp'}^{(i\bar i)}} 
\ee
It is now clear that $\mu(a)=1$ for each $i\bar i$ simpleton, and zero for everybody else.

Analogous considerations fix $\mu(b)$ in the same way.\\

\medskip

\noindent{\bf\underline{Diagonal piece}}
This reads
\be
\begin{split}
	& D(\wp) %
	+ \sum_{(ij)\in\CP_{\alpha}}\sum_{a\in\Gamma_{ij}} \sum_{b\in\Gamma_{ji}} \mu(a)\mu(b)  X_{\wp_{+}^{(i)}a\wp_{0}^{(j)} b \wp_{-}^{(i)}} \\
	&\qquad \qquad \qquad = \sum_{i | \weight_{i}\in\Lambda_{\rho}^{(0)}} X_{{\wp'}^{(i\bar i)}} \,%
	+ \sum_{(i''j'')\in\CP_{-\alpha}}\sum_{c\in\Gamma_{i''j''}}\mu(c)X_{{\wp'_{+}}^{(i'')}c{\wp'_{-}}^{(j''i'')}}
\end{split}
\ee
where it is understood that if $\weight_{i}\in\Lambda_{\rho}^{(0)}$ then $i=\bar i$. Let us in fact split the above equation: first consider the piece where $i$ is such that $\weight_{i}\in\Lambda_{\rho}^{(0)}$. Then equality is trivially satisfied. So we just need to focus on the other piece, namely when $i\in \CP_{\alpha}^{\pm}$. This reads
\be
\begin{split}
	& \sum_{i\in\CP_{\alpha}^{+}\sqcup\CP_{\alpha}^{-}} X_{\wp^{(i)}}%
	+ \sum_{(ij)\in\CP_{\alpha}}\sum_{a\in\Gamma_{ij}} \sum_{b\in\Gamma_{ji}} \mu(a)\mu(b)  X_{\wp_{+}^{(i)}a\wp_{0}^{(j)} b \wp_{-}^{(i)}} \\
	& \qquad \qquad \qquad= \sum_{(i''j'')\in\CP_{-\alpha}}\sum_{c\in\Gamma_{i''j''}}\mu(c)X_{{\wp'_{+}}^{(i'')}c{\wp'_{-}}^{(j''i'')}}
\end{split}
\ee
now note that for each $i\in\CP_{\alpha}^{-}$ we have on the LHS both a term $X_{\wp^{(i)}}$ and a term 
\be
\sum_{a\in\Gamma_{ij}} \sum_{b\in\Gamma_{ji}} \mu(a)\mu(b)  X_{\wp_{+}^{(i)}a\wp_{0}^{(j)} b \wp_{-}^{(i)}}\,.
\ee 
Since we already fixed the simpleton degeneracies $\mu(a),\,\mu(b)$  the sum of these really reads
\be
	X_{\wp^{(i)}} + X_{\wp_{+}^{(i)}a\wp_{0}^{(j)} b \wp_{-}^{(i)}} = 0
\ee
where the cancellation is afforded by the relative minus sign coming from the extra winding of the concatenation of simpletons around the circle fiber (cf. Figure 21 of \cite{Gaiotto:2012rg}, the two diagrams in the third column).
So we are left with 
\be
\begin{split}
	& \sum_{i\in\CP_{\alpha}^{+}} X_{\wp^{(i)}}%
	= \sum_{(i''j'')\in\CP_{-\alpha}}\sum_{c\in\Gamma_{i''j''}}\mu(c)X_{{\wp'_{+}}^{(i'')}c{\wp'_{-}}^{(j''i'')}}
\end{split}	
\ee
this is consistent, since $i''$ must clearly belong to $\CP_{\alpha}^{+}\equiv\CP_{-\alpha}^{-}$. So this last requirement fixes the simpleton degeneracies for the last wall: $\mu(c)=1$ for the simpleton in $\Gamma_{i'' j''}$, for every $(i''j'')\in\CP_{-\alpha}$, and zero for every other class.

\section{Joint polygons of minuscule representations}
\label{app:no-hexagons}
Here we show that ``hexagonal'' joints do not occur in $\gA_{n}, \gD_{n}$ minuscule representations. For minuscule representations of $\gE_n$ we prove this property by brute force: there are only three cases to consider, they are the two 27-dimensional irreps of $E_{6}$, and the $56$-dimensional irrep of $E_{7}$, and the results of their analyses can be found at \foothref{http://het-math2.physics.rutgers.edu/loom/E6_E7_data}{this link}.
More precisely, given any minuscule weight system $\Lambda_{\rho}$ and any two roots $\alpha,\beta$ with $\alpha\measuredangle\beta=2\pi/3$, we want to show that $\Lambda_{\rho}\big|_{\weight^{\perp}}$ is always a single point (the trivial representation ${\bf 1}$ of $\gA_{2}$) or a triangle (the ${\bf 3}$ or the ${\bf \overline 3}$ of $\gA_{2}$) in the $\IR^{2}_{\weight^{\perp}}$ plane.

Given a generic weight $\weight\in\Lambda_{\rho}$, we can express its components with respect to the decomposition $\ft^{*}\simeq (\alpha\IR\oplus\beta\IR)\oplus \IR^{r-2}$
\be
\begin{split}
	\weight^{||} &= c_{\alpha}\alpha+c_{\beta}\beta \\
	\weight^{\perp} &= \weight - \weight^{||} \\
	c_{\alpha} & = \frac{(\weight\cdot\alpha)\beta^{2}-(\weight\cdot\beta)(\alpha\cdot\beta)}{\alpha^{2}\beta^{2}-(\alpha\cdot\beta)^{2}} \\
	c_{\beta} & = (\alpha\leftrightarrow\beta) 
\end{split}
\ee
We recall then that the definition of the weight sub-system $\Lambda_{\rho}\big|_{\weight^{\perp}}$ is simply
\be
	\{\weight_{i}\in\Lambda_{\rho} \,|\, \weight_{i}^{\perp}=\weight^{\perp}\}
\ee
\subsection{\texorpdfstring{$\gA_{n}$}{A\_n}}
For $\gA_{n}$ we may choose to work in $\IR^{n+1}$ with orthonormal basis vectors denoted by $e_{i}\ (i=1,\dots,n+1)$.
Then the positive roots are 
\be
	\Phi^{+} = \{e_{i}-e_{j}\}_{i<j},
\ee
and the simple roots are
\be
	\alpha_{i} = e_{i}-e_{i+1},\quad i=1,\dots,n\,.
\ee
The fundamental weights are
\be
	\weight_{i} = \sum_{j=1}^{i} e_{i}
\ee
satisfying 
\be
\begin{split}
	& \alpha_{i}\cdot\alpha_{j} = \left\{%
	\begin{array}{lr}
	2 & i=j, \\
	-1 & i=j\pm1, \\
	0 & \text{otherwise}.
	\end{array}
	\right. \\
	& \weight_{i}\cdot\alpha_{j} = \delta_{ij}.
\end{split}
\ee
The minuscule representations are the $p$-th antisymmetric powers of the defining representation. 
Given that the weights of the defining representation are simply $\{e_{i}\}$, the highest weight of a minuscule representation are
\be
	\weight = \nu_{1}+\nu_{2}+\dots+\nu_{p}=e_{1}+\dots+e_{p}\equiv\weight_{p}
\ee
thus, the minuscule representations are precisely the fundamental representations.

Fix a $p$, then the weight system of $\rho_{p}$ is 
\be
	\Lambda_{\rho_{p}} = \{ \nu_{i_{1}}+\nu_{i_{2}}+\dots+ \nu_{i_{p}}\,|\, i_{1}<i_{2}\dots<i_{p} \},
\ee
whose dimension is $\binom{n+1}{p}$.

Pick two roots $\alpha$, $\beta$ such that $\alpha\measuredangle\beta=2\pi/3$, which requires
\be
	\alpha = e_{i}-e_{j}\qquad \beta = e_{j}-e_{k}\,.
\ee
The Weyl group $S_{n}$ acts by permuting the $e_{i}$'s so it doesn't matter which $i,\ j,\ k$ we choose in particular, and we choose $i = 1$, $j = 2$, $k = 3$. Then we have $\alpha=\alpha_{1},\,\beta=\alpha_{2}$, the first two simple roots.

Next we want to choose a generic weight $\weight_{(i)}$  where $(i)=\{i_{1},\dots,i_{p}\}$, and compute its projection.
We have to distinguish among several cases:
\be
\begin{split}
& \weight_{(i)}\cdot\alpha = %
\left\{\begin{array}{lr}
	0 & 1,2\notin (i)\\
	1 & 1\in(i),\,2\notin(i)\\
	-1 & 1\notin(i),\, 2\in(i) \\
	0 & 1,2\in(i)
\end{array} \right.\\
& \weight_{(i)}\cdot\beta = %
\left\{\begin{array}{lr}
	0 & 2,3\notin (i)\\
	1 & 2\in(i),\,3\notin(i)\\
	-1 & 2\notin(i),\, 3\in(i) \\
	0 & 2,3\in(i)
\end{array}\right.
\end{split}
\ee
So all those $\weight_{(i)}$ such that $(i)$ does not contain $1,2,3$ are orthogonal to both $\alpha,\beta$; there are $\binom{n-2}{p}$ such weights.
Likewise all those $\weight_{(i)}$ such that $(i)$ does contain $1,2,3$ are orthogonal to both $\alpha,\beta$; there are $\binom{n-2}{p-3}$ such weights.
All of these $\binom{n-2}{p}+\binom{n-2}{p-3}$ weights have $\weight_{(i)}=\weight_{(i)}^{\perp}$ therefore $\Lambda_{\rho}\big|_{\weight^{\perp}}$ is a single point, and no solitons begin or end on them.

The remaining cases are summarized in the following table
\be\label{eq:W2-orbits-A-type}
\begin{array}{c|cccccc}
(i)\text{ contains exclusively} & 1 & 2 & 3 & 1,2 & 2,3 & 1,3 \\
\hline\hline
\weight_{(i)}\cdot\alpha & 1 & -1 & 0 & 0 & -1 & 1 \\
\weight_{(i)}\cdot\beta & 0 & 1 & -1 & 1 & 0 & -1 \\
\hline
\text{\# cases} &\binom{n-2}{p-1}&\binom{n-2}{p-1}&\binom{n-2}{p-1}&\binom{n-2}{p-2}&\binom{n-2}{p-2} & \binom{n-2}{p-2}\\
\hline
c_{\alpha} & \frac{2}{3} & -\frac{1}{3} & -\frac{1}{3} & \frac{1}{3} & -\frac{2}{3} & \frac{1}{3} \\
c_{\beta} &  \frac{1}{3} & \frac{1}{3} & -\frac{2}{3} & \frac{2}{3} & -\frac{1}{3} & -\frac{1}{3} \\
\hline
c_{\alpha} (1,-\sqrt{3}) + c_{\beta}(1,\sqrt{3}) & \left(1,-\frac{1}{\sqrt{3}}\right) & \left(0,\frac{2}{\sqrt{3}}\right) & \left(-1,-\frac{1}{\sqrt{3}}\right) & \left(1,\frac{1}{\sqrt{3}}\right) & \left(-1,\frac{1}{\sqrt{3}}\right) & \left(0,-\frac{2}{\sqrt{3}}\right) 
\end{array}
\ee
The last line of the table shows the projection of $\weight_{(i)}$ to the plane $\IR^{2}_{\weight^{\perp}_{(i)}}$ in a basis where $\alpha=(1,-\sqrt{3})$ and $\beta=(1,\sqrt{3})$.
We see that the first three columns correspond to the three weights of the antifundamental (Dynkin labels $(0,1)$) of $\gA_{2}$, and there are $\binom{n-2}{p-1}$ copies of it (possibly at different $\weight^{\perp}$).
The last three columns correspond to the three weights of the fundamental (Dynkin labels $(1,0)$) of $\gA_{2}$, and there are $\binom{n-2}{p-2}$ copies of it (possibly at different $\weight^{\perp}$).

Overall we have shown that only triangular joint diagrams appear for minuscule representations of $\gA_{n}$.

\subsection{\texorpdfstring{$\gD_{n}$}{D\_n}}
Here $\ft^{*}\simeq \IR^{n}$. The fundamental weights are
\be
	\weight_{1} = e_{1},\,\quad \weight_{2} = e_{1}+e_{2}\quad\dots\quad  \weight_{n}=\frac{1}{2}(e_{1}+\dots+e_{n-1}-e_{n}),\quad  \weight_{n}=\frac{1}{2}(e_{1}+\dots+e_{n-1}+e_{n})
\ee
The simple roots are
\begin{align}
	\alpha_{i} &=e_{i}-e_{i+1}, \quad i=1,\dots, n-1,\\
	\alpha_{n} &=e_{n-1}+e_{n}.
\end{align}
All roots are of the form
\be
	\pm e_{j} \pm e_{k}, \quad j\neq k,
\ee
therefore any two roots $\alpha$, $\beta$ with $\alpha\measuredangle\beta=2\pi/3$ must be of the form
\be
	\alpha = \pm e_{i} - e_{j}  \equiv  \sigma_{\alpha}\, e_{i} - e_{j}, \quad \beta = \pm e_{k} + e_{j} \equiv  \sigma_{\beta}\, e_{k} + e_{j}, 
\ee
with $i$, $j$, $k$ all different.
There are three minuscule representations: the vector $\rho_{1}$ and the two spinors $\rho_{n-1}$, $\rho_{n}$.

\subsubsection*{Vector $\rho_{1}$}
The $n$ weights are
\be
	\nu_{l}^{\pm}=\pm e_{l}	\qquad l=1,\dots, n
\ee
There are four possible cases for the subset of the weights in the $\alpha,\beta$ planes
\be\label{eq:W2-orbits-D-type-vector}
	\begin{array}{c|ccc|c}
		 & l=i & l=j & l=k & l\neq i,j,k \\
		 \hline\hline
		 \nu_{l}^{\sigma}\cdot\alpha & \sigma \sigma_{\alpha} & -\sigma & 0 & 0 \\
		 \nu_{l}^{\sigma}\cdot\beta & 0 & \sigma & \sigma \sigma_{\beta}  & 0 \\
		 \hline
		 c_{\alpha} & 2\sigma\sigma_{\alpha}/3 & -\sigma/3 & \sigma\sigma_{\beta}/3 & 0 \\
		 c_{\beta} & \sigma\sigma_{\alpha}/3 & \sigma/3 & 2\sigma\sigma_{\beta}/3 & 0 \\
		 \hline
		c_{\alpha} (1,-\sqrt{3}) + c_{\beta}(1,\sqrt{3}) %
		& \sigma\sigma_{\alpha}(1,-1/\sqrt{3}) & (0, 2/\sqrt{3}) & \sigma\sigma_{\beta}(1,1/\sqrt{3}) & (0,0)
	\end{array}
\ee
since the $\sigma,\sigma_{\alpha},\sigma_{\beta}$ are all $\pm1$, we conclude that all the projections of the weights to $\IR^{2}_{\weight^{\perp}}$ planes fall indeed into either the fundamental or the antifundamental of $\gA_{2}$.

\subsubsection*{Spinor $\rho_{n-1}$}
The $2^{n-1}$ weights are
\be
	\frac{1}{2}\sum_{j=1}^{n}\eta_{j} e_{j}, \quad\eta_{j}=\pm1, \quad \prod_{j=1}^{n}\eta_{j}=-1.
\ee
Denoting $\weight_{(\eta)}$ a weight corresponding to $(\eta)=\{\eta_{i}\}_{i=1}^{n}$, we can easily compute
\be
\begin{split}
	& \weight_{(\eta)}\cdot\alpha = \frac{1}{2}(\sigma_{\alpha}\eta_{i}-\eta_{j}) \\
	& \weight_{(\eta)}\cdot\beta = \frac{1}{2}(\sigma_{\beta}\eta_{k}+\eta_{j}) 
\end{split}
\ee
hence
\be
\begin{split}
	& c_{\alpha} = \frac{1}{6}(2\sigma_{\alpha}\eta_{i}+\sigma_{\beta}\eta_{j}-\eta_{j}) \\
	& c_{\beta} = \frac{1}{6}(2\sigma_{\beta}\eta_{k}+\sigma_{\alpha}\eta_{i}+\eta_{j}) 
\end{split}
\ee
and therefore the projections of the $\weight_{(\eta)}$ to their $\IR^{2}_{\weight_{(\eta)}^{(\perp)}}$ planes are
\be
	c_{\alpha}(1,-\sqrt{3})+c_{\beta}(1,\sqrt{3}) = \left( \frac{1}{2}(\sigma_{\alpha}\eta_{i}+\sigma_{\beta}\eta_{k}),\,\frac{1}{2\sqrt{3}}(\sigma_{\beta}\eta_{k}-\sigma_{\alpha}\eta_{i} +2\eta_{j}) \right)
\ee
accounting for all possible signs of $\sigma_{\alpha}$, $\sigma_{\beta}$, $\eta_{i}$, $\eta_{j}$, $\eta_{k}$ we find exactly seven possibilities:
\be\label{eq:W2-orbits-D-type-spinor}
\begin{array}{l}
 (0,0) \\
 \left(1,\frac{1}{\sqrt{3}}\right) 
 \left(0,-\frac{2}{\sqrt{3}}\right) 
 \left(-1,\frac{1}{\sqrt{3}}\right) \\
  \left(1,-\frac{1}{\sqrt{3}}\right)
 \left(-1,-\frac{1}{\sqrt{3}}\right)
 \left(0,\frac{2}{\sqrt{3}}\right) 
\end{array}
\ee
where we recognize the trivial representation of $\gA_{2}$ as well as the $(1,0)$ and the $(0,1)$.

\subsubsection*{Spinor $\rho_{n}$}
The $2^{n-1}$ weights are
\be
	\frac{1}{2}\sum_{j=1}^{n}\eta_{j} e_{j}, \quad \eta_{j}=\pm1, \quad \prod_{j=1}^{n}\eta_{j}=1.
\ee
The argument goes the same as for $\rho_{n-1}$.

\section{\texorpdfstring{$k_\rho$}{k\_rho} and the Cartan matrix}
\label{sec:k_rho_cartan}
In this section we prove the identity
\be\label{eq:cartan-identity}
	\tilde C_{\alpha_i,\alpha_j} :=  \sum_{\nu\in\Lambda_\rho} (\alpha_i \cdot\nu)(\nu \cdot\alpha_j) = k_\rho \, C_{\alpha_i,\alpha_j}
\ee
where $\rho$ is a minuscule representation for $\fg$ of $\gA,\gD,\gE$ type, while $\alpha_i,\alpha_j$ are any two simple roots, and $C_{\alpha_i,\alpha_j}$ is the Cartan matrix element for them.

With our conventions the Cartan matrix is given directly by the inner products of simple roots as $C_{\alpha_i,\alpha_j} = \alpha_i \cdot\alpha_j$. 
To compute the left hand side of (\ref{eq:cartan-identity}), let us start from the diagonal elements i.e.\ when $\alpha_i = \alpha_j$. In this case $(\nu\cdot\alpha_i)^2$ is $1$ for all $\nu\in\CP_{\alpha_i}^{\pm}$ and zero for all other weights ($\CP_\alpha^\pm$ is defined in (\ref{eq:soliton-charge-pairs-pm})). Therefore for diagonal elements we find $2 |\CP_{\alpha_i}|$, we take this to define $k_\rho \equiv |\CP_{\alpha_i}|$.

For off-diagonal elements, let us distinguish between two cases: $\alpha_i\measuredangle\alpha_j = \pi/2$ or $2\pi/3$.
In the first case, we can split the sum over $\Lambda_\rho$ into slices as shown in Figure \ref{fig:angle-90}. The first four types of slice give no contribution because $ (\alpha_i \cdot\nu)(\nu \cdot\alpha_j) = 0$ for each $\nu$.
The last type of slice, resembling a square, gives instead four contributions which sum up to zero.
So overall the left-hand side of (\ref{eq:cartan-identity}) is zero, and this is also true of the Cartan matrix element $C_{\alpha_i,\alpha_j}$ on the right-hand side.
The last case that remains to check is when  $\alpha_i\measuredangle\alpha_j = 2\pi/3$. Now the sum over $\Lambda_\rho$ may be split into slices of the type shown in Figure \ref{fig:A2-joint} (there are also slices made of a single point, but these contain weights that are orthogonal to both $\alpha_i,\alpha_j$ and therefore do not contribute to the sum).
In the notation of Figure \ref{fig:A2-joint} let us identify $\alpha,\beta$ with $\alpha_i,\alpha_j$ respectively. Then of the three weights shown in the left frame, only $\weight_2$ gives a non-zero contribution, because $\weight_3\cdot\alpha = 0 = \weight_1\cdot\beta$. Similarly for the weights depicted in the right frame, only the weight in the bottom-right will give a non-zero contribution. 
We find that, for each element of  $\CP_{\alpha_i}$ there is a contribution of $-1$, so overall we find $\tilde C_{\alpha_i,\alpha_j} = - k_\rho$ which again matches the right hand side of (\ref{eq:cartan-identity}).

This concludes the proof of (\ref{eq:cartan-identity}), and also gives an intepretation of $k_\rho$ as 
\be
	k_\rho = |\CP_\alpha|	\,,
\ee
which does not depend on the choice of root $\alpha$, but only on $\rho$. It is worthwhile to stress that the proof involves special properties of minuscule representaitons, and the identity is not expected to hold for non-minuscule ones.

\section{Spectral curves for \texorpdfstring{$\mathfrak{g} = \mathrm{A}_{N-1}, \mathrm{D}_{N}$}{g = A\_N-1, D\_N}, and \texorpdfstring{$\mathrm{E}_N$}{E\_N}}
\label{sec:spectral_curves_for_g}

In \cite{Martinec:1995by} the Seiberg-Witten curve of a general gauge group $G$ is described in terms of a spectral curve in a general representation $\rho$ of $G$. Its modern description is given in \cite{Keller:2011ek} like the following. We compactify 6d $\cN=(2,0)$ theory of type $\mathfrak{g}$ on $C$ parametrized by $z$. The 6d theory has world-volume field $\phi_{w_i}(z)$ on $C$, transforming as degree $w_i$ multi-differentials, where $w_i$ is the degree of the Casimir invariants of $\mathfrak{g}$.
\begin{itemize}
	\item $\mathfrak{g} = \mathrm{A}_{N-1}$: $h^{\vee} = N$, $w_i = 2, 3, \ldots, N-1, N$.
	\item $\mathfrak{g} = \mathrm{D}_{N}$: $h^{\vee} = 2N-2$, $w_i = 2, 4, \ldots, 2N-2; N$.
	\item $\mathfrak{g} = \mathrm{E}_6$: $h^{\vee} = 12$, $w_i = 2,5,6,8,9,12$.
	\item $\mathfrak{g} = \mathrm{E}_7$: $h^{\vee} = 18$, $w_i = 2,6,8,10,12,14,18$.
\end{itemize}
\begin{align}\label{eq:generic-curves}
	\phi_{w_i}(z) &= f_{w_i}(z) \left(\mathrm{d}z\right)^{w_i}.
\end{align}

When $(\rho, V)$ is the vector representation, the spectral covers are 
\begin{itemize}

	\item $\mathfrak{g} = \mathrm{A}_{N-1}$: $\Sigma_\rho = \lambda^N + \sum_{w_i} \phi_{w_i} \lambda^{N-w_i}$,

	\item $\mathfrak{g} = \mathrm{D}_{N}$ 
		: $\Sigma_\rho = \lambda^{2N} + \sum_{w_i \neq N} \phi_{w_i} \lambda^{2N-w_i} + \phi_{2N},\ 
		\phi_{2N} = (\tilde{\phi}_{N})^2$,	

	\item $\mathfrak{g} = \mathrm{E}_6$ \cite{Lerche:1996an}: $\Sigma_\rho = \frac{1}{2} \lambda^3 \phi_{12}^2 - q_1 \phi_{12} + q_2 = 0$, where
        \begin{align}
        	q_1 &= 270 \lambda^{15} + 342 \phi_2 \lambda^{13} + 162 \phi_2^2 \lambda^{11} - 252 \phi_5 \lambda^{10}  +  (26 \phi_2^3 + 18 \phi_6) \lambda^9 \cr
        		&\qquad -162 \phi_2 \phi_5 \lambda^8 + (6 \phi_2 \phi_6 - 27 \phi_8) \lambda^7 - (30 \phi_2^2 \phi_5 - 36 \phi_9) \lambda^6 \cr
        		&\qquad + (27 \phi_5^2 - 9 \phi_2 \phi_8) \lambda^5 -  (3 \phi_5 \phi_6 - 6 \phi_2 \phi_9) \lambda^4 - 3 \phi_2 \phi_5^2 \lambda^3 \cr
        		&\qquad - 3 \phi_5 \phi_9 \lambda - \phi_5^3, \cr
        	q_2 &= \frac{1}{2 \lambda^3}  \left(q_1^2 - p_1^2 p_2 \right), \cr
        	p_1 &= 78 \lambda^{10} + 60 \phi_2 \lambda^8 + 14 \phi_2^2 \lambda^6 - 33 \phi_5 \lambda^5 + 2 \phi_6 \lambda^4 - 5 \phi_2 \phi_5 \lambda^3 - \phi_8 \lambda^2 \cr
        		&\qquad - \phi_9 \lambda - \phi_5^2 \cr
        	p_2 &= 12 \lambda^{10} + 12 \phi_2 \lambda^8 + 4 \phi_2^2 \lambda^6 - 12 \phi_5 \lambda^5 + \phi_6 \lambda^4 - 4 \phi_2 \phi_5 \lambda^3 - 2 \phi_8 \lambda^2 \cr
        		&\qquad + 4 \phi_9 \lambda  + \phi_5^2. \nonumber
        \end{align}

	\item $\mathfrak{g} = \mathrm{E}_7$:  
	$\Sigma_\rho = -\frac{1}{3^6} \lambda^2 \left(\phi_{18}^3+ A_2\, \phi_{18}^2 + 
	A_1\, \phi_{18} +A_0\right)$,\footnote{This is obtained from the characteristic polynomical in Appendix A.1. of \cite{Eguchi:2001fm}. Note that there is a typo in the expression for $A_2$ in the reference.} where
        \begin{align*}
            A_2 &= \frac{9}{16\, \lambda^2}\, (6\, q\, p_1-3\, p_3)  \,,  \\
            A_1 &= \left(\frac{9}{16\, \lambda^2}\right)^2 \, (9\, q^2\, p_1^2-6\, r\, p_1\, p_2
            	-12\, q\, p_1\, p_3+3\, q\, p_2^2+3\, p_3^2)  \,, \\
            A_0 &=  -\left(\frac{9}{16\, \lambda^2}\right)^3 \, (4\, r^2\, p_1^3 +6\, q\, r\, p_1^2\, p_2
                +9\, q^2\, p_1^2\, p_3-6\, r\, p_1\, p_2\, p_3 -6\, q\, p_1\, p_3^2   \\
            	&\hspace{90pt}  + 2\, r\, p_2^3+3\, q\, p_2^2\, p_3+p_3^3)  \,,\\
            p_1 &= 1596\, \lambda^{10}+88\,\phi_2^2\, \lambda^6+7\, \phi_6\, \lambda^4+660\, \phi_2\, \lambda^8
                      -2\,\phi_8\, \lambda^2 -\phi_{10}\,,   \\
            p_2 &= 16872\, \lambda^{15}+11368\, \phi_2\, \lambda^{13}+2952\, \phi_2^2\, \lambda^{11}
                      +(176\, \phi_6+264\, \phi_2^3)\, \lambda^9  \\
                &\qquad +\left(-100\, \phi_8+\frac{100}{3}\phi_2\, \phi_6\right) \lambda^7
                 +\left(-\frac{68}{3}\phi_2\, \phi_8+68\, \phi_{10}\right) \lambda^5  \\
                &\qquad +\left(\frac{2}{9}\phi_6^2 -\frac{4}{3}\phi_{12}\right) \lambda^3
                 +\left(\frac{218}{8613}\phi_2\, \phi_6^2-\frac{4}{3}\phi_{14}\right) \lambda   \,,  \\
            p_3 &= 44560\, \lambda^{20}+41568\, \phi_2\lambda^{18}+16080\, \phi_2^2\, \lambda^{16}
                 +\left(2880\, \phi_2^3+\frac{2216}{3}\phi_6\right) \lambda^{14}  \\
                &\qquad +\left(312\, \phi_2\, \phi_6+192\, \phi_2^4-\frac{1552}{3}\phi_8\right) \lambda^{12}
                	+\left(32\, \phi_2^2\, \phi_6-40\, \phi_2\, \phi_{10}-\frac{64}{3}\phi_{12}
                	+\frac{11}{3}\phi_6^2\right) \lambda^8   \\
                &\qquad +\left(-\frac{416}{3}\phi_{14}-16\, \phi_2^2\, \phi_{10}-\frac{4}{9}\phi_6\, \phi_8
                	-\frac{32}{9}\phi_2\, \phi_{12}+\frac{27776}{8613}\phi_2\, \phi_6^2\right) \lambda^6 \\
                &\qquad +\left(\frac{3488}{8613}\phi_2^2\, \phi_6^2+\frac{4}{9}\phi_8^2
                	-\frac{64}{3}\phi_2\, \phi_{14}-\frac{2}{3}\phi_6\, \phi_{10}\right) \lambda^4
                    +\frac{4}{3}\phi_8\, \phi_{10}\, \lambda^2+\phi_{10}^2 \,, \\
             q 	&= -28\, \lambda^{10}-\frac{44}{3} \phi_2\, \lambda^8-\frac{8}{3} \phi_2^2\, \lambda^6
             	-\frac{1}{3} \phi_6\, \lambda^4+\frac{2}{9} \phi_8\, \lambda^2-\frac{1}{3} \phi_{10} \,, \\
             r 	&= 148\, \lambda^{15}+116\, \phi_2\, \lambda^{13}+36\, \phi_2^2\, \lambda^{11}
             	+ \left(4\, \phi_2^3+\frac{8}{3} \phi_6\right) \lambda^9 
				+ \left(\frac{2}{3} \phi_2\phi_6-2\, \phi_8 \right) \lambda^7  \\
             	&\qquad + \left(2\, \phi_{10}-\frac{2}{3} \phi_2\phi_8 \right) \lambda^5
            	+\left(\frac{1}{81} \phi_6^2-\frac{2}{27} \phi_{12}\right) \lambda^3
                +\left(\frac{109}{8613} \phi_2\phi_6^2-\frac{2}{3} \phi_{14}\right) \lambda \,.
        \end{align*}	
\end{itemize}

The $n$-th fundamental cover of $\mathfrak{g} = \mathrm{A}_{N-1}$ is of order $\binom{N}{n}$ in $\lambda$. The spectral cover of $\mathfrak{g} = \mathrm{D}_{N}$ in the spinor representation is of order $2^{N-1}$ in $\lambda$. The coefficients of these covers are functions of $\phi_{w_i}$, which can be found case by case by considering the relations between the weights in the first fundamental representation and those in the representation of the cover, but there is no known general expressions for such covers.

\section{Summary of conventions for \texorpdfstring{$\SO(6)$}{SO(6)} SYM}\label{app:so6}

\subsubsection*{Roots and weights}
The algebra is rank 3, and in the standard choice of basis $e_{1}, e_{2}, e_{3}$ for $\IR^{3}\simeq\mathfrak{t^{*}}$, the roots are
\be
	\Phi = \{\pm e_{i}\pm e_{j}, i\neq j\}
\ee
with signs chosen independently.
The positive roots are chosen to be
\be
	\Phi^{+} = \{e_{i}\pm e_{j}, i\neq j\},
\ee
while simple roots are 
\be
	\alpha_{1} = e_{1}-e_{2}\,,\qquad %
	\alpha_{2} = e_{2}-e_{3}\,,\qquad %
	\alpha_{3} = e_{2}+e_{3}\,.
\ee
with Cartan matrix\footnote{The Cartan matrix would actually be $2\frac{\alpha_{i}\cdot\alpha_{j}}{\alpha_{j}\cdot\alpha_{j}}$, but for simply laced groups $\alpha_{j}^{2}=2$. We use conventions: $e_{i}\cdot e_{j}=\delta_{ij}$.} and Dynkin diagram are
\be
	\alpha_{i}\cdot\alpha_{j} = \left(%
	\begin{array}{ccc}
	2 & -1 & -1 \\
	-1 & 2 & 0 \\
	-1 & 0 & 2
	\end{array}%
	\right)_{i,j} %
	\qquad %
	\includegraphics[width=0.15\textwidth]{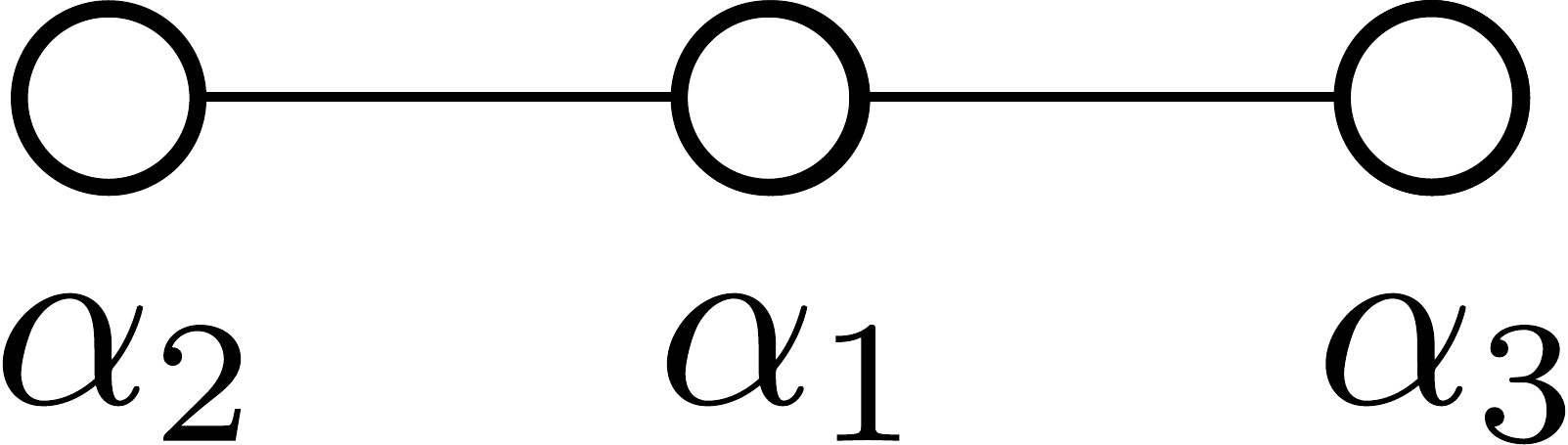}
\ee
The three fundamental weights are then
\be
	\omega_{1} = e_{1}\,,\qquad %
	\omega_{2} = \frac{1}{2}(e_{1}+e_{2}-e_{3})\,,\qquad %
	\omega_{3} = \frac{1}{2}(e_{1}+e_{2}+e_{3})\,.
\ee

\begin{figure}[!ht]
\begin{center}
\includegraphics[width=0.3\textwidth]{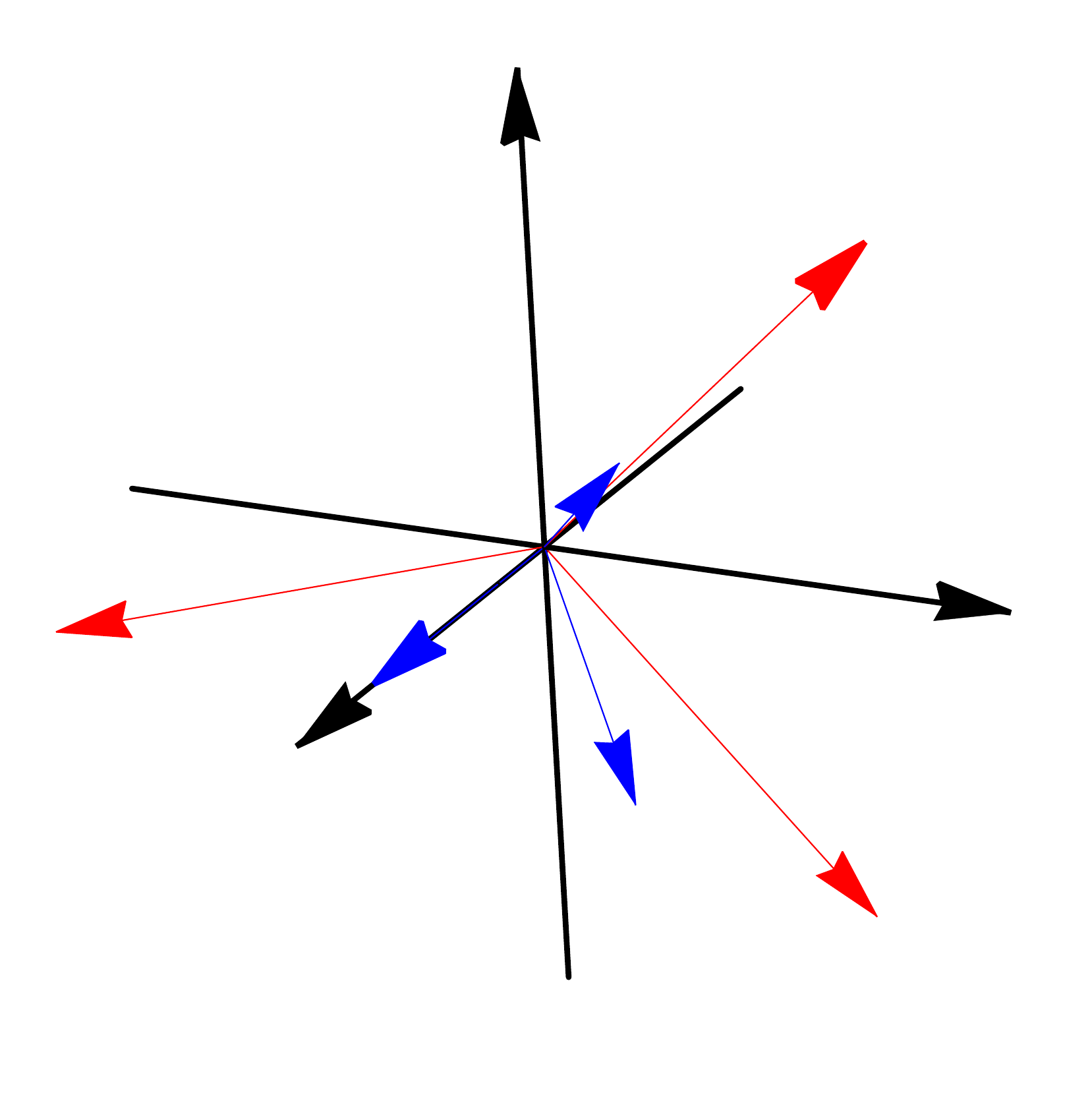}
\caption{Simple roots in red, fundamental weights in blue. The black frame is the $e_{1}, e_{2}, e_{3}$ basis.}
\label{fig:roots-weights}
\end{center}
\end{figure}

\subsubsection*{Vector representation}
The vector representation has highest weight $\weight_{1}$, it is six-dimensional, with weights given by:
\be
\begin{split}
	& \weight_{1}=\omega_{1} \qquad\qquad \weight_{2}=\weight_{1}-\alpha_{1} \qquad\qquad \weight_{3}=\weight_{2}-\alpha_{2} \\
	& \weight_{4}=\weight_{2}-\alpha_{3} \qquad \weight_{5}=\weight_{2}-\alpha_{2}-\alpha_{3} \qquad \weight_{6}=\weight_{2}-\alpha_{1}-\alpha_{2}-\alpha_{3} \,.
\end{split}
\ee

\begin{figure}[!ht]
\begin{center}
\includegraphics[width=0.35\textwidth]{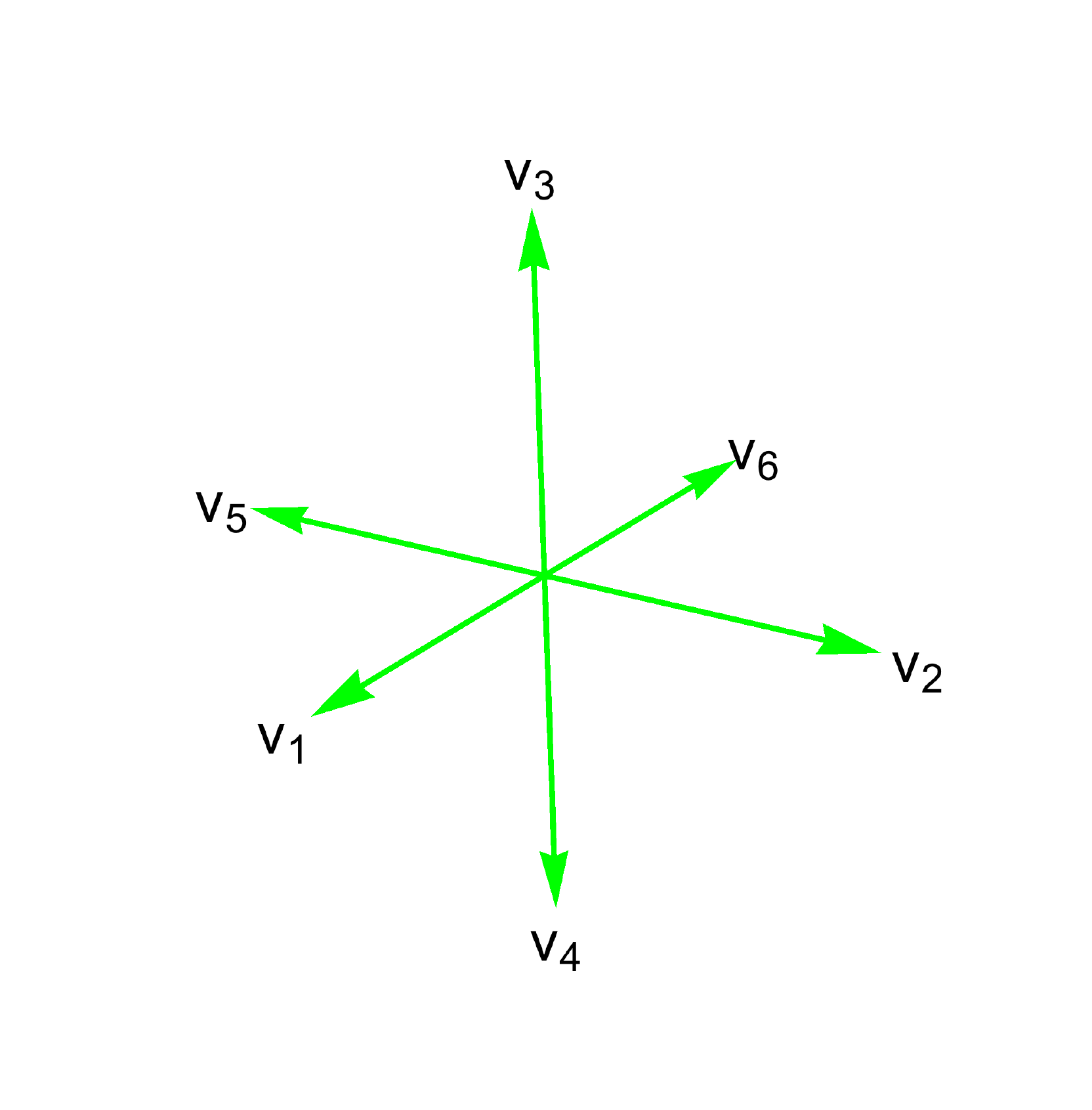}
\includegraphics[width=0.35\textwidth]{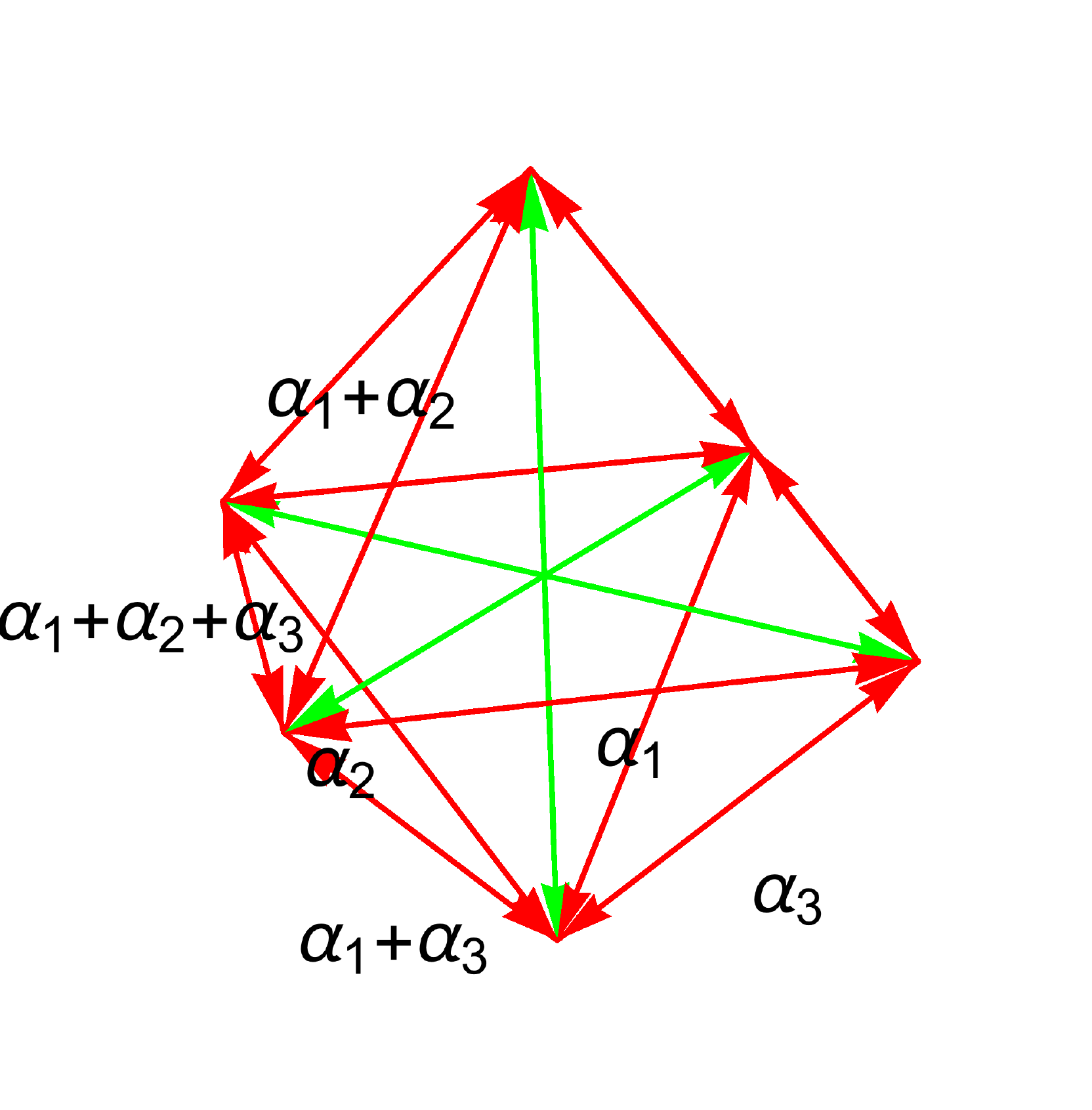}
\caption{The fundamental representation. Weights in green, the connecting roots in red.}
\label{fig:fundamental}
\end{center}
\end{figure}

\subsubsection*{Weyl group}\label{app:so6-weyl-group}
$W\simeq S_{4}$ is generated by $w_{1}, w_{2}, w_{3}$ defined as reflections about the planes orthogonal to the three simple roots. In our basis for $\mathfrak{t^{*}}$ they read
\be
	w_{1} = \left(\begin{array}{ccc}
		0 & 1 & 0 \\
		1 & 0 & 0 \\
		0 & 0 & 1
	\end{array}\right) \,,\qquad %
	w_{2} = \left(\begin{array}{ccc}
		1 & 0 & 0 \\
		0 & 0 & 1 \\
		0 & 1 & 0
	\end{array}\right) \,,\qquad %
	w_{3} = \left(\begin{array}{ccc}
		1 & 0 & 0 \\
		0 & 0 & -1 \\
		0 & -1 & 0
	\end{array}\right) \,. %
\ee
The Weyl chambers are easily determined: the plane orthogonal to $\alpha_{1}$ is the span of $(e_{1}+e_{2}, e_{3})$; the plane orthogonal to $\alpha_{2}$ is the span of $(e_{1},+e_{2}+ e_{3})$; the plane orthogonal to $\alpha_{3}$ is the span of $(e_{1},+e_{2}- e_{3})$. The corresponding intersection lines are thus 
\be
	\ell_{12} = (e_{1}+e_{2}+e_{3})\,\IR\,,\qquad %
	\ell_{13} = (e_{1}+e_{2}-e_{3})\,\IR\,,\qquad %
	\ell_{23} = e_{1}\,\IR\,,\qquad %
\ee
they are parallel to the fundamental weights, as expected by construction.

\begin{figure}[h!]
\begin{center}
\includegraphics[width=0.45\textwidth]{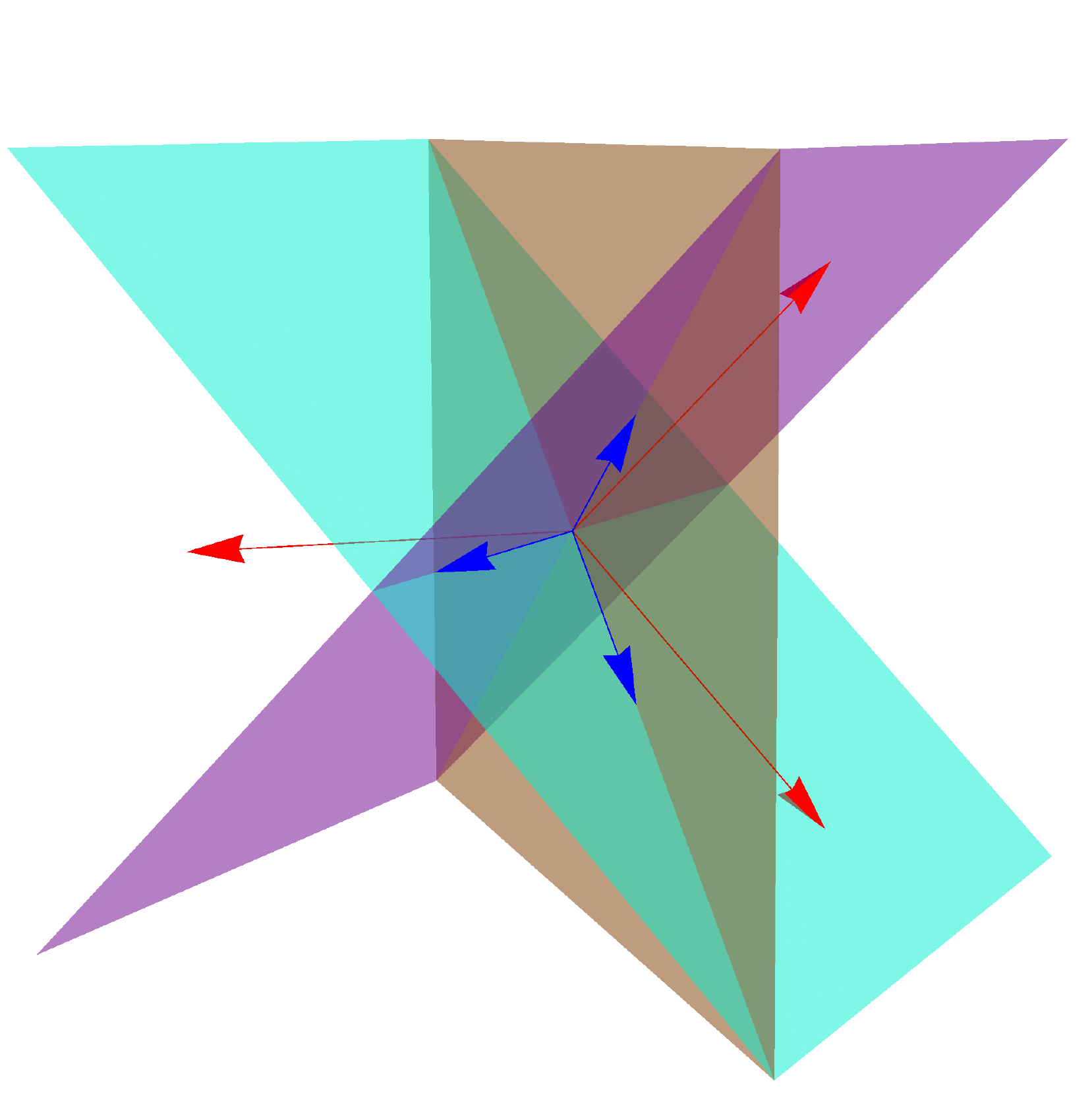}\hspace{0.1\textwidth}\includegraphics[width=0.45\textwidth]{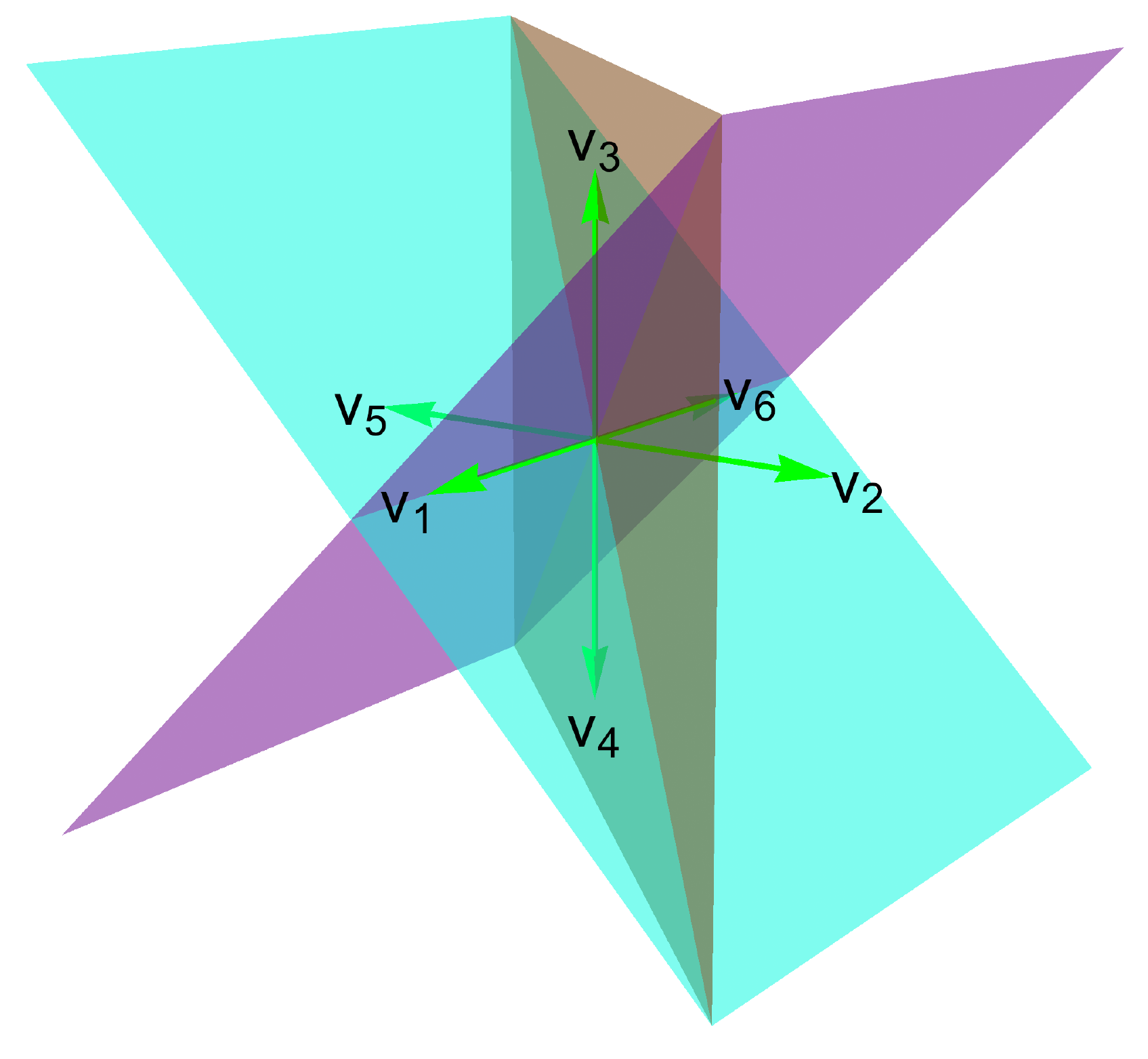}
\caption{On the left, the Weyl planes: fundamental weights are in blue, simple roots in red; the brown plane corresponds to $w_{1}$, the fuchsia plane corresponds to $w_{2}$, and the cyan plane to $w_{3}$. On the right, the Weyl planes and the weights of the fundamental representation.}
\label{fig:weyl-planes}
\end{center}
\end{figure}

The action of the Weyl group on the weights of the fundamental representation has the following form
\be\label{eq:weyl-fundamental}
\begin{split}
	w_{1}: \qquad %
	\weight_{1}\leftrightarrow \weight_{2}\,,\quad %
	\weight_{5}\leftrightarrow \weight_{6}\,,\quad 
	\weight_{3}\,, \weight_{4}\ \text{fixed} \\
	w_{2}: \qquad %
	\weight_{2}\leftrightarrow \weight_{3}\,,\quad %
	\weight_{4}\leftrightarrow \weight_{5}\,,\quad 
	\weight_{1}\,, \weight_{6}\ \text{fixed} \\
	w_{3}: \qquad %
	\weight_{2}\leftrightarrow \weight_{4}\,,\quad %
	\weight_{3}\leftrightarrow \weight_{5}\,,\quad 
	\weight_{1}\,, \weight_{6}\ \text{fixed} 
\end{split}
\ee
as can be checked graphically in Figure \ref{fig:weyl-planes}.
The vector representation clearly consists of single Weyl orbit.

\section{Standard trivialization for \texorpdfstring{$\SO(6)$}{SO(6)} SYM in the vector representation} \label{app:canonical-example}

In this section we show how to obtain the explicit trivialization of Section \ref{subsec:explicit-trivialization}, presented into the standard form. {We fix $\mu=5-5i$, $u_{2}=1+i$, $u_{4}=3+i$, $u_{3}^{2}=(1+i)^{3}$.} We start by choosing a basepoint $z_{0}$, and branch cuts emanating to infinity,
the sheets can then be labeled by weights \footnote{The labeling of sheets by weights is not entirely arbitrary. In appendix \ref{app:sheet-weight-identification} we discuss an algorithmic way of carrying out the identification.} 
$\weight_{1},\dots,\weight_{6}$ above $z_{0}$. 
We then choose paths from $z_{0}$ to each of the branch points, such that these paths  do not intersect the branch cuts (a ``Lefshetz spider''), and determine which sheets collide at each branch-point. 
Concretely, we need to keep track of the solutions of the spectral cover equation: an example of this tracking is provided in Figure \ref{fig:so6-fund-branch-points}.
\begin{figure}[!ht]
\begin{center}
\includegraphics[width=0.85\textwidth]{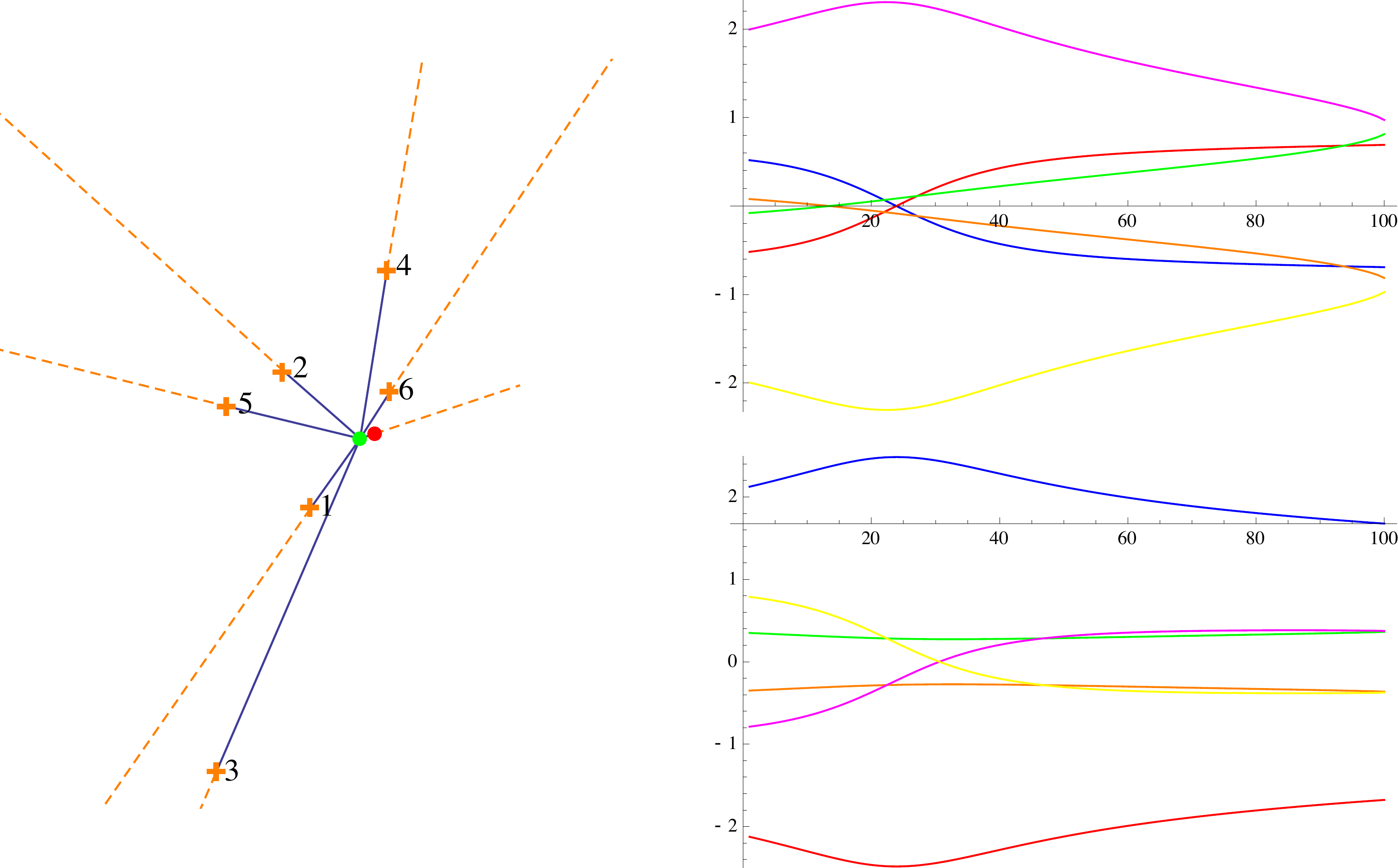}
\caption{On the left: the Lefshetz spider: the base-point is marked in green, branch-points are crosses and the irregular puncture at $z=0$ is a red dot. On the right: the tracking of real and imaginary parts of the values of $x(z)$ along the path from the basepoint to branch-point $6$.
Colors correspond to the following weights: fuchsia = $\weight_{1}$; red = $\weight_{2}$; orange = $\weight_{3}$; green=$\weight_{4}$; blue = $\weight_{5}$; yellow = $\weight_{6}$.}
\label{fig:so6-fund-branch-points}
\end{center}
\end{figure}
Likewise, for the monodromy around $z=0$ we may simply choose a path circling the puncture while avoiding cuts, and keep track of roots, as shown in Figure \ref{fig:zero-monodromy}. The overall results of this analysis are summarized in table (\ref{eq:monodromy-analysis}), they contain all the explicit details on the trivialization.
Away from the branch cuts, there is a 1-1 correspondence between sheets of the cover and weights of the representation.

\begin{figure}[!ht]
\begin{center}
\includegraphics[width=0.85\textwidth]{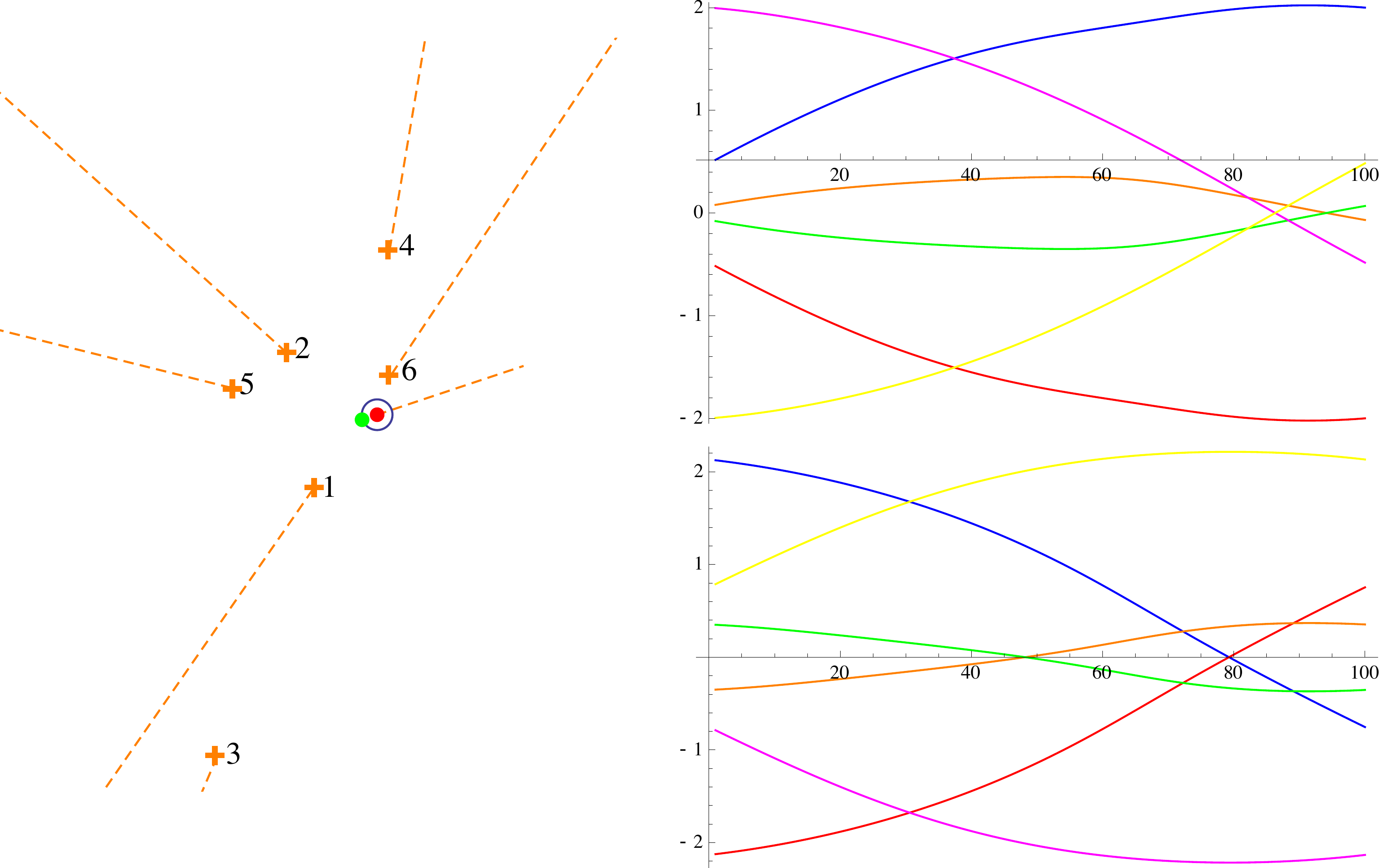}\\
\caption{On the left: the choice of {counter-clockwise} monodromy path around $z=0$. On the right: the tracking of real and imaginary parts of the values of $x(z)$ along the path. The color-weight dictionary of Figure \ref{fig:so6-fund-branch-points} applies.}
\label{fig:zero-monodromy}
\end{center}
\end{figure}

\begin{figure}[!ht]

\be\label{eq:monodromy-analysis}
\begin{array}{c|c|c|c|c}
	\text{branch point} & \text{partition} & \text{Weyl element} & & \text{positive root}\\
	\hline
	\hline
	1 & %
	(\weight_{1},\weight_{3}) %
	(\weight_{4},\weight_{6}) %
	(\weight_{2}) %
	(\weight_{5}) %
	& w_{1}w_{2}w_{1} %
	&\left(\begin{array}{ccc}
		0 & 0 & 1 \\
		0 & 1 & 0 \\
		1 & 0 & 0
	\end{array}\right)%
	& \alpha_{1}+\alpha_{2} = e_{1}-e_{3}\\
	\hline
	2 & %
	(\weight_{2}, \weight_{3}) %
	(\weight_{4}, \weight_{5}) %
	(\weight_{1}) %
	(\weight_{6})%
	& w_{2} %
	&\left(\begin{array}{ccc}
		1 & 0 & 0 \\
		0 & 0 & 1 \\
		0 & 1 & 0
	\end{array}\right)%
	& \alpha_{2} = e_{2}-e_{3}\\
	\hline
	3 & %
	(\weight_{2}, \weight_{4}) %
	(\weight_{3}, \weight_{5}) %
	(\weight_{1}) %
	(\weight_{6}) %
	& w_{3} %
	&\left(\begin{array}{ccc}
		1 & 0 & 0 \\
		0 & 0 & -1 \\
		0 & -1 & 0
	\end{array}\right)%
	& \alpha_{3} = e_{2}+e_{3}\\
	\hline
	4 & %
	(\weight_{1},\weight_{2}) %
	(\weight_{5},\weight_{6}) %
	(\weight_{3}) %
	(\weight_{4}) %
	& w_{1} %
	& \left(\begin{array}{ccc}
		0 & 1 & 0 \\
		1 & 0 & 0 \\
		0 & 0 & 1
	\end{array}\right)%
	& \alpha_{1} = e_{1}-e_{2}\\
	\hline
	5 & %
	(\weight_{2},\weight_{3}) %
	(\weight_{4},\weight_{5}) %
	(\weight_{1}) %
	(\weight_{6}) %
	& w_{2} %
	&\left(\begin{array}{ccc}
		1 & 0 & 0 \\
		0 & 0 & 1 \\
		0 & 1 & 0
	\end{array}\right)%
	& \alpha_{2} = e_{2}-e_{3}\\
	\hline
	6 & %
	(\weight_{1},\weight_{4}) %
	(\weight_{3},\weight_{6}) %
	(\weight_{2}) %
	(\weight_{5}) %
	& w_{1}w_{3}w_{1} %
	&\left(\begin{array}{ccc}
		0 & 0 & -1 \\
		0 & 1 & 0 \\
		-1 & 0 & 0
	\end{array}\right)%
	& \alpha_{1}+\alpha_{3} = e_{1}+e_{3}\\
	\hline\hline
	{z=0} & %
	(\weight_{1},\weight_{2},\weight_{6},\weight_{5}) %
	(\weight_{3},\weight_{4}) %
	& w_{1}w_{3}w_{2} %
	&\left(\begin{array}{ccc}
		0 & -1 & 0 \\
		1 & 0 & 0 \\
		0 & 0 & -1
	\end{array}\right)%
	& \text{N/A}
\end{array}
\ee
\end{figure}

Note that this choice of trivialization is not of the type described in Section \ref{subsec:trivialization}, since the square root branch points do not correspond to hyperplanes bounding a unique Weyl chamber. 
However, by a few simple moves we can bring the trivialization into the desired form.

The basic move consists of rotating the branch cuts that all land at the irregular singularity at $z=\infty$, when a cut associated with sheet monodromy $w$ sweeps across a branch point of type $w'$ clockwise as shown in Figure \ref{fig:branch-cut-move}, the branch point becomes of type
\be
	w'' = w\cdot w'\cdot w^{-1}
\ee
\begin{figure}[!ht]
\begin{center}
\includegraphics[width=0.3\textwidth]{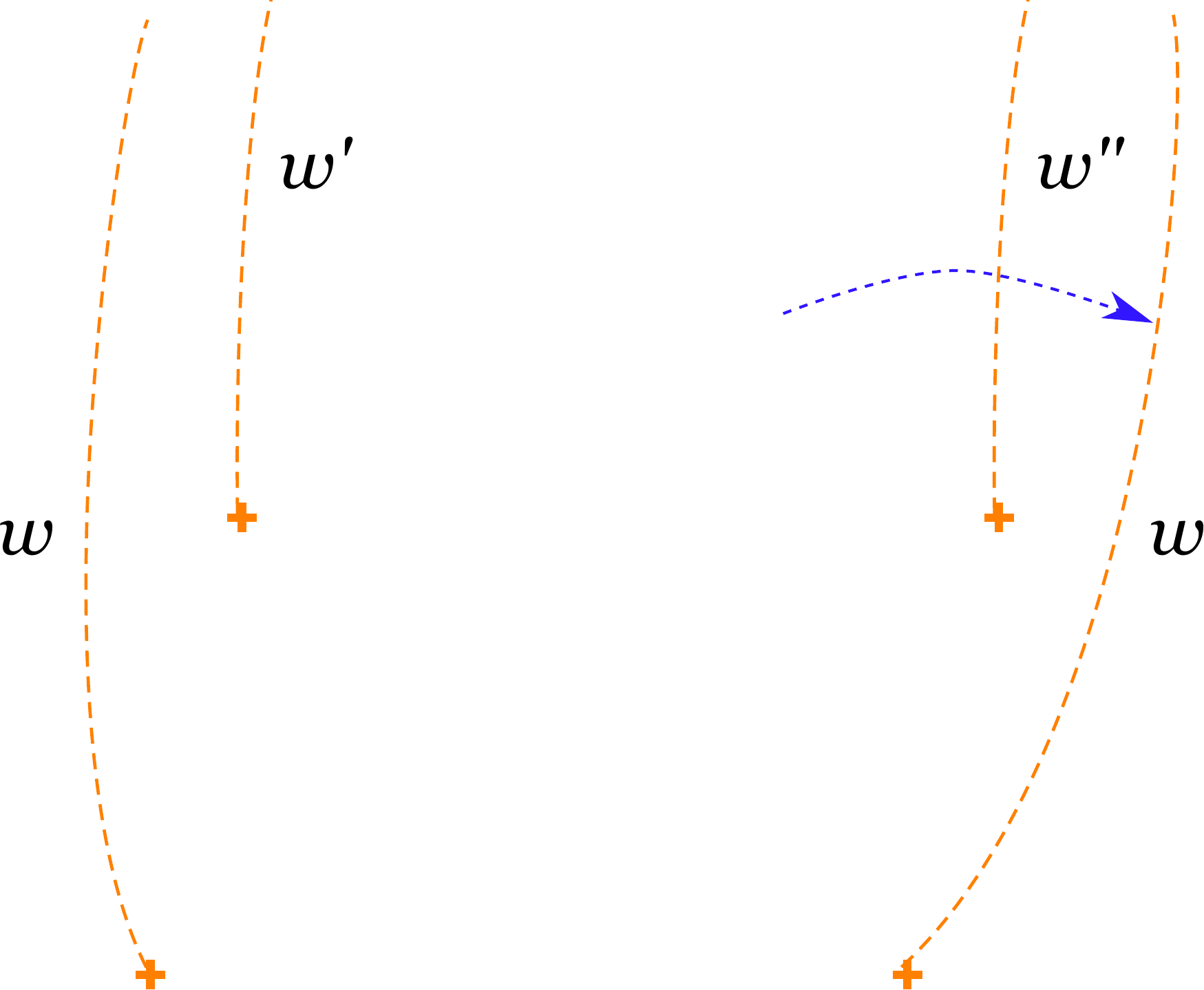}
\caption{Rotating a branch cut of type $w$ across a branch point of type $w'$.}
\label{fig:branch-cut-move}
\end{center}
\end{figure}

Employing the explicit expressions of Weyl group elements given in Section \ref{app:so6}, it the easy to see that the sequence of moves shown in Figure \ref{fig:so6-cut-moves} brings to the desired standard form of the trivialization. 
\begin{figure}[!ht]
\begin{center}
\includegraphics[width=0.3\textwidth]{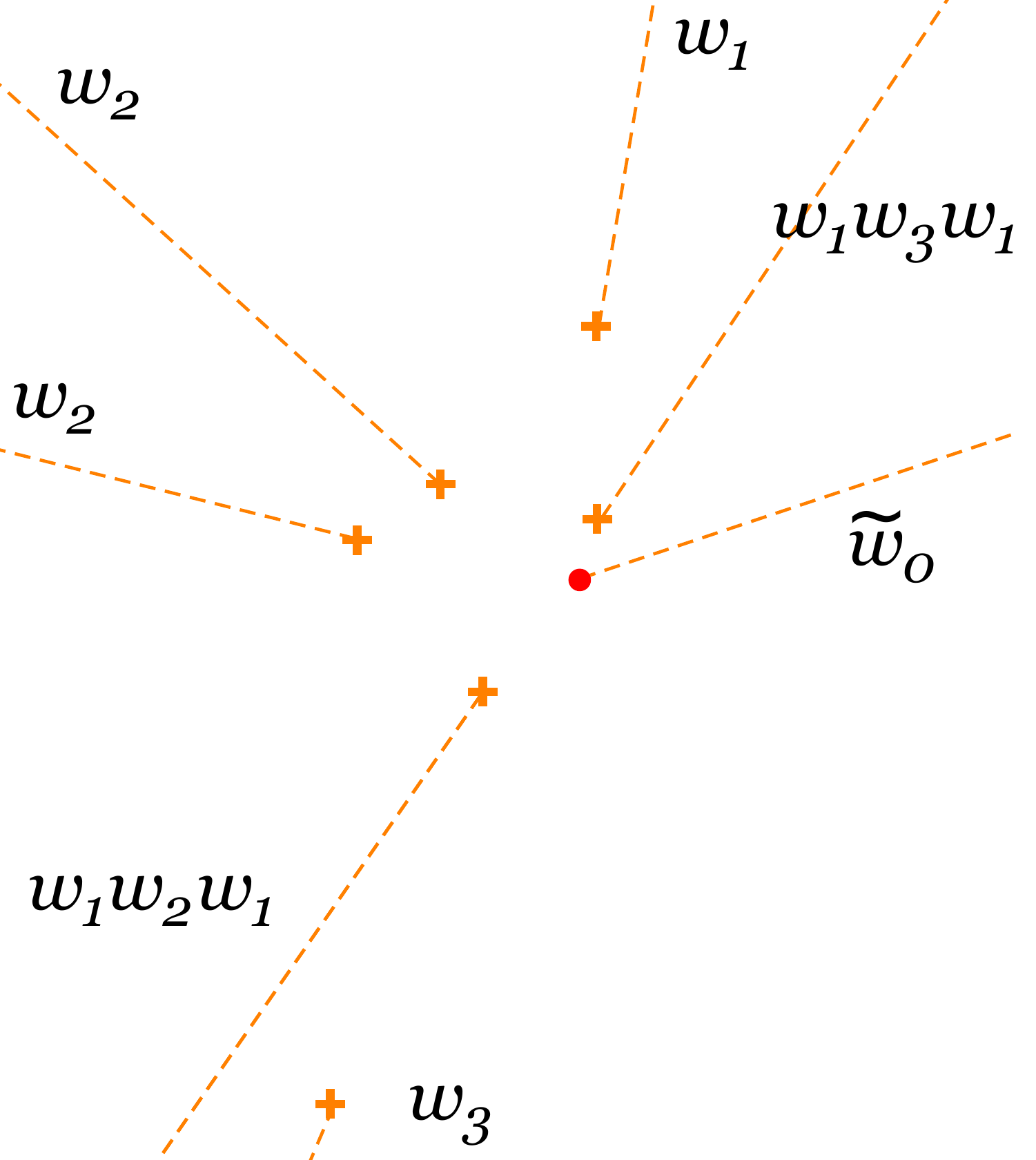}\hfill
\includegraphics[width=0.3\textwidth]{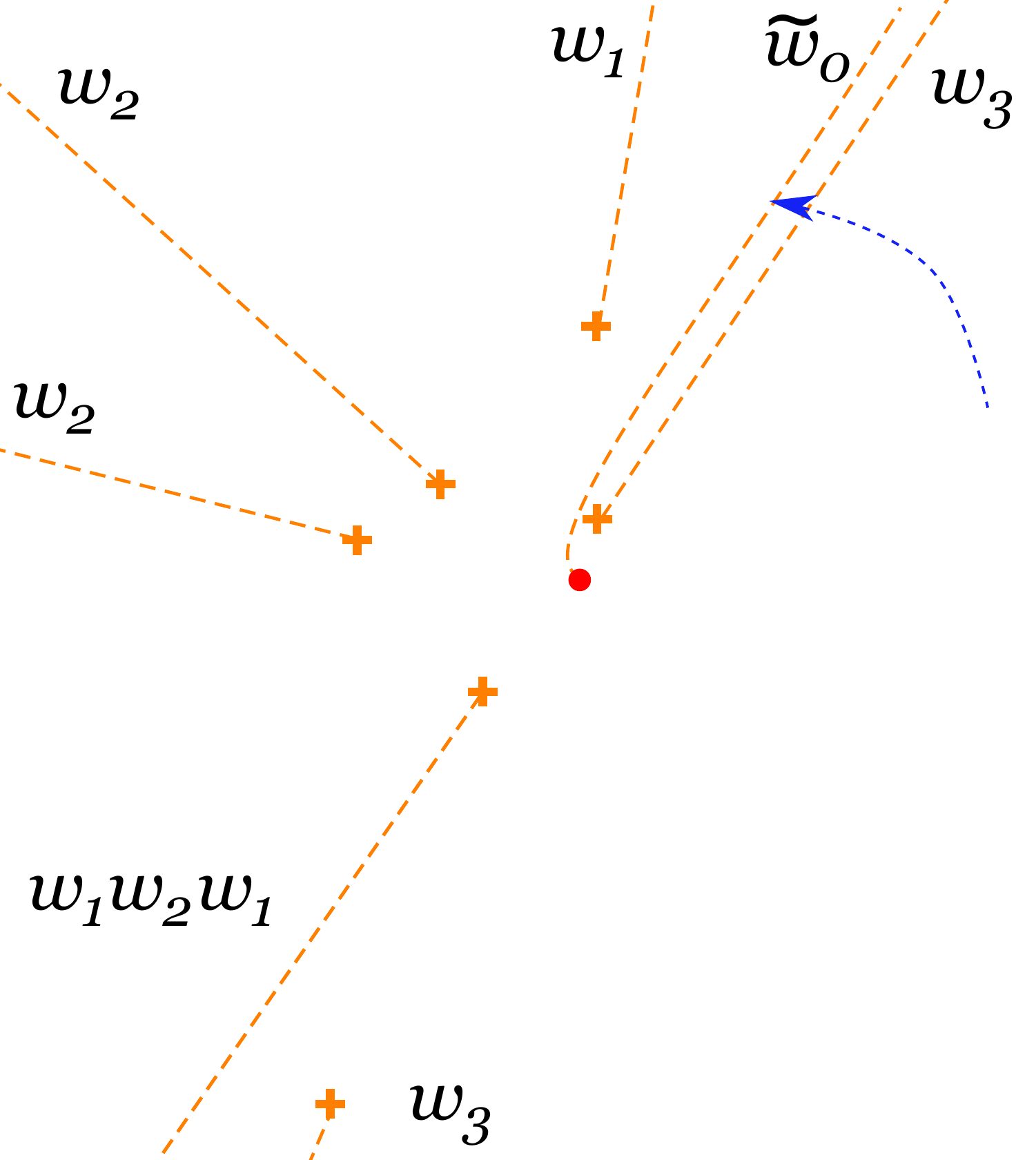}\hfill 
\includegraphics[width=0.3\textwidth]{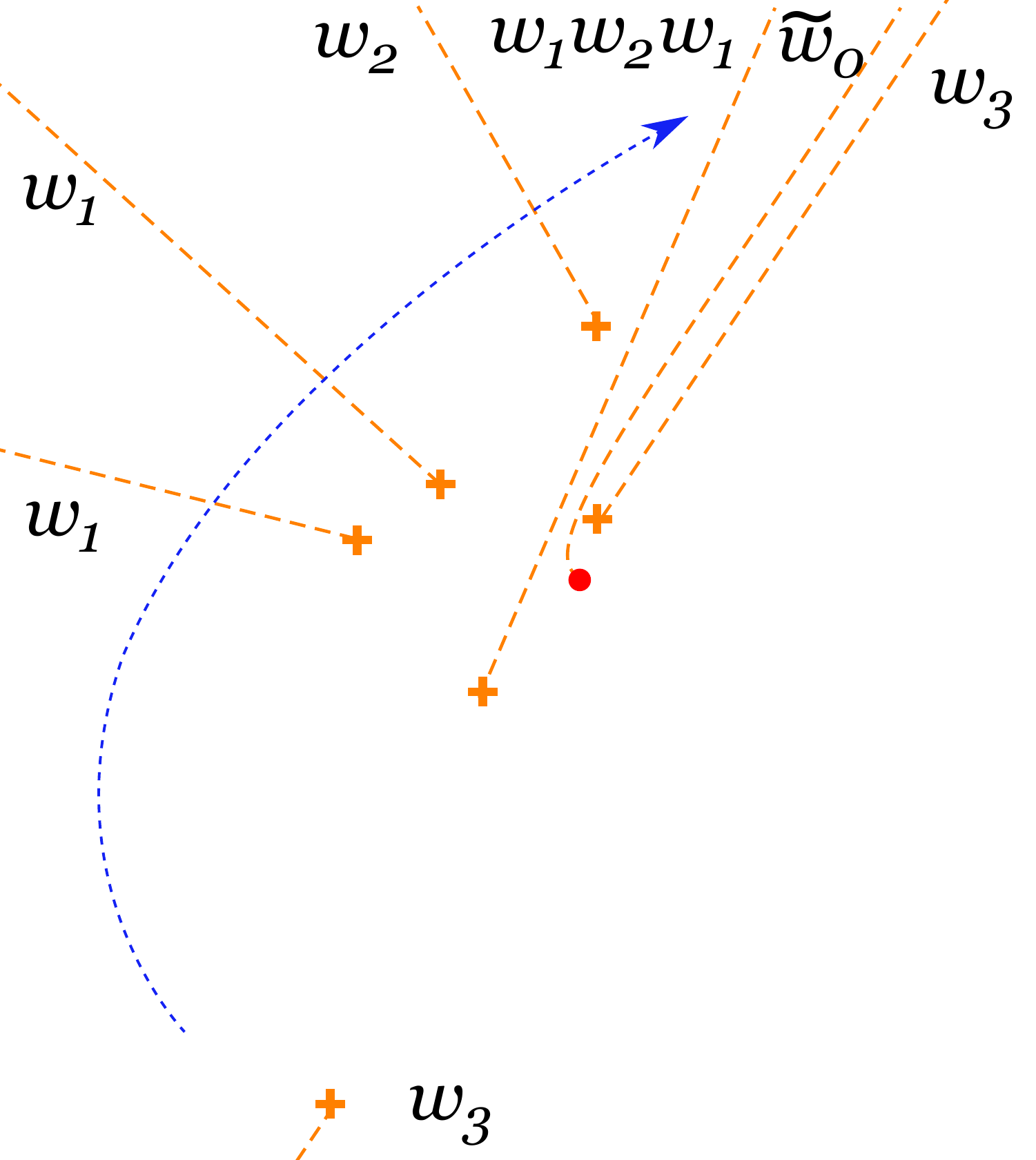}\\ \vspace{20pt}
\includegraphics[width=0.3\textwidth]{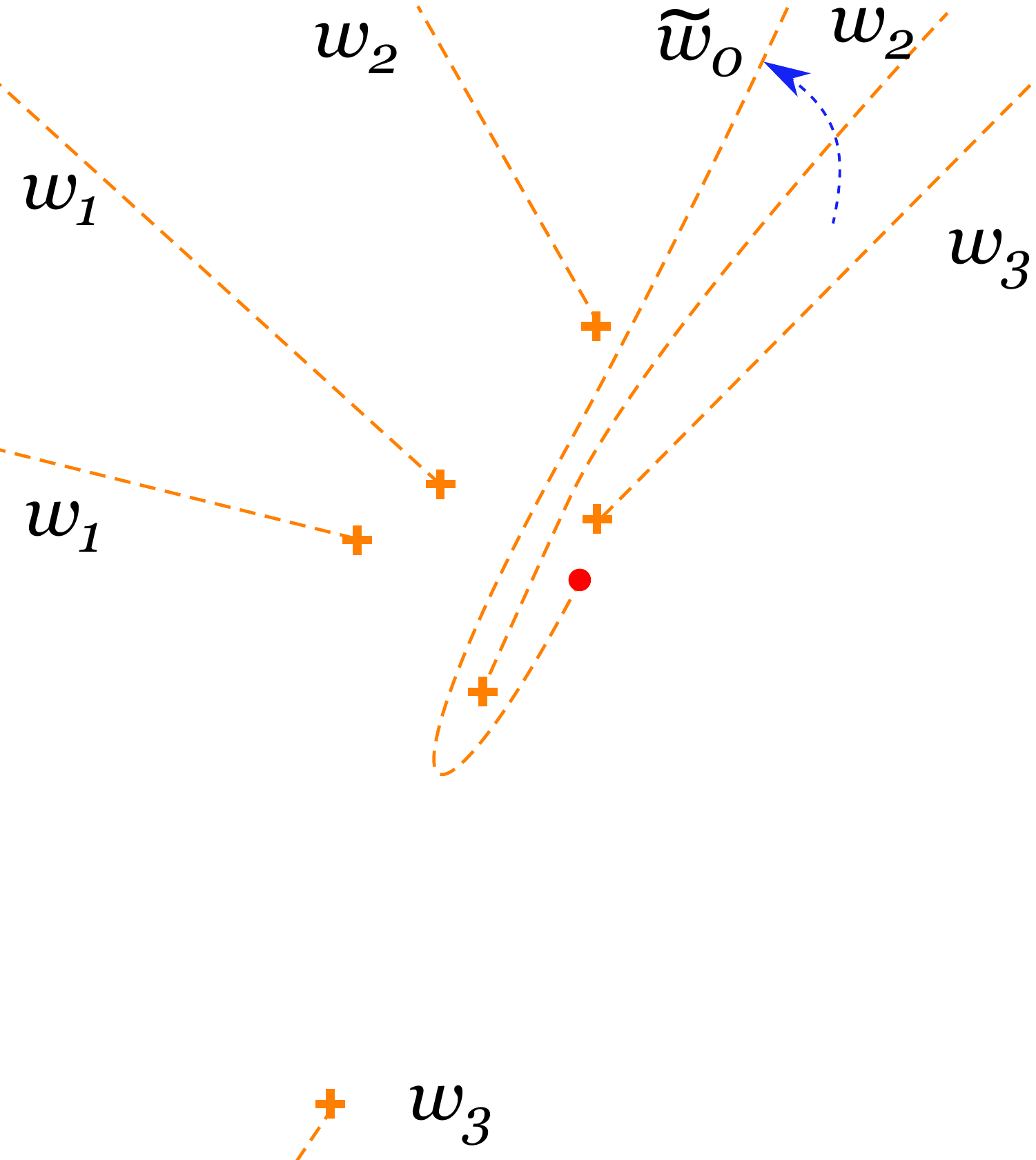}\hfill
\includegraphics[width=0.3\textwidth]{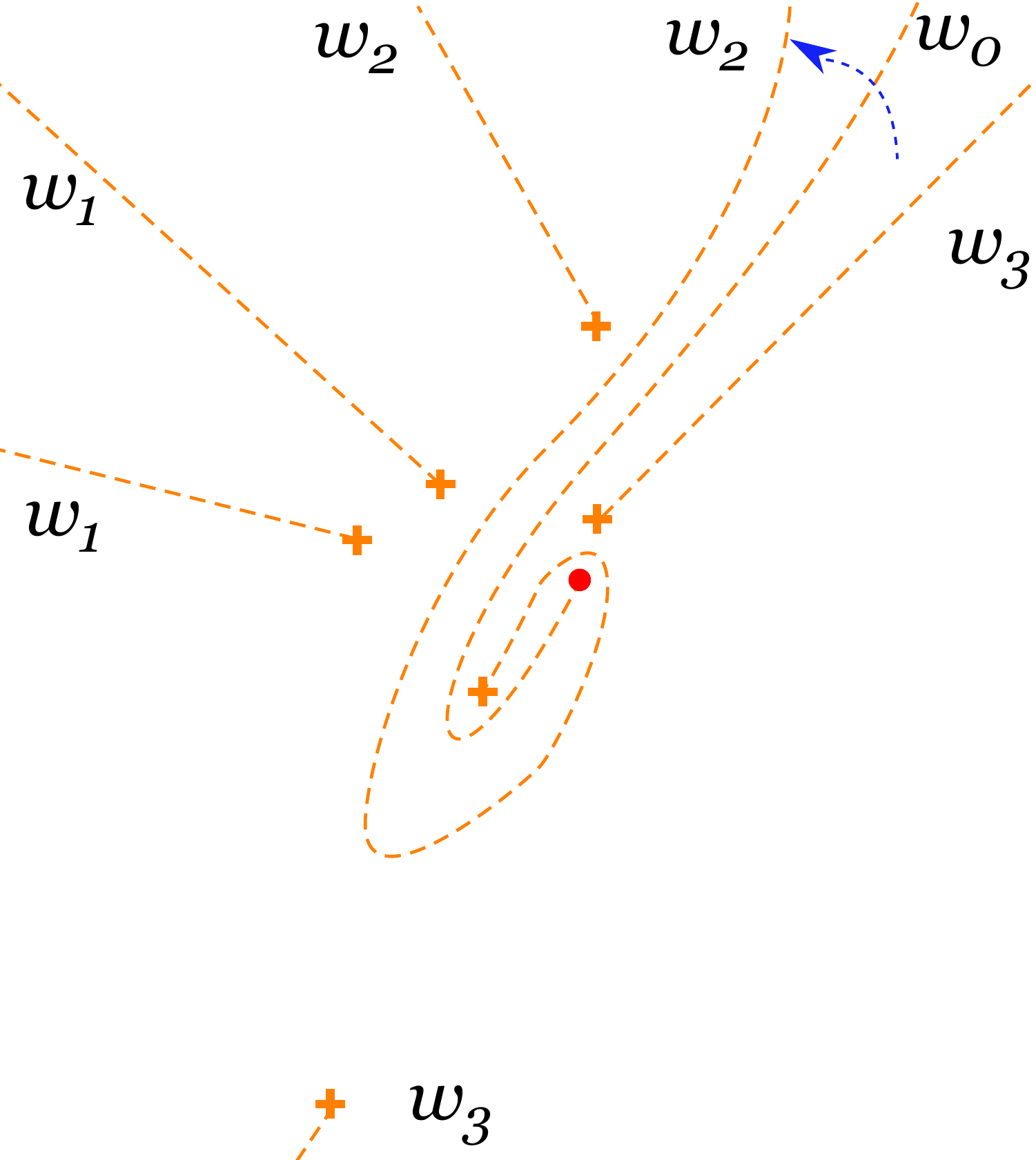}\hfill
\includegraphics[width=0.3\textwidth]{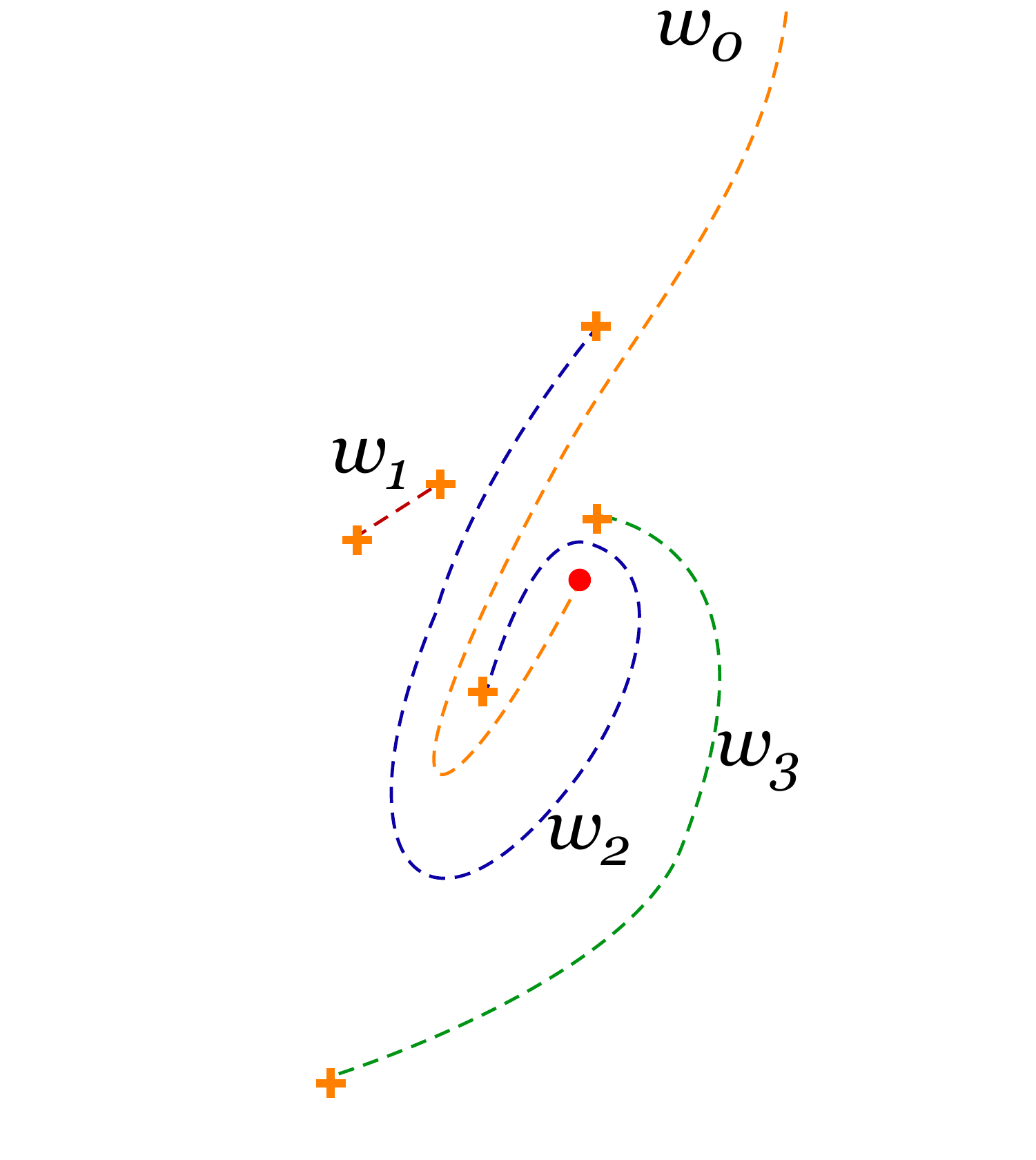}
\caption{A sequence of moves that brings the trivialization for the $\SO(6)$ cover into standard form. The initial monodromy of the cut at $z=0$ has been denoted by $\widetilde w_{0}=w_{1}w_{3}w_{2}$, see Table (\ref{eq:monodromy-analysis}). The final monodromy at $z=0$ is $w_0 = w_2 w_{1} w_{3}$.}
\label{fig:so6-cut-moves}
\end{center}
\end{figure}

All square-root cuts are of simple-root type, compatibly with the claimed standard form of the trivialization. 
The higher-degree cut with monodromy $w_{1}w_3 w_2$ can likewise be identified with a {corner} of the fundamental Weyl chamber $\CC_{0}$. In fact $w_{1} w_3 w_2$ cyclically permutes $(\weight_{1},\weight_{2},\weight_{6},\weight_{5})$ and $(\weight_{3},\weight_{4})$. But note that $\weight_{1} = -\weight_{6},\, \weight_{2} = -\weight_{5},\, \weight_{3}=-\weight_{4}$, moreover $\weight_{1}, \weight_{2}, \weight_{3}$ form a basis for $\ft^{*}$. Since at the ramification point we have
\be
	\langle \weight_{1}-\weight_{6},\varphi\rangle = 2 \langle \weight_{1},\varphi\rangle = 0
\ee
and similarly for $\weight_{2}, \weight_{3}$, this means that $w_{1} w_3 w_2$ is the monodromy associated with crossing from $\CC_{0}$ into another Weyl chamber by passing through the origin in $\ft^{*}$.

\clearpage

\bibliographystyle{JHEP}
\bibliography{ade_networks}

\end{document}